\theoremstyle{plain}
\newtheorem{prop}{Proposition}
\newtheorem{thm}{Theorem}
\newtheorem{lemma}[prop]{Lemma}
\newtheorem{cor}[prop]{Corollary}
\newtheorem{claim}[prop]{Claim}
\theoremstyle{definition}
\newtheorem{defi}[prop]{Definition}
\newtheorem{assump}[prop]{Assumption}
\theoremstyle{remark}
\newtheorem{rmk}[prop]{Remark}
\numberwithin{equation}{section}
\numberwithin{prop}{section}
\newcommand{\CR}{\mathcal{C}}
\newcommand{\DR}{\mathcal{D}}
\newcommand{\ER}{\mathcal{E}}
\newcommand{\KR}{\mathcal{K}}
\newcommand{\OR}{\mathcal{O}}
\newcommand{\XR}{\mathcal{X}}
\newcommand{\N}{\mathbb{N}}
\newcommand{\R}{\mathbb{R}}
\newcommand{\C}{\mathbb{C}}
\newcommand{\prodscal}[2]{\left\langle#1,#2\right\rangle}
\newcommand{\abs}[1]{\left\lvert #1 \right\rvert}
\newcommand{\norm}[1]{\left\lVert #1 \right\rVert}
\DeclareMathOperator{\range}{\text{range}}
\DeclareMathOperator{\tr}{\operatorname{Tr}}
\DeclareMathOperator{\id}{\operatorname{Id}}
\DeclareMathOperator{\spec}{\operatorname{Spec}}
\DeclareMathOperator{\loc}{loc}
\newcommand{\indicatrice}[1]{{\mathds 1}\left(#1\right)}
\newcommand{\normLp}[3]{\lVert #1 \rVert_{L^{#2}#3}}
\newcommand{\schatten}{\mathfrak{S}}
\newcommand{\normSch}[3]{\left\lVert #1 \right\rVert_{\mathfrak{S}^{#2}#3}}
\newcommand{\test}[1]{\mathcal{C}_c^{\infty} (#1)}
\newcommand{\schwartz}{\mathscr{S}}
\newcommand{\fourierh}{\mathcal{F}_\hbar}
\DeclareMathOperator{\supp}{supp}
\newcommand{\bra}[1]{\left| #1 \right\rangle}
\newcommand{\ket}[1]{\left\langle #1 \right|}
\title{Weyl laws for interacting particles}
\author[N.N. Nguyen]{Ngoc Nhi Nguyen}
\address{Ngoc Nhi Nguyen, Universit\`a degli Studi di Milano, Dipartimento di Matematica, Via Cesare Saldini 50, 20133 Milano, Italy}
\email{ngoc.nguyen@unimi.it}
\date{\today}
\begin{document}
	
	\maketitle
	
% CORRECTION_JMP 0
%	\begin{center}
%		Universit\`a degli Studi di Milano, Dipartimento di Matematica,\\ Via Cesare Saldini 50, 20133 Milano, Italy.
%		\blfootnote{E-mail: \href{mailto:ngoc.nguyen@unimi.it}{\texttt{ngoc.nguyen@unimi.it}}}
%	\end{center}

	\begin{abstract}
		We study grand-canonical interacting fermionic systems in the mean-field regime, in a trapping potential. 
		We provide the first order term of integrated and pointwise Weyl laws, but in the case with interaction. 
		More precisely, we prove the convergence of the densities of the grand-canonical Hartree-Fock ground state to the Thomas-Fermi ground state in the semiclassical limit $\hbar\to 0$. For the proof, we write the grand-canonical version of the results of [S. Fournais, M. Lewin, J.P. Solovej, \emph{The semi-classical limit of large fermionic systems}, Calculus of Variations and Partial Differential Equations, 57, p.105 (2018)], 
		and [J.G. Conlon, \emph{Semi-classical Limit Theorems for Hartree-Fock Theory}, Communications in Mathematical Physics, 88, p.133 (1983)].
	\end{abstract}
	
\tableofcontents

	\section{Introduction}\label{sec:intro}

For a continuous $V:\R^d\to\R$ which is \emph{confining}, i.e.\ such that $V(x)\to+\infty$ as $\abs{x}\to+\infty$, and an energy $E\in\R$, we denote by $\indicatrice{-\hbar^2\Delta+V\leq E}$ the spectral projector corresponding to eigenvalues less or equal to $E\in\R$ of the Schr\"odinger operators $P_\hbar:=-\hbar^2\Delta+V$  and by $(x,y)\mapsto\indicatrice{-\hbar^2\Delta+V\leq E}(x,y)$ its associated integral kernel. The \emph{Weyl law} provides asymptotics of this integral kernel at the semiclassical limit $\hbar\to 0$. Let us recall two particular forms and their associated first order terms.
\begin{itemize}
	\item The \emph{integrated} Weyl law is the version that provides the asymptotics of the number of eigenvalues of $P_\hbar$ in the interval $(-\infty,E]$. This counting function can be written as
	\begin{equation*}
		N_\hbar(E):=\tr_{L^2(\R^d)}\indicatrice{-\hbar^2\Delta+V\leq E}=\int_{\R^d} \indicatrice{-\hbar^2\Delta+V\leq E}(x,x) dx.
	\end{equation*}
	Then, we have (see for instance \cite{dimassi1999spectral} or \cite[Chap.6]{zworski2012semiclassical} for smooth confining potential $V$, and \cite[Thm. 4.28]{Frank-Laptev-Weidl2023} for continuous compact $V$) that
	\begin{equation}\label{eq:int-without-int_1ordterm}
	\lim_{\hbar\to 0}\hbar^dN_\hbar(E)= \frac{\abs{B_{\R^d}(0,1)}}{(2\pi)^d}\int_{\R^d}[(E-V(x))_+]^{d/2}dx.
	\end{equation}
	\item The \emph{pointwise} Weyl law is the version that provides the pointwise asymptotics of the integral kernel $\indicatrice{-\hbar^2\Delta+V\leq E}(x,y)$. For instance, we have (see \cite[Thm. I.1]{deleporte2021universality} for confining locally smooth potentials $V$) that
	\begin{equation}\label{eq:pw-without-int_1ordterm}
		\lim_{\hbar\to 0}\hbar^d\indicatrice{-\hbar^2\Delta+V\leq E}(x,x)=\frac{\abs{B_{\R^d}(0,1)}}{(2\pi)^d}[(E-V(x))_+]^{d/2}.
	\end{equation}
\end{itemize}

These asymptotics can be interpreted as the spatial equidistribution of free fermions trapped in a potential $V$ and associated to energies less or equal to $E$. Indeed, the fermionic ground state of non-interacting, grand-canonical fermions at zero temperature and energy $E$ in the external potential $V$ is uniquely characterized by its one-body density matrix $\indicatrice{-\hbar^2\Delta+V\leq E}$, and the function $x\mapsto\indicatrice{-\hbar^2\Delta+V\leq E}(x,x)$ then represents the spacial density of such a system.

Having this many-body picture in mind, we are interested in generalizing these Weyl laws to the case where interactions between fermions are introduced. We choose to work in the Hartree-Fock approximation where the state of the system is still characterized by its one-body density matrix $\gamma$, a self-adjoint operator on $L^2(\R^d)$ satisfying $0\le\gamma\le1$. The ground state of the system is then a minimizer of the \emph{$\hbar$-Hartree-Fock energy} functional
\begin{equation}\label{eq-def:HF-energy}
	\begin{split}
	\ER^{\rm HF}_{\hbar,V-E,w}(\gamma) := &\hbar^d\tr_{L^2(\R^d)}((-\hbar^2\Delta+V-E)\gamma) \\&\quad
	+\frac{\hbar^{2d}}2\left(\int_{\R^d}\int_{\R^d} \rho_\gamma(x)\rho_\gamma(y)w(x-y)dx dy
	-\int_{\R^d}\int_{\R^d} \abs{\gamma(x,y)}^2w(x-y)dx dy\right)
	,
	\end{split}
\end{equation}
where $w:\R^d\to\R$ is the pair interaction potential, $\gamma(x,y)$ denotes the integral kernel of $\gamma$ at $(x,y)\in\R^d\times\R^d$ and $\rho_\gamma(x):=\gamma(x,x)$ its associated density. Notice that since we work in the grand-canonical setting, we are interested in global minimizers of this functional (that is, without any constraint on $\tr_{L^2(\R^d)}(\gamma)$). Notice also that in the case without interactions ($w=0$), $\gamma=\indicatrice{-\hbar^2\Delta+V\le E}$ is indeed a minimizer.
	The Hartree-Fock functional is scaled with respect to the semiclassical parameter $\hbar$ in such a manner that the kinetic term and the potential terms are of the same order. To achieve this, one could consider trial operators $\gamma_\hbar$ with a semiclassical structure, for instance the Weyl quantization $\gamma_\hbar={\rm Op}^{\rm w}_\hbar(a)$ of phase-space function $a:\R^d\times\R^d\to\R$ with a sufficient decay at infinity \cite{zworski2012semiclassical} (heuristically the Wigner transform \cite{wigner1932quantum} of $\gamma_\hbar$, up to some error as $\hbar\to 0$). Its integral kernel $\gamma_\hbar(x,y)$ and its density $\rho_{\gamma_\hbar}$ are of order $\hbar^{-d}$. The linear term $\tr((-\hbar^2\Delta +V)\gamma_\hbar)$ is of order $\hbar^{-d}$. The interaction term is quadratic in $\rho_{\gamma_\hbar}$ and $\gamma_\hbar(x,y)$, therefore is of order $\hbar^{-2d}$. That is the reason why we retrieve an additional factor $\hbar$ in front of the direct and exchange terms. A further choice has been made in order to have $\ER^{\rm HF}_{\hbar,V,w}(\gamma_\hbar)$ of order $\OR(1)$ for (approximate) minimizers $\gamma_\hbar$, which are expected to have almost a semiclassical structure and to be of order $\OR(\hbar^{-d})$.

As is well-known since the works of Lieb-Simon \cite{lieb1973thomas,lieb1977thomas} in the case of Coulomb systems, minimizers of the Hartree-Fock functional should be related as $\hbar\to0$ to minimizers of the \emph{Thomas-Fermi functional}
\begin{equation}\label{eq-def:TF-energy}
	\begin{split}
	\ER^{\rm TF}_{V-E,w}(\rho) 
	&= \frac{d}{d+2}c_{\rm TF}\int_{\R^d}\rho(x)^{1+2/d}dx +\int_{\R^d}(V(x)-E)\rho(x)dx 
	\\&\quad
	+\frac 12\iint_{\R^d\times\R^d}w(x-y)\rho(x)\rho(y)dx dy
	,
	\end{split}
\end{equation}
on a set of non-negative densities $\rho:\R^d\to\R$ with additional good properties. Here, 
\begin{equation*}
	c_{\rm TF}:=\frac{4\pi^2}{\abs{B_{\R^d}(0,1)}^{2/d}}.
\end{equation*}
Minimizers of $\ER^{\rm TF}_{V-E,w}$ must satisfy, when they exist, the Thomas-Fermi integral equation
\begin{equation*}
	c_{\rm TF}\rho_{\rm TF}(x)^{2/d}= (E-V(x)-w\ast\rho_{\rm TF}(x))_+,
\end{equation*}
i.e.
\begin{equation}\label{eq:TF}
	\rho_{\rm TF}(x) = \frac{\abs{B_{\R^d}(0,1)}}{(2\pi)^d}(E-V(x)-w\ast\rho_{\rm TF}(x))_+^{d/2}.
\end{equation}
	The assumption on the coercivity of the potential $V$ implies that the density $\rho_{\rm TF}$ has compact support.
In the case $w=0$, one recovers the limiting density in the Weyl law \eqref{eq:pw-without-int_1ordterm} and we will see that $\rho_{\rm TF}$ is its natural generalization in the case with interactions. This formula also reflects the typical fact that mean-field particles behave as free particles in the mean-field effective potential $V+w*\rho_{\rm TF}$.

Our main result (see Theorems \ref{thm:int-WL_HF} and \ref{thm:pointwise-WL_HF} below for a more precise formulation) is then
\begin{thm}\label{thm:main-result}
	Let $d\ge1$. Under suitable assumptions on $V$ and $w$ (in particular, assuming that $w$ is repulsive), any minimizer $\gamma_\hbar$ of $\ER_{\hbar,V-E,w}^{\rm HF}$ satisfies (up to a subsequence) the integrated Weyl law
	\begin{equation*}
		 \lim_{\hbar\to 0}\hbar^d\tr_{L^2(\R^d)}(\gamma_\hbar)= \int_{\R^d}\rho_{\rm TF}(x) dx
		,
	\end{equation*}
	as well as the pointwise Weyl law
	\begin{equation*}
		\lim_{\hbar\to 0}\hbar^d\rho_{\gamma_\hbar}(x) =\rho_{\rm TF}(x) 
	\end{equation*}
	for all $x\in\R^d$, where $\rho_{\rm TF}$ is a minimizer of $\ER^{\rm TF}_{V-E,w}$.
\end{thm}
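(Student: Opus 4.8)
The plan is to follow the classical Lieb–Simon strategy of matching upper and lower bounds on the ground state energy $E_\hbar^{\rm HF}:=\inf_\gamma \ER^{\rm HF}_{\hbar,V-E,w}(\gamma)$ and extracting convergence of densities from the convergence of energies. The first step is to establish that $\lim_{\hbar\to0}E_\hbar^{\rm HF}=\ER^{\rm TF}_{V-E,w}(\rho_{\rm TF})=:E^{\rm TF}$. For the upper bound one constructs an explicit trial state: take $\gamma_\hbar={\rm Op}^{\rm w}_\hbar(\indicatrice{|\xi|^2+V(x)+w*\rho_{\rm TF}(x)\le E})$ (suitably regularized so that it is a genuine density matrix with $0\le\gamma_\hbar\le1$, e.g.\ by convolving the symbol and cutting off), compute $\hbar^d\tr((-\hbar^2\Delta+V-E)\gamma_\hbar)$ via the semiclassical expansion, check that the exchange term is $o(1)$ because $\gamma_\hbar(x,y)$ is concentrated near the diagonal at scale $\hbar$ so $\hbar^{2d}\iint|\gamma_\hbar(x,y)|^2 w(x-y)\,dx\,dy=O(\hbar)$, and verify that the direct term converges to $\frac12\iint w\rho_{\rm TF}\rho_{\rm TF}$. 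This reproduces the grand-canonical analogue of the trial-state construction in Fournais–Lewin–Solovej and gives $\limsup_\hbar E_\hbar^{\rm HF}\le E^{\rm TF}$.

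The second step, the lower bound, is where the interaction and the grand-canonical (unconstrained trace) feature make things delicate. One wants $\liminf_\hbar E_\hbar^{\rm HF}\ge E^{\rm TF}$. The key is the Lieb variational principle / coherent-state (Husimi) lower bound: for any admissible $\gamma_\hbar$, bound the kinetic energy from below by $\int \frac{d}{d+2}c_{\rm TF}\,\hbar^{-d}\mu_{\gamma_\hbar}(x)^{1+2/d}\,dx$-type terms up to errors, where $\mu_{\gamma_\hbar}$ is a smeared density, handle the exchange term by dropping it (it is negative, so one needs a correlation/positivity estimate — this is exactly where repulsivity of $w$, i.e.\ $\widehat w\ge0$, enters, letting one absorb the exchange term into the positive direct term modulo lower-order losses, as in the Lieb–Oxford / Conlon arguments), and then recognize the resulting expression as (approximately) $\ER^{\rm TF}$ evaluated at $\hbar^d\rho_{\gamma_\hbar}$, which is $\ge E^{\rm TF}$ by definition. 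Uniform a priori bounds ($\hbar^d\tr\gamma_\hbar$, $\hbar^d\tr((-\hbar^2\Delta)\gamma_\hbar)$, and $\int(\hbar^d\rho_{\gamma_\hbar})^{1+2/d}$ bounded) are needed first and follow from comparing $\ER^{\rm HF}_{\hbar,V-E,w}(\gamma_\hbar)\le\ER^{\rm HF}_{\hbar,V-E,w}(0)=0$ together with confinement of $V$.

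The third step converts energy convergence into density convergence. From $E_\hbar^{\rm HF}\to E^{\rm TF}$ and the a priori bounds, the rescaled densities $\hbar^d\rho_{\gamma_\hbar}$ are bounded in $L^1\cap L^{1+2/d}(\R^d)$, hence (up to a subsequence) converge weakly to some $\rho_*$; the lower-bound argument, now run as a \emph{near-equality}, forces $\ER^{\rm TF}_{V-E,w}(\rho_*)=E^{\rm TF}$ and all the error terms to vanish, so $\rho_*$ is a TF minimizer $\rho_{\rm TF}$ and, by strict convexity of $\rho\mapsto\int\rho^{1+2/d}$ and the repulsive quadratic form, the convergence upgrades to strong $L^{1+2/d}$ and $L^1$ convergence. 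Pointwise convergence of $\hbar^d\rho_{\gamma_\hbar}(x)\to\rho_{\rm TF}(x)$ then follows from the pointwise Weyl-law machinery applied to the effective one-body operator $-\hbar^2\Delta+V+w*(\hbar^d\rho_{\gamma_\hbar})-E$ (whose spectral projector $\gamma_\hbar$ approximately is, by the HF Euler–Lagrange equation), using the local regularity of the effective potential and the coherent-state estimates of Conlon to control the off-diagonal of $\gamma_\hbar$, exactly as in the proof of \eqref{eq:pw-without-int_1ordterm} perturbed by the now-controlled mean-field term.

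The main obstacle is the lower bound in the presence of the negative exchange term: one must show it does not destroy the TF lower bound, which requires a quantitative correlation estimate exploiting $\widehat w\ge0$ to dominate $\hbar^{2d}\iint|\gamma_\hbar(x,y)|^2 w(x-y)$ by a vanishing fraction of the direct energy plus an $o(1)$ remainder uniformly over all admissible grand-canonical states — the grand-canonical setting removes the particle-number constraint that one would otherwise use, so this estimate must be genuinely operator-theoretic (e.g.\ via $0\le\gamma_\hbar\le1$ giving $|\gamma_\hbar(x,y)|^2\le\rho_{\gamma_\hbar}(x)$ pointwise bounds, or via a Fefferman–de la Llave type decomposition of $w$), and reconciling it with the semiclassical scaling is the crux of the argument.
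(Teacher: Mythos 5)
Your overall architecture matches the paper's: energy convergence via an upper bound with explicit trial states and a coherent-state (Husimi) lower bound, a priori bounds $\tr((-\hbar^2\Delta+V+1)\gamma_\hbar)\lesssim\hbar^{-d}$ obtained from coercivity plus confinement and repulsivity (this is exactly the point where the grand-canonical absence of a trace constraint is compensated), weak convergence of $\hbar^d\rho_{\gamma_\hbar}$ in $L^1\cap L^{1+2/d}$ to a Thomas--Fermi minimizer, and a Conlon-type pointwise Weyl law for the effective mean-field operator coming from the Euler--Lagrange equation. One difference of route in the energy part: you propose to neutralize the exchange term by absorbing it into the direct term using $\hat w\ge0$ (Lieb--Oxford style), whereas the paper never uses positivity for this; it bounds $\hbar^{2d}\abs{{\rm Ex}_w(\gamma)}\le \hbar^d\big(\varepsilon_\hbar\tr(-\hbar^2\Delta\gamma)+\tilde\varepsilon_\hbar\tr\gamma\big)$ (Proposition \ref{fact:control_exch}) and then invokes the trace/kinetic bound, so the exchange is $o(1)$ for \emph{any} admissible state satisfying Assumption \ref{cond:trace_bd_h^d}; repulsivity is used only to make the TF energy (and hence the limit) finite and to get the coercivity bound. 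Note also that your fallback estimate ``$\abs{\gamma(x,y)}^2\le\rho_\gamma(x)$'' is not correct as stated (Cauchy--Schwarz gives $\abs{\gamma(x,y)}^2\le\rho_\gamma(x)\rho_\gamma(y)$ for $\gamma\ge0$), and even the corrected version only yields $\abs{{\rm Ex}_w}\le D_{\abs{w}}(\rho_\gamma,\rho_\gamma)$, which is of the \emph{same} order as the direct term, so it does not by itself prove negligibility.

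The genuine gaps are in the passage to the pointwise law. First, the claimed upgrade from weak to strong $L^1\cap L^{1+2/d}$ convergence ``by strict convexity'' is unsubstantiated: energy convergence does not give convergence of $\int(\hbar^d\rho_{\gamma_\hbar})^{1+2/d}$ (the quantum kinetic term only dominates it through a Lieb--Thirring constant, not the sharp semiclassical one), and the paper explicitly does not obtain strong convergence; moreover even strong $L^1$ convergence would only give a.e.\ convergence along a subsequence, not the stated limit at every $x$, so this step cannot carry the pointwise result. Second, in applying the Conlon machinery you replace $\gamma_\hbar$ by the spectral projector of $-\hbar^2\Delta+V+w\ast(\hbar^d\rho_{\gamma_\hbar})-E$, but the Euler--Lagrange equation (Theorem \ref{thm:EL_minimiz}) contains the exchange operator $-\hbar^dX_w(\gamma_\hbar)$; showing that this operator does not alter the diagonal heat-kernel asymptotics is a substantial part of the paper (Lemma \ref{lemma:pointwise-WL_kill-exchange} and the Duhamel/Young-inequality estimate \eqref{eq-demo:exp-kill-exchange}, which even repairs Conlon's original argument and forces the extra $L^2$ hypothesis on $w$ in $d=1$), and your sketch omits it entirely. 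Third, the Feynman--Kac/Tauberian argument needs the effective potential $\hbar^d\rho_{\gamma_\hbar}\ast w$ to converge \emph{uniformly on compacts} to $\rho_{\rm TF}\ast w$ (to control the modulus-of-continuity terms); weak convergence of the densities only gives pointwise convergence of the convolutions, and the paper must assume $\nabla w\in L^{1+d/2}+L^\infty$ (condition \eqref{cond:equicont_conlon}) and run an equicontinuity/Ascoli argument to get uniformity -- your appeal to ``local regularity of the effective potential'' does not identify this hypothesis or the argument. Without these three ingredients the pointwise Weyl law, and hence half of the stated theorem, is not proved.
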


In the case where the minimizer of $\ER_{V-E,w}^{\rm TF}$ is unique (which is for instance the case when $\hat{w}\ge0$, that implies the convexity of the functional), the subsequence can be dropped and the convergence holds for the whole limit $\hbar\to0$.

To prove this result, we follow the approach of Fournais-Lewin-Solovej \cite{fournais2018semi} who treated the canonical case (that is, the asymptotics as $N\to\infty$ of a system of $N$ particles). The grand canonical case is similar, with the following differences: while building test functions is simpler due to the absence of a trace constraint (in	the canonical setting, one assumes that $\tr\gamma=N$), one has to use the confinement of the potential together with the repulsiveness of the interactions to infer that for minimizers, the trace has the right behaviour in $\hbar$ (that is, $\tr\gamma_\hbar	\lesssim \hbar^{-d}$). Let us notice that \cite{fournais2018semi} also proves the above convergences for full many-body ground states (that is, without the Hartree-Fock approximation). The grand canonical case requires a more significant change in the full many-body proof, see Section \ref{sec:lower-bound-whole-mb}. Another notable difference with \cite{fournais2018semi} is that we obtain a pointwise convergence as well, and for this we use an argument due to Conlon \cite{conlon1983semi}.

\subsection{Notation}

Let us fist sum up all notation of the objects used in the article. 
\bigskip

\begin{tabular}{lll}
	\hline
	$V$ & confining exterior potential on $\R^d$ & Assumption \ref{cond:confining}
	 \\
	$E\in\R$ & chemical potential on $\R^d$&\\
	$w$ & potential of interaction &\\
	\hline
	$\hbar>0$ &  semiclassical parameter, reduced Planck constant &\\
	\hline
	$\ER^{\rm HF}_{\hbar,V,w}$ & $\hbar$-Hartree-Fock energy functional & see \eqref{eq-def:HF-energy}\\
	$\ER^{\rm rHF}_{\hbar,V,w}$ & $\hbar$-reduced Hartree-Fock energy functional &\\
	$\ER^{\rm rHF}_{\hbar,V,w,\lambda}$ & $\lambda$-enriched $\hbar$-reduced Hartree-Fock energy functional & see \eqref{eq-def:TF-energy_enriched}\\
	$\ER^{\rm TF}_{\hbar,V,w}$ & Thomas-Fermi energy functional & see \eqref{eq-def:TF-energy}\\
	$\ER^{\rm TF}_{\hbar,V,w,\lambda}$ & $\lambda$-enriched Thomas-Fermi energy functional  & see \eqref{eq-def:TF-energy_enriched}\\
	$\ER^{\rm Vlas}_{\hbar,V,w}$ & Vlasov energy functional & see \eqref{eq-def:Vlas-energy}
\end{tabular}

\bigskip

\begin{tabular}{lll}
		\hline
	$e_{\hbar,V,w}$ & ground state energy of the $\hbar$-grand-canonical Hamiltonian & see \eqref{eq-def:gs-energy-grandcan} \\
	$e^{\rm HF}_{\hbar,V,w}$ & $\hbar$-Hartree-Fock ground state energy  & see \eqref{eq-def:gs-energy-HF}\\
	$e^{\rm rHF}_{\hbar,V,w}$ & $\hbar$-reduced Hartree-Fock ground state energy &\\
	$e^{\rm TF}_{\hbar,V,w}$ & Thomas-Fermi ground state energy  & see \eqref{eq-def:gs-energy-TF}\\
	$e^{\rm Vlas}_{\hbar,V,w}$ & Vlasov ground state energy & see \eqref{eq-def:gs-energy-Vlas}\\
	\hline
	$\XR$ & energy space of Hartree-Fock functional & see \eqref{eq-def:set-HF}\\
	$\KR$ & admissible states of Hartree-Fock energy & see \eqref{eq-def:set-HF_min}\\
	$\KR^{\rm Vlas}$ & admissible states of Vlasov energy& see \eqref{eq-def:set-Vlas}\\
	$\XR^{\rm TF}$ & energy space of Thomas-Fermi functional& see \eqref{eq-def:set-TF}\\
	\hline
	$H_{\gamma,\hbar}$, $H_\gamma$ & $\hbar$-semiclassical mean-field operator& see \eqref{eq-def:H_gamma}\\
	\hline
\end{tabular}
\medskip

	In order to simplify notation, we will now write $V$ instead of $V-E$.
	We denote by $\omega_x(\delta,V)$ the modulus of continuity of $V$ at $x\in\R^d$ for $\delta>0$
	\begin{equation*}
		\omega_x(\delta,V):=\sup_{\{y\in\R^d\: : \: \abs{x-y}\leq\delta\}}\abs{V(x)-V(y)}.
	\end{equation*}
\bigskip

We then introduce more rigorously the Hartree-Fock, Thomas-Fermi, Vlasov and the grand canonical many-body fermionic ground state problems. Then, we will state the assumptions on the potentials $V$ and $w$ and the main results of the paper.

\subsubsection{Hartree-Fock setting}

Let us recall the Hartree-Fock functional expression $\ER_{\hbar,V,w}^{\rm HF}$ (c.f. \eqref{eq-def:HF-energy}) and let us explain each term is defined
\begin{equation*}
	\ER_{\hbar,V,w}^{\rm HF}(\gamma)
	=\hbar^d\tr((-\hbar^2\Delta+V)\gamma)+\frac{\hbar^{2d}}2\left(D_w(\rho,\rho)-{\rm Ex}_w(\gamma)\right).
\end{equation*}
Here, $D_w$ and ${\rm Ex}_w$ respectively denote the \emph{direct term} and \emph{exchange term} defined by
\begin{equation*}
	D_w(\rho,\tilde{\rho}) :=\int_{\R^d}\int_{\R^d} \rho(x)\tilde{\rho}(y)w(x-y)dx dy
\text{ and }
{\rm Ex}_w(\gamma) := \int_{\R^d}\int_{\R^d} \abs{\gamma(x,y)}^2w(x-y)dx dy.
\end{equation*}
Let $\Sigma>0$ large enough such that $V+\Sigma\geq 0$, the expression $\tr((-\hbar^2\Delta+V)\gamma)$ means that
	\begin{equation*}
		\tr((-\hbar^2\Delta+V)\gamma) := \tr(\sqrt{-\hbar^2\Delta+V+\Sigma}\gamma\sqrt{-\hbar^2\Delta+V+\sigma})-\Sigma\tr(\gamma).
	\end{equation*}

We denote by $\schatten^1$ the set of trace-class operators on $L^2(\R^d)$, which is endowed by the norm
\begin{equation*}
	\norm{\gamma}_{\schatten^1}:=\tr(\sqrt{\gamma^*\gamma}).
\end{equation*}
We introduce the definition set of $\ER_{\hbar,V,w}^{\rm HF}$ for a good enough $w$
\begin{equation}\label{eq-def:set-HF}
	\XR := \{ \gamma\in\schatten^1 \: :\: \normSch{(-\Delta+V+1)^{1/2}\gamma (-\Delta+V+1)^{1/2}}{1}{}<+\infty \},
\end{equation}
endowed with the norm
\begin{equation*}
	\norm{\gamma}_\XR := \normSch{(-\Delta+V+1)^{1/2}\gamma (-\Delta+V+1)^{1/2}}{1}{}.
\end{equation*}
By abuse of notation, without any possible conflict, we sometimes denote by $\ER^{\rm HF}_\hbar$ the Hartree-Fock energy, instead of $\ER^{\rm HF}_{\hbar,V,w}$. We look at the variational problem for a fixed $\hbar>0$ 
\begin{equation}\label{eq-def:gs-energy-HF}
	\inf_{\gamma\in\KR}\ER^{\rm HF}_{\hbar,V,w}(\gamma)=:e^{\rm HF}_{\hbar,V,w},
\end{equation}
on the convex closed subset of $\XR$
\begin{equation}\label{eq-def:set-HF_min}
	\KR := \{\gamma\in\XR \: :\: 0\leq \gamma\leq 1\}.
\end{equation}
where we minimize the Hartree-Fock functional.
We denote by $e^{\rm HF}_\hbar$ this ground state energy (we explain in Sections \ref{subsec:coerc-HF} and \ref{subsec:min-HF} under what conditions on the considered potentials $V$, $w$ this problem is well-posed).

We will show that the integrated Weyl law's asymptotics remain true if we consider even density matrices $\gamma_\hbar\in\KR$ that approach minimizers of the Hartree-Fock functional. 
	We call \emph{almost-minimizers} or \emph{approximate minimizers}, operators $\gamma_\hbar\in\KR$ such that there exists $\varepsilon_\hbar=o_\hbar(1)$ such that
	\begin{equation}\label{eq-def:almost-min-HF}
	\ER^{\rm HF}_{\hbar,V,w}(\gamma_\hbar)=e^{\rm HF}_{\hbar,V,w}+\varepsilon_\hbar.
	\end{equation}
We will prove semiclassical limits (Theorems \ref{thm:weak-WL_HF} and \ref{thm:int-WL_HF}) on these objects.

	At the first glance, the interest in working with such objects is not obvious when the problems admit minimizers. This is the case in this paper, which deals only with external confining potentials. A reasonable perpective to this paper is to extend these semiclassical limits to more general external potentials and to excited states. This is far from being obvious and the question remains open for now. The existence of Hartree-Fock minimizers is indeed more subtle for unconfined potentials (see for instance \cite{lions1987solutions}). Even in this case, there are always approximate minimizers, and they would naturally be the objects to consider.

In the semiclassical limit $\hbar\to 0$, the exchange term of $\ER^{\rm HF}_\hbar$ becomes negligible (see for instance \cite{bach1992error} who proved it in a canonical setting. It is what we prove in Lemma \ref{lemma:HF-to-rHF}). Therefore, it will therefore be easier to work with the \emph{reduced Hartree-Fock} functional $\ER^{\rm rHF}_\hbar$ which is the Hartree-Fock functional without its exchange term. We denote by $e^{\rm rHF}_\hbar$ its associated ground state energy.

\subsubsection{Thomas-Fermi and Vlasov settings}

The Thomas-Fermi energy functional $\ER^{\rm TF}_{V,w}$ (c.f. \eqref{eq-def:TF-energy}) is defined on the set of trial states
\begin{equation}\label{eq-def:set-TF}
	\XR^{\rm TF}_V:=\left\lbrace \rho\in L^1(\R^d)\cap L^{1+2/d}(\R^d),\: \rho\geq 0:\: \int_{\R^d}V(x)\rho(x)dx<\infty\right\rbrace
	.
\end{equation}
We denote by $e^{\rm TF}$ its ground state energy of the Thomas-Fermi functional, defined by the formula \eqref{eq-def:TF-energy}
\begin{equation}\label{eq-def:gs-energy-TF}
	e^{\rm TF}_{V,w} := \inf_{\rho\in\XR^{\rm TF}_V}\ER^{\rm TF}_{V,w}(\rho)
	.
\end{equation}
Let us introduce also the \emph{Vlasov energy} functional

\begin{equation}\label{eq-def:Vlas-energy}
\begin{split}
	\ER^{\rm Vlas}_{V,w}(m)
	 &:=\frac 1{(2\pi)^d}\int_{\R^d}\int_{\R^d} \abs{\xi}^2m(x,\xi) dx d\xi + \int_{\R^d}V(x)\rho_m(x)dx
	\\&\quad\quad
	+ \frac 12 \int_{\R^d}\int_{\R^d} \rho_m(x)w(x-y)\rho_m(y)dx dy,
\end{split}
\end{equation}
defined on the phase-space densities set
\begin{equation}\label{eq-def:set-Vlas}
	\KR^{\rm Vlas}_V := \inf\left\lbrace m\in L^1(\R^d\times\R^d),\: 0\leq m\leq 1,\: \iint_{\R^d\times\R^d}(\abs{\xi}^2+V(x))m(x)dx<\infty\right\rbrace,
\end{equation}
for
\begin{equation*}
	\rho_m(x):=\frac 1{(2\pi)^d}\int_{\R^d} m(x,\xi)d\xi.
\end{equation*}
We define its ground state energy
\begin{equation}\label{eq-def:gs-energy-Vlas}
	e^{\rm Vlas}_{V,w}:= \inf_{m\in\KR^{\rm Vlas}_V}\ER^{\rm Vlas}_{V,w}(m).
\end{equation}
As we will see in Section \ref{subsec:link-Vlas-TF}, Vlasov and Thomas-Fermi functionals are closely related. In particular, the Vlasov energy is used to study the Hartree-Fock ground state and thus relate it to the Thomas-Fermi model.

\subsection{Main results}

The aim of this paper consists in investigating the first leading term of the Weyl law with interactions, i.e.\ the one of Hartree-Fock minimizers' densities
\begin{itemize}
	\item into an integrated form: see Theorem \ref{thm:int-WL_HF} (proved in Section \ref{sec:proof-int-WL}),
	\item a pointwise form: see Theorem \ref{thm:pointwise-WL_HF} (proved in Section \ref{sec:proof-pointwise-WL}).
\end{itemize}
But in fact, as mentioned in the informal introduction, we provide also
\begin{itemize}
	\item the structure of $\hbar$-Hartree-Fock minimizers: see Theorem \ref{thm:EL_minimiz} (proved in Section \ref{subsec:EL-eq-min-HF}),
	\item the semiclassical weak asymptotics of $\hbar$-Hartree-Fock minimizers: see Theorem \ref{thm:weak-WL_HF} (proved in Section \ref{sec:proof-weak-WL}),
	\item  the semiclassical convergence of Hartree-Fock and the whole system ground state to Thomas-Fermi one: see Theorem \ref{thm:sc-lim-HF} (proved in Section \ref{sec:conv-gs-HF}) and Theorem \ref{thm:lower-bound-whole-mb} (proved in Section \ref{sec:lower-bound-whole-mb}).
\end{itemize}
We present below the precise assumptions, the statements of these Theorems and their interconnections.

\subsubsection*{Assumptions on the potentials}

Without loss of generality,  let us assume assume here that
\begin{assump}\label{cond:confining}
	$V:\R^d\to\R$ is a continuous non-negative potential such that $V(x)\to+\infty$ as $\abs{x}\to+\infty$.
\end{assump}

We detail below definitions of repulsivity on the interaction potential.

\begin{assump}[Repulsive potential]\label{cond:w-Dterm}
	Let $w$ be an even real-valued function $L^1_{\loc}(\R^d)$ such that
	\begin{equation*}
	\forall\rho\in L^\infty_c(\R^d,\R_+)\quad D_w(\rho,\rho)\geq 0.
	\end{equation*}
\end{assump}

\begin{rmk}\label{rmk:w-Dterm}
	Assumption \ref{cond:w-Dterm} ensures that the lower semicontinuous Thomas-Fermi functional is non-negative on $\XR_V^{\rm TF}$. In particular, the associated ground state energy should be finite.
	Note that when $w$ does not satisfy Assumption \ref{cond:w-Dterm}, the associated Thomas-Fermi ground state energy is not bounded from below for dimensions $d\geq 3$ where the contribution of the direct term is more important than the one of the kinetic one
	\begin{equation*}
		e^{\rm TF}_{V,w}=-\infty.
	\end{equation*}
	For instance, by choosing $\rho$ such that $D_w(\rho,\rho)<0$, and defining $\rho_n:=n\rho$ for any $n\in\N$, the energy $\ER^{\rm TF}_{V,w}(\rho_n)\to -\infty$ as $n\to \infty$.
	In this case, if we prove that the ground state behaves as the Thomas-Fermi model at the first order term, then the ground state of the system diverges.
\end{rmk}

We have restricted our study to repulsive potentials since we treat grand-canonical systems and then we want to bound by below the Thomas-Fermi ground state energy. It would also be interesting to understand what happens in the attractive case, since that is what we do in dimensions 1 and 2.

\begin{assump}[Alternative of attractive potential for dimensions $d=1,2$]\label{cond:w-Dterm-d=12}
	Let $p\in(1,\infty)$.
	Let $w\in L^1\cap L^p(\R^d)$ an even real-valued function such that
	\begin{equation}\label{eq:cond-E_HF_bound_below-d=12}
		\normLp{(\hat{w})_-}{\infty}{(\R^d)} <\frac 12\begin{cases}
		(2\sqrt{\pi})^{-1}
		&\text{if}\ d=1,\\
		(2\pi)^{-1}C_{\rm LT}^{-2} &\text{if}\ d=2
		.
	\end{cases}
	\end{equation}
	Here $C_{\rm LT}=C_{{\rm LT},d}>0$ denotes the constant in the kinetic Lieb-Thirring inequality (see for instance \cite[Theorem 3.2]{lieb-thirring1975}, \cite{lieb1976inequalities} and see \cite[Prop. 4]{frank-hunder-jex-nam2021} for an uppper bound)
	that holds for any operator $\gamma\geq 0$
	\begin{equation}\label{eq:LT-ineq}
		%\forall\gamma\geq 0,\quad
		\normLp{\rho_\gamma}{1+2/d}{(\R^d)}\leq C_{{\rm LT,}d}\tr((-\Delta)\gamma)^{\frac d{d+2}}\norm{\gamma}_{L^2\to L^2}^{\frac 2{d+2}}
		.
	\end{equation}
\end{assump}

\begin{rmk}
	When the potential $w$ is non-negative, the direct term $D_w(\rho_\gamma,\rho_\gamma)$ is always non-negative.
	Moreover, $D_w(\rho_\gamma,\rho_\gamma)\geq {\rm Ex}_w(\gamma) $ and $V$ is trapping.
	Thus, since $V$ is bounded from below, there exists such that for any $\gamma\in\KR$ and any $\hbar>0$,
	\begin{equation*}
		\ER^{\rm HF}_{\hbar,V,w}(\gamma)\geq \hbar^d\tr((-\hbar^2\Delta+V)\gamma)\geq \hbar^d\tr((-\hbar^2\Delta+V)_-)\geq -C.  
	\end{equation*}
	Therefore, the energy $\inf_{\gamma\in\KR}\ER^{\rm HF}_{\hbar,V-E,w}(\gamma)$ is always bounded from below for any $\hbar>0$.
	We will show that it remains true under more general assumptions. We prove it in Lemma \ref{lemma:E_HF_bound_below} of Section \ref{subsec:about-prop-HF}.
\end{rmk}

In the statement of our results, there appears also the following assumption
%% CORRECTION_JMP footnote

\begin{defi}\label{def:w_Lp+Linfty-eps}
	Let $p\in[1,\infty]$.
	A function $w\in L^p(\R^d)+L^\infty_\varepsilon(\R^d)$ if for any $\varepsilon>0$, there exist $w_1\in L^p(\R^d)$ and $w_2\in L^\infty(\R^d)$ such that $\normLp{w_2}{\infty}{(\R^d)}\leq\varepsilon$ and $w=w_1+w_2$.
\end{defi}

	We will see that it is needed for the lower bound on the limit of the ground state energy in Theorem \ref{thm:sc-lim-HF}.
We will state later conditions on $p$ that guarantee the well-definition of the Hartree-Fock energy functional and the well-posedness of the related minimization problem (see Section \ref{subsec:about-prop-HF}).

\subsubsection*{Statement of the results}

The integral and pointwise Weyl laws actually rely on a weaker version.

\begin{thm}[Weak semiclassical limit of the density]\label{thm:weak-WL_HF}
	Let $V:\R^d\to\R$ be continuous and such that $V(x)\to+\infty$ as $\abs{x}\to+\infty$, and let $w\in L^{1+d/2}(\R^d)+L^\infty_\varepsilon(\R^d)$  satisfying Assumption \ref{cond:w-Dterm} (or in dimensions $d=1,2$, $w\in L^1(\R^d)\cap L^{1+d/2}(\R^d)$ with Assumption \ref{cond:w-Dterm-d=12}). Let $\{\gamma_\hbar\}_{\hbar>0}\subset\KR$ be a sequence such that for any $\hbar>0$, $\gamma_\hbar$ is an approximate minimizer of the $\hbar$-Hartree-Fock energy $\ER^{\rm HF}_{\hbar,V,w}$, defined in \eqref{eq-def:almost-min-HF}).
	Then, there exist $\rho_{\rm TF}$ a minimizer of the Thomas-Fermi energy $\ER^{\rm TF}_{V,w}$ and a decreasing subsequence $\{\hbar_n\}_{n\in\N}\subset\R_+^*$ such that $\hbar_n\to 0$ as $n\to+\infty$ and such that $\hbar_n^d\rho_{\gamma_{\hbar_n}}\rightharpoonup\rho_{\rm TF}$ weakly in $L^1(\R^d)\cap L^{1+2/d}(\R^d)$ as $n\to +\infty$.
\end{thm}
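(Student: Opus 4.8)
The plan is to run the standard two-step compactness-plus-energy-matching argument adapted to the grand-canonical setting. First I would establish the a priori bounds that make the sequence $\{\hbar_n^d \rho_{\gamma_{\hbar_n}}\}$ precompact in the weak topology of $L^1 \cap L^{1+2/d}$. The key input here is that $\gamma_\hbar$ is an approximate minimizer, so $\ER^{\rm HF}_{\hbar,V,w}(\gamma_\hbar) = e^{\rm HF}_{\hbar,V,w} + o(1)$, and — crucially in the grand-canonical case — one must first show $e^{\rm HF}_{\hbar,V,w}$ has the right order, namely $e^{\rm HF}_{\hbar,V,w} = O(\hbar^{-d}) \cdot \hbar^d = O(1)$ after the $\hbar^d$-scaling. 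The upper bound comes from a semiclassical trial state ${\rm Op}^{\rm w}_\hbar(a)$ built from a minimizer of the Vlasov (or Thomas-Fermi) functional, which is easy here precisely because there is no trace constraint. The lower bound requires combining the confinement of $V$ with the repulsivity of $w$ (Assumption \ref{cond:w-Dterm}, or the alternative \ref{cond:w-Dterm-d=12} in $d=1,2$ via the Lieb-Thirring inequality \eqref{eq:LT-ineq}) to conclude $\hbar^d \tr\gamma_\hbar \lesssim 1$ and $\hbar^d \tr((-\hbar^2\Delta + V)\gamma_\hbar) \lesssim 1$. From the kinetic bound and the Lieb-Thirring inequality \eqref{eq:LT-ineq} applied to $\gamma_\hbar$ (after rescaling $x \mapsto \hbar^{-1}x$ or tracking the $\hbar$-powers directly), one gets $\|\hbar^d \rho_{\gamma_\hbar}\|_{L^{1+2/d}} \lesssim 1$; combined with $\|\hbar^d\rho_{\gamma_\hbar}\|_{L^1} = \hbar^d\tr\gamma_\hbar \lesssim 1$ and tightness from $\int V \hbar^d\rho_{\gamma_\hbar} \lesssim 1$ (using $V$ confining), this yields a weakly convergent subsequence $\hbar_n^d\rho_{\gamma_{\hbar_n}} \rightharpoonup \rho$ in $L^1 \cap L^{1+2/d}$ with $\rho \in \XR^{\rm TF}_V$.

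Second, I would identify the weak limit $\rho$ as a Thomas-Fermi minimizer by a matched upper/lower bound on the energy. By Lemma \ref{lemma:HF-to-rHF} the exchange term is negligible, so it suffices to work with $\ER^{\rm rHF}_\hbar$. For the liminf of $\hbar^d$ times the reduced energy: the kinetic term is handled by the semiclassical (coherent-state / Weyl-quantization) lower bound for $\hbar^d\tr((-\hbar^2\Delta+V)\gamma_\hbar)$, which after passing to the limit dominates $\frac{d}{d+2}c_{\rm TF}\int \rho^{1+2/d} + \int V\rho$ — this is where the semiclassical phase-space analysis and the link between Vlasov and Thomas-Fermi functionals (Section \ref{subsec:link-Vlas-TF}) enter, via weak-$*$ convergence of the Husimi/Wigner measures of $\gamma_\hbar$ to a phase-space density $m$ with $\rho_m = \rho$ and the bathtub principle forcing $m$ to minimize over $\xi$. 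The direct term $\frac{\hbar^{2d}}{2}D_w(\rho_{\gamma_\hbar},\rho_{\gamma_\hbar}) = \frac12 D_w(\hbar^d\rho_{\gamma_\hbar},\hbar^d\rho_{\gamma_\hbar})$ is lower semicontinuous under weak convergence because $w \in L^{1+d/2} + L^\infty_\varepsilon$ is repulsive (one splits $w = w_1 + w_2$, uses that $\rho \mapsto D_{w_1}(\rho,\rho)$ with $w_1 \geq$ its negative part controlled, and a convexity/positivity argument; in $d=1,2$ one instead absorbs the possibly-negative part using \eqref{eq:cond-E_HF_bound_below-d=12}). Altogether $\liminf_n \hbar_n^d \ER^{\rm rHF}_{\hbar_n}(\gamma_{\hbar_n}) \geq \ER^{\rm TF}_{V,w}(\rho) \geq e^{\rm TF}_{V,w}$. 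For the matching upper bound, the semiclassical trial state built from a TF minimizer $\rho_*$ gives $\limsup_n \hbar_n^d e^{\rm HF}_{\hbar_n} \leq e^{\rm TF}_{V,w}$ (the direct and kinetic terms of the trial state converge to the TF values, the exchange and the error terms in the Weyl calculus vanish). Since $\gamma_{\hbar_n}$ is an approximate minimizer, chaining the inequalities forces $\ER^{\rm TF}_{V,w}(\rho) = e^{\rm TF}_{V,w}$, i.e. $\rho = \rho_{\rm TF}$ is a TF minimizer, and we set $\rho_{\rm TF} := \rho$.

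I expect the main obstacle to be the a priori trace bound $\hbar^d\tr\gamma_\hbar \lesssim 1$, which has no analogue in the canonical problem of \cite{fournais2018semi} where $\tr\gamma = N$ is imposed by fiat. One cannot simply drop the interaction to bound the trace, since without a constraint the linear-plus-repulsive-direct energy could a priori be lowered by a large-trace state; the resolution is to exploit that for a minimizer the Euler-Lagrange relation forces $\gamma_\hbar$ to be a spectral projector of the mean-field operator $H_{\gamma_\hbar,\hbar} = -\hbar^2\Delta + V + \hbar^d w * \rho_{\gamma_\hbar}$ below level $0$ (recall $V$ now means $V - E$), and then confinement of $V$ together with repulsivity ($w*\rho_{\gamma_\hbar} \geq$ something controlled, or at least not too negative) gives a semiclassical Weyl-type bound on $\tr \indicatrice{H_{\gamma_\hbar,\hbar} \leq 0} \lesssim \hbar^{-d}$. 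Making this rigorous for approximate (not exact) minimizers requires a perturbative version of the argument, or first proving the bound for genuine minimizers (which exist here) and then propagating it; this is the step where the grand-canonical analysis genuinely diverges from \cite{fournais2018semi} and deserves the most care. The secondary technical point is the lower semicontinuity of the direct term under merely weak $L^1 \cap L^{1+2/d}$ convergence when $w$ has both signs in the $d=1,2$ case, which is handled by the smallness condition \eqref{eq:cond-E_HF_bound_below-d=12} but needs the Lieb-Thirring inequality to close.
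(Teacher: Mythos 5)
Your overall architecture coincides with the paper's proof: a priori bounds on $\hbar^d\tr((-\hbar^2\Delta+V+1)\gamma_\hbar)$ for almost-minimizers, Lieb--Thirring to get boundedness of $\hbar^d\rho_{\gamma_\hbar}$ in $L^1\cap L^{1+2/d}$ and tightness via $\int V\hbar^d\rho_{\gamma_\hbar}\lesssim 1$, then Husimi measures $m_\hbar$, the identity $\ER^{\rm rHF}_{\hbar}(\gamma_\hbar)=\ER^{\rm Vlas}_{V,w}(m_\hbar)+o_\hbar(1)$, weak-$\ast$ lower semicontinuity of the Vlasov functional, the bathtub identification of the limit $m$ (Lemma \ref{lemma:optVlas-TF}), and the matching of the two weak limits $\rho=\rho_m=\rho_{\rm TF}$ forced by the energy convergence of Theorem \ref{thm:sc-lim-HF}.

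However, your closing discussion of the ``main obstacle'' contains a misconception that, if followed, would derail the proof at exactly the step you identify as critical. You claim that one cannot obtain $\hbar^d\tr\gamma_\hbar\lesssim1$ by dropping the interaction, and propose instead to go through the Euler--Lagrange structure $\gamma_\hbar=\indicatrice{H_{\gamma_\hbar}\le0}$ plus a Weyl bound on the mean-field projector. This route is both unnecessary and problematic: it is unavailable for \emph{approximate} minimizers (for which the theorem is stated), and it is essentially circular, since bounding $\tr\indicatrice{-\hbar^2\Delta+V+\hbar^d w\ast\rho_{\gamma_\hbar}-\hbar^d X_w(\gamma_\hbar)\le0}$ requires a lower bound on the mean-field potential, i.e.\ control of $\hbar^d\rho_{\gamma_\hbar}$ itself — and Assumption \ref{cond:w-Dterm} only gives positive-definiteness of $D_w$, not pointwise positivity of $w$, while the exchange term must also be tamed. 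The direct argument does work, and is how the paper proceeds (Lemmas \ref{lemma:E_HF_bound_below} and \ref{lemma:trace_bd_h^d-almost-min}): after dropping the nonnegative direct term and absorbing the exchange term via \eqref{eq:control_exch-err}, the Pauli constraint $0\le\gamma\le1$ together with confinement gives, splitting along the spectral projectors of $-\hbar^2\Delta+V$ at level $M$ and using the integrated Weyl law for the compactly supported potential $-(V-M)_-$, the coercivity bound $\tr((-\hbar^2\Delta+V)\gamma)\ge M\tr(\gamma)-C_M\hbar^{-d}$; a large-trace state therefore \emph{raises} the energy, because only an $O(\hbar^{-d})$ phase-space volume is available below energy $M$. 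Combined with the $O(1)$ upper bound on $\hbar^d e^{\rm HF}_{\hbar,V,w}$ from a trial state, this yields the trace and kinetic bounds for all almost-minimizers at once, with no perturbative or EL input. A secondary remark: the paper does not invoke abstract weak lower semicontinuity of $\rho\mapsto D_w(\rho,\rho)$; in Lemma \ref{lemma:vlas-wlsci} the direct term is shown to be \emph{continuous} along the Husimi densities, using that $w\ast\rho_{m_\hbar}$ is bounded in $L^\infty$ and the $V$-tightness coming from confinement, which also covers the $d=1,2$ case where $w$ may change sign.
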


	Again, once we have the appropriate bounds of Lemma \ref{lemma:trace_bd_h^d-almost-min} for (almost-)minimizers of HF, the proof of Theorem \ref{thm:int-WL_HF} will follow from well-known	arguments from \cite{lieb1977thomas,fournais2018semi} using Husimi transforms and weak-* lower semi-continuity of the Vlasov functional.

\begin{rmk}[Assumptions on $w$]\label{rmk:hyp-p-w-conv-dens_HF}
	We restrict the exponent $p$ that appears in Theorem \ref{thm:sc-lim-HF}  
	\begin{equation*}
	w\in L^p(\R^d)+L^\infty_\varepsilon(\R^d) \quad(\text{ or } w\in L^1(\R^d)\cap L^p(\R^d))
	\end{equation*}
	to $p=1+d/2$ so that the sequence of almost Hartree-Fock minimizers' densities $\rho_{\gamma_\hbar}\in L^{1+2/d}(\R^d)$. This ensures they are trial states of the Thomas-Fermi functional and that the limit is well-defined.
\end{rmk}

%% other rmk DETAILS PhD

From this weak convergence, we deduce an equivalent of the Weyl law \eqref{eq:int-without-int_1ordterm} for the Hartree-Fock approximation.
%%%%

\begin{thm}[Integrated Weyl law]\label{thm:int-WL_HF}
	Let $V:\R^d\to\R$ be a continuous function such that $V(x)\to+\infty$ as $\abs{x}\to+\infty$ and $w\in L^{1+d/2}(\R^d)+L^\infty_\varepsilon(\R^d)$ which satisfies Assumption \ref{cond:w-Dterm} (or in dimensions $d=1,2$, $w\in L^1(\R^d)\cap L^{1+d/2}(\R^d)$ satisfies Assumption \ref{cond:w-Dterm-d=12}). Let $\{\gamma_\hbar\}_{\hbar>0}\subset\KR$ be a sequence of almost-minimizers of the $\hbar$-Hartree-Fock energy $\ER^{\rm HF}_{\hbar,V,w}$.
	Then, we have the semiclassical asymptotic (up to a decreasing subsequence $\{\hbar_n\}_n\subset\R_+^*$, $\hbar_n\to 0$)
		\begin{equation*}
		\lim_{\hbar\to 0}\hbar^d\tr(\gamma_\hbar)=\lim_{\hbar\to 0}\hbar^d\int_{\R^d}\rho_{\gamma_\hbar}(x)dx=\int_{\R^d}\rho_{\rm TF}(x)dx ,
		\end{equation*}
	for some minimizer $\rho_{\rm TF}$ of the Thomas-Fermi energy $\ER^{\rm TF}_{V,w}$ (the one in Theorem \ref{thm:weak-WL_HF}).
\end{thm}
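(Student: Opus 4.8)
The plan is to derive the integrated Weyl law from the weak convergence already established in Theorem~\ref{thm:weak-WL_HF}, upgraded by a matching of masses. Fix the decreasing subsequence $\{\hbar_n\}_n$ and the Thomas--Fermi minimizer $\rho_{\rm TF}$ provided by that theorem, so that $\hbar_n^d\rho_{\gamma_{\hbar_n}}\rightharpoonup\rho_{\rm TF}$ weakly in $L^1(\R^d)\cap L^{1+2/d}(\R^d)$. Since $\hbar^d\tr(\gamma_\hbar)=\hbar^d\int_{\R^d}\rho_{\gamma_\hbar}(x)\,dx$, the first equality in the statement is just the definition of $\rho_{\gamma_\hbar}$ together with the fact that approximate minimizers lie in $\KR\subset\schatten^1$. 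So the content is the convergence of the total mass $\hbar_n^d\int\rho_{\gamma_{\hbar_n}}$ to $\int\rho_{\rm TF}$. Weak $L^1$ convergence does not by itself give convergence of integrals (mass can escape to infinity), so the real work is to rule out loss of mass.

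First I would establish a uniform bound $\hbar^d\tr(\gamma_\hbar)\lesssim\hbar^{-d}\cdot\hbar^d=\OR(1)$; this is exactly the trace bound for (almost-)minimizers referred to in the excerpt as Lemma~\ref{lemma:trace_bd_h^d-almost-min}, which uses the confinement of $V$ together with the repulsiveness of $w$ (Assumption~\ref{cond:w-Dterm}, or the $d=1,2$ alternative Assumption~\ref{cond:w-Dterm-d=12}). Combined with the energy upper bound $\ER^{\rm HF}_{\hbar,V,w}(\gamma_\hbar)=e^{\rm HF}_{\hbar,V,w}+o(1)$ and the matching energy expansion $\hbar^d e^{\rm HF}_{\hbar,V,w}\to e^{\rm TF}_{V,w}$ from Theorem~\ref{thm:sc-lim-HF}, one gets that $\hbar^d\tr((-\hbar^2\Delta+V)\gamma_\hbar)$ is bounded, hence in particular $\hbar^d\int_{\R^d}V(x)\rho_{\gamma_\hbar}(x)\,dx\leq C$ uniformly. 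Because $V$ is confining, for every $\varepsilon>0$ there is $R$ with $V\geq C/\varepsilon$ outside $B(0,R)$, so $\hbar^d\int_{|x|>R}\rho_{\gamma_\hbar}(x)\,dx\leq\varepsilon$ uniformly in $\hbar$; this is the tightness that prevents mass from escaping to infinity. On the ball $B(0,R)$ the weak $L^1$ convergence, together with the uniform $L^{1+2/d}$ bound (which gives equi-integrability on bounded sets by de la Vall\'ee-Poussin), yields $\hbar_n^d\int_{B(0,R)}\rho_{\gamma_{\hbar_n}}\to\int_{B(0,R)}\rho_{\rm TF}$. Letting $\varepsilon\to0$ (and noting $\rho_{\rm TF}$ has compact support, so its tail is eventually zero) gives $\hbar_n^d\int_{\R^d}\rho_{\gamma_{\hbar_n}}\to\int_{\R^d}\rho_{\rm TF}$, which is the claim.

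The main obstacle is precisely the tightness/no-loss-of-mass argument, and the step feeding it is the uniform trace bound of Lemma~\ref{lemma:trace_bd_h^d-almost-min}: in the grand-canonical setting there is no a priori constraint $\tr\gamma_\hbar=N$, so one genuinely needs the interplay between the divergence of $V$ at infinity and the non-negativity of the direct term $D_w(\rho_{\gamma_\hbar},\rho_{\gamma_\hbar})$ to force $\tr\gamma_\hbar=\OR(\hbar^{-d})$ — otherwise the potential energy $\hbar^d\int V\rho_{\gamma_\hbar}$ would blow up while the interaction could not compensate it (this is where repulsiveness is essential, and why Remark~\ref{rmk:w-Dterm} warns that otherwise $e^{\rm TF}_{V,w}=-\infty$). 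Once that bound and the tightness it produces are in hand, the passage from weak convergence plus tightness plus equi-integrability to convergence of masses is standard, and it remains only to invoke Theorem~\ref{thm:weak-WL_HF} for the identity $\rho_{\rm TF}$ and to record the trivial identity $\tr(\gamma_\hbar)=\int\rho_{\gamma_\hbar}$ to complete the proof.
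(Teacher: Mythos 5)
Your proposal is correct and follows essentially the same route as the paper: the trace bound of Lemma~\ref{lemma:trace_bd_h^d-almost-min} gives the uniform control $\hbar^d\int V\rho_{\gamma_\hbar}\leq C$, which combined with the confinement of $V$ yields tightness outside a large ball, while on the ball the weak $L^1$ convergence of Theorem~\ref{thm:weak-WL_HF} (tested against the indicator of the ball) gives convergence of the masses. The only cosmetic difference is that your appeal to equi-integrability via de la Vall\'ee-Poussin on the ball is superfluous, since weak $L^1$ convergence against $\indicatrice{B_R}\in L^\infty$ already suffices, exactly as in the paper's argument.
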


	The proof of Theorem \ref{thm:int-WL_HF} will follow from the weak convergence proved in	Theorem \ref{thm:weak-WL_HF} together with the fact that V is confining.

%% other rmk DETAILS PhD

The proofs of Theorems \ref{thm:weak-WL_HF} and \ref{thm:int-WL_HF} actually rely on the convergence of the ground state energies.

\begin{thm}\label{thm:sc-lim-HF}
	Let $d\geq1$ and $p\in(\max(d/2),\infty)$.
	For any $V\in\CR(\R^d,\R)$ such that $V(x)\to+\infty$ as $\abs{x}\to+\infty$, and any $w\in L^p(\R^d)+L^\infty_\varepsilon(\R^d)$ which satisfies Assumption \ref{cond:w-Dterm} (or $w\in L^1(\R^d)\cap L^p(\R^d)$ with Assumption \ref{cond:w-Dterm-d=12} for $d=1,2$), we have
		\begin{equation*}
		\lim_{\hbar\to 0} e^{\rm HF}_{\hbar,V,w} =	\lim_{\hbar\to 0} e^{\rm rHF}_{\hbar,V,w}= e^{\rm TF}_{V,w}.
		\end{equation*}
\end{thm}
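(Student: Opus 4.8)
The plan is to prove Theorem \ref{thm:sc-lim-HF} by a standard two-sided estimate: an upper bound $\limsup_{\hbar\to0} e^{\rm HF}_{\hbar,V,w}\le e^{\rm TF}_{V,w}$ obtained by constructing semiclassical trial states, and a matching lower bound $\liminf_{\hbar\to0} e^{\rm rHF}_{\hbar,V,w}\ge e^{\rm TF}_{V,w}$; since $e^{\rm rHF}_{\hbar,V,w}\le e^{\rm HF}_{\hbar,V,w}$ always (the exchange term is subtracted and, by Assumption \ref{cond:w-Dterm} or via the Lieb--Thirring bound in low dimensions, controlled), the three quantities are squeezed together. First I would fix a minimizer $\rho_{\rm TF}$ of $\ER^{\rm TF}_{V,w}$ (or, if $w$ is only known to make the functional bounded below, a near-minimizer $\rho$ which is bounded, compactly supported and continuous by a density argument), and quantize it: take $\gamma_\hbar={\rm Op}^{\rm w}_\hbar(m)$ with $m(x,\xi)=\indicatrice{|\xi|^2\le c_{\rm TF}\rho_{\rm TF}(x)^{2/d}}$, smoothed slightly so that $0\le\gamma_\hbar\le1$ and $\gamma_\hbar\in\KR$. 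Coherent-state (Husimi) computations then give $\hbar^d\tr((-\hbar^2\Delta+V)\gamma_\hbar)\to \frac{d}{d+2}c_{\rm TF}\int\rho_{\rm TF}^{1+2/d}+\int V\rho_{\rm TF}$, $\hbar^{2d}D_w(\rho_{\gamma_\hbar},\rho_{\gamma_\hbar})\to D_w(\rho_{\rm TF},\rho_{\rm TF})$ (here one uses $w\in L^p+L^\infty_\varepsilon$ with $p>\max(1,d/2)$ to control the direct term via $\hbar^d\rho_{\gamma_\hbar}\to\rho_{\rm TF}$ in $L^1\cap L^{1+2/d}$), and $\hbar^{2d}{\rm Ex}_w(\gamma_\hbar)\to0$ by the smoothing and a Hölder/Lieb--Thirring estimate on $\gamma_\hbar(x,y)$; this yields the upper bound.

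For the lower bound I would work with $\ER^{\rm rHF}_{\hbar,V,w}$ directly. Given an almost-minimizer $\gamma_\hbar$ of $\ER^{\rm rHF}_\hbar$, the first point is an a priori bound: using that $V$ is confining and $w$ repulsive (so the direct term is $\ge0$), the reduced energy controls $\hbar^d\tr((-\hbar^2\Delta+V)\gamma_\hbar)$ from above, whence $\hbar^d\tr(\gamma_\hbar)\lesssim1$, $\hbar^d\tr((-\hbar^2\Delta)\gamma_\hbar)\lesssim1$, and by Lieb--Thirring \eqref{eq:LT-ineq} $\hbar^d\rho_{\gamma_\hbar}$ is bounded in $L^1\cap L^{1+2/d}$; in dimensions $1,2$ the alternative smallness condition \eqref{eq:cond-E_HF_bound_below-d=12} is what makes the same bound survive a possibly negative direct term. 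Passing to a subsequence, $\hbar^d\rho_{\gamma_\hbar}\rightharpoonup\rho$ weakly in $L^1\cap L^{1+2/d}$, and the associated Husimi measures $m_\hbar$ converge weakly-$*$ to some $m\in\KR^{\rm Vlas}_V$ with $\rho_m=\rho$. By weak lower semicontinuity of the kinetic and external-potential parts of the Vlasov functional, and since the direct term is weakly lower semicontinuous on $L^{1+2/d}$ when $\hat w\ge0$ — or can be handled in the $L^p+L^\infty_\varepsilon$ splitting because the $L^p$-part is compact and the $L^\infty$-part is $\varepsilon$-small — one gets $\liminf \hbar^d\ER^{\rm rHF}_\hbar(\gamma_\hbar)\ge \ER^{\rm Vlas}_{V,w}(m)\ge e^{\rm Vlas}_{V,w}$. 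It then remains to invoke the identification $e^{\rm Vlas}_{V,w}=e^{\rm TF}_{V,w}$ (the content of Section \ref{subsec:link-Vlas-TF}, where one minimizes the Vlasov functional over $\xi$ at fixed $\rho_m$ and recovers the Thomas-Fermi $\rho^{1+2/d}$ term), giving $\liminf_{\hbar\to0} e^{\rm rHF}_{\hbar,V,w}\ge e^{\rm TF}_{V,w}$.

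Combining, $e^{\rm TF}_{V,w}\le\liminf e^{\rm rHF}_{\hbar,V,w}\le\limsup e^{\rm HF}_{\hbar,V,w}\le e^{\rm TF}_{V,w}$, and since $e^{\rm rHF}_{\hbar,V,w}\le e^{\rm HF}_{\hbar,V,w}$ the two HF energies have the common limit $e^{\rm TF}_{V,w}$; for this last squeeze one also needs $\limsup(\ER^{\rm HF}_\hbar-\ER^{\rm rHF}_\hbar)\le0$ along minimizers, i.e.\ that the exchange term is asymptotically negligible, which is Lemma \ref{lemma:HF-to-rHF}.

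I expect the main obstacle to be the \emph{a priori trace bound} $\hbar^d\tr(\gamma_\hbar)\lesssim1$ for reduced-Hartree-Fock minimizers in the grand-canonical setting: unlike the canonical case there is no constraint $\tr\gamma=N$ to lean on, so one must genuinely exploit the interplay between the confinement of $V$ (which penalizes spread-out mass) and the repulsiveness of $w$ (which penalizes concentrated mass, or at least does not reward it) to show minimizers cannot carry an anomalously large number of particles — and in $d=1,2$ one must instead extract this from the quantitative smallness \eqref{eq:cond-E_HF_bound_below-d=12} via Lieb--Thirring. A secondary subtlety is the treatment of the direct term $D_w(\rho,\rho)$ under the mere hypothesis $w\in L^p+L^\infty_\varepsilon$ (rather than $\hat w\ge0$): the $L^\infty_\varepsilon$-part forces one to keep track of an $\varepsilon\|\rho\|_1^2$ error throughout and let $\varepsilon\to0$ at the end, and the lower semicontinuity in the convergence of the direct term along the weakly convergent densities has to be argued with care rather than quoted.
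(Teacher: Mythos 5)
Your overall strategy coincides with the paper's (upper bound by semiclassical trial states, lower bound through Husimi measures and the Vlasov functional with $e^{\rm Vlas}_{V,w}=e^{\rm TF}_{V,w}$, squeeze via negligibility of the exchange term), and your trial-state construction for the upper bound — a quantization of $\indicatrice{\abs{\xi}^2\le c_{\rm TF}\rho^{2/d}}$ instead of the paper's Dirichlet spectral projector $\indicatrice{-\hbar^2\Delta_{C_R}-c_{\rm TF}\rho^{2/d}\le 0}$ extended by zero — is an acceptable variant. However, two steps of your lower bound are asserted rather than proved. The first is the a priori bound: your ``whence $\hbar^d\tr(\gamma_\hbar)\lesssim 1$'' hides the whole difficulty, because $V\ge 0$ confining does not make $\tr((-\hbar^2\Delta+V)\gamma)$ control $\tr(\gamma)$. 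The paper's Lemma \ref{lemma:E_HF_bound_below} obtains $\tr((-\hbar^2\Delta+V)\gamma)\ge M\tr(\gamma)-C_M\hbar^{-d}$ by splitting along the spectral projectors $\indicatrice{-\hbar^2\Delta+V\lessgtr M}$ and counting the low-lying states with the integrated Weyl law for the non-interacting operator, and only then gets the coercivity $\ER^{\rm HF}_{\hbar,V,w}(\gamma)\ge\tfrac{\hbar^d}4\tr((-\hbar^2\Delta+V+1)\gamma)-C$ (absorbing the exchange term via \eqref{eq:control_exch-err}, and in $d=1,2$ the direct term via \eqref{eq:cond-E_HF_bound_below-d=12}), which is exactly what feeds Lemma \ref{lemma:trace_bd_h^d-almost-min}. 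You correctly flag this as the main obstacle, but you do not supply the mechanism.

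The second, and more serious, issue is your treatment of the direct term in the lower bound: you pass to a weak-$*$ limit $m$ of the Husimi measures and invoke lower semicontinuity of the Vlasov functional. Under Assumption \ref{cond:w-Dterm} the form $\rho\mapsto D_w(\rho,\rho)$ is only nonnegative on nonnegative densities, not positive semidefinite on their differences, so it is not weakly lower semicontinuous in general; your fallbacks (``$\hat w\ge 0$'', or ``the $L^p$-part is compact'') either do not cover the theorem's hypotheses or require an argument you do not give, since weak convergence of $\rho_{m_\hbar}$ alone does not make $w_1\ast\rho_{m_\hbar}$ converge strongly in the relevant dual space. The paper avoids this entirely: using the exact identity \eqref{lemma:relation_m_h_gamma_h-dir_term}, $D_w(\rho_{\gamma_\hbar},\rho_{\gamma_\hbar})=\hbar^{-2d}D_w(\rho_{m_\hbar},\rho_{m_\hbar})+D_{w-w\ast|f^\hbar|^2\ast|f^\hbar|^2}(\rho_{\gamma_\hbar},\rho_{\gamma_\hbar})$, together with \eqref{lemma:relation_m_h_gamma_h-lin_term}, Proposition \ref{fact:control_direct} and the $L^p+L^\infty_\varepsilon$ splitting, it shows $\ER^{\rm rHF}_{\hbar,V,w}(\gamma_\hbar)=\ER^{\rm Vlas}_{V,w}(m_\hbar)+o_\hbar(1)$ and then bounds $\ER^{\rm Vlas}_{V,w}(m_\hbar)\ge e^{\rm Vlas}_{V,w}=e^{\rm TF}_{V,w}$ for each $\hbar$ (admissibility of $m_\hbar$ is Claim \ref{claim:bound-m_h}); no weak limit or semicontinuity is needed for the energy convergence — Lemma \ref{lemma:vlas-wlsci} is only used later for Theorem \ref{thm:weak-WL_HF}. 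Replacing your limit-and-lsc step by this direct comparison repairs the argument. A minor slip: $e^{\rm rHF}_{\hbar,V,w}\le e^{\rm HF}_{\hbar,V,w}$ is not automatic (for $w\ge 0$ the inequality goes the other way, and under Assumption \ref{cond:w-Dterm} the exchange term has no sign); what one actually uses is $\abs{e^{\rm HF}_{\hbar,V,w}-e^{\rm rHF}_{\hbar,V,w}}=o_\hbar(1)$, i.e.\ Corollary \ref{cor:HF-to-rHF_gs}, which you essentially invoke through Lemma \ref{lemma:HF-to-rHF}.
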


	The proof of Theorem \ref{thm:sc-lim-HF} will follow from the same arguments as \cite[Prop. 2.5]{fournais2018semi} (upper and lower bounds) once we show our Lemma 2.5 (that is, while
	we do not have the condition $\tr\gamma=N$ as in \cite{fournais2018semi}, we will show that for (almost-)minimizers, we do have 	$\tr((-\hbar^2\Delta+V+1)\gamma_\hbar)\lesssim\hbar^{-d}$. We then show that it is enough to conclude the proof.

\begin{rmk}\label{rmk-intro:cond-repuls}
Note that the repulsivity conditions \ref{cond:w-Dterm} and \ref{cond:w-Dterm-d=12} just ensure that the fundamental Thomas-Fermi energy is well-defined and thus that the semiclassical limit is finite. However, if they are not verified, we have $e^{\rm TF}_{V,w}=-\infty$, so the limit is still true if we replace the conclusion by
	\begin{equation*}
	\limsup_{\hbar\to 0} e^{\rm HF}_{\hbar,V,w} =	\limsup_{\hbar\to 0} e^{\rm rHF}_{\hbar,V,w}=-\infty.
	\end{equation*}
\end{rmk}

\begin{rmk}\label{rmk-intro:WL-withoutint-energy}
	In the non-interacting case, by the Weyl law, the Hartree-Fock ground state energy $\ER^{\rm HF}_{\hbar,V,w=0}$ satisfies
	\begin{align*}
	\lim_{\hbar\to 0}e^{\rm HF}_{\hbar,V,w=0} 	= \lim_{\hbar\to 0}\hbar^d\tr((-\hbar^2\Delta+V)_-)
	&= 
	\frac{1}{(2\pi)^d}\iint_{\R^d\times\R^d}(\abs{\xi}^2+V(x))_- dxd\xi
	%
%	\\&\quad
%	=\frac{2\abs{\sphere^{d-1}}}{d(d+2)(2\pi)^d}\int_{\R^d}[(-V(x))_+]^{1+d/2}dx
	%
	\\&
	= \frac 2{d+2}\frac{\abs{B_{\R^d}(0,1)}}{(2\pi)^d}\int_{\R^d}[(-V(x))_+]^{1+d/2}dx
	.
	\end{align*}
	The right-hand side term to the factor $\hbar^d$ corresponds to the Thomas-Fermi ground state energy $e^{\rm TF}_{V,w=0}$.
	In fact, due to the Thomas-Fermi equation \eqref{eq:TF}, the unique minimizer is $\rho_{\rm TF}(x)=c_{\rm TF}^{-d/2}(-V(x))_+^{d/2}$ when $w=0$.
\end{rmk}

Let's state now our pointwise version of the Weyl law.

\begin{thm}[Pointwise semiclassical limit of the density]\label{thm:pointwise-WL_HF}
	Let $V:\R^d\to\R$ be continuous and such that $V(x)\to+\infty$ when $\abs{x}\to+\infty$.
		Let	$w:\R^d\to\R$ be a continuous even function such that
		\begin{itemize}
			\item $w\in L^{1+d/2}(\R^d)+L^\infty_\varepsilon(\R^d)$ with Assumption \ref{cond:w-Dterm}, for $d\geq 2$
			\item $w\in L^{1+d/2}(\R^d) \cap L^2(\R^d)+L^\infty_\varepsilon(\R^d)$ with Assumption \ref{cond:w-Dterm}, for $d\geq 1$,
			\item or alternatively $w\in L^1(\R^d)\cap L^{1+d/2}(\R^d)$ with Assumption \ref{cond:w-Dterm-d=12} for $d=2$,
			\item  $w\in L^1(\R^d)\cap L^{1+d/2}(\R^d) \cap L^2(\R^d)$ with Assumption \ref{cond:w-Dterm-d=12} for $d=1,2$.
		\end{itemize}
		and such that
		\begin{equation}\label{cond:equicont_conlon}
		\nabla w\in L^{1+d/2}(\R^d)+L^\infty(\R^d).
		\end{equation}
	Let $\{\gamma_\hbar\}_{\hbar>0}\subset\KR$ be a sequence of minimizers of the $\hbar$-Hartree-Fock energy $\ER^{\rm HF}_{\hbar,V,w}$ such that for any $\hbar>0$
	\begin{equation}\label{eq:pointwise-WL-EL}
	\gamma_\hbar = \indicatrice{-\hbar^2\Delta +V+\hbar^d\rho_{\gamma_\hbar}\ast w
		-\hbar^d X_w(\gamma_\hbar)
		\leq 0}.
	\end{equation}
	Then, we have the pointwise limit (up to a decreasing subsequence $\{\hbar_n\}_n\subset\R_+^*$, $\hbar_n\to 0$ as $n\to+\infty$), for any $x\in\R^d$
	\begin{equation*}
	\lim_{\hbar\to 0} \hbar^d\rho_{\gamma_\hbar}(x)= \rho_{\rm TF}(x),
	\end{equation*}
	where $\rho_{\rm TF}$ is a minimizer of the Thomas-Fermi energy $\ER^{\rm TF}_{V,w}$. 
\end{thm}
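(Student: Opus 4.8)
The plan is to combine three ingredients: (i) the weak convergence $\hbar_n^d\rho_{\gamma_{\hbar_n}}\rightharpoonup\rho_{\rm TF}$ from Theorem \ref{thm:weak-WL_HF} (which already extracts the subsequence and identifies the limit as a Thomas–Fermi minimizer), (ii) the Euler–Lagrange structure \eqref{eq:pointwise-WL-EL}, which says that $\gamma_\hbar$ is exactly the spectral projector $\indicatrice{-\hbar^2\Delta+V_\hbar^{\rm eff}\le 0}$ for the mean-field effective potential $V_\hbar^{\rm eff}:=V+\hbar^d\,w\ast\rho_{\gamma_\hbar}-\hbar^d X_w(\gamma_\hbar)$, and (iii) the pointwise Weyl law for Schrödinger operators, together with Conlon's equicontinuity argument \cite{conlon1983semi}, to upgrade the weak convergence to a pointwise one. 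First I would fix $x\in\R^d$ and write $\hbar^d\rho_{\gamma_\hbar}(x)=\hbar^d\,\indicatrice{-\hbar^2\Delta+V_\hbar^{\rm eff}\le 0}(x,x)$. The exchange term $\hbar^d X_w(\gamma_\hbar)$ should be shown to be negligible in a strong enough sense (it is $o(1)$ in $L^\infty$ using $\nabla w\in L^{1+d/2}+L^\infty$ and the $L^2$-type hypotheses on $w$, together with the trace and kinetic bounds of Lemma \ref{lemma:trace_bd_h^d-almost-min} which give control of $\gamma_\hbar(x,y)$), so that, up to a uniformly small error, $\gamma_\hbar=\indicatrice{-\hbar^2\Delta+V+\hbar^d\,w\ast\rho_{\gamma_\hbar}\le 0}$ plus a controlled perturbation.

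Next, the key equicontinuity step: following Conlon, I would show that the family of functions $x\mapsto\hbar^d\rho_{\gamma_\hbar}(x)$ is locally uniformly equicontinuous along the subsequence. The mechanism is that $\rho_{\gamma_\hbar}(x)=\indicatrice{-\hbar^2\Delta+V_\hbar^{\rm eff}\le 0}(x,x)$ and the diagonal of such a spectral projector can be compared, after rescaling $y=x+\hbar z$, to the diagonal of the projector for a frozen-potential operator on $\R^d$; the continuity of $V$ (with modulus of continuity $\omega_x(\delta,V)$) and the continuity and decay hypotheses on $w$ (giving continuity of $w\ast\rho_{\gamma_\hbar}$ uniformly in $\hbar$, since $\hbar^d\rho_{\gamma_\hbar}$ is bounded in $L^1\cap L^{1+2/d}$) control the variation of $V_\hbar^{\rm eff}$ on scale $\hbar$, which is exactly what is needed to bound $|\hbar^d\rho_{\gamma_\hbar}(x)-\hbar^d\rho_{\gamma_\hbar}(x')|$ in terms of $|x-x'|$ uniformly in $\hbar$. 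This is the role of condition \eqref{cond:equicont_conlon} on $\nabla w$: it gives a quantitative modulus of continuity for the mean-field potential. Once equicontinuity (plus the already-known local uniform boundedness from the Lieb–Thirring / trace bounds) is in hand, Arzelà–Ascoli lets me pass to a further subsequence along which $\hbar_n^d\rho_{\gamma_{\hbar_n}}$ converges locally uniformly to some continuous function $\rho_*$; but weak $L^1$ convergence from Theorem \ref{thm:weak-WL_HF} forces $\rho_*=\rho_{\rm TF}$ a.e., hence everywhere by continuity, and since the limit is the same for every further subsequence the whole sequence $\hbar_n^d\rho_{\gamma_{\hbar_n}}(x)$ converges to $\rho_{\rm TF}(x)$.

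Finally, to actually identify the limit one also needs the local (pointwise) Weyl law with a continuous, $\hbar$-dependent but locally-uniformly-convergent potential: since $V_\hbar^{\rm eff}\to V+w\ast\rho_{\rm TF}$ locally uniformly along the subsequence and $\rho_{\rm TF}$ satisfies the Thomas–Fermi equation \eqref{eq:TF}, the pointwise Weyl law \eqref{eq:pw-without-int_1ordterm} applied with effective potential $V+w\ast\rho_{\rm TF}$ gives $\hbar^d\indicatrice{-\hbar^2\Delta+V+w\ast\rho_{\rm TF}\le 0}(x,x)\to \frac{|B_{\R^d}(0,1)|}{(2\pi)^d}(-(V(x)+w\ast\rho_{\rm TF}(x)))_+^{d/2}=\rho_{\rm TF}(x)$; a perturbation/continuity-in-the-potential argument transfers this to $V_\hbar^{\rm eff}$, giving $\rho_*=\rho_{\rm TF}$ independently of the weak-convergence route (the two arguments are really one, since fixing the limit at one point suffices). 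The main obstacle I expect is the equicontinuity estimate: controlling the diagonal of a spectral projector $\indicatrice{-\hbar^2\Delta+V_\hbar^{\rm eff}\le 0}(x,x)$ pointwise and uniformly in $\hbar$ — especially near the classically forbidden region where $-V_\hbar^{\rm eff}$ vanishes, and with only continuity (not smoothness) of $V$ — is delicate; Conlon's trick handles this, but making it work with the mean-field potential (which is itself only as regular as $w\ast\rho_{\gamma_\hbar}$ allows) and with the exchange term present is where the bulk of the technical work, and the precise hypotheses on $w$ and $\nabla w$, will be needed.
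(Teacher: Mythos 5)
Your overall skeleton matches the paper's: weak convergence from Theorem \ref{thm:weak-WL_HF}, the Euler--Lagrange form \eqref{eq:pointwise-WL-EL}, locally uniform convergence of the mean-field potential $\hbar^d\rho_{\gamma_\hbar}\ast w\to\rho_{\rm TF}\ast w$ (via the $\nabla w$ hypothesis and Ascoli), and a pointwise Weyl law for continuous, $\hbar$-dependent potentials in the spirit of Conlon, identified with $\rho_{\rm TF}$ through the Thomas--Fermi equation. But there is a genuine gap in your treatment of the exchange term. You claim $\hbar^d X_w(\gamma_\hbar)$ is ``$o(1)$ in $L^\infty$'' and can be absorbed as a small perturbation of the effective potential; $X_w(\gamma_\hbar)$ is an operator, not a multiplication operator, and no such smallness follows from the trace and kinetic bounds of Lemma \ref{lemma:trace_bd_h^d-almost-min}. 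Even an operator-norm bound of the type $\lVert\hbar^d X_w(\gamma_\hbar)\rVert\lesssim\hbar^{d/2}$ would require the pointwise bound $\rho_{\gamma_\hbar}\lesssim\hbar^{-d}$, which is \emph{not} given by Lieb--Thirring (that only yields $L^{1+2/d}$), and one would still have to show that the diagonal of the spectral projector is stable under such a perturbation at threshold. In the paper this is precisely the heaviest part of the proof: Lemma \ref{lemma:pointwise-WL_kill-exchange}, proved via a Duhamel expansion of the heat semigroup, term-by-term Young/H\"older estimates (Lemma \ref{lemma:exp-kill-exchange_Rn}), and a bootstrap using the self-consistent equation $\gamma_\hbar=\indicatrice{H_{\gamma_\hbar}\le0}$ to obtain $\lVert\rho_{\gamma_\hbar}\rVert_{L^\infty}=\OR(\hbar^{-d})$; the extra $L^2$ hypotheses on $w$ in low dimension exist exactly for this step, and the paper notes that Conlon's original argument here had to be corrected.

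A second, lesser issue is your central ``equicontinuity of $x\mapsto\hbar^d\rho_{\gamma_\hbar}(x)$ plus Arzel\`a--Ascoli'' step: no mechanism is offered for a modulus of continuity of the diagonal of a spectral projector uniform in $\hbar$ when $V$ is merely continuous (the needed local uniform boundedness is also not ``already known''), and this is not how Conlon's argument works. The paper instead applies Ascoli only to the mean-field potentials $V_\hbar=\hbar^d\rho_{\gamma_\hbar}\ast w$ and then argues pointwise at each fixed $x$, via the Feynman--Kac formula and the Hardy--Littlewood Tauberian theorem (Theorem \ref{thm:WL-without-int_conlon} and Lemma \ref{lemma:pointwise-WL_HF-h}); this is essentially the route you sketch in your last paragraph, but note that the smooth-potential pointwise Weyl law \eqref{eq:pw-without-int_1ordterm} cannot be invoked as a black box --- the continuous, $\hbar$-dependent version must itself be proved, which is what the Conlon-type heat-kernel argument supplies.
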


	We emphasize that Theorem \ref{thm:pointwise-WL_HF} does not follow from Theorem \ref{thm:weak-WL_HF} and \ref{thm:int-WL_HF}, since weak convergence in $L^1$ and convergence in $L^1$-norm does not imply
	convergence in $L^1$ in general. If it were true, one could then argue	that convergence in $L^1$ implies convergence almost everywhere up to a subsequence. Let us also mention that, even if one has convergence in $L^1$ of the	densities, this would only imply convergence almost everywhere of the densities up to	a subsequence, while our result implies for instance that if the Thomas-Fermi minimizer is unique, one even has convergence almost everywhere as $\hbar\to0$.

These asymptotics are another consequence of Theorem \ref{thm:weak-WL_HF} and of the pointwise Weyl law without interaction (see later Theorem \ref{thm:WL-without-int_conlon}), which only requires $\CR^0$ regularity of confining external potentials. This is less restrictive than the assumption mentioned in the regular statements of the Weyl law. As mentioned in the introduction, the proof follows Conlon's approach \cite{conlon1983semi}. It is therefore simpler than the usual one, which use semiclassical tools. However, we have to keep in mind that it provides only the leading term with non-optimal reminder.

We above assume that Hartree-Fock minimizers exist and that some of them have the form \eqref{eq:pointwise-WL-EL}. This is in fact true given Theorem \ref{thm:EL_minimiz}, which is stated and proved in Section \ref{subsec:EL-eq-min-HF}.	It is a variation of arguments of \cite{bach-lieb-loss-sol1994,lenzmann2010minimizers} in the	grand canonical setting. It will be important for the proof of Theorem \ref{thm:pointwise-WL_HF}.	

	As explained below, Theorem \ref{thm:sc-lim-HF} provides the semiclassical convergence of the ground state of the Hartree-Fock energy. In the canonical setting \cite{fournais2018semi}, when the number of particles $N$ of the system is fixed, the ground state energy per particle for the whole fermionic system converges also to the Thomas-Fermi ground state energy at the effective limit $\hbar\to 0$ ($\hbar=N^{-1/d}$). As expected, the result remains the same in the grand-canonical setting, that we now introduce.
\subsubsection{Whole many-body grand-canonical setting} 
A grand-canonical ensemble is composed of $\N$ identical systems of finite particles with a fixed chemical potential $E$, which is shared with the other particles and the energy.
When the particles are fermionic, the system of grand-canonical states is described by the \emph{fermionic Fock space}
\begin{equation*}
\mathcal{F}:=\C\oplus\bigoplus_{N=1}^\infty L^2_a(\R^{dN}),
\end{equation*}
composed of sequences $\Psi=(\Psi_0,\Psi_1,\cdots,\Psi_N,\cdots)$ such that $\Psi_N\in L^2_a(\R^{dN})$ for any $N\in\N^*$. This Fock space $\mathcal{F}$ is usually endowed by the scalar product
\begin{equation*}
\prodscal{\Psi}{\Phi}_{\mathcal{F}}=\sum_{N\in\N}\prodscal{\Psi_N}{\Phi_N}_{L^2(\R^{dN})}
.
\end{equation*}
Moreover, the quantum Hamiltonian is
\begin{equation*}
\mathbf{P}=\bigoplus_{N=1}^\infty P_N,
\end{equation*}
where $P_N$ is the Hamiltonian on $L^2(\R^{dN})$ defined by
\begin{equation*}%\label{eq-def:can-hamiltonian}
P_N:= \sum_{j=1}^N (-\hbar^2\Delta_{x_j}+V(x_j)) +\hbar^d\sum_{1\leq i<j\leq N}w(x_i-x_j).
\end{equation*}
The ground state energy $e_{\hbar,V,w}$ of $\mathbf{P}$ per particle
\begin{equation}\label{eq-def:gs-energy-grandcan}
e_{\hbar,V,w}:= \hbar^d\inf\spec(\mathbf{P})
\end{equation}
is also defined as a function of the ground state energies of the canonical Hamiltonians $P_N$, up to a factor $\hbar^d$
\begin{equation*}
	e_{\hbar,V,w}=\inf_{N\geq 0}e_{N,V,w} =: \hbar^d\inf_{N\geq 0} \spec(P_N).
\end{equation*}
%% FIN SECTION MANY-BODY
%%
If we add a regularity condition of $w$, one can prove the grand-canonical equivalent of Proposition \cite[Prop. 3.5]{fournais2018semi}:
	\begin{equation*}
	\lim_{\hbar\to 0} e_{\hbar,V,w}= e^{\rm TF}_{V,w},
	\end{equation*}
that we deduce easily from Theorem \ref{thm:sc-lim-HF} since we always have the upper bound
\begin{equation*}
	e_{\hbar,V,w} \leq e^{\rm HF}_{\hbar,V,w}.
\end{equation*}
\begin{thm}\label{thm:lower-bound-whole-mb}
	Let $d\geq 1$. Let $V:\R^d\to\R$ be continuous and such that $V(x)\to +\infty$ as $\abs{x}\to+\infty$.
	Let $w$ an even function such that $\hat{w}\in L^1(\R^d)$ such that
	\begin{itemize}
		\item  $w\in L^{1+d/2}(\R^d)+L^\infty_\varepsilon(\R^d)$ that satisfies Assumption \ref{cond:w-Dterm},
		\item or alternatively, for $d\in\{1,2\}$, we can assume $w\in L^1(\R^d)\cap L^{1+d/2}(\R^d)$ with Assumption \ref{cond:w-Dterm-d=12} . 
	\end{itemize}
	Then, one has
		\begin{equation*}
		\liminf_{\hbar\to 0} e_{\hbar,V,w}\geq e^{\rm TF}_{V,w}.
		\end{equation*}
\end{thm}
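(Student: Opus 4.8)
The plan is to establish the lower bound $\liminf_{\hbar\to0} e_{\hbar,V,w}\geq e^{\rm TF}_{V,w}$ for the full many-body grand-canonical energy, adapting the canonical argument of \cite[Section 3]{fournais2018semi} to the grand-canonical setting; the main novelty, as flagged in the introduction, is that there is no a priori particle-number constraint, so one first has to control the expected particle number of an (almost-)ground state. First I would reduce to working with a near-minimizing sequence $\Psi_\hbar\in\mathcal{F}$ with $\hbar^d\langle\Psi_\hbar,\mathbf{P}\Psi_\hbar\rangle\leq e_{\hbar,V,w}+o(1)$, and to its reduced one-body density matrix $\gamma_{\Psi_\hbar}$ on $L^2(\R^d)$, normalized as $\Gamma_\hbar:=\hbar^d\gamma_{\Psi_\hbar}$. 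Using the repulsivity of $w$ (Assumption \ref{cond:w-Dterm}), positivity of the interaction, and the fact that $V$ is confining together with $\hbar^d\tr((-\hbar^2\Delta+V)_-)\geq -C$, one gets that $\hbar^d\tr((-\hbar^2\Delta+V+1)\gamma_{\Psi_\hbar})\lesssim 1$ uniformly in $\hbar$ — this is the grand-canonical analogue of Lemma \ref{lemma:trace_bd_h^d-almost-min}, and it simultaneously gives the crucial bound $\hbar^d\tr(\gamma_{\Psi_\hbar})\lesssim 1$, i.e.\ the expected number of particles is $O(\hbar^{-d})$. In the attractive $d=1,2$ case one instead uses the Lieb--Thirring inequality \eqref{eq:LT-ineq} and the smallness condition \eqref{eq:cond-E_HF_bound_below-d=12} to absorb the negative part of the direct term into the kinetic energy.

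Next I would pass to phase space. Define the Husimi (coherent-state) transform $m_\hbar(x,\xi)$ of $\Gamma_\hbar$; the uniform bounds above give $0\leq m_\hbar\leq 1$ and $\iint(|\xi|^2+V(x))m_\hbar \lesssim 1$, so by weak-$*$ compactness (after extracting a decreasing subsequence $\hbar_n\to0$) $m_{\hbar_n}\rightharpoonup m$ for some $m\in\KR^{\rm Vlas}_V$, with $\rho_{m_{\hbar_n}}\rightharpoonup\rho_m$ as well. The standard lower-semicontinuity estimates — the kinetic energy bound $\hbar^d\tr(-\hbar^2\Delta\,\gamma_{\Psi_\hbar})\geq (2\pi)^{-d}\iint|\xi|^2 m_\hbar\,dx\,d\xi - o(1)$ and weak-$*$ lower semicontinuity of $m\mapsto\iint|\xi|^2 m$ and of the external-potential term — plus handling the two-body interaction (here the key point specific to the full many-body, not Hartree--Fock, setting: the direct term of $\langle\Psi_\hbar,\mathbf{P}\Psi_\hbar\rangle$ is $\frac{\hbar^{2d}}{2}\iint w(x-y)\rho^{(2)}_{\Psi_\hbar}$, and one must compare $\rho^{(2)}$ with $\rho\otimes\rho$; this is where $\hat{w}\in L^1$ enters, via the identity $\frac12\iint w(x-y)\rho(x)\rho(y) = \frac{1}{2(2\pi)^d}\int\hat w(k)|\hat\rho(k)|^2\,dk$ combined with a correlation/exchange estimate, to show the interaction energy is bounded below by $\frac12 D_w(\rho_m,\rho_m) - o(\hbar^{-d})$) together yield
\begin{equation*}
\liminf_{n\to\infty} e_{\hbar_n,V,w} \;\geq\; \ER^{\rm Vlas}_{V,w}(m) \;\geq\; e^{\rm Vlas}_{V,w}.
\end{equation*}
Finally, by the identification $e^{\rm Vlas}_{V,w}=e^{\rm TF}_{V,w}$ (established in Section \ref{subsec:link-Vlas-TF}, obtained by optimizing the Vlasov functional over the $\xi$-profile at fixed spatial density $\rho_m$, which produces exactly the Thomas--Fermi kinetic constant $\frac{d}{d+2}c_{\rm TF}$), the claimed bound follows.

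I expect the main obstacle to be the two-body interaction lower bound in the genuine many-body problem — controlling $\frac{\hbar^{2d}}{2}\iint w(x-y)(\rho^{(2)}_{\Psi_\hbar}(x,y) - \rho_{\Psi_\hbar}(x)\rho_{\Psi_\hbar}(y))$ and showing it is $o(\hbar^{-d})$, uniformly, using only $\hat w\in L^1$ and the a priori energy bounds. This is precisely the step the introduction singles out as ``a more significant change in the full many-body proof'' compared to \cite{fournais2018semi}; the standard canonical argument splits $\rho^{(2)}$ using the constraint $\tr\gamma=N$ and a Fefferman--de la Llave-type decomposition of $w$, and here one must instead run this argument with the soft bound $\tr\gamma_{\Psi_\hbar}=O(\hbar^{-d})$ in place of an exact particle number, checking that the error terms (exchange term, self-energy subtractions) remain of lower order. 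A secondary technical point is making the Husimi-transform lower bounds robust for merely continuous (not smooth) confining $V$, but this is handled exactly as in \cite{fournais2018semi} by approximating $V$ from below by smooth confining potentials and using monotonicity of the energies.
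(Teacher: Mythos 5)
Your proposal takes a genuinely different route from the paper, and it has a real gap at exactly the step you flag as the ``main obstacle''. You propose to work directly with a grand-canonical near-minimizer $\Psi_\hbar$, bound its one-body energy a priori, pass to Husimi measures, and control the two-body term by showing $\frac{\hbar^{2d}}{2}\iint w(x-y)\big(\rho^{(2)}_{\Psi_\hbar}(x,y)-\rho_{\Psi_\hbar}(x)\rho_{\Psi_\hbar}(y)\big)\,dx\,dy = o(\hbar^{-d})$. But under Assumption \ref{cond:w-Dterm} the Fourier transform $\hat w$ is allowed to change sign (only $D_w(\rho,\rho)\ge 0$ on nonnegative densities is required), and then no correlation inequality of this type follows from positive-definiteness or from a Fefferman--de la Llave decomposition: genuine many-body correlations can push $\tr(w\,\gamma^{(2)}_{\Psi})$ strictly below the Hartree value by an amount of order $\hbar^{-2d}$, and your sketch gives no mechanism to exclude this. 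Worse, your claimed a priori bound $\hbar^d\tr((-\hbar^2\Delta+V+1)\gamma_{\Psi_\hbar})\lesssim 1$ is itself circular: ``positivity of the interaction'' is only available at the level of one-body densities ($D_w(\rho,\rho)\ge0$), not for the two-particle density $\rho^{(2)}_{\Psi_\hbar}$, so bounding the kinetic and potential terms already requires the very interaction lower bound you postpone. So the plan, as written, reduces the theorem to an unproved (and, in this generality, nontrivial) correlation estimate.

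The paper avoids this obstacle entirely by a different strategy: split $w=w_1-w_2$ with $\hat w_1=(\hat w)_+$, $\hat w_2=(\hat w)_-$ (this is where $\hat w\in L^1$ enters, making $w_1(0),w_2(0)$ finite), and run Lieb's auxiliary-particle argument as in \cite[Prop.~3.5]{fournais2018semi}: freeze $L=N-M$ particle positions $y_1,\dots,y_L$ to screen the attractive part $w_2$, reduce to an $M$-particle operator $\tilde P_N$, and bound $\prodscal{\tilde\Psi}{\tilde P_N\tilde\Psi}$ below by the enriched reduced Hartree--Fock functional $\ER^{\rm rHF}_{\hbar,V-\tilde E_{\hbar,N},w,\frac{N-1}{M-1}}$ evaluated at $\gamma^{(1)}_{\tilde\Psi}$; the a priori bound $\tr((-\hbar^2\Delta+V+1)\gamma^{(1)}_{\tilde\Psi})\lesssim\hbar^{-d}$ then comes from the coercivity Lemma \ref{lemma:E_HF_bound_below}, and the conclusion follows from $\lim_{\hbar\to0}e^{\rm rHF}_{\hbar,V,w,1}=e^{\rm TF}_{V,w,1}$ rather than from a new Husimi/Vlasov analysis. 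The genuinely grand-canonical work, which your proposal does not address, is the choice $M=\lfloor(1-\varepsilon)N\rfloor$ (instead of $N-\lfloor\sqrt N\rfloor$): with the canonical choice the effective chemical potential term $\frac{w_2(0)}{2}\frac{N-1}{L-1}\hbar^d$ is not bounded uniformly in $N$ at fixed $\hbar$, and since here one takes $\inf_{N}$ before $\hbar\to0$, the error must be controlled uniformly over all $N\ge 2/\varepsilon$ (small $N$ being discarded separately), with $\varepsilon\to0$ only at the end. If you want to salvage your direct approach, you would need to prove the missing interaction lower bound; but the natural way to do so is precisely the $w_1-w_2$ splitting plus auxiliary particles, i.e.\ you would be led back to the paper's argument.
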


	\section{Useful preliminary properties}\label{sec:preliminary}
	
	\subsection{About the Hartree-Fock functional}\label{subsec:about-prop-HF}
	
	We state in this section some basic properties on the Hartree-Fock functional, namely that it is well-defined in a suitable space, and that minimizers exist (for the proofs we refer for instance to \cite{lenzmann2010minimizers}). Moreover, we prove the result on the coercivity of the Hartree-Fock energy and on the structure of the minimizers.

	\subsubsection{Setting and well-posedness of the Hartree-Fock energy}\label{subsec:HF-basis}
	
	By definition of $\XR$, the linear term $\gamma\mapsto\tr((-\hbar^2\Delta+V)\gamma)$ is defined and bounded on $\XR$.
	We make explicit the conditions on $w$ for which the direct term $ D_w(\rho_\gamma,\rho_\gamma)$ and the exchange term ${\rm Ex}_w(\gamma)$ are controlled by the $\XR$-norm, and for which the kinetic term of the Hartree-Fock energy can absorb the exchange term.

	\begin{prop}[Bound on the direct term]\label{fact:control_direct}
		Let $d\geq 1$ and 
		\begin{equation*}\label{eq:control_direct-p}
		p\in\begin{cases}
		[1,\infty]&\text{for}\ d=1,3,4,\\
		\{1\}&\text{for}\ d=2,\\
		\left[\frac d4,\infty\right] &\text{for}\ d\geq 5.
		\end{cases}
		\end{equation*}
		\begin{itemize}
			\item[(i)] Let  $w\in L^p(\R^d)+L^\infty(\R^d)$. Then, there exists $C>0$ such that for any $0\leq\gamma\leq1$
			\begin{equation}\label{eq:control_direct-d>2}
			\abs{D_w(\rho_\gamma,\rho_\gamma)}\leq C\left[\tr((-\Delta)\gamma)^2+\tr(\gamma)^2\right].
			\end{equation}
			\item[(ii)] Assume that $d\in\{1,2\}$ and let $w\in L^p(\R^d)$.  Then, for any $\hbar>0$ and any $\gamma\in\schatten^1(L^1(\R^d))$ such that $0\leq\gamma\leq1$ 
			\begin{equation}\label{eq:control_direct-d=12}
			-C_{\gamma,\hbar}(d)\normLp{(\hat{w})_-}{\infty}{(\R^d)}\leq \hbar^dD_w(\rho_\gamma,\rho_\gamma)\leq C_{\gamma,\hbar}(d)\normLp{(\hat{w})_+}{\infty}{(\R^d)},
			\end{equation}
			with
			\begin{equation*}
			C_{\gamma,\hbar}(d)=(2\pi)^{d/2}\begin{cases}
			2\tr((1-\hbar^2\Delta)\gamma)&\text{if}\ d=1,\\
			C_{{\rm LT},2}^2\tr((-\hbar^2\Delta)\gamma)&\text{if}\ d=2.
			\end{cases}
			\end{equation*}
		\end{itemize}
	\end{prop}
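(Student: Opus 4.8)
The plan is to reduce both estimates to Lebesgue-norm bounds on the density $\rho_\gamma$, which are then supplied by the Lieb--Thirring inequality \eqref{eq:LT-ineq}, the Hoffmann--Ostenhof inequality $\int_{\R^d}\abs{\nabla\sqrt{\rho_\gamma}}^2\le\tr((-\Delta)\gamma)$, Sobolev/Gagliardo--Nirenberg embeddings, and interpolation between $L^1$ and the highest Lebesgue exponent available in dimension $d$; the constraint $0\le\gamma\le1$ enters only through $\norm{\gamma}_{L^2\to L^2}\le1$ and through $\tr(\gamma)=\normLp{\rho_\gamma}{1}{(\R^d)}$. Part (i) is carried out in physical space, part (ii) on the Fourier side.

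For (i), first split $w=w_1+w_2$ with $w_1\in L^p(\R^d)$ and $w_2\in L^\infty(\R^d)$. The bounded part is immediate: $\abs{D_{w_2}(\rho_\gamma,\rho_\gamma)}\le\normLp{w_2}{\infty}{}\bigl(\int_{\R^d}\rho_\gamma\bigr)^2=\normLp{w_2}{\infty}{}\tr(\gamma)^2$. For $w_1$, write $D_{w_1}(\rho_\gamma,\rho_\gamma)=\prodscal{\rho_\gamma}{w_1\ast\rho_\gamma}$ and combine Young's convolution inequality with Hölder's inequality to get $\abs{D_{w_1}(\rho_\gamma,\rho_\gamma)}\le\normLp{w_1}{p}{}\normLp{\rho_\gamma}{s}{}\normLp{\rho_\gamma}{t}{}$ with $\tfrac1s+\tfrac1t=2-\tfrac1p$, the exponents $s,t\in[1,\infty]$ to be chosen momentarily. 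Then feed in the density bounds: $\normLp{\rho_\gamma}{1}{}=\tr\gamma$; for $d\ge3$, Hoffmann--Ostenhof together with $\dot H^1(\R^d)\hookrightarrow L^{2d/(d-2)}(\R^d)$ gives $\normLp{\rho_\gamma}{d/(d-2)}{}\le C\tr((-\Delta)\gamma)$; for $d=1$, $H^1(\R^d)\hookrightarrow L^\infty(\R^d)$ gives $\normLp{\rho_\gamma}{\infty}{}\le C(\tr((-\Delta)\gamma)+\tr\gamma)$; for $d=2$, the critical Ladyzhenskaya inequality $\normLp{f}{4}{}^2\le C\normLp{\nabla f}{2}{}\normLp{f}{2}{}$ applied to the eigenfunctions of $\gamma$, followed by Cauchy--Schwarz in the eigenvalues, gives $\normLp{\rho_\gamma}{2}{}^2\le C\tr((-\Delta)\gamma)\tr\gamma$. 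Choosing $s,t$ and interpolating each of $\normLp{\rho_\gamma}{s}{}$, $\normLp{\rho_\gamma}{t}{}$ between $L^1$ and the relevant endpoint, the product collapses (after a degree count) to $C\tr((-\Delta)\gamma)^\alpha\tr(\gamma)^{2-\alpha}$ with $\alpha\in[0,2]$ exactly for the stated range of $p$ (one gets $\alpha=d/(2p)$ when $d\ge3$, which is where $p\ge d/4$ is forced; in $d=2$ only the critical $L^2$-level estimate is available, which accounts for the restriction to $p=1$), and one finishes with Young's inequality $\tr((-\Delta)\gamma)^\alpha\tr(\gamma)^{2-\alpha}\le\tr((-\Delta)\gamma)^2+\tr(\gamma)^2$.

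For (ii), since $w\in L^1(\R^d)$ is real and even, $\widehat w\in L^\infty(\R^d)$ is real and $D_w(\rho_\gamma,\rho_\gamma)=(2\pi)^{d/2}\int_{\R^d}\widehat w(\xi)\,\abs{\widehat{\rho_\gamma}(\xi)}^2\,d\xi$, so bounding $\widehat w$ above by $\normLp{(\widehat w)_+}{\infty}{}$ and below by $-\normLp{(\widehat w)_-}{\infty}{}$ and invoking Plancherel $\int_{\R^d}\abs{\widehat{\rho_\gamma}}^2=\normLp{\rho_\gamma}{2}{}^2$ gives
\[
-(2\pi)^{d/2}\normLp{(\widehat w)_-}{\infty}{}\normLp{\rho_\gamma}{2}{}^2\ \le\ D_w(\rho_\gamma,\rho_\gamma)\ \le\ (2\pi)^{d/2}\normLp{(\widehat w)_+}{\infty}{}\normLp{\rho_\gamma}{2}{}^2 .
\]
It remains to bound $\hbar^d\normLp{\rho_\gamma}{2}{}^2$: for $d=2$ this is exactly the $L^{1+2/d}$-endpoint of \eqref{eq:LT-ineq}, $\normLp{\rho_\gamma}{2}{}^2\le C_{{\rm LT},2}^2\tr((-\Delta)\gamma)\norm{\gamma}_{L^2\to L^2}\le C_{{\rm LT},2}^2\hbar^{-2}\tr((-\hbar^2\Delta)\gamma)$; for $d=1$ one interpolates $\normLp{\rho_\gamma}{2}{}$ between $L^1$ and the Lieb--Thirring space $L^3$ (or uses a one-dimensional Gagliardo--Nirenberg bound on the eigenfunctions together with $\norm{\gamma}_{L^2\to L^2}\le1$), obtaining $\hbar\normLp{\rho_\gamma}{2}{}^2\le2\tr((1-\hbar^2\Delta)\gamma)$ after an arithmetic--geometric mean step $\hbar\tr(\gamma)^{1/2}\tr((-\Delta)\gamma)^{1/2}=\tr(\gamma)^{1/2}\tr((-\hbar^2\Delta)\gamma)^{1/2}\le\tfrac12\tr((1-\hbar^2\Delta)\gamma)$. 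Multiplying through by $(2\pi)^{d/2}\hbar^d$ yields the constants $C_{\gamma,\hbar}(d)$.

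The main obstacle is the bookkeeping in part (i): choosing $s,t$ and the interpolation endpoints so that the exponent of $\tr((-\Delta)\gamma)$ in the final product is at most $2$, with the precise dependence on $p$ and $d$ — in particular handling the critical case $d=2$, where only the $L^2$-level (Ladyzhenskaya / two-dimensional kinetic Lieb--Thirring) estimate is at hand and this is what pins down the admissible exponent. Part (ii) is then routine once the Plancherel normalization and the powers of $\hbar$ are tracked with care.
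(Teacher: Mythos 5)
Your proposal follows what is evidently the intended route (the paper itself gives no proof, only pointing to adaptations of Lenzmann--Lewin and of Lewin's Lemma~1), and part (i) is correct: the splitting $w=w_1+w_2$, Young plus H\"older with $\tfrac1s+\tfrac1t=2-\tfrac1p$, Hoffmann--Ostenhof combined with Sobolev ($d\ge3$), $H^1\hookrightarrow L^\infty$ ($d=1$) or the Ladyzhenskaya-type $L^2$ bound ($d=2$), interpolation against $L^1$ and the numerical Young inequality; your exponent count $\alpha=d/(2p)\le2$ does reproduce $p\ge d/4$ for $d\ge5$, and the symmetric choice $\tfrac1s=\tfrac1t=1-\tfrac1{2p}$ is Young-admissible. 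One inaccuracy: your scheme in fact proves (i) for every $p\in[1,\infty]$ when $d=2$, so the restriction to $p=1$ there is not ``accounted for'' by the $L^2$-level estimate (it is tied to part (ii), where $\hat w$ must be bounded); this is harmless since the stated case is covered. Part (ii) for $d=2$ is also fine and yields exactly the stated constant via Plancherel, $\hat w$ real, and the Lieb--Thirring endpoint with $\norm{\gamma}_{L^2\to L^2}\le1$.

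The genuine gap is the explicit constant in $C_{\gamma,\hbar}(1)$. Your main route gives, after interpolating between $L^1$ and the $d=1$ Lieb--Thirring space $L^3$,
\begin{equation*}
\hbar\normLp{\rho_\gamma}{2}{(\R)}^2\le C_{{\rm LT},1}^{3/2}\,\tr(\gamma)^{1/2}\tr((-\hbar^2\Delta)\gamma)^{1/2}\le \tfrac12\,C_{{\rm LT},1}^{3/2}\,\tr((1-\hbar^2\Delta)\gamma),
\end{equation*}
i.e.\ the constant you actually obtain is $C_{{\rm LT},1}^{3/2}/2$, not $2$; to match the statement you must either verify $C_{{\rm LT},1}\le 4^{2/3}$ from the quoted upper bounds on the Lieb--Thirring constant, or use an endpoint bound of the type $\normLp{\rho_\gamma}{2}{(\R)}^2\lesssim\tr(\abs{\nabla}\gamma)$ together with $\abs{\xi}\le\tfrac12(\hbar^{-1}+\hbar\xi^2)$, which is the mechanism behind the cited lemma. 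This is not cosmetic: the numerical threshold for $d=1$ in Assumption \ref{cond:w-Dterm-d=12} is calibrated to the factor $2$ in $C_{\gamma,\hbar}(1)$ and is used quantitatively in the proof of Lemma \ref{lemma:E_HF_bound_below}. Moreover, your alternative route (``one-dimensional Gagliardo--Nirenberg on the eigenfunctions together with $\norm{\gamma}_{L^2\to L^2}\le1$'') does not work as stated: writing $\rho_\gamma=\sum_j\lambda_j\abs{u_j}^2$ and applying Gagliardo--Nirenberg termwise gives, after H\"older in $j$, $\normLp{\rho_\gamma}{2}{}^2\lesssim\tr(\gamma)^{3/2}\tr((-\Delta)\gamma)^{1/2}$, which carries an extra factor $\tr(\gamma)$ and cannot be converted into the stated bound, which is linear in $\gamma$; the constraint $0\le\gamma\le1$ must enter through a Lieb--Thirring-type argument, not merely through the eigenfunction expansion.
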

	
	The proof of the two bounds above \eqref{eq:control_direct-d>2} and \eqref{eq:control_direct-d=12} are an adaptation of the one of \cite[Eq. (2.10)]{lenzmann2010minimizers} and \cite[Lem. 1]{lewin2015hartree}.

	\begin{prop}[Bound on the exchange term]\label{fact:control_exch}
		Let $d\geq 1$ and
		%% DETAILS
		\begin{equation*}\label{eq:control_exch-p}
		p\in
		\begin{cases}
		[1,+\infty] &\text{if}\ d=1,\\
		(1,+\infty] &\text{if}\ d=2,\\
		\left[\frac d2,+\infty\right] &\text{if}\ d\geq 3.
		\end{cases}
		\end{equation*}
		For any $w\in L^p(\R^d)+L^\infty(\R^d)$, there exists $C>0$ such that for any $0\leq\gamma\leq 1$ and any $\varepsilon>0$ 
		\begin{equation}\label{eq:control_exch}
		\abs{{\rm Ex}_w(\gamma)}\leq C\left[\varepsilon\tr((-\Delta)\gamma)+\big(1+\varepsilon^{-\frac d{2p-d}}\big)\tr(\gamma)\right].
		\end{equation}
		As a consequence, assuming $p>1$ when $d=1$, there exists $\varepsilon_\hbar,\tilde{\varepsilon}_\hbar>0$ such that $\varepsilon_\hbar,\tilde{\varepsilon}_\hbar=o_\hbar(1)$, such that for any $\gamma\in\XR$
		\begin{equation}\label{eq:control_exch-err}
			\hbar^d\abs{{\rm Ex}_w(\gamma)}\leq C\left[\varepsilon_\hbar\tr((-\hbar^2\Delta)\gamma)+\tilde{\varepsilon}_\hbar \tr(\gamma)\right].
		\end{equation}
	\end{prop}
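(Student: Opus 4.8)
The plan is to prove Proposition \ref{fact:control_exch} in two steps: first the static bound \eqref{eq:control_exch}, then the semiclassical consequence \eqref{eq:control_exch-err}. For the first, the starting point is the Hardy--Littlewood--Sobolev-type / Hölder estimate on the exchange term. Writing $w = w_1 + w_2$ with $w_1 \in L^p(\R^d)$ and $w_2 \in L^\infty(\R^d)$, one has
\begin{equation*}
	\abs{{\rm Ex}_w(\gamma)} \le \iint \abs{\gamma(x,y)}^2 \abs{w_1(x-y)}\,dx\,dy + \norm{w_2}_{L^\infty}\iint \abs{\gamma(x,y)}^2\,dx\,dy.
\end{equation*}
The second term is $\norm{w_2}_{L^\infty}\tr(\gamma^2) \le \norm{w_2}_{L^\infty}\tr(\gamma)$ since $0\le\gamma\le1$ (here $\gamma = \gamma^*$ and $\gamma^2\le\gamma$). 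For the first term, I would follow \cite{bach1992error} or the computation reproduced in \cite{lenzmann2010minimizers}: view $\iint \abs{\gamma(x,y)}^2 g(x-y)\,dx\,dy$ as a convolution-type expression and estimate it by Hölder in the ``diagonal'' variable against $\norm{w_1}_{L^p}$, producing a factor $\big(\iint \abs{\gamma(x,y)}^{2p'}\,\cdots\big)^{1/p'}$ that one controls via the Kato--Seiler--Simon / Sobolev inequality by $\tr((-\Delta)\gamma)^{a}\tr(\gamma)^{b}$ for appropriate exponents determined by $p$ and $d$ (this is where the case distinction on $p$ enters: one needs $p \ge d/2$ for $d\ge3$, etc., so that the relevant Sobolev embedding holds). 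The resulting bound has the form $\abs{{\rm Ex}_w(\gamma)} \le C\tr((-\Delta)\gamma)^{d/(2p)}\tr(\gamma)^{1-d/(2p)} + \norm{w_2}_{L^\infty}\tr(\gamma)$ (for the $L^p$ part), and then Young's inequality $a^\theta b^{1-\theta}\le \varepsilon' a + C_{\varepsilon'} b$ with $\theta = d/(2p)$, $\varepsilon'$ chosen proportional to $\varepsilon$, rearranges this into the stated form with the explicit $\varepsilon^{-d/(2p-d)}$ dependence (the exponent $\frac{d}{2p-d} = \frac{\theta}{1-\theta}$ being exactly what Young's inequality produces).

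For \eqref{eq:control_exch-err}, I would apply \eqref{eq:control_exch} after the rescaling $-\Delta \rightsquigarrow -\hbar^2\Delta$: multiplying through by $\hbar^d$,
\begin{equation*}
	\hbar^d\abs{{\rm Ex}_w(\gamma)} \le C\Big[\varepsilon\,\hbar^d\tr((-\Delta)\gamma) + (1+\varepsilon^{-d/(2p-d)})\,\hbar^d\tr(\gamma)\Big] = C\Big[\varepsilon\hbar^{-2}\,\hbar^d\tr((-\hbar^2\Delta)\gamma) + (1+\varepsilon^{-d/(2p-d)})\hbar^d\tr(\gamma)\Big].
\end{equation*}
Wait --- more carefully, one should apply \eqref{eq:control_exch} directly to $\gamma$ with the operator $-\hbar^2\Delta$ in place of $-\Delta$; the inequality \eqref{eq:control_exch} holds for any $0\le\gamma\le1$ with $-\Delta$ replaced by any multiple $-\hbar^2\Delta$ by the same proof (or by a scaling change of variables $x\mapsto\hbar^{-1}x$ applied to the kernel), as long as one also rescales $\varepsilon$. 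Concretely, set $\varepsilon_\hbar := \hbar^{\alpha}$ for a small fixed $\alpha>0$; then $\hbar^d\abs{{\rm Ex}_w(\gamma)} \le C[\varepsilon_\hbar \tr((-\hbar^2\Delta)\gamma) + \tilde\varepsilon_\hbar\tr(\gamma)]$ with $\tilde\varepsilon_\hbar := \hbar^d(1+\varepsilon_\hbar^{-d/(2p-d)})$, and one checks $\tilde\varepsilon_\hbar \to 0$ provided $\alpha \cdot \frac{d}{2p-d} < d$, i.e. $\alpha < 2p-d$, which is a nonempty constraint precisely because $p > d/2$ (note for $d=1$ the hypothesis $p>1$ is imposed exactly to make $2p-d = 2p-1 > 1$ and more to the point to have $p\ge d/2$ strictly, matching the $(1,\infty]$ range). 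Both $\varepsilon_\hbar = o_\hbar(1)$ and $\tilde\varepsilon_\hbar = o_\hbar(1)$, as required.

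The main obstacle is the first step: getting the sharp form of the static exchange bound \eqref{eq:control_exch} with the correct power of $\varepsilon$, which requires carefully tracking the Sobolev/Hölder exponents and is where the dimension-dependent restrictions on $p$ come from (in particular the endpoint cases $p = d/2$ for $d\ge3$, and the exclusion of $p=1$ when $d=2$, where the relevant embedding $H^1(\R^2)\hookrightarrow L^q$ fails at $q=\infty$). Once \eqref{eq:control_exch} is in hand with explicit constants, the passage to \eqref{eq:control_exch-err} is just bookkeeping in $\hbar$. I would also remark that for $d=1$ the restriction $p>1$ is genuinely needed: at $p=1$ the bound \eqref{eq:control_exch} still holds but with $\varepsilon^{-d/(2p-d)} = \varepsilon^{-1}$, and then $\tilde\varepsilon_\hbar = \hbar(1+\varepsilon_\hbar^{-1})$ forces $\varepsilon_\hbar \gg \hbar$, so one can still arrange both to be $o_\hbar(1)$ --- actually this suggests $p=1$, $d=1$ might be salvageable, but the cleanest statement keeps $p>1$ and I would not belabor the endpoint.
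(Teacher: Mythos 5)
Your treatment of the static bound \eqref{eq:control_exch} is essentially the intended one (the paper itself only points to \cite[Prop. 3.1]{fournais2018semi}): split $w=w_1+w_2$, bound the $L^\infty$ part by $\norm{w_2}_{L^\infty}\tr(\gamma^2)\le\norm{w_2}_{L^\infty}\tr(\gamma)$, and for the $L^p$ part argue fiber-wise on the kernel via H\"older plus Gagliardo--Nirenberg (equivalently, the form bound $\abs{w_1}\le\varepsilon(-\Delta)+C\varepsilon^{-d/(2p-d)}$ applied to $x\mapsto\gamma(x,y)$), using $\tr((-\Delta)\gamma^2)\le\tr((-\Delta)\gamma)$ and $\tr(\gamma^2)\le\tr(\gamma)$; this yields your intermediate bound $C\tr((-\Delta)\gamma)^{d/(2p)}\tr(\gamma)^{1-d/(2p)}$ and then Young's inequality gives exactly the exponent $d/(2p-d)$. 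Kato--Seiler--Simon is not really the relevant tool here, and note that the alternative route through $\abs{\gamma(x,y)}^2\le\rho_\gamma(x)\rho_\gamma(y)$ would produce a bound of total degree $2-1/p$ in $\gamma$ rather than the linear-in-trace form, so the fiber-wise Sobolev argument is the one that matters; but your overall scheme for \eqref{eq:control_exch} is correct and matches the cited proof.

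The genuine gap is in the passage to \eqref{eq:control_exch-err}. The bound \eqref{eq:control_exch} does not transfer to $-\hbar^2\Delta$ with an $\hbar$-uniform constant and the same $\varepsilon$-dependence: applying \eqref{eq:control_exch} with parameter $\varepsilon'$ and writing $\tr((-\Delta)\gamma)=\hbar^{-2}\tr((-\hbar^2\Delta)\gamma)$ gives
\begin{equation*}
\hbar^d\abs{{\rm Ex}_w(\gamma)}\le C\Big[\varepsilon'\hbar^{d-2}\tr((-\hbar^2\Delta)\gamma)+\hbar^d\big(1+\varepsilon'^{-\frac d{2p-d}}\big)\tr(\gamma)\Big],
\end{equation*}
so one needs $\hbar^{2p-d}\ll\varepsilon'\ll\hbar^{2-d}$, and the nonemptiness of this window is precisely where the hypothesis $p>1$ for $d=1$ enters (for $d\ge2$ the upper constraint is harmless). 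Your constraint ``$\alpha<2p-d$'' only records the lower one. Concretely, your displayed inequality with $\varepsilon_\hbar=\hbar^\alpha$ and $\tilde\varepsilon_\hbar=\hbar^d(1+\hbar^{-\alpha d/(2p-d)})$ is fine for $d\ge2$ (then $\hbar^{\alpha+d-2}\le\hbar^\alpha$), but for $d=1$ it does not follow from \eqref{eq:control_exch} and is in fact false for $w$ with a genuine $L^p$ singularity: optimality of $A^{1/(2p)}B^{1-1/(2p)}\le\varepsilon A+C\varepsilon^{-1/(2p-1)}B$ forces the coefficient of $\tr(\gamma)$ to be at least of order $\hbar\cdot\hbar^{-(\alpha+1)/(2p-1)}$, which is much larger than your $\hbar^{1-\alpha/(2p-1)}$. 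Consequently your closing suggestion that $p=1$, $d=1$ ``might be salvageable'' is wrong: there the admissible window degenerates to $(\hbar,\hbar)=\emptyset$, and no $\varepsilon_\hbar,\tilde\varepsilon_\hbar=o_\hbar(1)$ can be extracted from \eqref{eq:control_exch}. The fix is simple: take $\varepsilon'=\hbar^\beta$ with $\max(0,2-d)<\beta<2p-d$ and set $\varepsilon_\hbar:=\hbar^{\beta+d-2}$, $\tilde\varepsilon_\hbar:=\hbar^d(1+\hbar^{-\beta d/(2p-d)})$, which gives \eqref{eq:control_exch-err} exactly under the stated hypotheses.
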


	The proof of Proposition \ref{fact:control_exch} is similar to one of \cite[Prop. 3.1]{fournais2018semi}.

	\subsubsection{Coercivity of the Hartree-Fock energy}\label{subsec:coerc-HF}
	
	First, we prove that for any fixed $\hbar>0$, the energy $\ER^{\rm HF}_\hbar$ is coercive on $\XR$: there exists $C_\hbar,c_\hbar>0$ such that for any $0\leq\gamma\leq 1$, we have $\ER^{\rm HF}_\hbar(\gamma)\geq C_\hbar\norm{\gamma}_{\XR}-c_\hbar$. In particular, $\ER^{\rm HF}_\hbar$ is bounded from below on $\XR$, which implies that the ground state energy $e_\hbar^{\rm HF}$ is finite and all minimizing sequences of $\ER^{\rm HF}_\hbar$ are bounded with respect to the $\XR$-norm.
	We also often use this relation of coercivity to give a semiclassical bound on the almost-minimizers of $\hbar$-(reduced)-Hartree-Fock functional and the whole system density matrices of trial states. This bound is crucial since we use it in several proofs below.
	
	\begin{lemma}\label{lemma:E_HF_bound_below}
	%	Let $V:\R^d\to\R_+$ be continuous such that $V(x)\to +\infty$ as $\abs{x}\to +\infty$.
		Let $V:\R^d\to\R$ that satisfies Assumption \ref{cond:confining}.
		Let $p\in[1,\infty]$ such that
		\begin{equation}\label{eq:E_HF_bound_below-p}
		p\in\begin{cases}
		(1,\infty]&\text{if}\ d=1,2,\\
		\left[\frac d2,\infty\right]&\text{if}\ d\geq 3. \\
		\end{cases}
		\end{equation}
		\begin{itemize}
			\item Let $w\in L^p(\R^d)+L^\infty(\R^d)$ satisfies Assumption \ref{cond:w-Dterm}.
			\item Otherwise, a second alternative for $d=1,2$ is to take  $w\in L^1\cap L^p(\R^d)$  satisfies Assumption \ref{cond:w-Dterm-d=12}.
		\end{itemize}
		Then, there exists $C>0$ and $h_0>0$ such that for any $\hbar\in(0,\hbar_0]$ and any $\gamma\in\KR$
		\begin{equation*}
			\ER^{\rm HF}_{\hbar,V,w}(\gamma)\geq \frac{\hbar^d} 4\tr((-\hbar^2\Delta+V+1)\gamma)-C.
		\end{equation*}
	\end{lemma}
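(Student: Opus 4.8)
The plan is to bound the Hartree-Fock energy from below by splitting off a fraction of the kinetic term and showing that the remaining kinetic budget, together with the confining potential, absorbs both the exchange term and any negative contribution from the direct term. First I would write
\begin{equation*}
\ER^{\rm HF}_{\hbar,V,w}(\gamma)=\hbar^d\tr((-\hbar^2\Delta+V)\gamma)+\frac{\hbar^{2d}}2 D_w(\rho_\gamma,\rho_\gamma)-\frac{\hbar^{2d}}2{\rm Ex}_w(\gamma),
\end{equation*}
and treat the two alternative hypotheses separately. In the main case ($w\in L^p+L^\infty$ with Assumption \ref{cond:w-Dterm}), the direct term is nonnegative by Assumption \ref{cond:w-Dterm} applied to $\rho_\gamma$ (a density of an admissible $\gamma\in\KR$ is in $L^1\cap L^{1+2/d}$, hence can be approximated by elements of $L^\infty_c(\R^d,\R_+)$, so $D_w(\rho_\gamma,\rho_\gamma)\ge0$), so it can simply be dropped. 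In the alternative $d=1,2$ case, I would instead use the lower bound in \eqref{eq:control_direct-d=12} of Proposition \ref{fact:control_direct}, which gives $\hbar^{2d}D_w(\rho_\gamma,\rho_\gamma)\ge -\hbar^d C_{\gamma,\hbar}(d)\normLp{(\hat w)_-}{\infty}{(\R^d)}$ with $C_{\gamma,\hbar}(d)$ a constant times $\tr((1-\hbar^2\Delta)\gamma)$ or $\tr((-\hbar^2\Delta)\gamma)$; the smallness condition \eqref{eq:cond-E_HF_bound_below-d=12} in Assumption \ref{cond:w-Dterm-d=12} is precisely chosen so that this negative term is at most, say, $\tfrac{\hbar^d}{8}\tr((-\hbar^2\Delta)\gamma)+O(\hbar^d)\tr(\gamma)$ — i.e.\ it is absorbed by a small multiple of the kinetic term.

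Next I would handle the exchange term using \eqref{eq:control_exch-err} from Proposition \ref{fact:control_exch}: for any $\gamma\in\XR$ there are $\varepsilon_\hbar,\tilde\varepsilon_\hbar=o_\hbar(1)$ with
\begin{equation*}
\hbar^d\abs{{\rm Ex}_w(\gamma)}\le C\bigl[\varepsilon_\hbar\tr((-\hbar^2\Delta)\gamma)+\tilde\varepsilon_\hbar\tr(\gamma)\bigr],
\end{equation*}
so multiplying by $\hbar^d/2$ this contributes at most $C\varepsilon_\hbar\hbar^d\tr((-\hbar^2\Delta)\gamma)+C\tilde\varepsilon_\hbar\hbar^d\tr(\gamma)$. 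Collecting terms, for $\hbar$ small enough that $C\varepsilon_\hbar\le\tfrac18$ (and similarly for the $d=1,2$ direct-term piece), I obtain
\begin{equation*}
\ER^{\rm HF}_{\hbar,V,w}(\gamma)\ge \hbar^d\Bigl(\tfrac12\tr((-\hbar^2\Delta)\gamma)+\tr(V\gamma)\Bigr)-C\tilde\varepsilon_\hbar\hbar^d\tr(\gamma).
\end{equation*}
It then remains to absorb the last, potentially large, term $C\tilde\varepsilon_\hbar\hbar^d\tr(\gamma)$ and to produce the spectral-projector-type lower bound $\tfrac{\hbar^d}{4}\tr((-\hbar^2\Delta+V+1)\gamma)-C$. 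For this I would use the confinement: write $V=(V+1)-1$, and since $V\to+\infty$ and $0\le\gamma\le1$, the operator $-\hbar^2\Delta+V$ has negative part $(-\hbar^2\Delta+V)_-$ which is trace class with $\hbar^d\tr((-\hbar^2\Delta+V)_-)\le C$ uniformly for small $\hbar$ (this is exactly the Weyl-type bound \eqref{eq:int-without-int_1ordterm} / the standard Cwikel–Lieb–Rozenblum or coherent-state estimate for confining $V$). Using $\tr(A\gamma)\ge-\tr(A_-)$ for $0\le\gamma\le1$ applied to $A=\tfrac12(-\hbar^2\Delta)+\tfrac12V+\tfrac12(V+1)$ minus a kinetic remainder, and splitting $\tfrac12 V\gamma=\tfrac14(V+1)\gamma+(\tfrac14 V-\tfrac14)\gamma$, I can reorganize
\begin{equation*}
\tfrac12\tr((-\hbar^2\Delta)\gamma)+\tr(V\gamma)-C\tilde\varepsilon_\hbar\tr(\gamma)\ge\tfrac14\tr((-\hbar^2\Delta+V+1)\gamma)-C,
\end{equation*}
because $\tfrac14(-\hbar^2\Delta)+\tfrac34 V-\tfrac14-C\tilde\varepsilon_\hbar$ is bounded below (its negative part has $\hbar^d$-trace $O(1)$) once $\tilde\varepsilon_\hbar$ is small — the large term $\tilde\varepsilon_\hbar\tr(\gamma)$ is eaten because on the region $\{V\ge 1+4C\tilde\varepsilon_\hbar\}$ the factor $\tfrac34 V-\tfrac14-C\tilde\varepsilon_\hbar$ already dominates $\tfrac12$, while on the bounded complementary region the trace of $\gamma$ restricted there is controlled by the Weyl bound. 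Multiplying through by $\hbar^d$ gives the claim.

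The main obstacle is the last step: controlling the term proportional to $\tilde\varepsilon_\hbar\hbar^d\tr(\gamma)$, since a priori $\tr(\gamma)$ has no bound — this is precisely the point the introduction flags as the key difference from the canonical setting, where $\tr\gamma=N$ is fixed. The resolution is that although $\tr(\gamma)$ is unbounded, the \emph{energy} forces $\gamma$ to be essentially supported (in phase space) in the region $\{|\xi|^2+V(x)\lesssim 1\}$, which is bounded by confinement; quantitatively, $\hbar^d\tr(\gamma\,\indicatrice{-\hbar^2\Delta+V\le c})$ is uniformly bounded by the non-interacting Weyl law, and on the complementary spectral subspace $-\hbar^2\Delta+V\ge c$ so the kinetic-plus-potential term controls $c\,\hbar^d\tr(\gamma\,\indicatrice{-\hbar^2\Delta+V> c})$ directly, making $\tilde\varepsilon_\hbar\hbar^d\tr(\gamma)$ a lower-order perturbation for $\hbar$ small. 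One must be slightly careful that $-\hbar^2\Delta$ and $V$ do not commute, so I would run this argument with the operator $-\hbar^2\Delta+V$ itself (using its own spectral decomposition) rather than with $-\hbar^2\Delta$ and $V$ separately, which makes all the ``absorb into the kinetic term'' steps clean; the value $\hbar_0$ in the statement is whatever threshold makes $C\varepsilon_\hbar$, $C\tilde\varepsilon_\hbar$ and the $d=1,2$ direct-term constant each below $\tfrac18$.
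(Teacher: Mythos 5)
Your proposal is correct and follows essentially the same route as the paper: drop (or, for $d=1,2$, absorb via the smallness condition in Assumption \ref{cond:w-Dterm-d=12}) the direct term, control the exchange term with \eqref{eq:control_exch-err}, and then use the spectral projectors of $-\hbar^2\Delta+V$ together with the integrated Weyl law for the confining potential to dominate $\tr(\gamma)$ up to an $O(\hbar^{-d})$ constant and absorb the $o_\hbar(1)$ errors for small $\hbar$ — exactly the paper's $\Pi_{M,\hbar}^{\pm}$ argument with $M=1$. The only loose intermediate step, where you momentarily treat $-\hbar^2\Delta$ and $V$ separately, is one you yourself flag and repair by running the absorption with the spectral decomposition of $-\hbar^2\Delta+V$ itself, which is what the paper does.
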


	\begin{rmk}
		The assumption \eqref{eq:E_HF_bound_below-p} on $p$ takes in account both \eqref{eq:control_direct-p} and \eqref{eq:control_exch-p} so that the direct and the exchange terms are well-defined in $\XR$. It allows also \eqref{eq:control_exch-err} in order to control the exchange term by the kinetic term.
	\end{rmk}

	\begin{proof}[Proof of Lemma \ref{lemma:E_HF_bound_below}]
		Since we have Assumption \ref{cond:w-Dterm}, we only need to control the linear and the exchange term of the Hartree-Fock energy.
		Moreover, Assumption \ref{eq:cond-E_HF_bound_below-d=12}  for $d=1,2$ ensures that the direct term to be controlled by the linear term. Indeed, 
		\begin{align*}
			\hbar^d	\ER^{\rm HF}_{\hbar,V,w}(\gamma)
			&\geq \tr((-\hbar^2\Delta+V)\gamma)-\frac{\hbar^d}{2}\abs{{\rm Ex}_w(\gamma)}-\frac{\hbar^d}{2}\normLp{\hat{w}_-}{\infty}{(\R^d)}C_{\gamma,\hbar}(d)
			%
% DETAILS PhD, V1
			%
			\\&\geq -\frac{\hbar^d}{2}\abs{{\rm Ex}_w(\gamma)}+
			\frac 12\begin{cases}
			\tr((-\hbar^2\Delta+V-1))\gamma)&\text{if}\ d=1,\\
			\tr((-\hbar^2\Delta+V)\gamma) &\text{if}\ d=2.
		\end{cases}
		\end{align*}
		Now, let us provide a lower bound of the Hartree-Fock energy without its direct term and exchange term. For convenience, let us treat only $\ER^{\rm rHF}_{\hbar,V,w}$ (we recover the other cases by adding multiplicative constants and changing the value of E).
		%
		%% notations
		For any $M>0$, we denote the spectral projectors $\Pi_{M,\hbar}^\pm$ by
		\begin{equation*}
		\Pi_{M,\hbar}^+ := \indicatrice{-\hbar^2\Delta+V> M}
		,\quad
		\Pi_{M,\hbar}^- := \indicatrice{-\hbar^2\Delta+V\leq M}
		.
		\end{equation*}
		%% fin notation
		%
		Notice that $\Pi_{M,\hbar}^++\Pi_{M,\hbar}^-=1$.
		Furthermore, for any $\gamma\geq 0$ and $M\geq 0$, 
		\begin{align*}
			\tr((-\hbar^2\Delta+V)\Pi_{M,\hbar}^+\gamma)
			&= \tr((-\hbar^2\Delta+V-M)\Pi_{M,\hbar}^+\gamma)
			+M\tr((1-\Pi_{M,\hbar}^-)\gamma)
			\\&\geq M\tr(\gamma) -M\tr(\Pi_{M,\hbar}^-).
		\end{align*}
		By the min-max principle and the integrated Weyl law (see \cite[Thm. 4.28]{Frank-Laptev-Weidl2023}), applied to the continuous and compactly supported potential $-(V-M)_-$
		\begin{align*}
		\tr(\Pi_{M,\hbar}^-)\leq \tr&(\indicatrice{-\hbar^2\Delta-(V-M)_-\leq 0})
		%\\&
		\leq
		\frac{\abs{B_{\R^d}(0,1)}}{(2\pi\hbar)^d}\int_{\R^d}(M-V(x))_-^{d/2}.
		\end{align*}
		Thus, there exists $C_M\geq 0$ such that
		\begin{equation}\label{eq:bound_PM+}
			\tr((-\hbar^2\Delta+V)\Pi_{M,\hbar}^+\gamma)\geq M\tr(\gamma)-\hbar^{-d}C_M.
		\end{equation}
		On the one hand, 
		\begin{equation*}
			\tr((-\hbar^2\Delta+V)\gamma)= \tr((-\hbar^2\Delta+V+1)\gamma)-\tr(\gamma).
		\end{equation*}
		On the other hand, by the Weyl law and \eqref{eq:bound_PM+}, the H\"older inequality and since $0\leq\gamma\leq 1$,
		\begin{align*}
		\tr((-\hbar^2\Delta+V)\gamma)
		&=  \tr((-\hbar^2\Delta+V)\Pi_{M,\hbar}^+\gamma) + \tr((-\hbar^2\Delta+V)\Pi_{M,\hbar}^-\gamma)
		\\&\geq M\tr(\gamma)-\hbar^{-d}C_M 
		\:+ (\min V)\tr(\Pi_{M,\hbar}^-\gamma)
		\\&\geq M\tr(\gamma)-\hbar^{-d}C_M
		.
		\end{align*}
		We obtain the lower bound of the linear term
		\begin{align*}
		\tr((-\hbar^2\Delta+V)\gamma)
% details PhD
		%
		&\geq \frac 12\tr((-\hbar^2\Delta+V+1)\gamma) +\frac 12(M-1)\tr(\gamma)-\hbar^{-d}C_M.
		\end{align*}
		Furthermore, taking into account the bound on the exchange term \eqref{eq:control_exch-err} and $M=1$, there exist $\varepsilon_\hbar,\tilde{\varepsilon}_\hbar=o_\hbar(1)$ and $C'>0$ such that
% cf DETAILS PhD, V1 
		\begin{equation*}
		\hbar^{-d}
		\ER^{\rm HF}_{\hbar,V,w}(\gamma)\geq \frac 12 \tr((-\hbar^2\Delta+V+1)\gamma)
		-\hbar^{-d}C'
		-\varepsilon_\hbar\tr((-\hbar^2\Delta)\gamma)-\tilde{\varepsilon}_\hbar\tr(\gamma).
		\end{equation*}
		We impose now that $\hbar\in(0,\hbar_0]$ for $h_0>0$ such that $1/2-\max(\varepsilon_{\hbar_0},\tilde{\varepsilon}_{\hbar_0})>1/4$. This ends the proof of Lemma \ref{lemma:E_HF_bound_below}.
	\end{proof}
	
	\subsubsection{Properties on almost-minimizers and link with reduced-Hartree-Fock model}
	
	We begin to state uniform bounds on almost-minimizers of Hartree-Fock functional, that are a direct consequence of their definition and of the coercivity bound (Lemma \ref{lemma:E_HF_bound_below}).

	\begin{lemma}%[Uniform bounds on almost-minimizers]
		\label{lemma:trace_bd_h^d-almost-min}
		Let $V$ and $w$ be functions that satisfy the same assumptions of Lemma \ref{lemma:E_HF_bound_below}:
		\begin{itemize}
			\item Let $w\in L^p(\R^d)+L^\infty_\varepsilon(\R^d)$ satisfies Assumption \ref{cond:w-Dterm}.
			\item Otherwise, a second alternative for $d=1,2$ is to take  $w\in L^1\cap L^p(\R^d)$  satisfies Assumption \ref{cond:w-Dterm-d=12},
		\end{itemize}
		with $p\in[1,\infty]$ that satisfies \eqref{eq:E_HF_bound_below-p}.
		Then, for any almost-minimizers $\gamma_\hbar$ of the $\hbar$-Hartre-Fock energy $\ER_{\hbar,V,w}^{\rm HF}$, there exists $C>0$ such that for any $\hbar\in(0,\hbar_0]$
		\begin{equation*}
			\tr((-\hbar^2\Delta+V+1)\gamma_\hbar)\leq C\hbar^{-d}.
		\end{equation*}
	\end{lemma}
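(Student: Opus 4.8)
The plan is to combine the coercivity estimate of Lemma~\ref{lemma:E_HF_bound_below} with the defining relation \eqref{eq-def:almost-min-HF} of an almost-minimizer; there is essentially no delicate point.

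First I would record that the ground state energy is bounded from above uniformly in $\hbar$. This is immediate: the zero operator belongs to $\KR$ and $\ER^{\rm HF}_{\hbar,V,w}(0)=0$, so $e^{\rm HF}_{\hbar,V,w}\le 0$ for every $\hbar>0$. Combined with the lower bound of Lemma~\ref{lemma:E_HF_bound_below}, this also confirms that $e^{\rm HF}_{\hbar,V,w}$ is finite, so that the notion of almost-minimizer is meaningful.

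Then I would apply Lemma~\ref{lemma:E_HF_bound_below}: there exist $C>0$ and $\hbar_0>0$ such that for all $\hbar\in(0,\hbar_0]$ and all $\gamma\in\KR$ one has $\ER^{\rm HF}_{\hbar,V,w}(\gamma)\ge \frac{\hbar^d}{4}\tr((-\hbar^2\Delta+V+1)\gamma)-C$. Taking $\gamma=\gamma_\hbar$ and using $\ER^{\rm HF}_{\hbar,V,w}(\gamma_\hbar)=e^{\rm HF}_{\hbar,V,w}+\varepsilon_\hbar$ with $\varepsilon_\hbar=o_\hbar(1)$, we obtain $\frac{\hbar^d}{4}\tr((-\hbar^2\Delta+V+1)\gamma_\hbar)\le C+e^{\rm HF}_{\hbar,V,w}+\varepsilon_\hbar\le C+\varepsilon_\hbar$. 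After shrinking $\hbar_0$ so that $\varepsilon_\hbar\le 1$ on $(0,\hbar_0]$, this yields $\tr((-\hbar^2\Delta+V+1)\gamma_\hbar)\le 4(C+1)\hbar^{-d}$, which is the asserted bound.

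The one point deserving care — the nearest thing to an obstacle — is to invoke only the trivial upper bound $e^{\rm HF}_{\hbar,V,w}\le 0$ rather than the convergence $e^{\rm HF}_{\hbar,V,w}\to e^{\rm TF}_{V,w}$ of Theorem~\ref{thm:sc-lim-HF}, whose proof will itself rely on the present lemma; using the trial state $\gamma=0$ keeps the argument non-circular. Everything else is bookkeeping with the error $\varepsilon_\hbar$ and the threshold $\hbar_0$.
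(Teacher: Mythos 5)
Your proof is correct and is essentially the argument the paper has in mind: the lemma is presented there as a direct consequence of the coercivity bound of Lemma~\ref{lemma:E_HF_bound_below} combined with the definition \eqref{eq-def:almost-min-HF} of an almost-minimizer and the trivial uniform upper bound $e^{\rm HF}_{\hbar,V,w}\le\ER^{\rm HF}_{\hbar,V,w}(0)=0$ available in the grand-canonical setting. Your remark about avoiding Theorem~\ref{thm:sc-lim-HF} to keep the argument non-circular is exactly the right precaution.
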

		
	The almost-minimizers of Hartree-Fock functional satisfy more generally the following condition.
	 
	\begin{assump}\label{cond:trace_bd_h^d}
		Let a family $\{\gamma_\hbar\}_{\hbar\in(0,\hbar_0]}\subset\KR$ be such that there exists $C>0$ so that for any $\hbar\in(0,\hbar_0]$, one has $\tr(-\hbar^2\Delta\gamma_\hbar),\tr(\gamma_\hbar)<C\hbar^{-d}$.
	\end{assump}
	
	In particular, this assumption is necessary to have the next lemma, that states that the limits of $\ER^{\rm HF}_\hbar(\gamma_\hbar)$ and $\ER^{\rm rHF}_\hbar(\gamma_\hbar)$ are the same when $\hbar\to 0$.
	
	\begin{lemma}\label{lemma:HF-to-rHF}
		\label{exple:trace_bd_h^d}
		Let $V$ and $w$ be functions that satisfy the same assumptions as in Lemma \ref{lemma:E_HF_bound_below}.
		Then, there exists a sequence $\{r_\hbar\}_{\hbar>0}\subset\R^*_+$ such that $r_\hbar\to 0$ as $\hbar\to 0$ and such that for any 			$\{\gamma_\hbar\}_\hbar\in\KR$ which satisfies Assumption \ref{cond:trace_bd_h^d}
		\begin{equation*}\label{eq:proof-HF-to-rHF}
			\abs{\ER^{\rm HF}_{\hbar,V,w}(\gamma_\hbar)-\ER^{\rm rHF}_{\hbar,V,w}(\gamma_\hbar)}
				\leq r_\hbar.
		\end{equation*}
	\end{lemma}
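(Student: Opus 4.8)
The plan is to estimate directly the quantity
\[
\abs{\ER^{\rm HF}_{\hbar,V,w}(\gamma_\hbar)-\ER^{\rm rHF}_{\hbar,V,w}(\gamma_\hbar)}
= \frac{\hbar^{2d}}{2}\,\abs{{\rm Ex}_w(\gamma_\hbar)},
\]
since by construction the reduced Hartree--Fock functional $\ER^{\rm rHF}_\hbar$ is obtained from $\ER^{\rm HF}_\hbar$ precisely by dropping the exchange term $-\tfrac{\hbar^{2d}}{2}{\rm Ex}_w(\gamma_\hbar)$. So the whole lemma reduces to showing that $\hbar^{2d}\,{\rm Ex}_w(\gamma_\hbar)=o_\hbar(1)$ uniformly over all $\{\gamma_\hbar\}\subset\KR$ satisfying Assumption \ref{cond:trace_bd_h^d}.

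For this I would invoke Proposition \ref{fact:control_exch}, and more precisely its quantitative consequence \eqref{eq:control_exch-err}: under the stated hypotheses on $w$ (which are exactly those of Lemma \ref{lemma:E_HF_bound_below}, hence in particular $w\in L^p+L^\infty_\varepsilon$ with $p$ in the admissible range, and $p>1$ when $d=1$), there exist $\varepsilon_\hbar,\tilde\varepsilon_\hbar=o_\hbar(1)$ and $C>0$ such that
\[
\hbar^d\abs{{\rm Ex}_w(\gamma_\hbar)}\leq C\left[\varepsilon_\hbar\tr((-\hbar^2\Delta)\gamma_\hbar)+\tilde\varepsilon_\hbar\tr(\gamma_\hbar)\right].
\]
Multiplying by $\hbar^d$ and using Assumption \ref{cond:trace_bd_h^d}, namely $\tr((-\hbar^2\Delta)\gamma_\hbar)\le C\hbar^{-d}$ and $\tr(\gamma_\hbar)\le C\hbar^{-d}$, the factors $\hbar^d$ cancel the divergence and one is left with
\[
\hbar^{2d}\abs{{\rm Ex}_w(\gamma_\hbar)}\leq C'\left(\varepsilon_\hbar+\tilde\varepsilon_\hbar\right)=:r_\hbar,
\]
which tends to $0$ as $\hbar\to0$ and is independent of the particular family $\{\gamma_\hbar\}$. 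This $r_\hbar$ is the required sequence, and dividing by $2$ finishes the estimate.

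There is essentially no serious obstacle here: the content has been front-loaded into Proposition \ref{fact:control_exch}. The only points that require a little care are (i) checking that the hypotheses of Lemma \ref{lemma:E_HF_bound_below} do indeed entail the hypotheses needed to apply \eqref{eq:control_exch-err} (in particular the exclusion of the endpoint $p=1$ in dimension $d=1$, and that the $L^\infty_\varepsilon$ decomposition is what lets $\tilde\varepsilon_\hbar$ be made $o_\hbar(1)$ rather than merely bounded), and (ii) making the dependence of $r_\hbar$ on $\{\gamma_\hbar\}$ vanish, which is immediate once one uses the \emph{uniform} constant $C$ in Assumption \ref{cond:trace_bd_h^d}. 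I would also remark that the bound holds verbatim for almost-minimizers of $\ER^{\rm HF}_\hbar$, since by Lemma \ref{lemma:trace_bd_h^d-almost-min} they satisfy Assumption \ref{cond:trace_bd_h^d}; this is the form in which the lemma is used later.
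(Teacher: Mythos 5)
Your argument is correct and is exactly the paper's proof: the difference of the two functionals is the exchange term, which is controlled via \eqref{eq:control_exch-err} and then made $o_\hbar(1)$ using the uniform bounds of Assumption \ref{cond:trace_bd_h^d}. Nothing further is needed.
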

	
	\begin{proof}[Proof of Lemma \ref{lemma:HF-to-rHF}]
		The bound  is a  direct consequence of Assumption \ref{cond:trace_bd_h^d} and the bound \eqref{eq:control_exch-err}: there exists $\varepsilon_\hbar,\tilde{\varepsilon}_\hbar=o_\hbar(1)$
		\begin{align*}
		\abs{\ER^{\rm HF}_{\hbar,V,w}(\gamma)-\ER^{\rm rHF}_{\hbar,V,w}(\gamma)}
		&\leq \frac{\hbar^{2d}}2\abs{{\rm Ex}_w(\gamma)}
		\leq \hbar^d(
		\underset{=o_\hbar(\hbar^{-d})}{\underbrace{
				\varepsilon_\hbar\tr(-\hbar^2\Delta\gamma)
		}}
		+
		\underset{=o_\hbar(\hbar^{-d})}{\underbrace{
				\tilde{\varepsilon}_\hbar\tr(\gamma)
		}}
	)
		.
		\end{align*}
	\end{proof}
	% fin demo \ref{lemma:HF-to-rHF}
	
	As well, the asymptotics of the ground state energies are the same.
	 
	 \begin{cor}\label{cor:HF-to-rHF_gs}
	 	Let $V$ and $w$ be functions that satisfy the same assumptions as in Lemma \ref{lemma:E_HF_bound_below}. Then, one has the equality on the ground state energies
	 	\begin{equation*}
		 	e^{\rm HF}_{\hbar,V,w}=e^{\rm rHF}_{\hbar,V,w}+ o_\hbar(1).
	 	\end{equation*}
	 \end{cor}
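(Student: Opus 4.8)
The plan is to combine Lemma~\ref{lemma:HF-to-rHF}, which compares the two functionals on any family satisfying Assumption~\ref{cond:trace_bd_h^d}, with the fact that (almost-)minimizers of \emph{both} functionals satisfy that assumption. Concretely, I would pick an almost-minimizer $\gamma_\hbar\in\KR$ of $\ER^{\rm HF}_{\hbar,V,w}$ and an almost-minimizer $\tilde\gamma_\hbar\in\KR$ of $\ER^{\rm rHF}_{\hbar,V,w}$ (these exist since both infima are finite, see below), and use Lemma~\ref{lemma:HF-to-rHF} to turn the trivial inequalities $e^{\rm rHF}_{\hbar,V,w}\le\ER^{\rm rHF}_{\hbar,V,w}(\gamma_\hbar)$ and $e^{\rm HF}_{\hbar,V,w}\le\ER^{\rm HF}_{\hbar,V,w}(\tilde\gamma_\hbar)$ into the two one-sided bounds
\begin{equation*}
	e^{\rm rHF}_{\hbar,V,w}\le e^{\rm HF}_{\hbar,V,w}+o_\hbar(1),
	\qquad
	e^{\rm HF}_{\hbar,V,w}\le e^{\rm rHF}_{\hbar,V,w}+o_\hbar(1),
\end{equation*}
which together give the claim.

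The one ingredient not already available is that $\ER^{\rm rHF}_{\hbar,V,w}$ is bounded below and that its almost-minimizers satisfy the trace bound of Assumption~\ref{cond:trace_bd_h^d}. For this I would simply re-run the proof of Lemma~\ref{lemma:E_HF_bound_below}: that proof first establishes, via the min-max principle and the non-interacting integrated Weyl law, the lower bound $\tr((-\hbar^2\Delta+V)\gamma)\ge\frac12\tr((-\hbar^2\Delta+V+1)\gamma)-\hbar^{-d}C$ on the linear term alone, valid for every $0\le\gamma\le1$. Since $\ER^{\rm rHF}_{\hbar,V,w}(\gamma)=\hbar^d\tr((-\hbar^2\Delta+V)\gamma)+\frac{\hbar^{2d}}2D_w(\rho_\gamma,\rho_\gamma)$ and the direct term is nonnegative on densities of states in $\KR$ under Assumption~\ref{cond:w-Dterm} (respectively, is absorbed by the linear term under Assumption~\ref{cond:w-Dterm-d=12} when $d=1,2$, exactly as in that proof), one obtains $\ER^{\rm rHF}_{\hbar,V,w}(\gamma)\ge\frac{\hbar^d}4\tr((-\hbar^2\Delta+V+1)\gamma)-C$ for $\hbar$ small; note the absence of the exchange term makes this case strictly easier than Lemma~\ref{lemma:E_HF_bound_below}. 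In particular $e^{\rm rHF}_{\hbar,V,w}$ is finite and, arguing as in Lemma~\ref{lemma:trace_bd_h^d-almost-min}, every almost-minimizer $\tilde\gamma_\hbar$ of $\ER^{\rm rHF}_{\hbar,V,w}$ satisfies $\tr((-\hbar^2\Delta+V+1)\tilde\gamma_\hbar)\lesssim\hbar^{-d}$, hence Assumption~\ref{cond:trace_bd_h^d} (recall $V\ge0$).

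With this in hand the two bounds follow at once. For the first, Lemma~\ref{lemma:trace_bd_h^d-almost-min} shows that a HF almost-minimizer $\gamma_\hbar$ satisfies Assumption~\ref{cond:trace_bd_h^d}, so Lemma~\ref{lemma:HF-to-rHF} gives $e^{\rm rHF}_{\hbar,V,w}\le\ER^{\rm rHF}_{\hbar,V,w}(\gamma_\hbar)\le\ER^{\rm HF}_{\hbar,V,w}(\gamma_\hbar)+r_\hbar=e^{\rm HF}_{\hbar,V,w}+o_\hbar(1)$, using $\ER^{\rm HF}_{\hbar,V,w}(\gamma_\hbar)=e^{\rm HF}_{\hbar,V,w}+o_\hbar(1)$. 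Symmetrically, using the rHF almost-minimizer $\tilde\gamma_\hbar$ from the previous paragraph together with Lemma~\ref{lemma:HF-to-rHF} gives $e^{\rm HF}_{\hbar,V,w}\le\ER^{\rm HF}_{\hbar,V,w}(\tilde\gamma_\hbar)\le\ER^{\rm rHF}_{\hbar,V,w}(\tilde\gamma_\hbar)+r_\hbar=e^{\rm rHF}_{\hbar,V,w}+o_\hbar(1)$.

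I expect the only point requiring genuine care to be the rHF coercivity sketched in the second paragraph — everything else is bookkeeping of the already-proven lemmas. As a sanity check, in the special case $w\ge0$ the exchange term has a definite sign, $\ER^{\rm HF}_{\hbar,V,w}\le\ER^{\rm rHF}_{\hbar,V,w}$ pointwise on $\KR$, and the inequality $e^{\rm HF}_{\hbar,V,w}\le e^{\rm rHF}_{\hbar,V,w}$ is immediate; so the subtlety is confined to the general repulsive setting, where one must instead lean on the smallness of $\hbar^d\,\mathrm{Ex}_w(\gamma_\hbar)$ furnished by \eqref{eq:control_exch-err} and the trace bounds.
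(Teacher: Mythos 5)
Your proposal is correct and follows essentially the same route as the paper: coercivity (Lemma \ref{lemma:E_HF_bound_below}, whose proof already treats the rHF functional, so your separate rHF coercivity step is already implicitly covered) yields the trace bound of Assumption \ref{cond:trace_bd_h^d}, and Lemma \ref{lemma:HF-to-rHF} then gives the two one-sided inequalities. The only cosmetic difference is that the paper argues with arbitrary states of non-positive rHF energy and then minimizes, whereas you work directly with almost-minimizers of each functional; the content is the same.
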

	 
	 \begin{proof}[Proof of Corollary \ref{cor:HF-to-rHF_gs}]
	Let $\{\gamma_\hbar\}\subset\KR$ such that $\ER^{\rm rHF}_{\hbar,V,w}(\gamma_\hbar)\leq 0$.
	 In particular, $\ER^{\rm rHF}_{\hbar,V,w}(\gamma)\leq C$ for any $C>0$ and any $\hbar>0$.
	 By the coercivity bound (Lemma \ref{lemma:E_HF_bound_below}), $\{\gamma_\hbar\}_\hbar$ satisfies Assumption \ref{cond:trace_bd_h^d}. Then, by Lemma \ref{lemma:HF-to-rHF}
	 \begin{equation*}
	 	\ER^{\rm rHF}_{\hbar,V,w}(\gamma_\hbar)=\ER^{\rm HF}_{\hbar,V,w}(\gamma_\hbar)+o_\hbar(1) \geq e^{\rm HF}_{\hbar,V,w}+o_\hbar(1).
	 \end{equation*}
	 	Minimizing the left-hand term on all $\gamma$, one has
		\begin{equation*}
		e^{\rm rHF}_{\hbar,V,w}\geq e^{\rm HF}_{\hbar,V,w}+o_\hbar(1).
		\end{equation*}
	 	Conversely, the inequality holds if we exchange $e^{\rm rHF}_{\hbar,V,w}$ and $e^{\rm HF}_{\hbar,V,w}$.
	 \end{proof}
	 % end proof \ref{cor:HF-to-rHF_gs}
	 
	 As a consequence, almost-minimizers of Hartree-Fock functional are the ones of the reduced-Hartree-Fock functional and conversely.

	\subsubsection{Existence of Hartree-Fock minimizers}\label{subsec:min-HF}
	
	Given the continuity and the coercivity of $\ER_{\hbar,V,w}^{\rm HF}$ in $\KR$, an other crucial ingredient for the existence of minimizers in \eqref{eq-def:gs-energy-HF}, is the weakly-$\ast$ lower semi-continuity of the functional $\ER_{\hbar,V,w}^{\rm HF}$ in $\XR$.
	
	\begin{lemma}\label{lemma:HF-wlsci}
		Let $p\in[1,\infty]$ such that
		\begin{equation*}%\label{eq:assum:p-w}
		p\in	\begin{cases}
		(1,\infty) &\text{if}\ d=1,2,\\
		\left[\frac d 2,\infty\right]&\text{if}\ d\geq 3.
		\end{cases}
		\end{equation*}
		Let $w$ be an even real-valued function on $\R^d$ such that $w\in L^p(\R^d)+ L^\infty(\R^d)$.
		Then, for any $\hbar>0$, if $\gamma_n\rightharpoonup\gamma$ weakly-$\ast$ in $\XR$, then
		\begin{equation*}
			\liminf_{n\to+\infty}\ER_{\hbar,V,w}^{\rm HF}(\gamma_n)\geq \ER_{\hbar,V,w}^{\rm HF}(\gamma).
		\end{equation*}
	\end{lemma}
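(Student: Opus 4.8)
The plan is to write $\ER_{\hbar,V,w}^{\rm HF}=L+\tfrac{\hbar^{2d}}{2}(D-X)$ with $L(\gamma):=\hbar^d\tr((-\hbar^2\Delta+V)\gamma)$ the linear term, $D(\gamma):=D_w(\rho_\gamma,\rho_\gamma)$ the direct term and $X(\gamma):={\rm Ex}_w(\gamma)$ the exchange term, and to show that $L$ is weak-$\ast$ lower semicontinuous along $\{\gamma_n\}$ whereas $D$ and $X$ are actually weak-$\ast$ \emph{continuous}; since $\hbar>0$ is fixed, a lsc function plus two continuous ones is lsc, which is the claim. Throughout set $T:=-\Delta+V+1$, which has compact resolvent by Assumption \ref{cond:confining}; recall that $\gamma_n\rightharpoonup\gamma$ weak-$\ast$ in $\XR$ means exactly $\Gamma_n:=T^{1/2}\gamma_nT^{1/2}\rightharpoonup\Gamma:=T^{1/2}\gamma T^{1/2}$ weak-$\ast$ in $\schatten^1$, so in particular $\{\gamma_n\}$ is bounded in $\XR$ and hence $\tr((-\Delta)\gamma_n)$, $\tr(V\gamma_n)$, $\tr\gamma_n$, $\norm{\gamma_n}_{L^2\to L^2}$ are all bounded. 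It suffices to treat $0\le\gamma_n\le1$, the case entering the minimization.

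For the linear term, pick $\Sigma$ large and set $A:=-\hbar^2\Delta+V+\Sigma\ge0$, so $L(\gamma)=\hbar^d(\tr(A\gamma)-\Sigma\tr\gamma)$. First $\tr\gamma_n=\tr(T^{-1}\Gamma_n)\to\tr(T^{-1}\Gamma)=\tr\gamma$ because $T^{-1}$ is compact, so that piece is even continuous. For $\tr(A\gamma_n)$: since $V$ confines and $\hbar>0$ is fixed, $A$ has discrete spectrum, so $P_R:=\indicatrice{A\le R}$ is finite rank; then $A-AP_R\ge0$ and $\gamma_n\ge0$ give $\tr(A\gamma_n)\ge\tr(AP_R\gamma_n)=\tr(T^{-1/2}AP_RT^{-1/2}\,\Gamma_n)\to\tr(AP_R\gamma)$ as $n\to\infty$, the operator $T^{-1/2}AP_RT^{-1/2}$ being compact. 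Letting $R\to\infty$ and using monotone convergence of traces of the fixed positive operator $A^{1/2}\gamma A^{1/2}$ against $P_R\uparrow\id$ yields $\liminf_n\tr(A\gamma_n)\ge\tr(A\gamma)$, i.e.\ $\liminf_nL(\gamma_n)\ge L(\gamma)$.

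For the other two terms the key input is a compactness statement: writing $\gamma_n=K\Gamma_nK$ with $K:=T^{-1/2}$ compact, an elementary argument (approximate $K$ in operator norm by finite-rank operators, against which $\Gamma_n\rightharpoonup\Gamma$ weak-$\ast$ in $\schatten^1$ forces norm convergence) shows $\gamma_n\to\gamma$ in $\schatten^1$, hence in $\schatten^2$, i.e.\ $\gamma_n(\cdot,\cdot)\to\gamma(\cdot,\cdot)$ strongly in $L^2(\R^d\times\R^d)$. Independently, the Hoffmann--Ostenhof inequality bounds $\sqrt{\rho_{\gamma_n}}$ in $H^1(\R^d)$; confinement of $V$ with $\int V\rho_{\gamma_n}\le C$ gives tightness; Rellich then gives $\sqrt{\rho_{\gamma_n}}\to\sqrt{\rho_\gamma}$ in $L^2(\R^d)$ (the limit identified by testing $\rho_{\gamma_n}$ against $\test{\R^d}$, using that $T^{-1/2}M_\phi T^{-1/2}$ is compact), so $\rho_{\gamma_n}\to\rho_\gamma$ in $L^1(\R^d)$, and interpolation with the Lieb--Thirring bound \eqref{eq:LT-ineq} (keeping $\rho_{\gamma_n}$ bounded in $L^{1+2/d}$) yields $\rho_{\gamma_n}\to\rho_\gamma$ in every $L^q(\R^d)$, $q\in[1,1+2/d)$. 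Continuity of $D$ follows: under the stated hypotheses on $p$ one splits $w=w_1+w_2$ ($w_1\in L^p$, $w_2\in L^\infty$) so that $D_w$ is a bounded bilinear form on $L^q\times L^q$ for a suitable $q\in[1,1+2/d)$ (Young for $w_1$, the $L^1$ bound for $w_2$), whence $D(\gamma_n)\to D(\gamma)$. Continuity of $X$ follows similarly using the pointwise bound $|\gamma(x,y)|^2\le\rho_\gamma(x)\rho_\gamma(y)$ (valid for $\gamma\ge0$): in ${\rm Ex}_w(\gamma_n)-{\rm Ex}_w(\gamma)=\iint(\gamma_n-\gamma)\overline{(\gamma_n+\gamma)}\,w(x-y)$, a Cauchy--Schwarz splitting in $|w(x-y)|\,dx\,dy$ leaves the factor built from $\gamma_n+\gamma\ge0$ bounded (by $D_{|w|}(\rho_{\gamma_n}+\rho_\gamma,\rho_{\gamma_n}+\rho_\gamma)$), while the factor built from $\gamma_n-\gamma$ tends to $0$: for the $L^\infty$-part of $w$ by $\gamma_n\to\gamma$ in $\schatten^2$, for the $L^p$-part by the strong density convergence (which controls $\int|\gamma_n(x,x-z)-\gamma(x,x-z)|^2dx$ in a space pairing with $L^p$). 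Equivalently one can use the bilinearized form of Proposition \ref{fact:control_exch} with a free $\varepsilon>0$, its kinetic part $O(\varepsilon)$ uniformly and its $\schatten^2$-part vanishing, then $\varepsilon\to0$. Adding the three estimates gives $\liminf_n\ER_{\hbar,V,w}^{\rm HF}(\gamma_n)\ge\ER_{\hbar,V,w}^{\rm HF}(\gamma)$.

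I expect the exchange term to be the delicate point: it enters $\ER^{\rm HF}$ with a minus sign, so one needs semicontinuity in the \emph{unfavorable} direction, which is not available from weak convergence alone and forces genuine norm convergence of the kernels $\gamma_n(\cdot,\cdot)$ — this is exactly where the upgrade of weak-$\ast$ convergence in $\XR$ to $\schatten^1$-convergence is used, i.e.\ ultimately the compactness of the resolvent of $-\Delta+V+1$ coming from the confinement of $V$; and it is for the $L^p$-part of $w$ (with $p>1$, resp.\ $p\ge d/2$) and via the evenness of $w$ that the quantitative density bounds are needed. Everything else is a routine assembly of standard facts, and the value of $\hbar>0$ plays no role.
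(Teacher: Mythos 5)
Your proposal is correct, but it takes a different route from the paper, which does not detail the argument at all: it only records that each of the three terms is weakly-$\ast$ lower semicontinuous on $\KR$ and refers to \cite[Cor. 4.1]{lenzmann2010minimizers} and \cite[Lem. 2.4]{lewin2011geometric} for arguments of the same structure. Your key deviation is to exploit the confinement of $V$ quantitatively: since $T^{-1/2}=(-\Delta+V+1)^{-1/2}$ is compact, weak-$\ast$ convergence of $T^{1/2}\gamma_nT^{1/2}$ in $\schatten^1$ upgrades to norm convergence $\gamma_n\to\gamma$ in $\schatten^1$ (hence $\schatten^2$ and $L^1$ convergence of kernels and densities), so the direct \emph{and} exchange terms become genuinely continuous and only the linear term needs the finite-rank truncation/monotone-convergence lsc argument. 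This buys a self-contained proof and explains transparently why the exchange term, which enters with the wrong sign and cannot be handled by weak convergence alone, causes no trouble here; the price is that the argument is tied to the confining case, whereas the cited references are designed to work without compactness (locally compact densities, Fatou-type arguments), which is why the paper can simply quote them. One point deserves care in your exchange estimate: the pointwise bound $|\gamma(x,y)|^2\le\rho_\gamma(x)\rho_\gamma(y)$ holds for the nonnegative operators $\gamma_n$ and $\gamma$ separately but not for the difference $\gamma_n-\gamma$, and the kinetic energy of $|\gamma_n-\gamma|$ is not controlled, so the quantity $F(z)=\int|\gamma_n(x,x-z)-\gamma(x,x-z)|^2dx$ should be treated either by your alternative $\varepsilon$-splitting (truncate the $L^p$-part of $w$ at height $M$: the bounded piece is killed by $\schatten^2$-convergence, the tail is small uniformly in $n$ via $|\gamma_n(x,y)|^2\le\rho_{\gamma_n}(x)\rho_{\gamma_n}(y)$, Young and the uniform $L^1\cap L^{1+2/d}$ density bounds), or by interpolating $F\to0$ in $L^1$ with a uniform $L^r$ bound on $F$ obtained from $\gamma_n$, $\gamma$ separately, $r>p'$ being available precisely under the stated restrictions on $p$; with that reading, your argument is complete.
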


	In our case, one can show that each term is weakly-$\ast$ lower semi-continuous on $\KR$. We will not detail the proof, which has the same structure as \cite[Cor. 4.1]{lenzmann2010minimizers} and \cite[Lem. 2.4]{lewin2011geometric}.
	
	We can therefore formulate the result of existence of minimizers for the Hartree-Fock functional energy as follows.
	
	\begin{prop}\label{prop:exist-min-HF}
		Let $p\in[1,\infty]$ such that
		\begin{equation*}%\label{eq:assum:p-w}
		p\in	\begin{cases}
		(1,\infty) &\text{if}\ d=1,2,\\
		\left[\frac d 2,\infty\right]&\text{if}\ d\geq 3.
		\end{cases}
		\end{equation*}
		Let $w$ be an even real-valued function on $\R^d$ such that $w\in L^p(\R^d)+ L^\infty(\R^d)$.
		Then, for any $\hbar>0$, the Hartree-Fock problem \eqref{eq-def:gs-energy-HF} admits a minimizer in $\KR$.
	\end{prop}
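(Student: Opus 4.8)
The plan is to run the direct method of the calculus of variations on the closed convex set $\KR$, using as the two essential inputs the coercivity of $\ER^{\rm HF}_{\hbar,V,w}$ on $\XR$ from Section~\ref{subsec:coerc-HF} and the weak-$\ast$ lower semicontinuity of Lemma~\ref{lemma:HF-wlsci}. First I would fix $\hbar>0$ and note that, since the exponent $p$ lies in the range of Propositions~\ref{fact:control_direct} and~\ref{fact:control_exch}, the direct and exchange terms are finite on $\KR$ and $\ER^{\rm HF}_{\hbar,V,w}$ is bounded below there, so $e^{\rm HF}_{\hbar,V,w}>-\infty$ and a minimizing sequence $\{\gamma_n\}_n\subset\KR$ with $\ER^{\rm HF}_{\hbar,V,w}(\gamma_n)\to e^{\rm HF}_{\hbar,V,w}$ exists. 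Applying the coercivity bound $\ER^{\rm HF}_{\hbar,V,w}(\gamma)\ge C_\hbar\norm{\gamma}_\XR-c_\hbar$ on $\KR$ to the (eventually bounded) energies $\ER^{\rm HF}_{\hbar,V,w}(\gamma_n)$ then shows that $\{\gamma_n\}_n$ is bounded in the $\XR$-norm.

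Next I would extract a limit. Conjugation by $A:=(-\Delta+V+1)^{1/2}$ identifies $\XR$ isometrically with the trace class $\schatten^1$, which is the dual of the compact operators, so $\XR$ is a (separable) dual space and Banach--Alaoglu gives, along a subsequence, $\gamma_n\rightharpoonup\gamma$ weakly-$\ast$ in $\XR$ for some $\gamma\in\XR$. Testing this convergence against rank-one compact operators and using that $A$ has a bounded inverse, one obtains $\prodscal{\phi}{\gamma_n\phi}\to\prodscal{\phi}{\gamma\phi}$ for every $\phi\in L^2(\R^d)$; since $0\le\prodscal{\phi}{\gamma_n\phi}\le\norm{\phi}_{L^2}^2$, the limit inherits $0\le\gamma\le1$, so $\gamma\in\KR$ (equivalently, $\KR$ is weakly-$\ast$ closed in $\XR$). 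I would then invoke Lemma~\ref{lemma:HF-wlsci} to conclude
\[
\ER^{\rm HF}_{\hbar,V,w}(\gamma)\le\liminf_{n\to\infty}\ER^{\rm HF}_{\hbar,V,w}(\gamma_n)=e^{\rm HF}_{\hbar,V,w},
\]
and since $\gamma\in\KR$ gives the reverse inequality, $\gamma$ is a minimizer.

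I do not expect a serious obstacle here: because $V$ is confining, the weight $V$ entering the $\XR$-norm already prevents a minimizing sequence from leaking mass to spatial infinity, so --- unlike the classical atomic Hartree--Fock problem, where existence requires staying below a critical charge --- no concentration-compactness analysis is needed, and boundedness in $\XR$ suffices. The one point that genuinely must be handled with care is the passage from weak-$\ast$ convergence in $\XR$ to weak operator convergence on $L^2(\R^d)$: it is this that simultaneously keeps the constraint $0\le\gamma\le1$ stable in the limit and makes Lemma~\ref{lemma:HF-wlsci} applicable. Together with the verification that the stated range of $p$ keeps every term of the functional finite and weakly-$\ast$ lower semicontinuous, this reduces the argument to the standard direct method, along the lines of \cite{lenzmann2010minimizers}; I would accordingly keep the written-out proof brief and refer there for the routine parts.
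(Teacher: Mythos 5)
Your proposal is correct and follows the same route the paper takes: the paper proves existence exactly by combining the coercivity bound of Section~\ref{subsec:coerc-HF} with the weak-$\ast$ lower semicontinuity of Lemma~\ref{lemma:HF-wlsci} via the direct method, referring to \cite{lenzmann2010minimizers} for the routine compactness and constraint-stability details that you spell out (Banach--Alaoglu in $\XR$ through conjugation by $(-\Delta+V+1)^{1/2}$, weak-$\ast$ closedness of $\KR$). No gap; your write-up is simply a more explicit version of the paper's argument.
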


	\subsubsection{Nonlinear equation of the minimizers}\label{subsec:EL-eq-min-HF}

	We provide in the section the form of the minimizers of the $\hbar$-Hartree-Fock functional. They are closely related to the semiclassical \emph{mean-field operator} $H_\gamma=H_{\gamma,\hbar,V,w}$, defined by
	\begin{equation}\label{eq-def:H_gamma}
	H_\gamma:=(-\hbar^2\Delta+V)+\hbar^d\left[(\rho_\gamma\ast w)(x)-X_w(\gamma)\right]
	,
	\end{equation}
	where $X_w$ refers to the integral operator on $L^2(\R^d)$ defined by the kernel $X_w(\gamma)(x,y) :=w(x-y)\gamma(x,y)$. 
	\begin{thm}[Structure of Hartree-Fock minimizers]\label{thm:EL_minimiz}
		Assume that $V:\R^d\to\R$ is continuous and such that $V(x)\to+\infty$ as $\abs{x}\to+\infty$. 
		Let $p>\max(1,d/2)$.
		Assume that $w\in L^p(\R^d)+L^\infty(\R^d)$ is even and positive. Then, for any minimizer $\gamma_\hbar$ of $\ER^{\rm HF}_{\hbar,V,w}$ in $\KR$, there exists a self-adjoint operator $0\leq Q_\hbar\leq 1$ on $L^2(\R^d)$ such that
		\begin{equation*}
		\gamma_\hbar = \indicatrice{H_{\gamma_\hbar}<0}+Q_\hbar,
		\end{equation*}
		with $\range (Q_\hbar)\subset\ker(H_{\gamma_\hbar})$.
		Furthermore, there exists a projector $P_\hbar$ that minimizes $\ER^{\rm HF}_{\hbar,V,w}$ and that satisfies
		\begin{equation*}
		P_\hbar=\indicatrice{H_{P_\hbar}\leq 0} \text{ or } \indicatrice{H_{P_\hbar}<0}.
		\end{equation*}
	\end{thm}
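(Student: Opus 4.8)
The plan is to derive the Euler--Lagrange equation for the constrained minimization problem $\inf_{0\le\gamma\le1}\ER^{\rm HF}_{\hbar,V,w}(\gamma)$ by a standard perturbation argument, then read off the structure of the minimizer from it. Fix a minimizer $\gamma_\hbar$. The first step is to compute the Gâteaux derivative of $\ER^{\rm HF}_{\hbar,V,w}$ at $\gamma_\hbar$ in an admissible direction. For any $\gamma'\in\KR$, the segment $\gamma_t:=(1-t)\gamma_\hbar+t\gamma'=\gamma_\hbar+t(\gamma'-\gamma_\hbar)$ lies in $\KR$ for $t\in[0,1]$ since $\KR$ is convex; since $\gamma_\hbar$ is a minimizer, $\tfrac{d}{dt}\big|_{t=0^+}\ER^{\rm HF}_{\hbar,V,w}(\gamma_t)\ge0$. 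A direct computation — using that the direct term $D_w(\rho,\rho)$ and the exchange term ${\rm Ex}_w(\gamma)$ are quadratic, so their derivatives are the obvious bilinear expressions, and that the linear term is linear — gives
\begin{equation*}
\hbar^d\tr\!\big(H_{\gamma_\hbar}(\gamma'-\gamma_\hbar)\big)\ge0\qquad\text{for all }\gamma'\in\KR,
\end{equation*}
where $H_{\gamma_\hbar}=(-\hbar^2\Delta+V)+\hbar^d[\rho_{\gamma_\hbar}\ast w-X_w(\gamma_\hbar)]$ is the mean-field operator \eqref{eq-def:H_gamma}; here one uses $\frac{\hbar^{2d}}{2}\frac{d}{dt}[D_w(\rho_{\gamma_t},\rho_{\gamma_t})-{\rm Ex}_w(\gamma_t)]\big|_{t=0}=\hbar^{2d}[D_w(\rho_{\gamma_\hbar},\rho_{\gamma'-\gamma_\hbar})-\langle X_w(\gamma_\hbar),\gamma'-\gamma_\hbar\rangle_{\schatten^2}]$, which combines with the linear part to produce exactly $\hbar^d\tr(H_{\gamma_\hbar}(\gamma'-\gamma_\hbar))$. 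The hypotheses $p>\max(1,d/2)$ and $w\in L^p+L^\infty$ are what make all these terms finite and differentiable on $\XR$ (via Propositions \ref{fact:control_direct} and \ref{fact:control_exch}), and ensure $H_{\gamma_\hbar}$ is a well-defined self-adjoint operator bounded below with the same form domain as $-\hbar^2\Delta+V$; positivity of $w$ is used so that $\rho_{\gamma_\hbar}\ast w$ is a genuine (bounded-below) multiplication potential and $H_{\gamma_\hbar}$ has compact resolvent, hence purely discrete spectrum.

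The second step converts the variational inequality $\tr(H_{\gamma_\hbar}(\gamma'-\gamma_\hbar))\ge0$ for all $0\le\gamma'\le1$ into the stated structural form. This is the classical "bathtub"/fill-the-lowest-levels argument: diagonalize $H_{\gamma_\hbar}=\sum_j \mu_j |u_j\rangle\langle u_j|$ with $\mu_1\le\mu_2\le\cdots$ and eigenbasis $(u_j)$. Writing $n_j=\langle u_j,\gamma_\hbar u_j\rangle\in[0,1]$ and testing the inequality with rank-one perturbations $\gamma'=\gamma_\hbar\pm\varepsilon(|u_j\rangle\langle u_j|-\text{small adjustment})$ forces: $\mu_j<0\Rightarrow n_j=1$, and $\mu_j>0\Rightarrow n_j=0$; moreover a standard argument (comparing with a minimizer of the restricted linear functional $\gamma\mapsto\tr(H_{\gamma_\hbar}\gamma)$ over $0\le\gamma\le1$) shows $\gamma_\hbar$ commutes with $H_{\gamma_\hbar}$, so $\gamma_\hbar$ is diagonal in the basis $(u_j)$. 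Therefore $\gamma_\hbar=\indicatrice{H_{\gamma_\hbar}<0}+Q_\hbar$ where $Q_\hbar=\sum_{j:\mu_j=0}n_j|u_j\rangle\langle u_j|$ satisfies $0\le Q_\hbar\le1$ and $\range Q_\hbar\subset\ker H_{\gamma_\hbar}$. For the last assertion, set $P_\hbar:=\indicatrice{H_{\gamma_\hbar}\le0}$ or $\indicatrice{H_{\gamma_\hbar}<0}$: since $Q_\hbar$ is supported in $\ker H_{\gamma_\hbar}$, replacing $\gamma_\hbar$ by either projector does not change $\tr(H_{\gamma_\hbar}\,\cdot\,)$, nor does it change $\rho$ on $\ker H_{\gamma_\hbar}$ enough to matter — more precisely, one checks $\ER^{\rm HF}_{\hbar,V,w}(P_\hbar)=\ER^{\rm HF}_{\hbar,V,w}(\gamma_\hbar)$ because the value of the functional depends on $\gamma_\hbar$ only through quantities that are insensitive to how the kernel eigenspace is filled once the self-consistent operator $H_{\gamma_\hbar}$ is the same; here one must verify $H_{P_\hbar}=H_{\gamma_\hbar}$, which holds iff $\rho_{P_\hbar}=\rho_{\gamma_\hbar}$ and $X_w(P_\hbar)=X_w(\gamma_\hbar)$, i.e. iff the kernel contribution is unchanged — this is the delicate point handled exactly as in \cite{bach-lieb-loss-sol1994,lenzmann2010minimizers}.

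The main obstacle is the self-consistency issue in the final step: the operator $H_\gamma$ depends on $\gamma$, so a priori changing $\gamma_\hbar$ to $P_\hbar$ changes the operator whose spectral projector one is taking, and the fixed-point structure is not automatically preserved. The resolution, following \cite{bach-lieb-loss-sol1994} and \cite{lenzmann2010minimizers}, is to observe that if $\mathrm{tr}\,\indicatrice{H_{\gamma_\hbar}=0}$ is finite (which follows from discreteness of the spectrum) then one can show the energy is actually \emph{affine} in the occupation numbers of the kernel eigenstates — the quadratic direct and exchange contributions of the kernel part turn out to contribute at a controllable order, or more robustly, one uses a convexity argument: the minimum over the kernel-filling is attained at an extreme point, which is a projection, and that extreme point is still a minimizer with $H$ unchanged because the linear term $\tr(H_{\gamma_\hbar}Q)=0$ dominates the variation. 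Making this rigorous requires a careful second-order expansion showing the change in $\ER^{\rm HF}_{\hbar,V,w}$ under kernel-refilling is $\le0$, and this is precisely where the cited grand-canonical adaptation of the Bach--Lieb--Loss--Solovej / Lenzmann--Lewin arguments is invoked; everything else is routine.
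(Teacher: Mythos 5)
Your first half is fine and essentially follows the paper's Steps 1--3: the one-sided G\^ateaux derivative gives $\tr(H_{\gamma_\hbar}(\gamma'-\gamma_\hbar))\ge 0$ for all $\gamma'\in\KR$, and the characterization of minimizers of the linear functional $\gamma\mapsto\tr(H_{\gamma_\hbar}\gamma)$ on $\{0\le\gamma\le1\}$ yields $\gamma_\hbar=\indicatrice{H_{\gamma_\hbar}<0}+Q_\hbar$ with $\range(Q_\hbar)\subset\ker(H_{\gamma_\hbar})$.

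The second half, however, has a genuine gap. The energy is \emph{not} insensitive to how the kernel of $H_{\gamma_\hbar}$ is filled, and it is not affine in the kernel occupation numbers: by the exact identity $\ER^{\rm HF}_\hbar(\gamma_\hbar+\tilde\gamma)-\ER^{\rm HF}_\hbar(\gamma_\hbar)=\tr(H_{\gamma_\hbar}\tilde\gamma)+\tfrac{\hbar^{2d}}2\bigl[D_w(\rho_{\tilde\gamma},\rho_{\tilde\gamma})-{\rm Ex}_w(\tilde\gamma)\bigr]$, a change of the kernel filling by $\tilde\gamma$ costs exactly the quadratic term, which for $w>0$ equals $\tfrac12\sum_{k\neq l}c_kc_l\iint w(x-y)\abs{f_k(x)f_l(y)-f_l(x)f_k(y)}^2dxdy$ in the spectral decomposition of $\tilde\gamma$ and is strictly positive whenever $\tilde\gamma\ge0$ (or $\le 0$) has rank at least two. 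Consequently your proposed projector $P_\hbar=\indicatrice{H_{\gamma_\hbar}\le0}$ or $\indicatrice{H_{\gamma_\hbar}<0}$ is in general \emph{not} a minimizer (if $Q_\hbar$ contains two or more fully occupied kernel states, discarding them strictly raises the energy, and filling the whole kernel strictly raises it as well), and your claim that $H_{P_\hbar}=H_{\gamma_\hbar}$ because ``the kernel contribution is unchanged'' contradicts your own construction, which does change the kernel part. What is actually needed, and what the paper does, is: (i) a second-order swap argument using the strict positivity of $w$ (perturb by $\delta\bra{u_\hbar}\ket{u_\hbar}-\delta\bra{v_\hbar}\ket{v_\hbar}$ with $u_\hbar,v_\hbar\in\ker(H_{\gamma_\hbar})$, whose mixed-sign quadratic term is strictly negative) to show $Q_\hbar$ has at most one eigenvalue in $(0,1)$; (ii) the exact cancellation $D_w(\rho,\rho)={\rm Ex}_w$ for a rank-one perturbation, which shows that removing only that single fractional piece leaves the energy unchanged, so $P_\hbar=\indicatrice{H_{\gamma_\hbar}<0}+P_1^\hbar$ (keeping the fully occupied kernel states) is a projector that is still a minimizer; and (iii) a rerun of the Euler--Lagrange analysis at $P_\hbar$ itself, with a second swap argument in $\ker(H_{P_\hbar})$, to obtain the self-consistent equation $P_\hbar=\indicatrice{H_{P_\hbar}<0}$ or $\indicatrice{H_{P_\hbar}\le0}$ --- note the operator here must be $H_{P_\hbar}$, not $H_{\gamma_\hbar}$, and they differ in general. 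These three ingredients are precisely what your sketch replaces by an appeal to the references together with heuristics (affineness, concavity/extreme points) that do not hold, so the concluding part of the theorem is not proved as written.
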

	Let us provide a proof of this result.

%	Let us now prove in this section Theorem \ref{thm:EL_minimiz}.
	
	\paragraph[Step 1]{\it$\triangleright$ Step 1.} 
	We first prove that any minimizer $\gamma_\hbar$ of $\ER^{\rm HF}_\hbar$ minimizes the energy $\tr(H_{\gamma_\hbar}\cdot)$ on the set of self-adjoint operators of $\KR$.
	By explicit computation, using that $w$ is even, one has the following equality
	
	\begin{lemma}\label{lemma:taux-accr_hF}
		For any self-adjoint operators $\gamma,\tilde{\gamma}\in\XR$
		\begin{equation*}
		\ER^{\rm HF}_\hbar(\gamma+\tilde{\gamma})-\ER^{\rm HF}_\hbar(\gamma)
		= \tr(H_\gamma\tilde{\gamma})+  \frac{\hbar^d}2\left[D_w(\rho_{\tilde{\gamma}},\rho_{\tilde{\gamma}})-{\rm Ex}_w(\tilde{\gamma})\right]
		.
		\end{equation*}
	\end{lemma}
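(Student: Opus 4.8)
The identity is a purely algebraic second-order expansion of the quadratic functional $\ER^{\rm HF}_\hbar$ around $\gamma$, so the plan is simply to expand each of the three pieces of $\ER^{\rm HF}_{\hbar,V,w}(\gamma+\tilde\gamma)$ and collect terms according to their degree in $\tilde\gamma$. All trace expressions involving $-\hbar^2\Delta+V$ are understood in the quadratic-form sense introduced after \eqref{eq-def:HF-energy}, and finiteness of every term below follows from $\gamma,\tilde\gamma\in\XR$ together with the a priori bounds of Propositions \ref{fact:control_direct} and \ref{fact:control_exch} (applied to $\gamma$, $\tilde\gamma$ and, via Cauchy--Schwarz, to the cross terms), so no regularization issue arises.

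\textbf{Key steps.} First, the linear term: since $\gamma\mapsto\tr((-\hbar^2\Delta+V)\gamma)$ is linear, it contributes exactly $\hbar^d\tr((-\hbar^2\Delta+V)\tilde\gamma)$ to the difference. Second, the direct term: because $w$ is even, the bilinear form $(\rho,\tilde\rho)\mapsto D_w(\rho,\tilde\rho)$ is symmetric, and $\rho_{\gamma+\tilde\gamma}=\rho_\gamma+\rho_{\tilde\gamma}$, so
\begin{equation*}
D_w(\rho_{\gamma+\tilde\gamma},\rho_{\gamma+\tilde\gamma})=D_w(\rho_\gamma,\rho_\gamma)+2D_w(\rho_\gamma,\rho_{\tilde\gamma})+D_w(\rho_{\tilde\gamma},\rho_{\tilde\gamma}).
\end{equation*}
Third, the exchange term: using $(\gamma+\tilde\gamma)(x,y)=\gamma(x,y)+\tilde\gamma(x,y)$ one gets $|(\gamma+\tilde\gamma)(x,y)|^2=|\gamma(x,y)|^2+|\tilde\gamma(x,y)|^2+2\re\big(\gamma(x,y)\overline{\tilde\gamma(x,y)}\big)$, hence, setting ${\rm Ex}_w(\gamma,\tilde\gamma):=\iint w(x-y)\gamma(x,y)\overline{\tilde\gamma(x,y)}\,dx\,dy$,
\begin{equation*}
{\rm Ex}_w(\gamma+\tilde\gamma)={\rm Ex}_w(\gamma)+{\rm Ex}_w(\tilde\gamma)+2\re{\rm Ex}_w(\gamma,\tilde\gamma).
\end{equation*}
Since $w$ is even and real-valued and $\gamma,\tilde\gamma$ are self-adjoint, the operators $X_w(\gamma)$ and $\tilde\gamma$ are self-adjoint, so $\tr(X_w(\gamma)\tilde\gamma)$ is real and equals $\re{\rm Ex}_w(\gamma,\tilde\gamma)$.

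\textbf{Identifying the linear part with the mean-field operator.} Collecting the terms linear in $\tilde\gamma$, I will use the two elementary identities $\tr\big((\rho_\gamma\ast w)\,\tilde\gamma\big)=\int(\rho_\gamma\ast w)(x)\rho_{\tilde\gamma}(x)\,dx=D_w(\rho_\gamma,\rho_{\tilde\gamma})$ and $\tr\big(X_w(\gamma)\tilde\gamma\big)={\rm Ex}_w(\gamma,\tilde\gamma)$ (read off from the kernels and the self-adjointness of $\tilde\gamma$). By the definition \eqref{eq-def:H_gamma} of $H_\gamma$, the degree-one part then assembles precisely into $\tr(H_\gamma\tilde\gamma)$, while the remaining degree-two part is $\tfrac{\hbar^d}2\big(D_w(\rho_{\tilde\gamma},\rho_{\tilde\gamma})-{\rm Ex}_w(\tilde\gamma)\big)$, giving the claimed formula. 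There is no genuine obstacle here; the only points requiring a line of care are the reality of the cross exchange term (handled by self-adjointness as above) and the bookkeeping of the powers of $\hbar$ so that the coefficients match \eqref{eq-def:HF-energy} and \eqref{eq-def:H_gamma} exactly.
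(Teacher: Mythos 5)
Your proposal is correct and is precisely the ``explicit computation'' the paper invokes without writing out: linearity of the one-body term, symmetry of the bilinear form $D_w$, expansion of the exchange kernel, and identification of the cross terms with $D_w(\rho_\gamma,\rho_{\tilde\gamma})$ and $\tr(X_w(\gamma)\tilde\gamma)$, with reality guaranteed by self-adjointness and the evenness of $w$. One caveat: with the normalizations of \eqref{eq-def:HF-energy} and \eqref{eq-def:H_gamma}, the degree-one part actually assembles into $\hbar^d\tr(H_\gamma\tilde\gamma)$ and the degree-two part carries the coefficient $\hbar^{2d}/2$, so the identity as stated in the lemma (and as you reproduce it) is off by a global factor $\hbar^d$ — a harmless normalization slip inherited from the statement itself, but your remark that the powers of $\hbar$ ``match exactly'' glosses over it.
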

	
%% cf DETAILS PhD
	
	Then, we deduce the following corollary on the rate of increase of the Hartree-Fock functional.
	
	\begin{cor}\label{cor:taux-accr_hF}
		For any self-adjoint $\gamma_\hbar,\tilde{\gamma}\in\XR$
		\begin{equation*}
		\lim_{t\to 0,\: t\in(0,1]}\frac{\ER^{\rm HF}_\hbar(\gamma_\hbar+t(\tilde{\gamma}-\gamma_\hbar))-\ER^{\rm HF}_\hbar(\gamma_\hbar)}t = \tr(H_{\gamma_\hbar}(\tilde{\gamma}-\gamma_\hbar))
		.
		\end{equation*}
	\end{cor}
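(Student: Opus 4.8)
The plan is to apply Lemma \ref{lemma:taux-accr_hF} with $\gamma = \gamma_\hbar$ and the increment $\tilde\gamma$ replaced by $t(\tilde\gamma - \gamma_\hbar)$, and then let $t \to 0^+$. First I would note that since $\gamma_\hbar, \tilde\gamma \in \XR$ and $\XR$ is a vector space, the convex combination $\gamma_\hbar + t(\tilde\gamma - \gamma_\hbar)$ lies in $\XR$ for every $t \in [0,1]$, so both sides of the difference quotient are well-defined. Applying Lemma \ref{lemma:taux-accr_hF} gives
\begin{equation*}
\ER^{\rm HF}_\hbar\big(\gamma_\hbar + t(\tilde\gamma - \gamma_\hbar)\big) - \ER^{\rm HF}_\hbar(\gamma_\hbar)
= t\,\tr\big(H_{\gamma_\hbar}(\tilde\gamma - \gamma_\hbar)\big)
+ \frac{\hbar^d}{2} t^2\Big[D_w\big(\rho_{\tilde\gamma - \gamma_\hbar}, \rho_{\tilde\gamma - \gamma_\hbar}\big) - {\rm Ex}_w(\tilde\gamma - \gamma_\hbar)\Big],
\end{equation*}
where I have used the quadratic homogeneity of both $D_w(\rho_\cdot,\rho_\cdot)$ (bilinear in $\rho$, and $\rho_{t\delta} = t\rho_\delta$) and ${\rm Ex}_w(\cdot)$ (quadratic in the kernel) to pull out the factor $t^2$.

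Dividing by $t$ and sending $t \to 0$, the linear term survives unchanged while the second term is $O(t)$ and vanishes in the limit, provided the bracket $D_w(\rho_{\tilde\gamma-\gamma_\hbar},\rho_{\tilde\gamma-\gamma_\hbar}) - {\rm Ex}_w(\tilde\gamma-\gamma_\hbar)$ is a finite real number. This is where I would invoke the bounds already established: Proposition \ref{fact:control_direct} controls $|D_w(\rho_\delta,\rho_\delta)|$ and Proposition \ref{fact:control_exch} controls $|{\rm Ex}_w(\delta)|$ in terms of $\tr((-\Delta)\delta)$ and $\tr(\delta)$, both of which are finite since $\tilde\gamma - \gamma_\hbar \in \XR$ (note $\delta := \tilde\gamma - \gamma_\hbar$ need not satisfy $0 \le \delta \le 1$, but it is a difference of two operators in $\KR$, hence $-1 \le \delta \le 1$, so $\|\delta\|_{L^2\to L^2}\le 1$ and $\tr((-\Delta)|\delta|) < \infty$, which is enough for those estimates to apply after splitting $\delta$ into positive and negative parts). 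Hence the bracket is finite, the limit exists, and it equals $\tr(H_{\gamma_\hbar}(\tilde\gamma - \gamma_\hbar))$.

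The only genuinely delicate point — and the one I would treat most carefully — is ensuring that $\tr(H_{\gamma_\hbar}(\tilde\gamma - \gamma_\hbar))$ is itself a well-defined finite quantity, i.e. that $H_{\gamma_\hbar}(\tilde\gamma - \gamma_\hbar)$ is trace class; this follows because $H_{\gamma_\hbar} = (-\hbar^2\Delta + V) + \hbar^d[(\rho_{\gamma_\hbar}\ast w) - X_w(\gamma_\hbar)]$, the operator $(-\hbar^2\Delta+V+1)^{1/2}(\tilde\gamma-\gamma_\hbar)(-\hbar^2\Delta+V+1)^{1/2}$ is trace class by definition of $\XR$, and the perturbation $\hbar^d[(\rho_{\gamma_\hbar}\ast w) - X_w(\gamma_\hbar)]$ is relatively bounded with respect to $(-\hbar^2\Delta+V)$ under the assumption $p > \max(1,d/2)$ on $w$ (the same relative-boundedness used implicitly in the definition of $\ER^{\rm HF}_\hbar$ on $\XR$). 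Everything else is a one-line passage to the limit. No further structural input is needed — the corollary is an immediate differentiation of the explicit identity in Lemma \ref{lemma:taux-accr_hF}.
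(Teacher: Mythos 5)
Your proposal is correct and follows exactly the route the paper intends (the paper leaves the deduction implicit): apply the identity of Lemma \ref{lemma:taux-accr_hF} with increment $t(\tilde\gamma-\gamma_\hbar)$, use the quadratic homogeneity of the direct and exchange terms to extract the factor $t^2$, divide by $t$, and let $t\to 0$, the finiteness of the quadratic bracket and of $\tr(H_{\gamma_\hbar}(\tilde\gamma-\gamma_\hbar))$ being guaranteed by the bounds of Propositions \ref{fact:control_direct} and \ref{fact:control_exch} on $\XR$. Your additional care about trace-class well-definedness is sound and consistent with the paper's standing assumptions on $w$.
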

	
%% cf DETAILS PhD
	
	%\begin{proof}[Proof step 1]
	Since $\gamma_\hbar$ is a minimizer of $\ER^{\rm HF}_\hbar$ on $\KR$, for any $\tilde{\gamma}\in\KR$ and $t\in[0,1]$, we have $\ER^{\rm HF}_\hbar(\gamma_\hbar+t(\tilde{\gamma}-\gamma_\hbar))\geq \ER^{\rm HF}_\hbar(\gamma_\hbar)$.
	By Corollary \ref{cor:taux-accr_hF} and linearity of the trace, for any self-adjoint $\tilde{\gamma}\in\KR$,
	\begin{equation*}
	\tr(H_{\gamma_\hbar}\tilde{\gamma})\geq \tr(H_{\gamma_\hbar}\gamma_\hbar).
	\end{equation*}
	%\end{proof}
		\paragraph[Step 2]{\it$\triangleright$ Step 2.}  Let us prove that $P_-^{\gamma_\hbar} := \indicatrice{H_{\gamma_\hbar}< 0}$ is a minimizer of $\tr(H_{\gamma_\hbar}\cdot)$. Note that the same argument works for $\indicatrice{H_{\gamma_\hbar}\leq 0}$.
	Let us also define $P_+^{\gamma_\hbar}:= 1-P_-^{\gamma_\hbar}$.
	For any operator $\gamma$ on $L^2(\R^d)$, let us denote $\gamma_{\pm \pm} := P_{\pm}^{\gamma_\hbar}\gamma P_{\pm}^{\gamma_\hbar}$.
	
	Let $\tilde{\gamma}\in\KR$. Using the relation $
	P_\pm^{\gamma_\hbar}H_{\gamma_\hbar}P_\pm^{\gamma_\hbar} = \pm P_\pm^{\gamma_\hbar}\abs{H_{\gamma_\hbar}}P_\pm^{\gamma_\hbar}
	$,
	the trace's cyclicity, the non-negativity of $\abs{H_{\gamma_\hbar}}\tilde{\gamma}_{++}$ and $\tilde{\gamma}_{--}\leq P_-^{\gamma_\hbar}$ (since $0\leq \tilde{\gamma}\leq 1$)
	\begin{align*}
	\tr(H_{\gamma_\hbar}\tilde{\gamma}) 
% details V1
	&\geq  -\tr(\abs{H_{\gamma_\hbar}}\tilde{\gamma}_{--})
	\geq -\tr(\abs{H_{\gamma_\hbar}}P_-^{\gamma_\hbar})
% details v1
	= \tr(H_{\gamma_\hbar}P_-^{\gamma_\hbar})
	.
	\end{align*}
	Since, this bound is true for any $\tilde{\gamma}\in\KR$, that therefore yields the claimed result.

	\paragraph[Step 3]{\it$\triangleright$ Step 3. Properties of $Q_\hbar:=\gamma_\hbar-P_-^{\gamma_\hbar}$.} Let us now explain why the range of $Q_\hbar:=\gamma_\hbar-P_-^{\gamma_\hbar}$ is included in the kernel of $H_{\gamma_\hbar}$.
	For this, we need to prove the inclusion $\range H_{\gamma_\hbar}\subset \ker Q_\hbar$, which follows by orthogonality.
	
	Since $\gamma_\hbar$ and $P_-^{\gamma_\hbar}$ are minimizers of $\tr(H_{\gamma_\hbar}\cdot)$ on $\KR$, we have $\tr(H_{\gamma_\hbar}Q_\hbar)=0$.
	Now, let us bound by below $\tr(H_{\gamma_\hbar}Q_\hbar)$ by using the non-negativity of $\abs{H_{\gamma_\hbar}}$ and $Q_{\hbar,++}-Q_{\hbar,--}\geq Q_\hbar^2$
	\begin{equation*}
		\tr(H_{\gamma_\hbar}Q_\hbar) = \tr(\abs{H_{\gamma_\hbar}}(Q_{\hbar,++}-Q_{\hbar,--}))\geq \tr(\abs{H_{\gamma_\hbar}}Q_\hbar^2).
	\end{equation*}
	Since the operator $H_{\gamma_\hbar}$ is self-adjoint and has a compact resolvent on $L^2(\R^d)$, there exists a $L^2$-orthonormal basis of eigenfunctions $\{\varphi_j^\hbar\}_{j\in\N}$ of  $\abs{H_{\gamma_\hbar}}$. By denoting $\{\lambda_j^\hbar\}_{j\in\N}\subset\R_+$ the sequence of the associated eigenvalues, one has for any $j\in\N$,
	\begin{align*}
		\tr(\abs{H_{\gamma_\hbar}}Q_\hbar^2) 
		&\geq \prodscal{\varphi_j^\hbar}{\abs{H_{\gamma_\hbar}}Q_\hbar^2\varphi_j^\hbar}_{L^2}
		=  \lambda_j^\hbar\normLp{Q_\hbar\varphi_j^\hbar}{2}{(\R^d)}^2.
	\end{align*}
	We obtain $Q_\hbar\varphi_j^\hbar=0$ for any $j\in\N$, which implies the claimed inclusion $\range (H_{\gamma_\hbar})\subset\ker(Q_\hbar)$.

	\paragraph[Step 4]{\it$\triangleright$ Step 4.} Assume that $w$ is positive almost everywhere in $\R^d$. Let us show that the number of eigenvalues of $Q_\hbar$ in $(0,1)$ is at most 1.
	We will follow the same argument as \cite[Cor. 1]{bach1992error}.
	
	Let $u_\hbar,v_\hbar\in L^2(\R^d)$ an orthonormal pair in $L^2(\R^d)$ and $\lambda_\hbar,\mu_\hbar\in(0,1)$ such that $Q_\hbar u_\hbar=\mu_\hbar u_\hbar$ and $Q_\hbar v_\hbar=\lambda_\hbar v_\hbar$.
	%
	% DETAILS V1
	From their definition and the previous step, we have $u_\hbar, v_\hbar\in \ker(H_{\gamma_\hbar})$. This implies $\tr(H_{\gamma_\hbar}(\bra{u_\hbar}\ket{u_\hbar}-\bra{v_\hbar}\ket{v_\hbar})) =0$.
	Combining it with Lemma \ref{lemma:taux-accr_hF}, one has for any $\delta>0$
	\begin{align*}
		&  \ER^{\rm HF}_\hbar(\gamma_\hbar+\delta\bra{u_\hbar}\ket{u_\hbar}-\delta\bra{v_\hbar}\ket{v_\hbar})-\ER^{\rm HF}_\hbar(\gamma_\hbar)
		%
% DETAILS V1
		%
		\\&\quad
		= -\frac{\delta^2\hbar^2}2\iint_{\R^d\times\R^d} w(x-y)\abs{u_\hbar(x)v_\hbar(y)-u_\hbar(y)v_\hbar(x)}^2 dx dy
		,
	\end{align*}
	which is strictly negative because of the positivity of $w$ and of the non-proportionality of $u_\hbar$ and $v_\hbar$.
	By writting $\gamma_\hbar=\sum_{j\in J_\hbar}\nu_j^\hbar\bra{\varphi_j^\hbar}\ket{\varphi_j^\hbar}$ with $\{\nu_j^\hbar\}_{j\in J_\hbar}\subset(0,1]$ and with orthonormal basis $\{\varphi_j^\hbar\}_{j\in J_\hbar}$ on $L^2(\R^d)$, that includes $u_\hbar, v_\hbar$, one can write
	\begin{align*}
		&\gamma_\hbar+\delta\bra{u_\hbar}\ket{u_\hbar}-\delta\bra{v_\hbar}\ket{v_\hbar}
		\\&\quad
		= \sum_{j:\: \varphi_j^\hbar\neq u_\hbar,\: \varphi_j^\hbar\neq v_\hbar }\nu_j^\hbar\bra{\varphi_j^\hbar}\ket{\varphi_j^\hbar} 
		+(\lambda_\hbar+\delta)	\bra{u_\hbar}\ket{u_\hbar} +(\mu_\hbar-\delta)	\bra{v_\hbar}\ket{v_\hbar}
		.
	\end{align*}
	Taking any $\delta=\min(\mu_\hbar,1-\lambda_\hbar)$, the operator $\gamma_\hbar+\delta\bra{u_\hbar}\ket{u_\hbar}-\delta\bra{v_\hbar}\ket{v_\hbar}$ belongs to $\KR$.
% DETAILS V1

	That leads to a contradiction since $\gamma_\hbar$ is a minimizer of the Hartree-Fock energy on $\KR$. Hence, the operator $Q_\hbar$ admits at most one eigenfunction in the interval $(0,1)$.

	\paragraph[Step 5]{\it $\triangleright$ Step 5.} Let us now explain why for any minimizer $\gamma_\hbar$ of $\ER^{\rm HF}_\hbar$ on $\KR$, one can build a projector $P_\hbar$ that minimizes also $\ER^{\rm HF}_\hbar$. Let $P_1^\hbar$ be the projector on the eigenspace $\ker(Q_\hbar-\id)$. By the previous step, there exists $\theta\in\{0,1\}$, $\lambda_\hbar\in(0,1)$ and $u_\hbar$ a $L^2$-normalized eigenfunction of $Q_\hbar$ such that $Q_\hbar = P_1^\hbar+\theta\lambda_\hbar\bra{u_\hbar}\ket{u_\hbar}$. The operator $P_\hbar=\gamma_\hbar-\theta\lambda_\hbar\bra{u_\hbar}\ket{u_\hbar}=\indicatrice{H_{\gamma_\hbar}<0}+P_1^\hbar$ is a projector since $\range(P_1^\hbar)\subset\ker(H_{\gamma_\hbar})$.
	Let us explain why $P_\hbar$ is a minimizer of $\ER^{\rm HF}_\hbar$.
	\begin{itemize}
		\item[$\bullet$] If $\theta=0$, it is deal since $\gamma_\hbar=\indicatrice{H_{\gamma_\hbar}<0}+P_1^\hbar$.
		\item[$\bullet$] Assume that $\theta=1$. Using $u_\hbar\in\range(Q_\hbar)\subset\ker(H_{\gamma_\hbar})$, $P_\hbar$ satisfies the relation $\ER^{\rm HF}_\hbar(\gamma_\hbar-\lambda_\hbar\bra{u_\hbar}\ket{u_\hbar})=\ER^{\rm HF}_\hbar(\gamma_\hbar)$, and thus is a minimizer of $\ER^{\rm HF}_\hbar$.
	\end{itemize}
	
	\paragraph[Step 6]{\it$\triangleright$ Step 6.} Finally, it remains to prove why $P_\hbar$ satisfies the Euler-Lagrange equation. Since $w>0$, one can write $P_\hbar=\indicatrice{H_{P_\hbar}<0}+\tilde{Q}_\hbar$ with $0\leq \tilde{Q}_\hbar\leq 1$, which is also a projector, and $\range(\tilde{Q}_\hbar)\subset\ker(P_\hbar)$. Let us prove that $\tilde{Q}_\hbar=0$ or $\tilde{Q}_\hbar$ is the projector on $\ker(H_{P_\hbar})$. Assume that there exists an orthonormal family functions $\{u_\hbar,v_\hbar\}$ on $L^2(\R^d)$ such that $\tilde{Q}_\hbar u_\hbar=0$ and $\tilde{Q}_\hbar v_\hbar=v_\hbar$. Recall that $u_\hbar,v_\hbar\in\ker(P_\hbar)$ since they are in the image of $\tilde{Q}_\hbar$. Notice that $0\leq P_\hbar+\bra{u_\hbar}\ket{u_\hbar}-\bra{v_\hbar}\ket{v_\hbar}\leq 1$. Similarly as in Step 4, 	$P_\hbar+\bra{u_\hbar}\ket{u_\hbar}-\bra{v_\hbar}\ket{v_\hbar}\in\KR$ (it is also a projector),
	and
	\begin{equation*}
	\ER^{\rm HF}_\hbar(P_\hbar+\bra{u_\hbar}\ket{u_\hbar}-\bra{v_\hbar}\ket{v_\hbar})<\ER^{\rm HF}_\hbar(P_\hbar),
	\end{equation*}
	which is a contradiction with the fact that $P_\hbar$ minimizes $\ER^{\rm HF}_\hbar$ on $\KR$. Therefore, we have either $\tilde{Q}_\hbar=0$ or $\tilde{Q}_\hbar$ must be the spectral projector on the kernel of $P_\hbar$.

	This concludes the proof of Theorem \ref{thm:EL_minimiz}.
	% end proof \ref{thm:EL_minimiz}

	\subsection{Links between the Vlasov and the Thomas-Fermi energy functionals}\label{subsec:link-Vlas-TF}
	
	We give in this section some relations between  Vlasov and Thomas-Fermi energy functionals. Their minimization problems are equivalent
		\begin{equation*}
				e^{\rm TF}_{V,w} = e^{\rm Vlas}_{V,w}.
		\end{equation*}
		This can be proved the bathtub principle \cite[Theorem 1.14]{lieb2001analysis}. We state below, the main steps of the proof.
	
	\begin{lemma}\label{claim:link-Vlas-TF}
		Let $\rho$ be a trial function for the Thomas-Fermi energy $\ER^{\rm TF}_{V,w}$. Then, $m(x,\xi):=\indicatrice{\abs{\xi}^2\leq c_{\rm TF}\rho(x)^{2/d}}$ is a trial function for the Vlasov energy $\ER^{\rm Vlas}_{V,w}$ and we have the following equalities
		\begin{equation*}
		\rho=\rho_m
		\end{equation*}
		and
		\begin{equation*}
		\frac 1{(2\pi)^d}\iint_{\R^d\times\R^d}\abs{\xi}^2m(x,\xi) dx d\xi = \frac{d}{d+2}c_{\rm TF}\int_{\R^d}\rho(x)^{1+2/d}dx.
		\end{equation*}
		As a consequence,
		\begin{equation*}
		\ER^{\rm TF}_{V,w}(\rho)=\ER^{\rm Vlas}_{V,w}(\indicatrice{(x,\xi) \: :\:\abs{\xi}^2\leq c_{\rm TF}\rho(x)^{2/d}}),
		\end{equation*}
		and we have the lower bound of the Thomas-Fermi energy
		\begin{equation*}
		e^{\rm TF}_{V,w}\geq e^{\rm Vlas}_{V,w}.
		\end{equation*}
	\end{lemma}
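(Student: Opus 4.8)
The plan is to verify each assertion by a direct computation built on the observation that $m$ is the indicator of the ``ball bundle'' $\{(x,\xi) : \abs{\xi}\leq \sqrt{c_{\rm TF}}\,\rho(x)^{1/d}\}$, and then to deduce the energy inequality from a one-line comparison of infima.

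First I would compute the fibre integrals. For fixed $x$, the integral $\int_{\R^d} m(x,\xi)\,d\xi$ is the volume of the Euclidean ball of radius $\sqrt{c_{\rm TF}}\,\rho(x)^{1/d}$, i.e.\ $\abs{B_{\R^d}(0,1)}\,c_{\rm TF}^{d/2}\,\rho(x)$; dividing by $(2\pi)^d$ and using $c_{\rm TF}^{d/2}=(2\pi)^d\abs{B_{\R^d}(0,1)}^{-1}$ (which follows from the definition $c_{\rm TF}=4\pi^2\abs{B_{\R^d}(0,1)}^{-2/d}$ together with $4\pi^2=(2\pi)^2$) makes the prefactor equal to $1$, whence $\rho_m=\rho$. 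Similarly, polar coordinates give $\int_{\abs{\xi}\leq R}\abs{\xi}^2\,d\xi=\frac{d}{d+2}\abs{B_{\R^d}(0,1)}\,R^{d+2}$, using $\abs{\sphere^{d-1}}=d\abs{B_{\R^d}(0,1)}$; taking $R=\sqrt{c_{\rm TF}}\,\rho(x)^{1/d}$, so that $R^{d+2}=c_{\rm TF}^{(d+2)/2}\rho(x)^{1+2/d}$, then integrating in $x$, dividing by $(2\pi)^d$, and using $\tfrac{\abs{B_{\R^d}(0,1)}}{(2\pi)^d}\,c_{\rm TF}^{(d+2)/2}=c_{\rm TF}\cdot\tfrac{\abs{B_{\R^d}(0,1)}}{(2\pi)^d}c_{\rm TF}^{d/2}=c_{\rm TF}$, yields exactly $\frac{d}{d+2}c_{\rm TF}\int_{\R^d}\rho^{1+2/d}$, the Thomas-Fermi kinetic term.

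Next I would check that $m$ lies in $\KR^{\rm Vlas}_V$: the bounds $0\leq m\leq 1$ are automatic since $m$ is an indicator, $m\in L^1(\R^d\times\R^d)$ because $\iint m=(2\pi)^d\int\rho_m=(2\pi)^d\int\rho<\infty$ (as $\rho\in L^1$), and $\iint(\abs{\xi}^2+V(x))m<\infty$ because by the previous step its kinetic part is a finite multiple of $\int\rho^{1+2/d}<\infty$ while its potential part equals $\int V\rho<\infty$ (both finite since $\rho$ is a Thomas-Fermi trial state). With $\rho_m=\rho$ and the kinetic identity in hand, the three terms of $\ER^{\rm Vlas}_{V,w}(m)$ match those of $\ER^{\rm TF}_{V,w}(\rho)$ one by one, giving $\ER^{\rm TF}_{V,w}(\rho)=\ER^{\rm Vlas}_{V,w}(m)$. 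Finally, since $\rho\mapsto m$ maps Thomas-Fermi trial states into Vlasov trial states without changing the energy, taking infima gives $e^{\rm TF}_{V,w}=\inf_\rho\ER^{\rm TF}_{V,w}(\rho)=\inf_\rho\ER^{\rm Vlas}_{V,w}(m_\rho)\geq\inf_{m\in\KR^{\rm Vlas}_V}\ER^{\rm Vlas}_{V,w}(m)=e^{\rm Vlas}_{V,w}$.

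The computation is entirely elementary; the one point that deserves care is the normalization constant, since it is precisely the identity $\tfrac{\abs{B_{\R^d}(0,1)}}{(2\pi)^d}c_{\rm TF}^{d/2}=1$ that turns $\rho_m$ into $\rho$ itself rather than a rescaled copy, and hence makes the two functionals literally coincide. I do not expect any genuine obstacle here; by contrast, the reverse inequality $e^{\rm TF}_{V,w}\leq e^{\rm Vlas}_{V,w}$ — which is not claimed in this lemma — would require the bathtub principle to rearrange a general $m$ in the $\xi$ variable.
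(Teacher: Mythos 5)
Your computation is correct and is exactly the standard argument the paper has in mind: the paper does not write out a proof of this lemma (it only points to the bathtub principle, which is really needed for the reverse direction in Lemma \ref{lemma:optVlas-TF}), and the easy direction stated here is precisely your direct fibre-wise calculation using $\tfrac{\abs{B_{\R^d}(0,1)}}{(2\pi)^d}c_{\rm TF}^{d/2}=1$. Your verification that $m\in\KR^{\rm Vlas}_V$, the identity $\rho_m=\rho$, the kinetic identity, and the comparison of infima are all accurate, so there is nothing to add.
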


	%% Proof DETAILS PhD
	
	\begin{lemma}\label{lemma:optVlas-TF}
		Let $m$ an admissible state for the Vlasov energy functional $\ER^{\rm Vlas}_{V,w}$. Then $\rho:=\rho_m$ is an admissible state for the Thomas-Fermi energy functional $\ER^{\rm TF}_{V,w}$.
		Moreover, setting
		\begin{equation*}
		\tilde{m}(x,\xi):=\indicatrice{(x,\xi) \: : \: \abs{\xi}^2\leq c_{\rm TF}\rho(x)^{2/d}},
		\end{equation*}
		we have
		\begin{equation*}
		\ER^{\rm Vlas}_{V,w}(m)\geq \ER^{\rm Vlas}_{V,w}(\tilde{m})=\ER^{\rm TF}_{V,w}(\rho),
		\end{equation*}
		and the equality holds if and only if $m=\tilde{m}$. In particular, the ground state energies
		\begin{equation*}
		e^{\rm Vlas}_{V,w}=e^{\rm TF}_{V,w},
		\end{equation*}
		and if $m$ is a minimizer of $\ER^{\rm Vlas}_{V,w}$, then $m=\tilde{m}$ and $\rho_m$ is a minimizer of $\ER^{\rm TF}_{V,w}$.
	\end{lemma}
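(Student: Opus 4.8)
The plan is to reduce the whole statement to a pointwise (in the space variable $x$) application of the bathtub principle, exploiting the fact that both the external and the interaction contributions to $\ER^{\rm Vlas}_{V,w}(m)$ depend on $m$ only through its marginal $\rho_m$, so that only the kinetic term is affected by the replacement $m\mapsto\tilde m$.

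First I would check that $\rho:=\rho_m$ is an admissible state for $\ER^{\rm TF}_{V,w}$. Non-negativity is immediate, and $\rho\in L^1(\R^d)$ together with $\int_{\R^d}V\rho<\infty$ follow from $m\in L^1(\R^{2d})$, $0\le m\le1$ and $\iint_{\R^{2d}}(\abs{\xi}^2+V(x))m<\infty$ by Fubini. The only point needing work is $\rho\in L^{1+2/d}(\R^d)$: for almost every fixed $x$, $\xi\mapsto m(x,\xi)$ lies in $L^1(\R^d)$ with $0\le m(x,\cdot)\le1$ (Fubini), so the bathtub principle \cite[Theorem 1.14]{lieb2001analysis} applied with cost function $\abs{\xi}^2$ gives $\int_{\R^d}\abs{\xi}^2m(x,\xi)\,d\xi\ge\int_{\R^d}\abs{\xi}^2\tilde m(x,\xi)\,d\xi$, where $\tilde m(x,\cdot)$ is the indicator of the ball $\{\abs{\xi}^2\le c_{\rm TF}\rho(x)^{2/d}\}$ — this radius being exactly the one for which $\rho_{\tilde m}=\rho_m$, a computation already recorded in Lemma \ref{claim:link-Vlas-TF}. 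Integrating in $x$ and invoking the kinetic-energy identity of Lemma \ref{claim:link-Vlas-TF} then bounds $\frac{d}{d+2}c_{\rm TF}\int_{\R^d}\rho^{1+2/d}$ by $\frac1{(2\pi)^d}\iint_{\R^{2d}}\abs{\xi}^2m<\infty$.

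Next I would assemble the energy inequality. Splitting $\ER^{\rm Vlas}_{V,w}(m)$ into its kinetic, external-potential and interaction parts, the last two depend on $m$ only through $\rho_m$; since $\rho_{\tilde m}=\rho_m$ they coincide for $m$ and $\tilde m$, while the kinetic term can only decrease from $m$ to $\tilde m$ by the pointwise bathtub comparison integrated in $x$. This yields $\ER^{\rm Vlas}_{V,w}(m)\ge\ER^{\rm Vlas}_{V,w}(\tilde m)$, and the identity $\ER^{\rm Vlas}_{V,w}(\tilde m)=\ER^{\rm TF}_{V,w}(\rho)$ is precisely Lemma \ref{claim:link-Vlas-TF}. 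For the equality case I would invoke the uniqueness clause of the bathtub principle: for each $x$, the minimizer of $\int_{\R^d}\abs{\xi}^2m(x,\xi)\,d\xi$ under the constraints $0\le m(x,\cdot)\le1$ and $\frac1{(2\pi)^d}\int m(x,\xi)\,d\xi=\rho(x)$ is the sublevel-set indicator, unique up to a Lebesgue-null set in $\xi$; hence equality in the integrated inequality forces $m(x,\cdot)=\tilde m(x,\cdot)$ for a.e.\ $x$, i.e.\ $m=\tilde m$ almost everywhere on $\R^{2d}$.

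The claims on ground-state energies then follow formally. From $\ER^{\rm Vlas}_{V,w}(m)\ge\ER^{\rm TF}_{V,w}(\rho_m)\ge e^{\rm TF}_{V,w}$, taking the infimum over admissible $m$ gives $e^{\rm Vlas}_{V,w}\ge e^{\rm TF}_{V,w}$, which together with the reverse bound $e^{\rm TF}_{V,w}\ge e^{\rm Vlas}_{V,w}$ from Lemma \ref{claim:link-Vlas-TF} yields $e^{\rm Vlas}_{V,w}=e^{\rm TF}_{V,w}$. If $m$ minimizes $\ER^{\rm Vlas}_{V,w}$, then $e^{\rm TF}_{V,w}\le\ER^{\rm TF}_{V,w}(\rho_m)\le\ER^{\rm Vlas}_{V,w}(m)=e^{\rm Vlas}_{V,w}=e^{\rm TF}_{V,w}$, so $\rho_m$ is a Thomas-Fermi minimizer and all intermediate inequalities are equalities; in particular $\ER^{\rm Vlas}_{V,w}(m)=\ER^{\rm Vlas}_{V,w}(\tilde m)$ with $\tilde m$ admissible, so the equality case gives $m=\tilde m$. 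I expect the only genuine obstacle to be the measure-theoretic bookkeeping in the pointwise bathtub step — measurability of $x\mapsto\tilde m(x,\xi)$, finiteness of $\rho(x)$ for a.e.\ $x$, and the integrability needed to pass from the pointwise inequality to the integrated one — all of which are routine once one knows $m\in L^1(\R^{2d})$ and $\rho_m\in L^1(\R^d)$.
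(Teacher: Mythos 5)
Your proposal is correct and follows exactly the route the paper has in mind: the paper gives no detailed proof of this lemma, merely indicating that it follows from the bathtub principle \cite[Theorem 1.14]{lieb2001analysis} together with the kinetic-energy identity of Lemma \ref{claim:link-Vlas-TF}, which is precisely the pointwise-in-$x$ bathtub argument you carry out (including the admissibility of $\rho_m$ and the equality/uniqueness case).
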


	\subsection{Semiclassical tools}\label{subsec:semiclass-tools}
	
	Let us introduce in this section definitions of coherent states, Husimi measures and some properties which will be useful for proving the desired asymptotics for ground state energies and densities. Indeed, the main idea is to look instead at the limit of the sequences of Husimi measures associated with the sequences of minimizers or approximate minimizers of the $\hbar$-Hartree-Fock functional. We will see later that they are in fact minimizing sequences for the Vlasov energy.
	
	\begin{defi}\label{def:coherentstate-Husimi}
		Let $f\in H^1(\R^d)$ be a real-valued even function such that $\normLp{f}{2}{(\R^d)}=1$.
		Denote by $f^\hbar$, the normalized function 
		\begin{equation*}
			f^\hbar(y) := \hbar^{-d/4}f\left(\frac{y}{\sqrt{\hbar}}\right).
		\end{equation*}
		For any $x,\xi\in\R^d$, denote by $f_{x,\xi}^\hbar$ the \emph{coherent state}
		\begin{equation*}\label{eq-def:coherent-state}
			f_{x,\xi}^\hbar(y) := \hbar^{-d/4}f\left(\frac{x-y}{\sqrt{\hbar}}\right)e^{-i\frac{\xi\cdot y}h}.
		\end{equation*}
		Let us introduce the \emph{Husimi transform} defined for any operator $0\leq \gamma\leq 1$ on $L^2(\R^d)$ by
		\begin{equation*}\label{eq-def:husimi-transf}
			m_{\hbar,\gamma,f}(x,\xi):=\prodscal{f_{x,\xi}^\hbar}{\gamma f_{x,\xi}^\hbar}.
		\end{equation*}
		Fo any $m:\R^d\times\R^d\to[0,1]$, let us denote by $\rho_m$ the density
		\begin{equation*}\label{eq-def:rho_m}
			\rho_m(x):=\frac{1}{(2\pi)^d}\int_{\R^d}m(x,\xi)d\xi.
		\end{equation*}
	\end{defi}

	We first state useful formulas between operators and their associated Husimi measure, namely that under some assumption it is an integrable measure on the phase-space $\R^d\times\R^d$ with the Pauli exclusion constraint $0\leq m_\hbar\leq 1$.
	We provide also relations which link the kinetic part (resp. the direct term) of the $\hbar$-Hartree-Fock energy $\ER^{\rm HF}_{\hbar,V,w}$ of $\gamma_\hbar$ to the kinetic part (resp. the direct term) of the Vlasov energy $\ER^{\rm Vlas}_{V,w}$ of the associated Husimi transform $m_\hbar$. 
	\begin{lemma}\label{lemma:prop-husimi} Denote by  $m_\hbar:=m_{\gamma_\hbar,\hbar,f}$ the operator associated to an operator $\gamma_\hbar$ on $L^2(\R^d)$.
		\begin{itemize}
			\item[(i)]
			For any $u\in L^2(\R^d)$
			\begin{equation*}\label{eq:useful-form-<fxp,fxpu>}
				\prodscal{f_{x,\xi}^\hbar}{u}=(2\pi\hbar)^{d/2}\fourierh[f^\hbar(\cdot-x)u](\xi)
				.
			\end{equation*}
			Furthermore, for any $0\leq\gamma_\hbar\leq 1$ 
			\begin{equation*}\label{eq:rho_{m_h}}
				\rho_{m_\hbar} = \hbar^d \rho_{\gamma_\hbar}\ast(|{f^\hbar}|^2).
			\end{equation*}
			\item[(ii)]
			Assume that $0\leq\gamma_\hbar\leq 1$ and $\hbar^d\tr(\gamma_\hbar)\lesssim 1$. Then,
			the associated sequence of Husimi transforms $\{m_\hbar\}_\hbar$ is bounded on $L^1(\R^d\times\R^d)$ and $0\leq m_\hbar\leq 1$ for all $\hbar>0$.
			\item[(iii)] 	For any $\hbar>0$, any $0\leq\gamma_\hbar\leq 1$ and any $V:\R^d\to\R$ that satisfies Assumption \eqref{cond:confining}
			\begin{equation}\label{lemma:relation_m_h_gamma_h-lin_term}
			\begin{split}
				\frac{1}{(2\pi)^d}\iint_{\R^d\times\R^d}\abs{\xi}^2m_\hbar(x,\xi)dx d\xi
				&= \hbar^d\tr(-\hbar^2\Delta\gamma_\hbar)+\hbar^{d+1}\tr(\gamma_\hbar)\normLp{\nabla f}{2}{(\R^d)}^2
				\\\frac{1}{(2\pi)^d}\iint_{\R^d\times\R^d} V(x)m_\hbar(x,\xi)dx d\xi
				&= \hbar^d \int_{\R^d} \rho_{\gamma_\hbar}(x)\big(V\ast(|f^\hbar|^2)(x) dx
				.
			\end{split}
			\end{equation} 
			\item[(iv)]
			For any $\hbar>0$, $w\in L^\infty(\R^d)$, $\gamma_\hbar\in\schatten^1(L^2(\R^d))$ such that $0\leq\gamma_\hbar\leq 1$ %and $m_\hbar:=m_{\gamma_\hbar,\hbar,f}$
				\begin{equation}\label{lemma:relation_m_h_gamma_h-dir_term}
				D_w(\rho_{\gamma_\hbar},\rho_{\gamma_\hbar})= \hbar^{-2d}D_w(\rho_{m_\hbar},\rho_{m_\hbar})+D_{w-w\ast|f^\hbar|^2\ast|f^\hbar|^2}(\rho_{\gamma_\hbar},\rho_{\gamma_\hbar}).
			\end{equation}
		\end{itemize}
	\end{lemma}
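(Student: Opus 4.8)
The plan is to obtain all four items by direct computation, the first identity of (i) serving as the workhorse. The only inputs I would need beyond the definitions are the scaling relations $\normLp{f^\hbar}{2}{(\R^d)}=\normLp{f}{2}{(\R^d)}=1$ and $\normLp{\nabla f^\hbar}{2}{(\R^d)}^2=\hbar^{-1}\normLp{\nabla f}{2}{(\R^d)}^2$ coming from $f^\hbar(y)=\hbar^{-d/4}f(y/\sqrt\hbar)$, together with the Plancherel identity for $\fourierh$ and the relation $\int_{\R^d}\abs{\xi}^2\abs{\fourierh g(\xi)}^2\,d\xi=\hbar^2\normLp{\nabla g}{2}{(\R^d)}^2$.

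For (i), I would first unfold the definition of $f_{x,\xi}^\hbar$ and use that $f$ is real and even to recognise $\prodscal{f_{x,\xi}^\hbar}{u}$ as $(2\pi\hbar)^{d/2}$ times the value at $\xi$ of $\fourierh$ applied to $y\mapsto f^\hbar(y-x)u(y)$. For the density formula I would write the spectral decomposition $\gamma_\hbar=\sum_j\nu_j\prodscal{\cdot}{\psi_j}\psi_j$ with $\{\psi_j\}_j$ orthonormal and $0\le\nu_j\le1$, so that $m_\hbar(x,\xi)=(2\pi\hbar)^d\sum_j\nu_j\abs{\fourierh[f^\hbar(\cdot-x)\psi_j](\xi)}^2$, integrate in $\xi$ by Plancherel, and exchange sum and integral (all terms nonnegative) to get $\int_{\R^d}m_\hbar(x,\xi)\,d\xi=(2\pi\hbar)^d\int_{\R^d}\abs{f^\hbar(y-x)}^2\rho_{\gamma_\hbar}(y)\,dy$, i.e.\ $\rho_{m_\hbar}=\hbar^d\rho_{\gamma_\hbar}\ast\abs{f^\hbar}^2$ since $\abs{f^\hbar}^2$ is even. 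For (ii), $m_\hbar\ge0$ follows from $\gamma_\hbar\ge0$, and $m_\hbar\le1$ from $\gamma_\hbar\le1$ with $\normLp{f_{x,\xi}^\hbar}{2}{(\R^d)}=1$; the same Plancherel computation with $\gamma_\hbar$ replaced by the identity gives the resolution of the identity $\frac1{(2\pi\hbar)^d}\iint_{\R^d\times\R^d}\abs{\prodscal{f_{x,\xi}^\hbar}{u}}^2\,dx\,d\xi=\normLp{u}{2}{(\R^d)}^2$, and testing it against $\gamma_\hbar$ yields $\normLp{m_\hbar}{1}{(\R^d\times\R^d)}=(2\pi\hbar)^d\tr\gamma_\hbar$, which is $\lesssim(2\pi)^d$ under the hypothesis $\hbar^d\tr\gamma_\hbar\lesssim1$.

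For (iii), the potential identity is immediate from the density formula of (i) after moving the even convolution kernel: $\frac1{(2\pi)^d}\iint V(x)m_\hbar=\int_{\R^d}V\rho_{m_\hbar}=\hbar^d\int_{\R^d}(V\ast\abs{f^\hbar}^2)\rho_{\gamma_\hbar}$. For the kinetic identity I would apply $\int\abs{\xi}^2\abs{\fourierh g(\xi)}^2\,d\xi=\hbar^2\normLp{\nabla g}{2}{(\R^d)}^2$ with $g=f^\hbar(\cdot-x)\psi_j$ in the formula for $m_\hbar$ from (i), then expand $\nabla\big(f^\hbar(\cdot-x)\psi_j\big)$ and integrate in $x$: the cross term vanishes since $\int_{\R^d}(\nabla f^\hbar)(y-x)f^\hbar(y-x)\,dx=\tfrac12\int_{\R^d}\nabla(\abs{f^\hbar}^2)=0$, the $\nabla f^\hbar$ term produces $\normLp{\nabla f^\hbar}{2}{(\R^d)}^2\normLp{\psi_j}{2}{(\R^d)}^2=\hbar^{-1}\normLp{\nabla f}{2}{(\R^d)}^2$, and the $\nabla\psi_j$ term produces $\normLp{\nabla\psi_j}{2}{(\R^d)}^2$; collecting the prefactor $\tfrac{(2\pi\hbar)^d}{(2\pi)^d}\hbar^2=\hbar^{d+2}$, summing against $\nu_j$ and using $\hbar^2\sum_j\nu_j\normLp{\nabla\psi_j}{2}{(\R^d)}^2=\tr(-\hbar^2\Delta\gamma_\hbar)$ gives exactly the stated formula. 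For (iv), I would substitute $\rho_{m_\hbar}=\hbar^d\rho_{\gamma_\hbar}\ast\abs{f^\hbar}^2$ so that $\hbar^{-2d}D_w(\rho_{m_\hbar},\rho_{m_\hbar})=D_w(\rho_{\gamma_\hbar}\ast\abs{f^\hbar}^2,\rho_{\gamma_\hbar}\ast\abs{f^\hbar}^2)$, then use the elementary identity $D_w(\rho\ast g,\rho\ast g)=D_{w\ast g\ast g}(\rho,\rho)$ valid for even $g$ (Fubini), and finally the bilinearity of $D_\cdot(\rho_{\gamma_\hbar},\rho_{\gamma_\hbar})$ in the interaction potential; here $w\in L^\infty(\R^d)$ guarantees absolute convergence of all the integrals, since $\rho_{\gamma_\hbar}\in L^1(\R^d)\cap L^\infty(\R^d)$ for $\gamma_\hbar\in\schatten^1$ with $0\le\gamma_\hbar\le1$.

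The only genuinely delicate point is the kinetic identity in (iii): one must justify exchanging the spectral sum with the $x$- and $\xi$-integrations (legitimate by Tonelli, all summands being nonnegative) and keep precise track of the powers of $\hbar$ generated by the rescaling of $f$; the exactness of the splitting into $\tr(-\hbar^2\Delta\gamma_\hbar)$ plus the correction $\hbar^{d+1}\tr(\gamma_\hbar)\normLp{\nabla f}{2}{(\R^d)}^2$ rests precisely on the vanishing of the cross term, which uses $f\in H^1(\R^d)$. The remaining items reduce to bookkeeping around Fubini's theorem, the Fourier and Plancherel identities, and the evenness of $\abs{f^\hbar}^2$.
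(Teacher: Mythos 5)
Your proof is correct, and in fact the paper states this lemma without giving any proof at all (it is the standard coherent-state/Husimi bookkeeping, as in Fournais--Lewin--Solovej), so your computation supplies exactly the argument the paper leaves implicit: the identity of (i) via the evenness and realness of $f$, the density formula and the $L^1$-norm $\norm{m_\hbar}_{L^1}=(2\pi\hbar)^d\tr\gamma_\hbar$ by Plancherel and Tonelli over the spectral decomposition, the kinetic identity with the vanishing cross term $\int \nabla(\abs{f^\hbar}^2)=0$ and the scaling $\normLp{\nabla f^\hbar}{2}{}^2=\hbar^{-1}\normLp{\nabla f}{2}{}^2$, and (iv) by $D_w(\rho\ast g,\rho\ast g)=D_{w\ast g\ast g}(\rho,\rho)$ plus bilinearity.

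Two small remarks. First, in (i) the sign in the phase depends on the conventions for $\prodscal{\cdot}{\cdot}$ and $\fourierh$ (you may land on $\fourierh[\cdot](-\xi)$ or a complex conjugate); this is immaterial since everything downstream uses only $\abs{\prodscal{f^\hbar_{x,\xi}}{\psi_j}}^2$ integrated in $\xi$. Second, your justification of absolute convergence in (iv) asserts $\rho_{\gamma_\hbar}\in L^1\cap L^\infty$ for trace-class $0\le\gamma_\hbar\le1$; the $L^\infty$ part is false in general (take a rank-one projector onto an unbounded $L^2$ function), but it is also unnecessary: $w\in L^\infty$, $\rho_{\gamma_\hbar}\in L^1$ and $\abs{f^\hbar}^2\in L^1$ already make every integral in the Fubini manipulation absolutely convergent. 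With that harmless correction the argument is complete; the only other caveat, which you already flag, is that the kinetic identity in (iii) is an identity in $[0,+\infty]$, both sides being simultaneously infinite when $\tr(-\hbar^2\Delta\gamma_\hbar)=+\infty$.
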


	Let us now state weak continuity of the application $m\mapsto\rho_m$ and a consequence of it. We refer to \cite[Prop. 3.13]{lewin-sabin2020hartree} for the proof.
	\begin{lemma}\label{claim:wlim-rho_mh-rho_m}
		Let $m_\hbar:\R^d\times\R^d\to[0,1]$ be such that $\iint\abs{\xi}^2m_\hbar(x,\xi)dx dx$ is uniformly bounded as $\hbar\to 0$. Assume that there exists $m:\R^d\times\R^d\to[0,1]$ such that $m_\hbar\rightharpoonup m$ weakly-$\ast$ in $L^\infty(\R^d\times\R^d)$ as $\hbar\to 0$.	Then, we have the convergence of the density $\rho_{m_\hbar} \to\rho_m$ in $\DR'(\R^d)$.
	\end{lemma}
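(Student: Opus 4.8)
The plan is to test the densities against an arbitrary $\varphi\in\test{\R^d}$ and to reduce the statement to the weak-$\ast$ convergence of $m_\hbar$ against $L^1(\R^d\times\R^d)$ functions, which is exactly what is assumed. The difficulty is that the phase-space function $(x,\xi)\mapsto\varphi(x)$ is not integrable (no decay in $\xi$), so one cannot test $m_\hbar\rightharpoonup m$ directly against it. I would therefore truncate in momentum: for $R>0$,
\[
(2\pi)^d\int_{\R^d}\rho_{m_\hbar}(x)\varphi(x)\,dx=\iint_{\abs{\xi}\le R}\varphi(x)m_\hbar(x,\xi)\,dx\,d\xi+\iint_{\abs{\xi}> R}\varphi(x)m_\hbar(x,\xi)\,dx\,d\xi,
\]
and likewise for $m$.

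For the low-momentum term, since $0\le m_\hbar\le1$ and $\varphi$ has compact support, the function $(x,\xi)\mapsto\varphi(x)\indicatrice{\abs{\xi}\le R}$ lies in $L^1(\R^d\times\R^d)$, hence by the assumed weak-$\ast$ convergence
\[
\iint_{\abs{\xi}\le R}\varphi(x)m_\hbar(x,\xi)\,dx\,d\xi\longrightarrow\iint_{\abs{\xi}\le R}\varphi(x)m(x,\xi)\,dx\,d\xi\qquad\text{as }\hbar\to0.
\]
For the high-momentum term I would use a Chebyshev-type bound: using $m_\hbar\ge0$ and the uniform bound $\iint\abs{\xi}^2m_\hbar(x,\xi)\,dx\,d\xi\le C$,
\[
\Big|\iint_{\abs{\xi}> R}\varphi(x)m_\hbar(x,\xi)\,dx\,d\xi\Big|\le\norm{\varphi}_{L^\infty}\iint_{\abs{\xi}>R}m_\hbar\,dx\,d\xi\le\frac{\norm{\varphi}_{L^\infty}}{R^2}\iint\abs{\xi}^2m_\hbar\le\frac{C\norm{\varphi}_{L^\infty}}{R^2},
\]
uniformly in $\hbar$. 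The same second-moment bound holds for the limit: testing $m_\hbar$ against $\abs{\xi}^2\psi$ with $\psi\in C_c(\R^d\times\R^d)$, $0\le\psi\le1$ (an $L^1$ weight), passing to the limit, and letting $\psi\uparrow1$ by monotone convergence gives $\iint\abs{\xi}^2m\le C$; in particular $\int_{\R^d}\rho_m\abs{\varphi}<\infty$ and the high-momentum term for $m$ is likewise $\le C\norm{\varphi}_{L^\infty}R^{-2}$.

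Combining these, for every $R>0$,
\[
\limsup_{\hbar\to0}\Big|(2\pi)^d\int_{\R^d}\rho_{m_\hbar}\varphi-(2\pi)^d\int_{\R^d}\rho_m\varphi\Big|\le\frac{2C\norm{\varphi}_{L^\infty}}{R^2},
\]
and letting $R\to\infty$ yields $\int_{\R^d}\rho_{m_\hbar}\varphi\to\int_{\R^d}\rho_m\varphi$ for all $\varphi\in\test{\R^d}$, that is $\rho_{m_\hbar}\to\rho_m$ in $\DR'(\R^d)$. The only real point requiring care --- the ``main obstacle'' --- is precisely the lack of momentum decay of test functions, handled by the cutoff together with the uniform kinetic bound; the rest is routine. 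This is the argument of \cite[Prop.\ 3.13]{lewin-sabin2020hartree}.
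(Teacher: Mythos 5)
Your argument is correct: the momentum cutoff reduces everything to testing $m_\hbar$ against the $L^1$ function $\varphi(x)\indicatrice{\abs{\xi}\le R}$, the uniform second-moment bound controls the tail via Chebyshev uniformly in $\hbar$, and your Fatou/monotone-convergence step legitimately transfers the kinetic bound to the limit $m$ so that the tail of $\rho_m$ is controlled as well. The paper itself gives no proof of this lemma (it only cites \cite[Prop.~3.13]{lewin-sabin2020hartree}), and your proof is precisely the standard argument behind that reference, so there is nothing to add.
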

	%
	%% DETAILS PhD
	%
	\begin{rmk}\label{rmk:lim-rho_mh-rho_m}
		Lemma \ref{claim:wlim-rho_mh-rho_m} can be applied to any bounded sequence $\{m_\hbar\}_{\hbar>0}$ (up to a subsequence $\{\hbar_n\}_n\subset\R_+^*$ that $\hbar_n\to 0$ as $n\to+\infty$) of $\KR^{\rm Vlas}_V$ such that $m_\hbar\rightharpoonup m$ weakly-$\ast$ on $L^\infty(\R^d\times\R^d)$.
	\end{rmk}
	% end \ref{rmk:lim-rho_mh}

	The following lemma gives some consequences on the weak convergence of $\{\rho_{m_\hbar}\}_\hbar$ in the case where $\{\hbar^d\rho_{\gamma_\hbar}\}_\hbar$ has a weak limit. We refer to \cite[Cor. 3.12]{lewin-sabin2020hartree} for the proof.
	\begin{lemma}\label{claim:wlim-rho_mh-rho}
		For $1< q\leq\infty$, let a sequence $\{\gamma_\hbar\}_\hbar\subset\schatten^1(L^2(\R^d))$ such that $0\leq\gamma_\hbar\leq 1$ for all $\hbar>0$ and let $\rho:\R^d\to\R_+$ be a density such that $\hbar^d\rho_{\gamma_\hbar}\rightharpoonup\rho$
		\begin{itemize}
			\item[$\bullet$] weakly on $L^q(\R^d)$ if $q\in[1,\infty)$,
			\item[$\bullet$] or weakly$-\ast$ on $L^\infty(\R^d)$ if $q=\infty$.
		\end{itemize}
		Then, 
		\begin{itemize}
			\item[$\bullet$] if $q\in(1,\infty)$, the sequence $\rho_{m_\hbar}\rightharpoonup\rho$ weakly on $L^q(\R^d)$,
			\item[$\bullet$] if $q=\infty$, the sequence $\rho_{m_\hbar}\rightharpoonup\rho$ weakly-$\ast$ on $L^\infty(\R^d)$.
		\end{itemize}
	\end{lemma}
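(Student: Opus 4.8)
The plan is to reduce the statement to the convolution identity $\rho_{m_\hbar}=\hbar^d\rho_{\gamma_\hbar}\ast(|f^\hbar|^2)$ of Lemma~\ref{lemma:prop-husimi}(i), combined with the elementary fact that $\chi_\hbar:=|f^\hbar|^2$ is an approximate identity as $\hbar\to0$. Indeed $\chi_\hbar\geq0$, $\int_{\R^d}\chi_\hbar=\normLp{f}{2}{(\R^d)}^2=1$, and $\chi_\hbar(y)=\hbar^{-d/2}|f(y/\sqrt\hbar)|^2$, so that for every $r\in[1,\infty)$ and every $g\in L^r(\R^d)$ one has $\chi_\hbar\ast g\to g$ in $L^r(\R^d)$; moreover $\chi_\hbar$ is even because $f$ is.

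First I would note that, being weakly (resp. weakly-$\ast$) convergent in $L^q(\R^d)$, the sequence $\{\hbar^d\rho_{\gamma_\hbar}\}_\hbar$ is bounded there, say $\normLp{\hbar^d\rho_{\gamma_\hbar}}{q}{(\R^d)}\leq C$ uniformly in $\hbar$; by Young's inequality $\normLp{\rho_{m_\hbar}}{q}{(\R^d)}\leq\normLp{\hbar^d\rho_{\gamma_\hbar}}{q}{(\R^d)}\normLp{\chi_\hbar}{1}{(\R^d)}\leq C$, so $\{\rho_{m_\hbar}\}_\hbar$ is bounded in $L^q(\R^d)$ as well. To identify its limit it then suffices to test against an arbitrary element $\varphi$ of the (pre)dual: $\varphi\in L^{q'}(\R^d)$ with $q'\in(1,\infty)$ if $q\in(1,\infty)$, and $\varphi\in L^1(\R^d)$ if $q=\infty$. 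Moving the convolution onto $\varphi$ (legitimate since $\chi_\hbar$ is even) gives
\[
\int_{\R^d}\rho_{m_\hbar}(x)\varphi(x)\,dx
=\int_{\R^d}(\hbar^d\rho_{\gamma_\hbar})(x)\,(\chi_\hbar\ast\varphi)(x)\,dx
=\int_{\R^d}(\hbar^d\rho_{\gamma_\hbar})\,\varphi
+\int_{\R^d}(\hbar^d\rho_{\gamma_\hbar})\,(\chi_\hbar\ast\varphi-\varphi).
\]
The first term converges to $\int_{\R^d}\rho\,\varphi$ by hypothesis, while the second is bounded in absolute value by $C\,\normLp{\chi_\hbar\ast\varphi-\varphi}{q'}{(\R^d)}$ and hence tends to $0$ because $q'<\infty$ and $\{\chi_\hbar\}_\hbar$ is an approximate identity. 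Thus $\int_{\R^d}\rho_{m_\hbar}\varphi\to\int_{\R^d}\rho\,\varphi$ for every admissible $\varphi$, which is exactly the claimed weak (resp. weak-$\ast$) convergence $\rho_{m_\hbar}\rightharpoonup\rho$ in $L^q(\R^d)$.

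There is no genuine difficulty here; the one point to keep in mind is that the approximate-identity convergence $\chi_\hbar\ast\varphi\to\varphi$ is used in $L^{q'}(\R^d)$ and therefore requires $q'<\infty$, i.e. $q>1$ — which is precisely why the statement is restricted to $q\in(1,\infty]$. It is also worth observing that no bound on $\tr(\gamma_\hbar)$ or $\tr(-\hbar^2\Delta\gamma_\hbar)$ is needed: the uniform $L^q$ bound on $\{\rho_{m_\hbar}\}_\hbar$ comes for free from the assumed weak convergence of $\{\hbar^d\rho_{\gamma_\hbar}\}_\hbar$, so this lemma is strictly a strengthening of Lemma~\ref{claim:wlim-rho_mh-rho_m} adapted to the $L^q$ setting.
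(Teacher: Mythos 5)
Your proof is correct. Note that the paper does not actually prove this lemma: it simply cites \cite[Cor.~3.12]{lewin-sabin2020hartree}, so your argument supplies a self-contained proof of the deferred statement, and it is the natural one: combine the identity $\rho_{m_\hbar}=\hbar^d\rho_{\gamma_\hbar}\ast|f^\hbar|^2$ of Lemma~\ref{lemma:prop-husimi}(i) with the fact that $|f^\hbar|^2=\hbar^{-d/2}|f(\cdot/\sqrt\hbar)|^2$ is an even approximate identity, transfer the convolution onto the test function, and use that weak (resp.\ weak-$\ast$) convergence forces a uniform $L^q$ bound by Banach--Steinhaus. All the steps you use are legitimate: Fubini in the duality pairing is justified by Young and H\"older since $\hbar^d\rho_{\gamma_\hbar}\in L^q$, $|f^\hbar|^2\in L^1$, $\varphi\in L^{q'}$; the evenness of $f$ assumed in Definition~\ref{def:coherentstate-Husimi} lets you replace $|f^\hbar|^2$ by its reflection; and your remark that the argument needs $\chi_\hbar\ast\varphi\to\varphi$ in $L^{q'}$ with $q'<\infty$ correctly explains the restriction $q>1$ in the conclusion. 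Your closing observation is also accurate: unlike Lemma~\ref{claim:wlim-rho_mh-rho_m}, no kinetic or trace bound is required here, the uniform bound on $\{\rho_{m_\hbar}\}_\hbar$ being inherited from the assumed weak convergence.
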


	\section{Proof of the semiclassical integrated Weyl law (Theorem \ref{thm:int-WL_HF})}\label{sec:proof-int-WL}
	
	Assume that we have Theorem \ref{thm:weak-WL_HF} and let us explain how to deduce Theorem \ref{thm:int-WL_HF}.
	
	Let $\{\gamma_\hbar\}_{\hbar>0}$ be a sequence of almost-minimizers of $\ER^{\rm HF}_{\hbar,V,w}$. By Lemma \ref{lemma:trace_bd_h^d-almost-min}), there exists $C>0$ such that for any $\hbar\in(0,\hbar_0]$,
	\begin{equation*}
		\tr\left((-\hbar^2\Delta+V+1)\gamma_\hbar\right)\leq C\hbar^{-d}.
	\end{equation*}
	Let us write 
	\begin{align*}
		&h^d\int_{\R^d}\rho_{\gamma_\hbar}(x)dx-\int_{\R^d}\rho_{\rm TF}(x)dx
		\\&\quad
		=  \int_{B_R} (\hbar^d\rho_{\gamma_\hbar}(x)-\rho_{\rm TF}(x)) dx+\int_{\R^d\setminus B_R} (\hbar^d\rho_{\gamma_\hbar}(x)-\rho_{\rm TF}(x)) dx.
	\end{align*}
	Let us explain why we can restrict ourselves to the limit of the integral on a ball.
	Let $\varepsilon>0$. Since $V$ is confining, we can chose $R>0$ large enough so that $V(x)\geq \varepsilon^{-1}$ for any $x$ outside of $B_R$. Then, for any $\hbar>0$,
	\begin{align*}
		&\abs{\int_{\R^d\setminus B_R} (\hbar^d\rho_{\gamma_\hbar}(x)-\rho_{\rm TF}(x)) dx}
		= \abs{\int_{\R^d}\indicatrice{\abs{x}\geq R}V(x)^{-1} \: V(x) (\hbar^d\rho_{\gamma_\hbar}(x)-\rho_{\rm TF}(x)) dx}
		\\&\qquad
		\leq C\varepsilon\normLp{V(\hbar^d\rho_{\gamma_\hbar}-\rho_{\rm TF})}{1}{(\R^d)}
		\leq C'\varepsilon
		.
	\end{align*}
	
	Morerover, Theorem \ref{thm:weak-WL_HF} yields that $\hbar^d\rho_{\gamma_\hbar}\rightharpoonup\rho_{\rm TF}$ in $L^1(\R^d)$, which implies that
	\begin{equation*}
		\lim_{\hbar\to 0}\int_{B_R} (\hbar^d\rho_{\gamma_\hbar}(x)-\rho_{\rm TF}(x)) dx =0.
	\end{equation*}
	Thus, we deduce the integrated Weyl law
	\begin{equation*}
		\lim_{\hbar\to 0}\hbar^d\int_{\R^d}\rho_{\gamma_\hbar}(x)dx=\int_{\R^d}\rho_{\rm TF}(x)dx .
	\end{equation*}
	This proves Theorem  \ref{thm:int-WL_HF}.
	% end proof \ref{sec:proof-int-WL}

	\section{Proof of the semiclassical limit of the Hartree-Fock ground state energy (Theorem \ref{thm:sc-lim-HF})}\label{sec:conv-gs-HF}
	
	 We prove in this section the convergence of the Hartree-Fock ground state energy to the Thomas-Fermi ground state energy.

	\subsection{Reduction to the reduced Hartree-Fock energy}
	
	The main idea of the proof of Theorem \ref{thm:sc-lim-HF} consists in getting back to the reduced Hartree-Fock ground state. Indeed, we can deal the asymptotics in the semiclassical limit. Here, we do not fix the trace as in \cite{fournais2018semi}, but it is relevant to consider operators that satisfy the Assumption \ref{cond:trace_bd_h^d}. According to Lemma \ref{lemma:trace_bd_h^d-almost-min}, this is the case for the Hartree-Fock almost-minimizers.
	
	\subsection{The upper bound}
	
	We bound by above the $\hbar$-Hartree-Fock functional by the $\hbar$-reduced Hartree-Fock functional up to an error depending on $h$, and
	take a suitable element $\tilde{\gamma}_\hbar\in\KR$ that satisfies, for a given trial state  $\rho\in\test{\R^d,\R_+}$ of the Thomas-Fermi functional, the limit for the reduced Hartree-Fock functional 
	$\lim_{\hbar\to 0}\ER^{\rm rHF}_{\hbar,V,w}(\tilde{\gamma}_\hbar) = \ER^{\rm TF}_{V,w}(\rho)$. Then, one has
	\begin{align*}
		\limsup_{\hbar\to 0} e^{\rm HF}_{\hbar,V,w}\leq \limsup_{\hbar\to 0}\ER^{\rm HF}_{\hbar,V,w}(\tilde{\gamma}_\hbar) \leq \limsup_{\hbar\to 0} (\ER^{\rm rHF}_{\hbar,V,w}(\tilde{\gamma}_\hbar)+o_\hbar(1)) =\ER^{\rm TF}_{V,w}(\rho).
	\end{align*}
	Hence, one obtains the upper bound by minimizing on all the trial states $\rho$.
	
	This is essentially the same proof as in \cite[Sec. 3.1]{fournais2018semi}, by taking $\tilde{\gamma}_\hbar$ the extension by 0 of the spectral projector $\indicatrice{-\hbar^2\Delta_{C_R}-c_{\rm TF}\rho(x)^{2/d}\leq 0}$, where $C_R:=(-R/2,R/2)^d$ is such that it contains $\supp\rho$ and $-\Delta_{C_R}$ is the Dirichlet Laplace operator on $C_R$.
	More explicitly, $\tilde{\gamma}_\hbar$ can be written as a spectral projector on a $L^2(\R^d)$-orthonormal family $\{\tilde{u}_j^\hbar\}_{1\leq j\leq N_\hbar}\subset H^1(\R^d)$, that we define as an extension by 0 out of $C_R$ of an orthonormal family $\{u_j^\hbar\}_{1\leq j\leq N_\hbar}\subset H^2(C_R)\cap H^1_0(C_R)$ of eigenfunctions of $-\hbar^2\Delta_{C_R}-c_{\rm TF}\rho(x)^{2/d}$, associated to negative eigenvalues.

	\subsection{The lower bound}
	
	Let us now prove the bound
	\begin{equation*}
		\liminf_{\hbar\to 0} e^{\rm HF}_{\hbar,V,w} \geq \liminf_{\hbar\to 0} e^{\rm rHF}_{\hbar,V,w} 	=  e^{\rm TF}_{V,w}
	.
	\end{equation*}
	By Corollary \ref{cor:HF-to-rHF_gs}, we only have to prove the semiclassical limit of the reduced Hartree-Fock ground state energy to the Thomas-Fermi ground state  energy
	\begin{equation}\label{eq:proof-low-slim}
		\lim_{\hbar\to 0} e^{\rm rHF}_{\hbar,V,w} =  e^{\rm TF}_{V,w}.
	\end{equation}
	For any $\hbar>0$, let $\gamma_\hbar\in\KR$ be an almost-minimizer of the $\hbar$-reduced-Hartree-Fock energy
	\begin{equation*}
		\ER^{\rm  rHF}_{\hbar,V,w}(\gamma_\hbar)=  e^{\rm rHF}_{\hbar,V,w}+o_\hbar(1).
	\end{equation*}
	By  Lemma \ref{lemma:trace_bd_h^d-almost-min}, one has $h_0>0$ such that for any $\hbar\in(0,\hbar_0]$, the operator $\gamma_\hbar$ satisfies the uniform bound $\tr((-\hbar^2\Delta+V+1)\gamma_\hbar)<C$.
	Let $f_{x,\xi}^\hbar$ be a coherent state defined above with the additional assumption $f\in\CR_c^1(\R^d)$ and even. Denote by $m_\hbar$ the Husimi measure associated to $\gamma_\hbar$ and $f$.

	\begin{claim}\label{claim:bound-m_h}
		The sequence $\{m_\hbar\}_\hbar$ is a bounded sequence of trial functions for the Vlasov energy $\ER^{\rm Vlas}_{V,w}$.
		There exists $C>0$ such that for any $\hbar>0$
		\begin{equation*}
			0\leq m_\hbar\leq 1,\quad
			\iint_{\R^d\times\R^d}(\abs{\xi}^2+ V(x)+1)m_\hbar(x,\xi)dx d\xi\leq C.
		\end{equation*}
	\end{claim}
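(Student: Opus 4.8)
**The plan is to verify each of the three claimed bounds for $m_\hbar$ directly from the definition of the Husimi transform and the uniform trace bound on $\gamma_\hbar$.** The Pauli bound $0\le m_\hbar\le 1$ is the easy part: since $0\le\gamma_\hbar\le 1$ as an operator on $L^2(\R^d)$ and $\norm{f_{x,\xi}^\hbar}_{L^2}=1$ (because $f$ is $L^2$-normalized and the rescaling $f^\hbar$ preserves the norm), we have $0\le\prodscal{f_{x,\xi}^\hbar}{\gamma_\hbar f_{x,\xi}^\hbar}\le\prodscal{f_{x,\xi}^\hbar}{f_{x,\xi}^\hbar}=1$ pointwise in $(x,\xi)$. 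This is exactly Lemma \ref{lemma:prop-husimi}(ii), which also gives boundedness in $L^1(\R^d\times\R^d)$ once $\hbar^d\tr(\gamma_\hbar)\lesssim 1$ — and that last fact follows from Lemma \ref{lemma:trace_bd_h^d-almost-min} applied to the almost-minimizer $\gamma_\hbar$ (since $\tr(\gamma_\hbar)\le\tr((-\hbar^2\Delta+V+1)\gamma_\hbar)\le C\hbar^{-d}$).

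**The substantive content is the bound $\iint(\abs{\xi}^2+V(x)+1)m_\hbar\,dx\,d\xi\le C$, which I would obtain term by term using Lemma \ref{lemma:prop-husimi}(iii).** For the kinetic term, \eqref{lemma:relation_m_h_gamma_h-lin_term} gives
\begin{equation*}
\frac{1}{(2\pi)^d}\iint_{\R^d\times\R^d}\abs{\xi}^2 m_\hbar(x,\xi)\,dx\,d\xi=\hbar^d\tr(-\hbar^2\Delta\gamma_\hbar)+\hbar^{d+1}\tr(\gamma_\hbar)\normLp{\nabla f}{2}{(\R^d)}^2,
\end{equation*}
and both summands on the right are $O(1)$: the first because $\tr(-\hbar^2\Delta\gamma_\hbar)\le\tr((-\hbar^2\Delta+V+1)\gamma_\hbar)\le C\hbar^{-d}$ (using $V\ge 0$ from Assumption \ref{cond:confining}), and the second because $\hbar^{d+1}\tr(\gamma_\hbar)\le C\hbar$ with $\normLp{\nabla f}{2}{(\R^d)}<\infty$ since $f\in\CR_c^1(\R^d)$. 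Similarly the potential term identity
\begin{equation*}
\frac{1}{(2\pi)^d}\iint_{\R^d\times\R^d}V(x)m_\hbar(x,\xi)\,dx\,d\xi=\hbar^d\int_{\R^d}\rho_{\gamma_\hbar}(x)\bigl(V\ast\abs{f^\hbar}^2\bigr)(x)\,dx
\end{equation*}
must be compared to $\hbar^d\int V(x)\rho_{\gamma_\hbar}(x)\,dx=\hbar^d\tr(V\gamma_\hbar)\le C$. The difference between $\int\rho_{\gamma_\hbar}(V\ast\abs{f^\hbar}^2)$ and $\int\rho_{\gamma_\hbar}V$ is controlled using that $\abs{f^\hbar}^2$ is a probability density concentrating at scale $\sqrt\hbar$, so $V\ast\abs{f^\hbar}^2(x)-V(x)$ is bounded by the modulus of continuity $\omega_x(\sqrt\hbar,V)$ on the support of $\rho_{\gamma_\hbar}$; combined with continuity of $V$ and the kinetic bound (which confines $\rho_{\gamma_\hbar}$ to where $V$ is not too large), this yields a uniform bound. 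Finally the constant term $\iint m_\hbar=(2\pi)^d\int\rho_{m_\hbar}=(2\pi)^d\hbar^d\int\rho_{\gamma_\hbar}\ast\abs{f^\hbar}^2=(2\pi)^d\hbar^d\tr(\gamma_\hbar)\le C$ by Lemma \ref{lemma:prop-husimi}(i).

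**The main obstacle I anticipate is the potential term: handling $V\ast\abs{f^\hbar}^2$ versus $V$ when $V$ is merely continuous and confining (not bounded, not Lipschitz).** One cannot simply bound $V\ast\abs{f^\hbar}^2\lesssim V+C$ globally. The clean route is to split $\R^d=B_R\cup B_R^c$: on $B_R$ one uses uniform continuity of $V$ on the slightly larger compact set $B_{R+1}$ to get $V\ast\abs{f^\hbar}^2\le V+\omega(\sqrt\hbar,V;B_{R+1})\le V+C$ for $\hbar$ small (taking $\supp f\subset B_1$ so the convolution only sees a $\sqrt\hbar$-neighborhood), while on $B_R^c$ one observes $V\ast\abs{f^\hbar}^2(x)\le\sup_{|y-x|\le\sqrt\hbar}V(y)\le C(R)\,V(x)$ for $\hbar$ small using that $V\to\infty$ and is continuous, so $\hbar^d\int_{B_R^c}\rho_{\gamma_\hbar}(V\ast\abs{f^\hbar}^2)\le C(R)\hbar^d\int V\rho_{\gamma_\hbar}\le C$. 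Assembling these three parts, together with $\hbar^{-d}D_w$-type terms being irrelevant here (the claim does not involve $w$ at all), gives $\iint(\abs{\xi}^2+V(x)+1)m_\hbar\le C$ uniformly in $\hbar\in(0,\hbar_0]$, which is precisely the assertion that $\{m_\hbar\}_\hbar$ is a bounded family of trial functions for $\ER^{\rm Vlas}_{V,w}$ in $\KR^{\rm Vlas}_V$.
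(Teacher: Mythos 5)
Your proposal is essentially the paper's own argument: the identity of Lemma \ref{lemma:prop-husimi}(iii) together with the trace bound of Lemma \ref{lemma:trace_bd_h^d-almost-min} handles the constant and kinetic terms, and the potential term is compared to $\hbar^d\int_{\R^d} V\rho_{\gamma_\hbar}$ by splitting $\R^d$ into $B_R$ (modulus of continuity of $V$ at scale $\sqrt{\hbar}$) and its complement (confinement of $V$ plus the uniform bound on $\hbar^d\int_{\R^d} V\rho_{\gamma_\hbar}$), exactly as in the proof of Claim \ref{claim:bound-m_h}. The only divergence is in the exterior estimate, where you assert a multiplicative comparison $\sup_{\abs{y-x}\leq\sqrt{\hbar}}V(y)\leq C(R)V(x)$ while the paper instead bounds $\abs{V(x)-V(x-\sqrt{\hbar}y)}$ additively by $2V$ on the slightly enlarged region $\R^d\setminus B_{R-R_0}$ and absorbs it into $2\int_{\R^d\setminus B_{R-R_0}}V(x)\hbar^d\rho_{\gamma_\hbar}(x)dx$; both versions play the same role and rely on the same ingredients ($V\geq 0$, continuity, and the weighted trace bound).
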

	% details
	\begin{proof}[Proof of Claim \ref{claim:bound-m_h}]
		 By the Eq. \eqref{lemma:relation_m_h_gamma_h-lin_term}
		\begin{align*}
			&\frac 1{(2\pi )^d} \iint_{\R^d\times\R^d}(\abs{\xi}^2+ V(x)+1)m_\hbar(x,\xi)dx d\xi
			\\&\quad
			=  \hbar^d\tr(-\hbar^2\Delta\gamma_\hbar)  +\hbar^d\int_{\R^d}(1+ V(x)+(V\ast|f^\hbar|^2)(x))\rho_{\gamma_\hbar}(x)dx 
					%\\&\qquad
			+\hbar^{d+1}\tr(\gamma_\hbar)\normLp{\nabla f}{2}{(\R^d)}^2 
			.
		\end{align*}
		Let us prove that there exist $C>0$, such that for any $\hbar>0$
		\begin{align*}
			\hbar^d\int_{\R^d}(V\ast|f^\hbar|^2)(x)\rho_{\gamma_\hbar}(x)dx \leq C.
		\end{align*}
		Combining it to \ref{cond:trace_bd_h^d}, one gets the desired bound.
		Indeed, by the definition of $f^\hbar$, that $f$ is normalized in $L^2(\R^d)$ and has a compact support, one has
		\begin{align*}
			\hbar^d\abs{\int_{\R^d} (V-V\ast|f^\hbar|^2)(x)\rho_{\gamma_\hbar}(x)dx} &= \abs{\int_{\R^d}\int_{\R^d} (V(x)-V(x-\sqrt{\hbar}y))\abs{f(y)}^2\rho_{\gamma_\hbar}(x) dx dy}
			\\&\leq\int_{\R^d}  \sup_{y\in\supp f}\abs{V(x)-V(x-\sqrt{\hbar}y)}\hbar^d\rho_{\gamma_\hbar}(x) dx			.
		\end{align*}
		Let $R_0>0$ such that $\supp f\subset B_{R_0}$. 
		Splitting the last integral into two parts on $B_R$ and the other on $\R^d\setminus B_R$ for $R>R_0$, on the one hand, by Assumption \ref{cond:trace_bd_h^d}, one has for any $\hbar>0$
		\begin{align*}
			\int_{B_R}  \sup_{y\in\supp f}\abs{V(x)-V(x-\sqrt{h}y)} \hbar^d\rho_{\gamma_\hbar}(x) dx&\leq  \sup_{x\in B_R}\omega_{x}(\sqrt{\hbar}R_0,V) \int_{\R^d}\hbar^d\rho_{\gamma_\hbar}(x) dx
			\\&\leq C\sup_{x\in B_R}\omega_{x}(\sqrt{\hbar}R_0,V).
		\end{align*}
		Since $V$ is continuous on $\R^d$, then $V$ is uniformly continuous on $B_R$ for any $R>0$. One has that $\sup_{x\in B_R}\omega_{x}(\sqrt{\hbar}R_0,V)\to 0$ as $\hbar\to 0$. On the other hand, for any $\hbar\in(0,1]$
		\begin{align*}
			\int_{\R^d\setminus B_R}  \sup_{y\in\supp f}\abs{V(x)-V(x-\sqrt{\hbar}y)} \hbar^d\rho_{\gamma_\hbar}(x) dx
			&\leq 2\int_{\R^d\setminus B_{R-R_0}}V(x)\hbar^d\rho_{\gamma_\hbar}(x)dx
			.
		\end{align*}
		The right hand-side of the inequality is uniformly bounded by $2\int_{\R^d}V(x)\hbar^d\rho_{\gamma_\hbar}(x)dx$, which is, by Assumption \ref{cond:trace_bd_h^d}, uniformly bounded with respect to $\hbar\in (0,1]$.
		Eventually, by taking the limit $\hbar\to 0$, and $R\to +\infty$, we obtain
		\begin{align*}
			\lim_{\hbar\to 0}\hbar^d\int_{\R^d}(V(x)-(V\ast|f^\hbar|^2)(x))\rho_{\gamma_\hbar}(x)dx =0.
		\end{align*}
		This ends up the proof of Claim \ref{claim:bound-m_h}.
	\end{proof}
	% fin details \ref{claim:bound-m_h}

	To obtain it, we have to prove that
	\begin{equation}\label{eq:lowbound_E-rHF}
		\ER^{\rm rHF}_{\hbar,V,w}(\gamma_\hbar)=\ER^{\rm Vlas}_{V,w}(m_\hbar)+\varepsilon_\hbar,
	\end{equation}
	Let us express the linear term and the direct term of $\ER^{\rm rHF}_\hbar$ with respect to $m_\hbar$ and $\rho_{\gamma_\hbar}$.
	By the equation \eqref{lemma:relation_m_h_gamma_h-lin_term},
	\begin{align*}
		\hbar^d\tr((-\hbar^2\Delta+V)\gamma_\hbar)
		&= \frac{1}{(2\pi )^d}\iint_{\R^d\times\R^d}(\abs{\xi}^2+V(x))m_\hbar(x,\xi)dx d\xi
		\\&\quad- \hbar^{d+1}\tr(\gamma_\hbar)\normLp{\nabla f}{2}{(\R^d)}^2 
		+\hbar^d \int_{\R^d} (V-V\ast (|f^\hbar|^2))(x)\rho_{\gamma_\hbar}(x)dx
		.
	\end{align*}
	Thus, using \eqref{lemma:relation_m_h_gamma_h-dir_term}, one has for any $\hbar\in(0,\hbar_0]$
	\begin{align*}
		\ER^{\rm rHF}_{\hbar,V,w}(\gamma_\hbar)
		&= \hbar^d\tr((-\hbar^2\Delta+V)\gamma_\hbar) +\frac{\hbar^{2d}}2 D_w(\rho_{\gamma_\hbar},\rho_{\gamma_\hbar})
		\\&= \frac{1}{(2\pi)^d}\iint_{\R^d\times\R^d}(\abs{\xi}^2+V(x)))m_\hbar(x,\xi) dxd\xi +\frac 12 D_w(\rho_{m_\hbar},\rho_{m_\hbar}) +\varepsilon_\hbar
		\\&
		= \ER^{\rm Vlas}_{V,w}(m_\hbar) +\varepsilon_\hbar
		,
	\end{align*}
	where
	\begin{align*}
		\varepsilon_\hbar
		&= -\hbar^{d+1}\tr(\gamma_\hbar)\normLp{\nabla f}{2}{(\R^d)}^2
		\\&\qquad
		+\hbar^d\int_{\R^d}(V-V\ast (|f^\hbar|^2))(x)\rho_{\gamma_\hbar}(x)dx
		+\frac{\hbar^{2d}}{2}D_{w\ast|f^\hbar|^2\ast|f^\hbar|^2-w}(\rho_{\gamma_\hbar},\rho_{\gamma_\hbar}).
	\end{align*}
	It remains to show why $\varepsilon_\hbar=o_\hbar(1)$.
	%
	%1er term
	Assumption \ref{cond:trace_bd_h^d} and $f\in H^1(\R^d)$ imply that
	\begin{equation*}
		\lim_{\hbar\to 0}\hbar^{d+1}\tr(\gamma_\hbar)\normLp{\nabla f}{2}{(\R^d)}^2 =0.
	\end{equation*}
	Moreover, one has seen in the proof of Claim \ref{claim:bound-m_h} that
		\begin{align*}
		\lim_{\hbar\to 0}\hbar^d\int_{\R^d}(V(x)-(V\ast|f^\hbar|^2)(x))\rho_{\gamma_\hbar}(x)dx =0.
		\end{align*}
	% 3e term
	Since $w\in L^p(\R^d)+L^\infty_\varepsilon(\R^d)$, for a fixed $\varepsilon>0$, there exist $w_1\in L^p(\R^d)$ and $w_\infty\in L^\infty(\R^d)$ such that  $w=w_1+w_\infty$ and $\normLp{w_\infty}{\infty}{(\R^d)}\leq\varepsilon$.
	Since $\{|f^\hbar|^2\}_{\hbar>0}$ is an unit approximation, $w_1-w_1\ast|f^\hbar|^2\ast|f^\hbar|^2 \to 0$ strongly in $L^p(\R^d)$. Furthermore, one has $\normLp{w_\infty-w_\infty\ast|f^\hbar|^2\ast|f^\hbar|^2}{\infty}{(\R^d)}\leq 2\varepsilon$.
	By Proposition \ref{fact:control_direct}
	\begin{align*}
		&\abs{D_{w-w\ast|f^\hbar|^2\ast|f^\hbar|^2}(\rho_{\gamma_\hbar},\rho_{\gamma_\hbar})}
		\\&\quad
		\leq C \left(\norm{w_1-w_1\ast|f^\hbar|^2\ast|f^\hbar|^2}_{L^p(\R^d))}+2\varepsilon\right)\left(\tr(-\hbar^2\Delta\gamma_\hbar)^2+\tr(\gamma_\hbar)^2\right)
		\\&\quad
		\leq C'' \hbar^{-2d}(o_\hbar(1)+ \varepsilon)
		.
	\end{align*}
	Thus, letting $\varepsilon\to 0$, we have
	\begin{equation*}
		\lim_{\hbar\to 0}\hbar^{2d}D_{w-w\ast|f^\hbar|^2\ast|f^\hbar|^2}(\rho_{\gamma_\hbar},\rho_{\gamma_\hbar})=0.
	\end{equation*}
	This ends the proof of the semiclassical asymptotics of the ground state energy \eqref{eq:proof-low-slim}.
	% end proof \eqref{eq:proof-low-slim}

	\section{Proof of the weak semiclassical limit of the density (Theorem \ref{thm:weak-WL_HF})}\label{sec:proof-weak-WL}

	Before proving Theorem \ref{thm:weak-WL_HF}, let us state and prove a crucial ingredient: the weak lower semi-continuity of the Vlasov functional.
	
	\begin{lemma}\label{lemma:vlas-wlsci}
		Let $V:\R^d\to\R$ which satisfies Assumption \ref{cond:confining}
		 and $w\in L^{1+d/2}(\R^d)+L^\infty_\varepsilon(\R^d)$  satisfying Assumption \ref{cond:w-Dterm} (or in dimensions $d=1,2$, $w\in L^1(\R^d)\cap L^{1+d/2}(\R^d)$ that satisfies Assumption \ref{cond:w-Dterm-d=12}). For any $E\in\R$ and any bounded sequence $\{m_\hbar\}_\hbar\subset\KR^{\rm Vlas}_V$ such that $m_\hbar\rightharpoonup m$ weakly-$\ast$ on $L^\infty(\R^d\times\R^d)$,
		\begin{equation*}
		\liminf_{\hbar\to 0}	\ER^{\rm Vlas}_{V,w}(m_\hbar)\geq\ER^{\rm Vlas}_{V,w}(m).
		\end{equation*}
	\end{lemma}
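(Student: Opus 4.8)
The plan is to split the Vlasov functional into its ``free'' part $m\mapsto\frac1{(2\pi)^d}\iint_{\R^d\times\R^d}(\abs{\xi}^2+V(x))m(x,\xi)\,dx\,d\xi$ and its interaction part $m\mapsto\frac12 D_w(\rho_m,\rho_m)$, to prove that the first is weakly-$\ast$ lower semi-continuous and that the second is in fact weakly-$\ast$ \emph{continuous} along the given sequence; since the interaction part converges, its limit may be added to the $\liminf$ of the free part. For the free part, note that $\abs{\xi}^2\geq0$ and $V\geq0$ (Assumption \ref{cond:confining}), so for any $\chi\in\test{\R^d\times\R^d}$ with $0\leq\chi\leq1$ the function $(\abs{\xi}^2+V(x))\chi(x,\xi)$ is bounded with compact support, hence lies in $L^1(\R^d\times\R^d)$ ($V$ being continuous), and $\iint(\abs{\xi}^2+V)\chi\,m_\hbar\to\iint(\abs{\xi}^2+V)\chi\,m$ by the weak-$\ast$ convergence $m_\hbar\rightharpoonup m$ in $L^\infty$. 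Since $m_\hbar\geq0$ this gives $\liminf_\hbar\iint(\abs{\xi}^2+V)m_\hbar\geq\iint(\abs{\xi}^2+V)\chi\,m$, and letting $\chi\uparrow1$ along an exhaustion of $\R^d\times\R^d$ and using monotone convergence yields the lower semi-continuity of the free part.

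Next I would collect uniform bounds on the densities $\rho_{m_\hbar}$. From $0\leq m_\hbar\leq1$ together with the uniform bound $\iint(\abs{\xi}^2+V)m_\hbar\leq C$ one gets $\norm{\rho_{m_\hbar}}_{L^1(\R^d)}\leq C'$: split the $(x,\xi)$-integral according to whether $\abs{\xi}^2+V(x)\geq1$, using that $\{V<1\}$ is bounded (as $V$ is confining) to control the complementary region by $m_\hbar\leq1$. The bathtub computation of Lemmas \ref{claim:link-Vlas-TF}--\ref{lemma:optVlas-TF} (the ball minimises the kinetic energy at fixed density) gives $\int_{\R^d}\abs{\xi}^2m_\hbar(x,\xi)\,d\xi\geq c\,\rho_{m_\hbar}(x)^{1+2/d}$, hence $\norm{\rho_{m_\hbar}}_{L^{1+2/d}(\R^d)}\leq C''$. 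Moreover $\int V\rho_{m_\hbar}\leq C$ and $V(x)\to+\infty$ give tightness, $\int_{\R^d\setminus B_R}\rho_{m_\hbar}\to0$ as $R\to\infty$ uniformly in $\hbar$. By Lemma \ref{claim:wlim-rho_mh-rho_m}, $\rho_{m_\hbar}\to\rho_m$ in $\DR'(\R^d)$, and together with the $L^{1+2/d}$-bound this upgrades to $\rho_{m_\hbar}\rightharpoonup\rho_m$ weakly in $L^{1+2/d}(\R^d)$ for the whole sequence (and $\rho_m\in L^1\cap L^{1+2/d}$ by Fatou and tightness).

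For the interaction term I would write $w=w_1+w_\infty$ with $w_1\in L^{1+d/2}(\R^d)$ and $\normLp{w_\infty}{\infty}{(\R^d)}\leq\varepsilon$ (in the $d=1,2$ alternative simply take $w_1=w$, $w_\infty=0$). Using Young's inequality and the duality $(L^{1+2/d})^\ast=L^{1+d/2}$, the convolutions $w_1\ast\rho_{m_\hbar}$ are bounded in $L^\infty(\R^d)$ and converge pointwise to $w_1\ast\rho_m$; writing $D_{w_1}(\rho_{m_\hbar},\rho_{m_\hbar})-D_{w_1}(\rho_m,\rho_m)=\int(w_1\ast\rho_{m_\hbar}-w_1\ast\rho_m)\rho_{m_\hbar}+\int(w_1\ast\rho_m)(\rho_{m_\hbar}-\rho_m)$, the second integral tends to $0$ because $w_1\ast\rho_m\in L^{1+d/2}=(L^{1+2/d})^\ast$ and $\rho_{m_\hbar}\rightharpoonup\rho_m$ weakly in $L^{1+2/d}$, while the first tends to $0$ by combining tightness of $\{\rho_{m_\hbar}\}$ with dominated convergence and Hölder's inequality on balls $B_R$. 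The $w_\infty$-contribution is $O(\varepsilon)$ uniformly in $\hbar$ thanks to $\norm{\rho_{m_\hbar}}_{L^1}\leq C$, so letting $\varepsilon\to0$ gives $D_w(\rho_{m_\hbar},\rho_{m_\hbar})\to D_w(\rho_m,\rho_m)$. Adding this convergent term to the $\liminf$ of the free part then gives $\liminf_\hbar\ER^{\rm Vlas}_{V,w}(m_\hbar)\geq\ER^{\rm Vlas}_{V,w}(m)$.

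The step I expect to be the main obstacle is the last one: weak convergence of the densities does not, by itself, control the quadratic interaction $D_w$, and one must exploit the confinement of $V$ (hence tightness of $\{\rho_{m_\hbar}\}$) together with the a priori $L^{1+2/d}$-bound from the kinetic energy to pass to the limit. Note that the repulsivity hypotheses (Assumptions \ref{cond:w-Dterm} / \ref{cond:w-Dterm-d=12}) are not needed for the semi-continuity argument itself; here they only guarantee that $\ER^{\rm Vlas}_{V,w}$ is well-defined and bounded below, so that the statement is not vacuous.
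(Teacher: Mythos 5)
Your proof is correct and follows essentially the same route as the paper: lower semicontinuity of the linear term via compactly supported truncations plus monotone convergence, and continuity (not just semicontinuity) of the direct term along the sequence, obtained from the weak $L^{1+2/d}$ convergence of the densities, the uniform $L^\infty$ bound on $w\ast\rho_{m_\hbar}$, and the tightness supplied by the confinement of $V$. The only cosmetic difference is that the paper splits $\abs{\xi}^2+V$ into positive and negative parts so as to cover the situation where $V$ stands for $V-E$ and may be negative on a compact set; your argument, which uses $V\geq 0$, would need that same one-line adjustment (the negative part is bounded with compact support, hence in $L^1$, so its contribution passes to the limit by weak-$\ast$ convergence).
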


	\begin{proof}[Proof of Lemma \ref{lemma:vlas-wlsci}] We treat the linear and the quadratic term separately as for the proof for the Hartree-Fock functional.
		\item[\quad 1)] We first show that the kinetic energy $m\mapsto \iint_{\R^d\times\R^d} (\abs{\xi}^2+V(x)-E) m(x,\xi)dx d\xi$ is weakly-$\ast$ lower semi-continuous in $L^\infty(\R^d\times\R^d)$ 
		(it is actually also true in $L^q(\R^d\times\R^d)$).
		Let us treat the two terms of the functional 
		\begin{equation*}
			\iint_{\R^d\times\R^d} (\abs{\xi}^2+V(x))_\pm m_\hbar(x,\xi)dx d\xi.
		\end{equation*}
		Let us introduce a radial decreasing function $\chi\in\test{\R^d\times\R^d,[0,1]}$, which is equal to 1 in the ball $B_{\R^d\times\R^d}(0,1)$ and for any $R>0$ the cut-off function $\chi_R(x):=\chi(x/R)$.
		On the one hand,
		\begin{align*}
			\iint_{\R^d\times\R^d} (\abs{\xi}^2+V(x))_+ m_\hbar(x,\xi)dx d\xi
			&\geq \iint_{\R^d\times\R^d} \chi_R(x,\xi)(\abs{\xi}^2+V(x))_+ m_\hbar(x,\xi)dx d\xi
			.
		\end{align*}
		By taking first the limit $\hbar\to 0$ of the left-hand side term and then $R\to +\infty$ by the monotone convergence theorem
		\begin{equation*}
			\liminf_{\hbar\to 0} \iint_{\R^d\times\R^d} (\abs{\xi}^2+V(x))_+ m_\hbar(x,\xi)dx d\xi\geq 	\iint_{\R^d\times\R^d} (\abs{\xi}^2+V(x))_+ m(x,\xi)dx d\xi.
		\end{equation*}
		On the other hand, since $V$ is a confining potential, the function $(x,\xi)\mapsto (\abs{\xi}^2+V(x))_-$ has a compact support and then is in $L^1(\R^d\times\R^d)$. Thus, using that $m_\hbar\rightharpoonup m$ weakly-$\ast$ in $L^\infty(\R^d\times\R^d)$
		\begin{align*}
			\lim_{\hbar\to 0}&\left(-\iint_{\R^d\times\R^d} (\abs{\xi}^2+V(x))_- m_\hbar(x,\xi)dx d\xi\right)
			= -\iint_{\R^d\times\R^d}(\abs{\xi}^2+V(x))_- m(x,\xi)dx d\xi.
		\end{align*}
		Finally, we recover the lower semi-continuity by adding the two limits
		\begin{equation*}
			\liminf_{\hbar\to 0}\iint_{\R^d\times\R^d} (\abs{\xi}^2+V(x)) m_\hbar(x,\xi)dx d\xi \geq \iint_{\R^d\times\R^d} (\abs{\xi}^2+V(x)) m(x,\xi)dx d\xi.
		\end{equation*}
		\item[\quad 2)] We prove now that the direct term $m\mapsto D_w(\rho_m,\rho_m)$ is strongly continuous. By Lemma \ref{claim:wlim-rho_mh-rho_m}, one has $\rho_{m_\hbar}\rightharpoonup \rho_m$ weakly in $L^1(\R^d)\cap L^{1+2/d}(\R^d)$.
		The choice of $w$ makes the sequence $\{w\ast\rho_{m_\hbar}\}_\hbar$ be bounded in $L^\infty(\R^d)$, $w\ast\rho_{m_\hbar}\to w\ast\rho_m$ a.e.\ and the  weak convergence $w\ast\rho_{m_\hbar}\rightharpoonup w\ast\rho$ in $L^1_{\loc}(\R^d)\cap L^{1+2/d}_{\loc}(\R^d)$.
		Besides, we write
		\begin{equation*}
			D_w(\rho_{m_\hbar},\rho_{m_\hbar})
			=\int_{\R^d}(w\ast\rho_{m_\hbar})(x)\rho_{m_\hbar}(x)dx,
		\end{equation*}
		which can be split into two parts $\int_{B_R}(w\ast\rho_{m_\hbar})(x)\rho_{m_\hbar}(x)dx$ and $\int_{\R^d\setminus B_R}(w\ast\rho_  {m_\hbar})(x)\rho_{m_\hbar}(x)dx$.
		For any fixed $R>0$, the first part tends to $\int_{B_R}(w\ast\rho_m)(x)\rho_m(x)dx$ as $\hbar\to 0$.
		Let $\varepsilon>0$.
		The Assumption \ref{cond:confining} implies that $V(x)\geq \varepsilon$ out of $B_R$, for $R>0$ large enough.
		Besides, since the sequences $\{w\ast\rho_{m_\hbar}\}_\hbar\subset L^\infty(\R^d\times\R^d)$ and $\{V\rho_{m_\hbar}\}_\hbar\subset L^1(\R^d)$ are bounded as $\hbar\to 0$
		\begin{align*}
			\abs{\int_{\R^d\setminus B_R}(w\ast\rho_{m_\hbar})(x)\rho_{m_\hbar}(x)dx}
			%^
			&\leq \abs{\int_{\R^d}(w\ast\rho_{m_\hbar})(x) \: \indicatrice{\abs{x}\geq R}V(x)^{-1} \: V(x)\rho_{m_\hbar}(x)dx}
			\\&\leq 	\varepsilon\normLp{w\ast\rho_{m_\hbar}}{\infty}{(\R^d)}\int_{\R^d}V(x)\rho_{m_\hbar}(x)dx
			\\&	<C\varepsilon.
	\end{align*}
		Therefore, as $R\to +\infty$, we obtain the limit $D_w(\rho_{m_\hbar},\rho_{m_\hbar})\to D_w(\rho_m,\rho_m)$ when $\hbar\to 0$.
		This ends the proof of Lemma \ref{lemma:vlas-wlsci}.
	\end{proof}
	% end proof \ref{lemma:vlas-wlsci}
	
	We now prove the weak limit of the  almost-minimizers' densities.

	\begin{proof}[Proof of Theorem \ref{thm:weak-WL_HF}]
	
	Let $\{\gamma_\hbar\}_{\hbar>0}\subset\KR$ be an approximate minimizing sequence of the $\hbar$-Hartree-Fock energies associated to the potentials $V$ and $w$, at the energy $E$.
	%% details
	By Lemma \ref{lemma:trace_bd_h^d-almost-min}, the energy
	$\hbar^d\tr((-\hbar^2\Delta+V+1)\gamma_\hbar)$ is uniformly bounded in $h$. This inequality combined with the Lieb-Thirring inequality \eqref{eq:LT-ineq}
	implies that the associated sequence of densities $\{\hbar^d\rho_{\gamma_\hbar}\}_\hbar$ is bounded in $ L^1(\R^d)\cap L^{1+2/d}(\R^d)$ and in $L^1(\R^d, V(x)dx)$.
	%% end details
	Then, there exists an integrable $\rho\geq 0$ such that $V\rho\in L^1(\R^d)$ and $\hbar^d\rho_{\gamma_\hbar}\rightharpoonup\rho$ weakly in $L^1(\R^d)\cap L^{1+2/d}(\R^d)$. In particular, $\rho$ is a trial function of the Thomas-Fermi energy $\ER^{\rm TF}_{V,w}$.
	Let us explain now why $\rho$ is a minimizer of the Thomas-Fermi energy. To do so, we link it again to the associated Vlasov energy $\ER^{\rm Vlas}_{V,w}$.
	Let $m_\hbar$ be the Husimi transform associated to $\gamma_\hbar$ and the $L^2$-normalized function $f\in\schwartz(\R^d)$.
	The equality \eqref{eq:rho_{m_h}} yields that $\{\rho_{m_\hbar}\}_\hbar\subset L^1(\R^d)\cap L^{1+2/d}(\R^d)$ is bounded. % and $\rho_{m_\hbar}\in\CR^\infty(\R^d)$.
	Furthermore, by Lemma \ref{claim:wlim-rho_mh-rho}, one has that $\rho_{m_\hbar}\rightharpoonup\rho$ weakly in $L^1(\R^d)\cap L^{1+2/d}(\R^d)$.
	We have seen in the proof of the lower bound in Theorem \ref{thm:sc-lim-HF} (see  Claim \ref{claim:bound-m_h} and \eqref{eq:lowbound_E-rHF}) that $\{m_\hbar\}_{\hbar>0}$ is a minimizing sequence of the Vlasov energy:
	\begin{equation*}
		\lim_{\hbar\to 0}\ER^{\rm Vlas}_{V,w}(m_\hbar)
		=\lim_{\hbar\to 0}\hbar^d\ER^{\rm HF}_{\hbar,V,w}(\gamma_\hbar)
		= e^{\rm TF}_{V,w}
		= e^{\rm Vlas}_{V,w}.
	\end{equation*}
	In addition since $\{m_\hbar\}_{\hbar>0}$ is bounded, there exists $ 0\leq m\leq 1$ such that $m_\hbar\rightharpoonup m$ weakly-$\ast$ in $L^\infty(\R^d\times\R^d)$.
	By Lemma \ref{lemma:vlas-wlsci}, we deduce that the limit $m$ minimizes the Vlasov energy. Then, by Lemma \ref{lemma:optVlas-TF}, there exists a minimizer $\rho_{\rm TF}$ of the Thomas-Fermi energy such that
	\begin{equation*}
		m(x,\xi)=\indicatrice{(x,\xi) \: :\: \abs{\xi}^2\leq c_{\rm TF}\rho_{\rm TF}(x)^{2/d}}.
	\end{equation*}
	Notice, that this minimizer is exactly the density $\rho_m$.
	Eventually, since the limit of $\{\hbar^d\rho_{\gamma_\hbar}\}_{\hbar>0}$ and $\{\rho_{m_\hbar}\}_{\hbar>0}$ have the same weak limit on $L^1(\R^d)\cap L^{1+2/d}(\R^d)$ (and that the limits are respectively $\rho$ and $\rho_m$), one has $\rho=\rho_m$. Thus, the weak limit of $\{\hbar^d\rho_{\gamma_\hbar}\}_{\hbar>0}$ is $\rho=\rho_{\rm TF}$.
	This ends the proof of Theorem \ref{thm:weak-WL_HF}.
\end{proof}
% fin demo \ref{thm:weak-WL_HF}	
		
	\section{Proof of the semiclassical pointwise Weyl law (Theorem \ref{thm:pointwise-WL_HF})}\label{sec:proof-pointwise-WL}

	\subsection{Asymptotics without interaction}
	
	Let us prove first the Weyl law in the non-interacting case, with a Conlon type proof (see \cite[Thm 3.6]{conlon1983semi}).
	
	\begin{thm}[Weyl law]\label{thm:WL-without-int_conlon}
		Let $V:\R^d\to\R$ which satisfies Assumption \ref{cond:confining} and let $E\in\R$.
		Then, for any $x\in\R^d$, we have
		\begin{equation*}%\label{eq:WL-without-int}
			\lim_{\hbar\to 0} \hbar^d\indicatrice{-\hbar^2\Delta+V\leq E}(x,x) = \frac{\abs{B_{\R^d}(0,1)}}{(2\pi )^d}(E-V(x))_+^{d/2}.
		\end{equation*}
	\end{thm}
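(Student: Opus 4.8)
\emph{Overview of the plan.} I would not analyse the spectral projector directly; instead I would pass through the heat semigroup $e^{-tP_\hbar}$, $P_\hbar:=-\hbar^2\Delta+V$, whose pointwise semiclassical limit is elementary, and then return to $\indicatrice{P_\hbar\le E}$ by a Tauberian/continuity argument. Since $V$ is confining, $P_\hbar$ has compact resolvent, hence $\indicatrice{P_\hbar\le E}$ is finite rank and, for every $t>0$, both $\indicatrice{P_\hbar\le E}$ and $e^{-tP_\hbar}$ have continuous integral kernels; fixing $x_0\in\R^d$, I would write $\mu_\hbar$ for the positive spectral measure of $P_\hbar$ at $x_0$, so that
\begin{equation*}
	\indicatrice{P_\hbar\le E}(x_0,x_0)=\mu_\hbar\big((-\infty,E]\big),\qquad e^{-tP_\hbar}(x_0,x_0)=\int_{\R}e^{-t\lambda}\,d\mu_\hbar(\lambda).
\end{equation*}

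\emph{Step 1 (heat kernel).} By the Feynman--Kac formula, for fixed $t>0$,
\begin{equation*}
	e^{-tP_\hbar}(x_0,x_0)=(4\pi t\hbar^{2})^{-d/2}\,\mathbb{E}\Big[\exp\Big(-\int_0^t V\big(x_0+\sqrt2\,\hbar\,b_s\big)\,ds\Big)\Big],
\end{equation*}
with $(b_s)_{0\le s\le t}$ a standard Brownian bridge pinned at $0$. Since $\sup_{0\le s\le t}\abs{\sqrt2\,\hbar\,b_s}\to0$ almost surely, $V$ is continuous, and the exponent is bounded ($V$ is bounded below), dominated convergence gives that the expectation tends to $e^{-tV(x_0)}$, so that
\begin{equation*}
	\lim_{\hbar\to0}\hbar^{d}e^{-tP_\hbar}(x_0,x_0)=(4\pi t)^{-d/2}e^{-tV(x_0)}\qquad\text{for every }t>0.
\end{equation*}

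\emph{Step 2 (Tauberian passage).} Let $\nu$ be the absolutely continuous measure on $\R$ with density $\tfrac d2\tfrac{\abs{B_{\R^d}(0,1)}}{(2\pi)^{d}}(\lambda-V(x_0))_+^{d/2-1}$; a direct computation (substitution $\lambda\mapsto\lambda-V(x_0)$ and the identity $\tfrac d2\Gamma(d/2)\abs{B_{\R^d}(0,1)}=\pi^{d/2}$) gives
\begin{equation*}
	\int_\R e^{-t\lambda}\,d\nu(\lambda)=(4\pi t)^{-d/2}e^{-tV(x_0)}\quad(t>0),\qquad \nu\big((-\infty,E]\big)=\frac{\abs{B_{\R^d}(0,1)}}{(2\pi)^{d}}(E-V(x_0))_+^{d/2}.
\end{equation*}
I would then pass to the finite measures $\widetilde\mu_\hbar:=\hbar^{d}e^{-\lambda}\,d\mu_\hbar(\lambda)$ and $\widetilde\nu:=e^{-\lambda}\,d\nu(\lambda)$: Step 1 applied with $t$ replaced by $t+1$ (and, for the total mass, with $t=1$) shows that $\int e^{-t\lambda}\,d\widetilde\mu_\hbar\to\int e^{-t\lambda}\,d\widetilde\nu$ for all $t\ge0$ and that $\widetilde\mu_\hbar(\R)$ is bounded and converges to $\widetilde\nu(\R)$. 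The continuity theorem for Laplace transforms of finite positive measures then yields $\widetilde\mu_\hbar\rightharpoonup\widetilde\nu$ weakly, with no escape of mass to $+\infty$; and since $\nu$ is absolutely continuous, $\widetilde\nu(\{E\})=0$, so the portmanteau theorem applied to the bounded, $\widetilde\nu$-a.e.\ continuous function $\lambda\mapsto e^{\lambda}\indicatrice{\lambda\le E}$ gives
\begin{equation*}
	\hbar^{d}\,\indicatrice{P_\hbar\le E}(x_0,x_0)=\int_\R e^{\lambda}\indicatrice{\lambda\le E}\,d\widetilde\mu_\hbar(\lambda)\longrightarrow\nu\big((-\infty,E]\big)=\frac{\abs{B_{\R^d}(0,1)}}{(2\pi)^{d}}(E-V(x_0))_+^{d/2},
\end{equation*}
which is the claim.

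\emph{Anticipated main obstacle.} The delicate step is Step 2. For the integrated Weyl law one simply brackets the eigenvalue counting function between Dirichlet and Neumann counting functions, but the \emph{diagonal} of a spectral projector is not monotone under such comparisons, so it cannot be squeezed directly; what makes the pointwise statement work anyway is that convergence of Laplace transforms together with absolute continuity of the limit $\nu$ is enough for the (soft) Tauberian conclusion. This is exactly why the argument delivers only the leading term with no remainder, and why mere $\mathcal C^{0}$ regularity of $V$ is sufficient. The remaining ingredients are routine and I would only check them briefly: continuity and finiteness of the kernels of $\indicatrice{P_\hbar\le E}$ and $e^{-tP_\hbar}$ (compact resolvent of a confining Schr\"odinger operator, elliptic regularity, and the Feynman--Kac representation), and the elementary Laplace-transform computation above.
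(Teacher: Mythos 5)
Your proposal is correct and follows, at heart, the same route as the paper: represent the diagonal of the spectral projector through a scalar spectral measure, obtain the semiclassical limit of its Laplace transform (the diagonal heat kernel) from the Feynman--Kac formula, and then pass back to the distribution function by a Tauberian-type argument; the paper's Laplace transform computation with $t=\alpha/\gamma$, $\gamma=\hbar^{-2}$ is exactly your Step 1, and its use of the Hardy--Littlewood Tauberian theorem is your Step 2. The differences are in implementation and are both legitimate: for the heat-kernel limit you use the scaling $x_0+\sqrt2\,\hbar\,b_s$ of the Brownian bridge together with almost-sure convergence and dominated convergence, whereas the paper follows Conlon's quantitative scheme (localization on the path set $N_\delta$, modulus of continuity of $V$, reflection principle giving $\mu_{x,x,t}(N_\delta^c)=\OR(e^{-\delta^2/8t})$); and for the Tauberian step you give a self-contained argument via exponentially tilted finite measures, the continuity theorem for Laplace transforms and the mapping/portmanteau theorem at the continuity point $E$, rather than citing Hardy--Littlewood. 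What the paper's quantitative version buys, and your soft version does not directly provide, is reusability: the explicit error terms in $\omega_x(\delta,W)$ and the uniform Gaussian tail bound are precisely what is recycled in Lemma \ref{lemma:pointwise-WL_HF-h} for $\hbar$-dependent effective potentials $V+\hbar^d\rho_{\gamma_\hbar}\ast w$ and in the exchange-term estimate \eqref{eq-demo:exp-kill-exchange}, which are needed for the interacting Theorem \ref{thm:pointwise-WL_HF}; your DCT argument could be adapted to locally uniformly convergent potentials, but that adaptation would have to be spelled out. For the non-interacting statement as posed, your proof is complete modulo the routine facts you already flag (continuity of the kernels and validity of the eigenfunction expansion on the diagonal).
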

	
	We will see later in Lemma \ref{lemma:pointwise-WL_HF-h} that the proof can be adapted to the case where $V$ depends on the semiclassical parameter $h$. Let us first state the two main ingredients: the Hardy-Littlewood  Tauberian theorem and Lemma \ref{lemma:pointwise-WL_HF-exp}, which will allow us to deduce our limits from the Tauberian theorem.

	\begin{lemma}[{Hardy-Littlewood Tauberian theorem \cite[Thm. 10.3]{simon2005functionalint}}]\label{lemma:tauberian-thm}
		Let $E\in\R$, $\{m_\gamma\}_{\gamma>0}$ and $m_\infty$ be a non-negative measures on $[-E,+\infty)$ such that their Laplace transform is well-defined
		\begin{equation*}
			\forall \gamma\in(0,+\infty],\quad
			\int_{-E}^{+\infty}e^{-\alpha u}dm_\gamma(u)<\infty,
		\end{equation*}
		and such that for any $\alpha>0$
		\begin{equation}\label{eq:tauberian-thm_ass}
			\lim_{\gamma\to +\infty}\int_{-E}^{+\infty}e^{-\alpha u}dm_\gamma(u) = \int_{-E}^{+\infty}e^{-\alpha u}dm_\infty(u).
		\end{equation}
		Then, one has
		\begin{equation}\label{eq:tauberian-thm_res}
			 \lim_{\gamma\to +\infty} dm_\gamma([-E,0]) = dm_\infty([-E,0]).
		\end{equation}
	\end{lemma}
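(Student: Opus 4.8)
The statement is the classical Hardy--Littlewood--Karamata Tauberian theorem, and the plan is to run Karamata's moment-approximation argument, taking care at the two endpoints of $[-E,0]$. First I would reduce to moments on a compact interval via the change of variables $s=e^{-u}$, a homeomorphism of $[-E,+\infty)$ onto $(0,b]$ with $b:=e^{E}$; let $\nu_\gamma$ and $\nu_\infty$ denote the push-forwards of $m_\gamma$ and $m_\infty$. Then $\int e^{-\alpha u}\,dm_\gamma(u)=\int_{(0,b]}s^{\alpha}\,d\nu_\gamma(s)$, so \eqref{eq:tauberian-thm_ass} says that $\int s^{\alpha}\,d\nu_\gamma\to\int s^{\alpha}\,d\nu_\infty$ for every $\alpha>0$; in particular every moment $\int s^{n}\,d\nu_\gamma$ ($n\ge1$) converges to a finite limit, and, fixing some $\alpha_0\in(0,1)$, the quantities $\int s^{\alpha_0}\,d\nu_\gamma$ and $\int s\,d\nu_\gamma$ stay bounded as $\gamma\to\infty$ (so each $\nu_\gamma$ is finite on compact subsets of $(0,b]$). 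The conclusion \eqref{eq:tauberian-thm_res} then reads $\nu_\gamma([1,b])\to\nu_\infty([1,b])$ (if $E<0$ the set $[-E,0]$ is empty and nothing is to be shown).

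The core step is to prove vague convergence $\nu_\gamma\to\nu_\infty$ on $(0,b]$, i.e.\ $\int\varphi\,d\nu_\gamma\to\int\varphi\,d\nu_\infty$ for every continuous $\varphi$ supported in some $[\delta_0,b]$ with $\delta_0>0$. Here I would write $\varphi(s)=s\,\psi(s)$ with $\psi:=\varphi/s$, which extends by $0$ to a continuous function on $[0,b]$, and pick (Weierstrass) a polynomial $q$ with $\sup_{[0,b]}|\psi-q|<\varepsilon$. Then $s\,q(s)$ is a polynomial with vanishing constant term, so $\int s\,q\,d\nu_\gamma\to\int s\,q\,d\nu_\infty$ purely from moment convergence; moreover $|\varphi-s\,q|=s\,|\psi-q|\le b\varepsilon$ on $[\delta_0,b]$ while $|q|\le\varepsilon$ on $[0,\delta_0]$ (where $\psi\equiv0$), so $\int|\varphi-s\,q|\,d\nu_\gamma\le b\varepsilon\,\nu_\gamma([\delta_0,b])+\varepsilon\int_{(0,\delta_0)}s\,d\nu_\gamma\le C(\varphi)\varepsilon$ uniformly in $\gamma$, using the bounds from the reduction step. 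A triangle inequality, then $\gamma\to\infty$, then $\varepsilon\to0$, yields the vague convergence.

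Finally I would promote this to the indicator $\mathds{1}_{[1,b]}$ by sandwiching it between continuous functions $\varphi_\eta^{\pm}\in C_c((0,b])$ coinciding with it outside an $\eta$-neighbourhood of the jump at $s=1$: vague convergence gives $\int\varphi_\eta^{-}\,d\nu_\infty\le\liminf_\gamma\nu_\gamma([1,b])\le\limsup_\gamma\nu_\gamma([1,b])\le\int\varphi_\eta^{+}\,d\nu_\infty$, and letting $\eta\to0$ closes the sandwich provided $\nu_\infty(\{1\})=\nu_\infty(\{b\})=0$, i.e.\ provided $m_\infty$ charges neither $0$ nor $-E$; in general one obtains convergence of the distribution functions at every continuity point of $m_\infty$, which is exactly what is used in Theorem \ref{thm:WL-without-int_conlon} (there the limiting measure will be absolutely continuous, so the endpoint condition is automatic). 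The one genuinely non-formal ingredient — the reason this is a Tauberian and not a trivial Abelian statement — is the \emph{uniform} smallness of $\nu_\gamma$ near $s=0$; I would extract it from the finiteness of the single Laplace transform $\int s^{\alpha_0}\,d\nu_\gamma$ together with the elementary bound $s\le\delta^{1-\alpha_0}s^{\alpha_0}$ on $(0,\delta)$, and that estimate, rather than the polynomial approximation, is the step to watch.
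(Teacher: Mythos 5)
Your proof is correct, and it is worth noting that it does more than the paper does: the paper does not prove this lemma at all, but simply invokes it as a black box, citing \cite[Thm.\ 10.3]{simon2005functionalint}. Your argument is the standard Karamata route, and all the steps check out: the substitution $s=e^{-u}$ turns \eqref{eq:tauberian-thm_ass} into convergence of the integrals $\int_{(0,b]}s^{\alpha}d\nu_\gamma$ for every $\alpha>0$, hence of all integer moments; local finiteness of $\nu_\gamma$ on $[\delta,b]$ and boundedness of the first moment follow from a single bounded Laplace transform; writing $\varphi=s\psi$ and approximating $\psi$ uniformly on $[0,b]$ by a polynomial makes the approximation error integrable \emph{uniformly in $\gamma$} thanks to the extra factor $s$ (this, as you say, is the genuinely Tauberian point), giving vague convergence; and the sandwich with continuous cutoffs upgrades this to convergence of $\nu_\gamma([1,b])=m_\gamma([-E,0])$. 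Your endpoint remark is also a genuine catch: as stated in the lemma, \eqref{eq:tauberian-thm_res} requires $u=0$ to be a continuity point of $m_\infty$ (an atom of $m_\infty$ at $0$ would break it; no condition is actually needed at $u=-E$ since the measures carry no mass below $-E$), and this implicit hypothesis is harmless here because in the application to Theorem \ref{thm:WL-without-int_conlon} the limiting measure $dm_\infty$ has the density $\tfrac d2\tfrac{\abs{B_{\R^d}(0,1)}}{(2\pi)^d}(u-W(x))_+^{d/2-1}$ and hence no atoms. In short: the paper buys brevity by outsourcing the lemma to Simon's book; your version buys self-containedness and makes the continuity-point caveat explicit, and the two are otherwise consistent.
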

	
	\begin{lemma}[{\cite[Lem. 3.5]{conlon1983semi}}]\label{lemma:pointwise-WL_HF-exp}
		Let $d\geq 1$ and $W:\R^d\to\R$ be continuous and such that $W(x)\to +\infty$ as $\abs{x}\to +\infty$. Then, one has for all $x\in\R^d$
		\begin{equation}\label{eq:{lemma:pointwise-WL_HF-exp}}
			\lim_{t\to 0^+} (4\pi t)^{d/2}e^{-t\big(-\Delta+(1/t) W(x)\big)}(x,x)=e^{-W(x)}.
		\end{equation}
	\end{lemma}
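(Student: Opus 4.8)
I would prove Lemma \ref{lemma:pointwise-WL_HF-exp} by a functional‑integral (Feynman--Kac) argument, in the spirit of \cite{conlon1983semi}. Write $H_t:=-\Delta+\tfrac1t W$. Since $W$ is continuous and tends to $+\infty$, it is bounded below, and adding a constant $c$ to $W$ replaces $H_t$ by $H_t+\tfrac ct$ and hence multiplies both sides of the claimed identity by $e^{-c}$; so we may assume $W\ge 0$. Then $H_t$, defined as the form sum of $-\Delta$ and multiplication by the nonnegative, locally bounded function $\tfrac1t W$, is a nonnegative self-adjoint operator, $e^{-tH_t}$ is positivity preserving with a jointly continuous integral kernel, and one has the path-integral representation (see \cite{simon2005functionalint})
\begin{equation*}
	(4\pi t)^{d/2}\, e^{-tH_t}(x,x) = \mathbb{E}^{t}_{x\to x}\!\left[\exp\!\left(-\tfrac1t\int_0^t W(X_s)\,ds\right)\right],
\end{equation*}
where $\mathbb{E}^{t}_{x\to x}$ is the expectation with respect to the law of the bridge over $[0,t]$, from $x$ to $x$, of the diffusion $X$ with generator $\Delta$ (normalised so that $e^{t\Delta}$ has kernel $(4\pi t)^{-d/2}e^{-|x-y|^2/(4t)}$).

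The next step is to use the scaling invariance of the bridge: the rescaled process $\beta_u:=t^{-1/2}(X_{ut}-x)$, $u\in[0,1]$, is, under the bridge law, a $0\to 0$ bridge over $[0,1]$ whose distribution does not depend on $t$. The substitution $s=ut$ turns the time average into $\tfrac1t\int_0^t W(X_s)\,ds=\int_0^1 W(x+\sqrt t\,\beta_u)\,du$, so the identity to be proved reduces to
\begin{equation*}
	\lim_{t\to 0^+}\mathbb{E}\!\left[\exp\!\left(-\int_0^1 W(x+\sqrt t\,\beta_u)\,du\right)\right]=e^{-W(x)}.
\end{equation*}
Since $u\mapsto\beta_u$ is almost surely continuous on the compact interval $[0,1]$, the random variable $\norm{\beta}_\infty:=\sup_{u\in[0,1]}|\beta_u|$ is a.s.\ finite, so $\sqrt t\,\norm{\beta}_\infty\to 0$ a.s.\ as $t\to 0^+$; by continuity of $W$ at $x$ and the modulus-of-continuity bound $\bigl|\int_0^1 W(x+\sqrt t\,\beta_u)\,du-W(x)\bigr|\le\omega_x(\sqrt t\,\norm{\beta}_\infty,W)$ we then get $\int_0^1 W(x+\sqrt t\,\beta_u)\,du\to W(x)$ almost surely. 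As $W\ge 0$ the integrand is bounded by $1$, and dominated convergence yields the limit $e^{-W(x)}$.

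The only point requiring genuine care — and the main obstacle — is the validity of the Feynman--Kac representation and the joint continuity of the kernel of $e^{-tH_t}$ for the \emph{unbounded} potential $\tfrac1t W$. For each fixed $t>0$ this is classical, as $\tfrac1t W$ is nonnegative and lies in $L^p_{\loc}(\R^d)$ for every $p$; I would quote it from \cite{simon2005functionalint} (or directly from \cite{conlon1983semi}) rather than reprove it. It is worth noting that the limiting argument uses only the continuity of $W$ at the single point $x$, the confinement entering merely to ensure $e^{-tH_t}$ is a well-behaved semigroup; this is consistent with the remark following Theorem \ref{thm:WL-without-int_conlon} that $\CR^0$ regularity of the potential suffices, and it is this lemma, together with the Tauberian theorem (Lemma \ref{lemma:tauberian-thm}), that feeds into the proof of the non-interacting pointwise Weyl law.
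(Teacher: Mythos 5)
Your argument is correct, but it takes a genuinely different route from the paper's. The paper follows Conlon literally: after the Feynman--Kac formula it splits the bridge paths into the tube $N_\delta$ of paths staying within distance $\delta$ of $x$ and its complement, estimates the $N_\delta$ contribution through the modulus of continuity $\omega_x(\delta,W)$ (producing an error of size $\omega_x(\delta,W)e^{\omega_x(\delta,W)}$), bounds $\mu_{x,x,t}(N_\delta^c)\lesssim e^{-\delta^2/(8t)}$ via the reflection principle, and then sends $t\to 0^+$ before $\delta\to 0^+$. You instead normalize $W\ge 0$, rescale the bridge to a $t$-independent law on $[0,1]$, and conclude by almost sure convergence of $\int_0^1 W(x+\sqrt t\,\beta_u)\,du$ to $W(x)$ plus dominated convergence (the integrand being bounded by $1$); this is softer and shorter, and it isolates cleanly that only continuity of $W$ at the single point $x$ and a lower bound on $W$ are used. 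What the paper's quantitative version buys is the explicit error structure: the bounds in terms of $\omega_x(\delta,W)$ and $e^{-\delta^2/(8t)}$ are reused essentially verbatim in Lemma \ref{lemma:pointwise-WL_HF-h}, where the potential is $t$-dependent ($W_t\to W_0$ uniformly on compacts), and again in the exchange-term perturbation of Lemma \ref{lemma:pointwise-WL_kill-exchange}. Your dominated-convergence argument would need a small supplement to cover that extension, e.g.\ a domination uniform in $t$ (a uniform lower bound on $W_t$, which does hold in the application but must be said), so if you intend your lemma to feed the later $t$-dependent statements you should either keep a quantitative remainder as in the paper or state the uniformity you need. One last bookkeeping point you handled correctly but should keep explicit: the identity $(4\pi t)^{d/2}e^{-t\Delta}(x,x)=1$ is what makes the prefactor turn the conditional Wiener measure into the normalized bridge expectation you use.
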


	Let us recall Conlon's proof of Lemma \ref{lemma:pointwise-WL_HF-exp}, which is exactly what we extend to the interacting case. It uses results on Brownian motions (see \eqref{eq:Feynman-Kac_simple} and \eqref{eq:reflexion-princ-conlon} below).
	
	\begin{proof}[Proof of Lemma \ref{lemma:pointwise-WL_HF-exp}]
		The main argument is the use of the Feynman-Kac formula. It states that for any $x,y\in\R^d$ and any $t>0$
		\begin{equation}\label{eq:Feynman-Kac_simple}
			(4\pi t)^{d/2}e^{-t(-\Delta+(1/t)W)}(x,y)= (4\pi t)^{d/2}e^{-t\Delta}(x,y)  \int \exp\left(-\frac 1 t\int_0^t W(\beta(s))ds \right)d\mu_{x,y,t}(\beta).
		\end{equation}
		Here, $	e^{-t\Delta}(x,y)$ is the integral kernel of the propagator $e^{-t\Delta}$ of the heat equation
		\begin{equation*}
			e^{-t\Delta}(x,y)=\frac{e^{-\frac{1}{4t}\abs{x-y}^2}}{(4\pi t)^{d/2}},
		\end{equation*} 
		and $d\mu_{x,y,t}$ is the conditional Wiener measure on the continuous path $\beta:[0,t]\to\R^d$ such that $\beta(0)=x$ and $\beta(t)=y$ (see for instance \cite[Chap.2]{simon2005functionalint}). 
		When $x=y$, we can restrict our study to
		\begin{equation*}
			\int \exp\left(-\frac 1 t\int_0^t W(\beta(s))ds \right)d\mu_{x,x,t}(\beta).
		\end{equation*}
		%%%%%%%%%%%%%%%% preuve façon Conlon %%%%%%%%%%%%%%%
		Conlon's idea is to estimate this integral in the localized set of path $N_\delta$ defined for any $\delta>0$
		\begin{equation}\label{eq-def:N_delta-conlon}
			N_\delta=\{ \beta\in\CR([0,t],\R^d),\: \beta(0)=x=\beta(t),\quad \sup_{0\leq s\leq t}\abs{\beta(s)-x}\leq\delta  \}
		\end{equation}
		and its complementary $N_\delta^c$. We then write 
		\begin{align*}
			\int \exp&\left(-\frac 1 t\int_0^t W(\beta(s))ds \right)d\mu_{x,x,t}(\beta)
			\\&= 
			e^{-W(x)}d\mu_{x,x,t}(N_\delta)
			\\&\quad
			+e^{-W(x)}\int_{N_\delta}\left[\exp\left(-\frac 1t\int_0^t (W(\beta(s))-W(x)) ds\right)-1\right] d\mu_{x,x,t}(\beta)
			\\&\quad
			+\int_{N_\delta^c}\exp\left(-\frac 1t\int_0^t W(\beta(s))ds\right)d\mu_{x,x,t}(\beta)
			.
		\end{align*}
		On the one hand, one has the bounds
		\begin{equation*}
			e^{-W(x)}\leq e^{\normLp{W_-}{\infty}{}},\quad d\mu_{x,x,t}(N_\delta) \leq 1,
		\end{equation*}
		and for any $\delta>0$
		\begin{equation*}
			\lim_{t\to 0^+}d\mu_{x,x,t}(N_\delta)=1
			.
		\end{equation*}		
		On the other hand, denoting by $\omega_x(\delta,W)$ the modulus of continuity of $w$ at $x$
		\begin{equation*}
			\omega_x(\delta,W):=\sup_{\{y\in\R^d\: : \: \abs{x-y}\leq\delta\}}\abs{W(x)-W(y)},
		\end{equation*}
		and using that $d\mu_{x,x,t}(N_\delta)\leq 1$, one has
		\begin{align*}
			e^{-W(x)}&\int_{N_\delta}\left[\exp\left(-\frac 1t\int_0^t (W(\beta(s))-W(x)) ds\right)-1\right] d\mu_{x,x,t}(\beta)
			\\&\leq e^{\normLp{W_-}{\infty}{}}\int_{N_\delta}\left(\frac 1t\int_0^t \abs{W(\beta(s))-W(x)}ds\right)\exp\left(\frac 1t\int_0^t \abs{(W(\beta(s))-W(x))} ds\right) d\mu_{x,x,t}(\beta)
			\\&\leq e^{\normLp{W_-}{\infty}{}} \omega_x(\delta,W)e^{\omega_x(\delta,W)}
			.
		\end{align*}
		Furthermore, let us explain more precisely
%% CORRECTION_JMP footnote
		why
		there exists $C>0$ such that for any $t\in(0,1]$
		\begin{equation}\label{eq:reflexion-princ-conlon}
			\mu_{x,x,t}(N_\delta^c)\leq C e^{-\frac{\delta^2}{8t}}
			.
		\end{equation}
	The bound \eqref{eq:reflexion-princ-conlon} provided below is a bit more precise than \cite[Eq. (3.44)]{conlon1983semi} in term of dependance on the parameter $\delta$. However, it does not matter to have $\mu_{x,x,t}(N_\delta^c)=\OR(e^{-\alpha/ t})$ with $\alpha=\alpha_\delta>0$ regardless of the dependance in $\delta$. In fact, we eventually apply the limit $t\to 0^+$ before $\delta\to 0^+$.
		%% Details
	\begin{proof}
		Let is call $\{X_t\}_{t\geq 0}$ the Wiener process with values in the Euclidian space $\R^d$ and on a probability space with probability $\mathds{P}$. The continuous probability measure $\mathds{P}_x$  of the process that starts at $x\in\R^d$, is associated to the density $y\mapsto e^{-\frac t2\Delta}(x,y)=(2\pi t)^{d/2}e^{-\frac{1}{2t}\abs{y-x}^2}$ with respect to the Lebesgue measure. 
		The measure $d\mu_{x,x,t}$ corresponds to the law of the modified Brownian bridge $\{B_{2s}\}_{s\in[0,t]}$ which is the normalized Wiener process (admitting the centered Gaussian of covariance $2t$ as law) starting and finishing at the point $x$, i.e.\ with the conditioning $B_0=B_{2t}=x$
		\begin{equation*}\forall\Omega\subset\R^d,\quad
			\mu_{x,x,t}(\Omega)=\mathds{P}\left(\{B_{2s}\}_{0\leq s\leq t}\in\Omega\right)
			.
		\end{equation*}
		In our case, $\mu_{x,x,t}(N_\delta^c)$ is equal to $\mathds{P}\left(\sup_{0\leq s\leq t}\abs{B_{2s}-x}\geq \delta\right)$. This quantity is bounded by $\mathds{P}_x\left(\sup_{0\leq s\leq t}\abs{X_{2s}-x}\geq \delta\right)$.
		The reflection principle (see for instance \cite[Thm. 3.6.5 and Eq. (7.6')]{simon2005functionalint}) states that for any $t>0$,
		\begin{equation*}
			\mathds{P}_x\left(\sup_{0\leq s\leq t}\abs{X_{2s}-x}\geq \delta\right)\leq 2\mathds{P}_x(\abs{X_{2t}-x}\geq \delta).
		\end{equation*}
		Moreover,
		\begin{align*}
			\mathds{P}_x(\abs{X_{2t}-x}\geq\delta)
			&= \frac{1}{(2\pi t)^{d/2}}\int_{\abs{y}\geq \delta} e^{-\frac 1{4t}\abs{y}^2}dy
			\leq e^{-\frac{\delta^2}{8t}}\left(\frac{1}{(2\pi t)^{d/2}}\int_{\abs{y}\geq \delta} e^{-\frac 1{8t}\abs{y}^2}dy\right)
			\leq e^{-\frac{\delta^2}{8t}}.
		\end{align*}
		As a consequence, one has the uniform bound
		\eqref{eq:reflexion-princ-conlon} for any $x\in\R^d$.
	\end{proof}
		%% fin details \ref{eq:reflexion-princ-conlon}
		%
		Then, we can bound the last term
		\begin{equation*}
			\int_{N_\delta^c}\exp\left(-\frac 1t\int_0^t W(\beta(s))ds\right)d\mu_{x,x,t}(\beta) \leq e^{\normLp{W_-}{\infty}{}}\mu_{x,x,t}(N_\delta^c)
			\leq C e^{\normLp{W_-}{\infty}{}}  e^{-\frac{\delta^2}{8t}}.
		\end{equation*}
		Eventually,
		\begin{align*}
			(4\pi t)^{d/2}e^{-t(-\Delta+(1/t)W)}(x,x)
			&= e^{-W(x)}(1+o_{t\to 0^+}(1)) + \OR\left(\omega_x(\delta,W)e^{\omega_x(\delta,W)}\right) + 
			\OR\left(e^{-\frac{\delta^2}{8t}}\right)
			.
		\end{align*}
		We obtain the desired limit by letting first $t\to 0^+$, then $\delta\to 0^+$
		\begin{equation*}
			\lim_{t\to 0^+}(4\pi t)^{d/2}e^{t(-\Delta+(1/t)W)}(x,x)=e^{-W(x)}.
		\end{equation*}
		This ends the proof of Lemma \ref{lemma:pointwise-WL_HF-exp}.
	\end{proof}
	% end proof of Lemma \ref{lemma:pointwise-WL_HF-exp}

	Let us explain now how Lemmata \ref{lemma:tauberian-thm} and \ref{lemma:pointwise-WL_HF-exp} imply the desired asymptotics.
	Recall that we have that $V$ is non-negative and that $E>0$. By writing $W:=V-E$ and $\gamma:=\hbar^{-2}$, the semiclassical pointwise Weyl law is equivalent to prove the pointwise limit for any $x\in\R^d$
	\begin{equation}\label{eq-demo:WL-without-int_mod}
		\lim_{\gamma\to +\infty}\gamma^{-d/2}\indicatrice{-\Delta+\gamma W\leq 0}(x,x)= \frac{\abs{B_{\R^d}(0,1)}}{(2\pi)^d}(-W(x))_+^{d/2},
	\end{equation}
	for any $W:\R^d\to\R$ continuous such that $W(x)\to+\infty$ as $\abs{x}\to+\infty$ and such that $W\geq -E$.  	
	
	Let us fix $x\in\R^d$. Let us write now this limit into the form \eqref{eq:tauberian-thm_res}. 
	To do so, we introduce the sequence of measures $\{m_\gamma\}_{\gamma\in(0,+\infty])} =\{m_\gamma[x]\}\}_{\gamma\in(0,+\infty])}$ associated to the non-decreasing function for any $\gamma>0$
	\begin{equation*}
		m_\gamma: u\in\R\mapsto \gamma^{-d/2}\indicatrice{-\Delta+\gamma W\leq \gamma u}(x,x)\in\R,
	\end{equation*}
	and to the function
	\begin{equation*}
		m_\infty: u\in\R\mapsto \frac d2\frac{\abs{B_{\R^d}(0,1)}}{(2\pi)^d}(u-W(x))_+^{d/2-1}\in\R.
	\end{equation*}
	In particular, one has for any $\gamma>0$
	\begin{equation*}
		dm_\gamma([-E,0])= m_\gamma(0)-m_\gamma(-E)=\gamma^{-d/2}\indicatrice{-\Delta+\gamma W\leq 0}(x,x)
		.
	\end{equation*}
	Furthermore, since $W\geq -E$, then $dm_\gamma(u)=0$  and $dm_\infty(u)=0$ for any $u\leq -E$ and any $\gamma>0$.
	We can write
	\begin{equation*}
		dm_\infty([-E,0]) =dm_\infty((-\infty,0]) = \frac d 2\frac{\abs{B_{\R^d}(0,1)}}{(2\pi)^d}\int_{-\infty}^0 (u-W(x))_+^{d/2-1}du.
	\end{equation*}
	If $W(x)\geq 0$, we have that $=\int_{-\infty}^0 (u-W(x))_+^{d/2-1}du = 0$.
	If $W(x)<0$, by the change of variable $v=-W(x)u$
	\begin{align*}
		\int_{-\infty}^0 (u-W(x))_+^{d/2-1}du &
		= (-W(x))_+^{d/2}\int_{-\infty}^0 (v+1)_+^{d/2-1}dv
		=\frac 2 d (-W(x))_+^{d/2}.
	\end{align*}
%% fin DETAILS
	As a consequence,
	\begin{equation*}
		dm_\infty([-E,0]) = \frac{\abs{B_{\R^d}(0,1)}}{(2\pi)^d}(-W(x))_+^{d/2}
		.
	\end{equation*}
	Let us now rewrite \eqref{eq:tauberian-thm_ass} by computing the Laplace transforms of $dm_\gamma$ and $dm_\infty$.
	Indeed, for any $\gamma>0$
	\begin{equation*}
		\int_{-E}^{+\infty} e^{-\alpha u}dm_\gamma(u)
		= \gamma^{-d/2}e^{-\frac\alpha\gamma(-\Delta+\gamma W)}(x,x).
	\end{equation*}
	Furthermore
	\begin{align*}
		\int_{-E}^{+\infty} e^{-\alpha u}dm_\infty(u)
		&= \frac d 2\frac{\abs{B_{\R^d}(0,1)}}{(2\pi)^d}  \int_\R e^{-\alpha u} (u-W(x))_+^{d/2-1}du
		\\&=  \frac d 2\frac{\abs{B_{\R^d}(0,1)}}{(2\pi)^d} \alpha^{-d/2}e^{-\alpha W(x)}\int_\R e^{-w}w_+^{d/2-1} du
		\\&= (2\pi)^{-d/2} \alpha^{-d/2}e^{-\alpha W(x)}.
	\end{align*}
	Since we want to prove \eqref{eq:tauberian-thm_ass} for any $\alpha>0$, by taking $t=\alpha/\gamma$ and replacing $W$ by $\alpha^{-1}W$, it is equicalent to prove that
	\begin{equation*}
		\lim_{t\to 0^+}(4\pi t)^{d/2}e^{-t\left(-\Delta+\frac 1t W\right)}(x,x)=e^{-W(x)}
		.
	\end{equation*}
	Since, this is what we have assumed, this concludes the proof of Theorem  \ref{thm:WL-without-int_conlon}.

	\subsection{Interacting case}
	
	Let us now deal with the case with interaction. We generalize now the previous asymptotics.
	
	\begin{lemma}\label{lemma:pointwise-WL_HF-h}
		Let $V:\R^d\to\R$ be a continuous function. Assume that we have the decomposition
		\begin{equation*}
			V=V_{\rm trap}+V_\hbar,
		\end{equation*}
		such that
		\begin{itemize}
			\item $V_{\rm trap}:\R^d\to\R$ satisfies Assumption \ref{cond:confining};
			\item $V_\hbar:\R^d\to\R$ is continuous for any $\hbar>0$, and such that there exists $V_0:\R^d\to\R$ continuous and bounded such that $V_\hbar$ converges uniformly to $V_0$ in all compacts of $\R^d$ as $\hbar\to 0$.
		\end{itemize}
	Let $E\in\R^d$.
		Then, one has for any $x\in\R^d$,
		\begin{equation*}
			\lim_{\hbar\to 0} \hbar^d\indicatrice{-\hbar^2\Delta+V_{\rm trap}+V_\hbar\leq E}(x,x) = \frac{\abs{B_{\R^d}(0,1)}}{(2\pi)^d}(E-V_{\rm trap}(x)-V_0(x))_+^{d/2}.
		\end{equation*}
	\end{lemma}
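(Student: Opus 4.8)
The plan is to repeat, in the $\hbar$-dependent setting, the Hardy--Littlewood Tauberian argument used for Theorem \ref{thm:WL-without-int_conlon}, the only genuinely new ingredient being a version of Lemma \ref{lemma:pointwise-WL_HF-exp} in which the potential moves with the parameter. Fix $x\in\R^d$ and set $W_\hbar:=V_{\rm trap}+V_\hbar-E$ and $W_0:=V_{\rm trap}+V_0-E$; since $V_{\rm trap}\geq0$ and $V_\hbar$ stays uniformly bounded below for small $\hbar$, there is $E_0\in\R$ with $W_\hbar\geq-E_0$ for all small $\hbar$. Putting $\gamma:=\hbar^{-2}$, I would introduce, exactly as in the proof of Theorem \ref{thm:WL-without-int_conlon}, the non-negative Stieltjes measures $m_\gamma=m_\gamma[x]$ on $[-E_0,+\infty)$ attached to the non-decreasing functions $u\mapsto\gamma^{-d/2}\indicatrice{-\Delta+\gamma W_\hbar\leq\gamma u}(x,x)$, together with the limit measure $m_\infty$ attached to $u\mapsto\frac d2\frac{\abs{B_{\R^d}(0,1)}}{(2\pi)^d}(u-W_0(x))_+^{d/2-1}$ (the same $m_\infty$ as there, but with $W_0(x)$ in place of $W(x)$). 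Then $dm_\gamma([-E_0,0])=\hbar^d\indicatrice{-\hbar^2\Delta+V_{\rm trap}+V_\hbar\leq E}(x,x)$, and the elementary computation of Theorem \ref{thm:WL-without-int_conlon} gives $dm_\infty([-E_0,0])=\frac{\abs{B_{\R^d}(0,1)}}{(2\pi)^d}(E-V_{\rm trap}(x)-V_0(x))_+^{d/2}$, which is the asserted right-hand side; so by Lemma \ref{lemma:tauberian-thm} it suffices to check the convergence of Laplace transforms \eqref{eq:tauberian-thm_ass} for each fixed $\alpha>0$.

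As in the non-interacting case, $\int e^{-\alpha u}\,dm_\gamma(u)=\hbar^d e^{-\alpha(-\hbar^2\Delta+W_\hbar)}(x,x)$. I would substitute $t:=\alpha\hbar^2$ (so $\hbar\to0$ is the same as $t\to0^+$), writing $\alpha(-\hbar^2\Delta+W_\hbar)=t\big(-\Delta+\tfrac1t\widetilde W_t\big)$ with $\widetilde W_t:=\alpha W_{\hbar(t)}$ and $\hbar(t):=\sqrt{t/\alpha}$. After the same bookkeeping of constants as in Theorem \ref{thm:WL-without-int_conlon}, the convergence \eqref{eq:tauberian-thm_ass} is then equivalent to
\begin{equation*}
\lim_{t\to0^+}(4\pi t)^{d/2}\,e^{-t\left(-\Delta+\frac1t\widetilde W_t\right)}(x,x)=e^{-\widetilde W_0(x)},\qquad \widetilde W_0:=\alpha W_0=\alpha V_{\rm trap}+\alpha(V_0-E),
\end{equation*}
where $\alpha V_{\rm trap}$ is confining and non-negative, $\alpha(V_{\hbar(t)}-E)\to\alpha(V_0-E)$ uniformly on compacts with a bounded continuous limit, and $\{(\widetilde W_t)_-\}_t$ is uniformly bounded in $L^\infty(\R^d)$. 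This is the announced $t$-dependent analogue of Lemma \ref{lemma:pointwise-WL_HF-exp}.

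To prove it I would run Conlon's Feynman--Kac argument verbatim. By \eqref{eq:Feynman-Kac_simple} with $W$ replaced by $\widetilde W_t$, the left-hand side above equals $\int\exp\bigl(-\tfrac1t\int_0^t\widetilde W_t(\beta(s))\,ds\bigr)\,d\mu_{x,x,t}(\beta)$, which I split over the localized set of paths $N_\delta$ of \eqref{eq-def:N_delta-conlon} and its complement $N_\delta^c$. On $N_\delta$ the path stays in $\overline{B_{\R^d}(x,\delta)}$, so $\tfrac1t\int_0^t\widetilde W_t(\beta(s))\,ds$ differs from $\widetilde W_0(x)$ by at most $\omega_x(\delta,\alpha V_{\rm trap})+\sup_{\overline{B_{\R^d}(x,\delta)}}\abs{\alpha V_{\hbar(t)}-\alpha V_0}+\omega_x(\delta,\alpha(V_0-E))$, which tends to $0$ on first letting $t\to0^+$ (local uniform convergence $V_\hbar\to V_0$) and then $\delta\to0^+$ (continuity of $V_{\rm trap}$ and $V_0$); together with $\mu_{x,x,t}(N_\delta)\to1$ this produces the main term $e^{-\widetilde W_0(x)}$. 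On $N_\delta^c$ I bound the integrand by a uniform constant (the exponential of $\sup_t\norm{(\widetilde W_t)_-}_{L^\infty(\R^d)}<\infty$) and invoke the reflection-principle estimate \eqref{eq:reflexion-princ-conlon}, namely $\mu_{x,x,t}(N_\delta^c)\leq Ce^{-\delta^2/(8t)}$, so this contribution is $\OR(e^{-\delta^2/(8t)})$; letting $t\to0^+$ and then $\delta\to0^+$ yields the limit, hence \eqref{eq:tauberian-thm_ass}, hence Lemma \ref{lemma:pointwise-WL_HF-h} via Lemma \ref{lemma:tauberian-thm}.

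The step I expect to be the main obstacle is the uniformity in $\hbar$: because the measures $m_\gamma$ themselves move with $\gamma=\hbar^{-2}$ through $W_\hbar$, every estimate must be uniform as $\hbar\to0$, and in particular one needs $(V_\hbar)_-$ bounded uniformly in $\hbar$ --- both to make the Tauberian hypotheses meaningful (so that $W_\hbar\geq-E_0$ uniformly and $m_\infty$ is supported in $[-E_0,+\infty)$) and to control the contribution of $N_\delta^c$. This does not follow from locally uniform convergence alone, but it does hold in the application to Theorem \ref{thm:pointwise-WL_HF}, where $V_\hbar$ is built from $\hbar^d w\ast\rho_{\gamma_\hbar}$ (up to the exchange term, which is negligible) and is bounded in $L^\infty(\R^d)$ thanks to the assumption $w\in L^{1+d/2}(\R^d)+L^\infty_\varepsilon(\R^d)$ together with the bound on $\hbar^d\rho_{\gamma_\hbar}$ from Lemma \ref{lemma:trace_bd_h^d-almost-min}.
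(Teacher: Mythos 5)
Your proposal is correct and follows essentially the same route as the paper: reduce to the Hardy--Littlewood Tauberian theorem exactly as in Theorem \ref{thm:WL-without-int_conlon}, and prove the $t$-dependent analogue of Lemma \ref{lemma:pointwise-WL_HF-exp} by Conlon's Feynman--Kac argument, splitting over $N_\delta$ and $N_\delta^c$ and using the locally uniform convergence $V_\hbar\to V_0$ on the localized paths and the reflection-principle bound on the complement. Your closing remark is well taken: a uniform-in-$\hbar$ control of $(V_\hbar)_-$ is indeed used implicitly in the paper's own estimate of the $N_\delta^c$ contribution, is not a formal consequence of locally uniform convergence, and, as you say, is available in the application to Theorem \ref{thm:pointwise-WL_HF}.
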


		\begin{proof}[Proof of Lemma \ref{lemma:pointwise-WL_HF-h}]
		Let us explain how the proof of Lemma \ref{lemma:pointwise-WL_HF-exp} can be adapted in this framework, with $W_\hbar=V_{\rm trap}+V_\hbar-E$ instead of $W=V_{\rm trap}-E$. Let us show that
		\begin{equation}\label{eq-demo:pointwise-WL_HF-h}
			\lim_{t\to 0^+}\int \exp\left(-\frac 1t\int_0^t W_t(\beta(s)) ds\right) d\mu_{x,x,t}(\beta) = e^{-W_0(x)}.
		\end{equation}
		Recall that $t=\hbar^2/\alpha$. By assumption, one has the uniform limit $W_t\to W_0:=V_{\rm trap}+V_0-E$ on any compact of $\R^d$, as $t\to 0^+$ . On one hand, this implies the same asymptotics for $e^{-W_t}\to e^{-W_0}$ as $t\to 0^+$.
		On the other hand, for any continuous path $\beta:[0,t]\to\R^d$ such that $\beta(0)=\beta(t)=x$ and any $s\in[0,t]$, we have
		\begin{align*}
			\abs{W_t(\beta(s))-W_t(x)}
			&\leq\abs{W_0(\beta(s))-W_0(x)}+\abs{W_t(\beta(s))-W_0(\beta(s))}+\abs{W_t(x)-W_0(x)}
			\\&\leq \omega_x(\delta,W_0)+2\normLp{W_t-W_0}{\infty}{\left(B_{\R^d}(x,\delta)\right)}
			.
		\end{align*}
		%% details cons
		Thus,
		\begin{align*}
			e^{-W_t(x)}&\int_{N_\delta}\left[\exp\left(-\frac 1t\int_0^t W_t((\beta(s))-W_t(x)) ds\right)  -1 \right]d\mu_{x,x,t}(\beta)
			\\&\leq \left(\abs{e^{-W_t(x)}-e^{-W_0(x)}}+e^{\normLp{(W_0)_-}{\infty}{}}\right)\times
			\\&\quad\times \left(\omega_x(\delta,W_0)+2\normLp{W_t-W_0}{\infty}{\left(B_{\R^d}(x,\delta)\right)}\right)e^{\omega_x(\delta,W_0)+2\normLp{W_t-W_0}{\infty}{\left(B_{\R^d}(x,\delta)\right)}}.
		\end{align*}
		%% fin details cons
		The last term is bounded by
		\begin{equation*}
			\int_{N_\delta^c}\exp\left(-\frac 1t\int_0^t W_t(\beta(s))ds\right)d\mu_{x,x,t}(\beta) 
			\leq \left(\abs{e^{-W_t(x)}-e^{-W_0(x)}}+e^{\normLp{(W_0)_-}{\infty}{}}\right) e^{-\alpha/ t}.
		\end{equation*}
		Once again, by $t\to 0^+$, then $\delta\to 0^+$, we obtain \eqref{eq-demo:pointwise-WL_HF-h} and then
		\begin{equation*}
			\lim_{t\to 0^+}(4\pi t)^{d/2}e^{t(-\Delta+(1/t)W_t)}(x,x) = e^{-W_0(x)}.
		\end{equation*}
	By Lemma \ref{lemma:tauberian-thm}, one obtains the desired limit.
	\end{proof}
	% end proof Lemma \ref{lemma:pointwise-WL_HF-h}
	
	Let us finally prove the pointwise Weyl law in the case with interactions.

	Note that if one can prove that we have the strong limit $\hbar^d\rho_{\gamma_\hbar}\to\rho_{\rm TF}$ in $L^p(\R^d)$, for some $p\in[1,\infty)$, one would have the pointwise convergence almost everywhere on $\R^d$, up to a subsequence. In this case, Lemma \ref{lemma:pointwise-WL_HF-h} would not be bery useful. However, this strong limit is not obvious and all we have is a weak convergence.
		
	Moreover, it turns out that the Weyl law will not change if we add an exchange term perturbation.
	\begin{lemma}\label{lemma:pointwise-WL_kill-exchange}
		Let $V:\R^d\to\R$ which satisfies Assumption \ref{cond:confining} and $E\in\R$. Let $w:\R^d\to\R$ even function such that
		\begin{itemize}
			\item $w\in L^{1+d/2}(\R^d)+L^\infty_\varepsilon(\R^d)$ with Assumption \ref{cond:w-Dterm}, for $d\geq 2$
			\item $w\in L^{1+d/2}(\R^d) \cap L^2(\R^d)+L^\infty_\varepsilon(\R^d)$ with Assumption \ref{cond:w-Dterm}, for $d\geq 1$,
			\item or alternatively $w\in L^1(\R^d)\cap L^{1+d/2}(\R^d)$ with Assumption \ref{cond:w-Dterm-d=12} for $d=2$,
			\item  $w\in L^1(\R^d)\cap L^{1+d/2}(\R^d) \cap L^2(\R^d)$ with Assumption \ref{cond:w-Dterm-d=12} for $d=1,2$.
		\end{itemize}
		Then, for any $\{\gamma_\hbar\}_\hbar\subset\KR$ such
		\begin{equation*}\forall\hbar\in(0,\hbar_0],\quad
		\tr((-\hbar^2\Delta+V+1)\gamma_\hbar)\leq C\hbar^{-d},
		\end{equation*}
		one has that for any $x\in\R^d$
		\begin{align*}
		\lim_{\hbar\to 0}&
		\hbar^d\indicatrice{-\hbar^2\Delta+V+\hbar^d\rho_{\gamma_\hbar}\ast w-\hbar^dX_w(\gamma_\hbar) \leq E}(x,x) \\&=\lim_{\hbar\to 0}\hbar^d
		\indicatrice{-\hbar^2\Delta+V+\hbar^d\rho_{\gamma_\hbar}\ast w\leq E}(x,x).
		\end{align*}
	\end{lemma}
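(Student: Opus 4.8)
The plan is to run the same Tauberian reduction as in the proofs of Theorem~\ref{thm:WL-without-int_conlon} and Lemma~\ref{lemma:pointwise-WL_HF-h}, treating the exchange term as a (non-local) perturbation that drops out at leading order. Fix $x\in\R^d$, absorb $E$ into $V$, and set $H_\hbar:=-\hbar^2\Delta+V+\hbar^d\,\rho_{\gamma_\hbar}\ast w$ and $\tilde H_\hbar:=H_\hbar-\hbar^d X_w(\gamma_\hbar)$. The functions $E\mapsto\hbar^d\indicatrice{\tilde H_\hbar\le E}(x,x)$ and $E\mapsto\hbar^d\indicatrice{H_\hbar\le E}(x,x)$ are nondecreasing and, with $s=\alpha/\hbar^2$, the Laplace transforms of the associated measures are $\hbar^d e^{-\alpha\tilde H_\hbar}(x,x)$ and $\hbar^d e^{-\alpha H_\hbar}(x,x)$ (heat kernels of $-\hbar^2\Delta$ plus a bounded-below potential, at the fixed time $\alpha$). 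The $H_\hbar$-side converges as $\hbar\to0$, which is precisely Lemma~\ref{lemma:pointwise-WL_HF-h} with $V_{\rm trap}=V$ and $V_\hbar=\hbar^d\rho_{\gamma_\hbar}\ast w$; hence, by the Hardy--Littlewood Tauberian theorem (Lemma~\ref{lemma:tauberian-thm}) applied to both families, it suffices to prove
\begin{equation*}
\hbar^d\bigl(e^{-\alpha\tilde H_\hbar}(x,x)-e^{-\alpha H_\hbar}(x,x)\bigr)\xrightarrow[\hbar\to0]{}0
\qquad\text{for every }\alpha>0 .
\end{equation*}

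First I would record that all relevant semigroups are uniformly controlled at the fixed time $\alpha$. By the Lieb--Thirring inequality \eqref{eq:LT-ineq} and the trace bound in the hypothesis, $\hbar^d\rho_{\gamma_\hbar}$ is bounded in $L^1(\R^d)\cap L^{1+2/d}(\R^d)$; since $w\in L^{1+d/2}(\R^d)+L^\infty(\R^d)$, Young's inequality yields $\normLp{\hbar^d\rho_{\gamma_\hbar}\ast w}{\infty}{(\R^d)}\le C$ uniformly, so the effective potential $V+\hbar^d\rho_{\gamma_\hbar}\ast w$ is bounded below uniformly in $\hbar$. By the Feynman--Kac bound this gives $e^{-aH_\hbar}(x,y)\le e^{aC}(4\pi a\hbar^2)^{-d/2}e^{-|x-y|^2/(4a\hbar^2)}$ for $0<a\le\alpha$, hence $\norm{e^{-aH_\hbar}}_{L^2\to L^2}\le e^{\alpha C}$ and $\norm{e^{-aH_\hbar}\delta_x}_{L^2}\le C\,a^{-d/4}\hbar^{-d/2}$; the same bounds then transfer to $\tilde H_\hbar$ via the Dyson series in $\hbar^d X_w(\gamma_\hbar)$ once that term is shown to be small (next step), because the expansion is summable with uniform constants at bounded time.

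The crux is then the smallness of the exchange operator: I would show $\hbar^d\normSch{X_w(\gamma_\hbar)}{2}{}\to0$. Writing $\normSch{X_w(\gamma_\hbar)}{2}{}^2=\int_{\R^d}|w(u)|^2 g_\hbar(u)\,du$ with $g_\hbar(u):=\int_{\R^d}|\gamma_\hbar(y+u,y)|^2\,dy$, one has $\norm{g_\hbar}_{L^1}=\tr\gamma_\hbar^2\le\tr\gamma_\hbar\le C\hbar^{-d}$ and, from $g_\hbar\le\rho_{\gamma_\hbar}\ast\check\rho_{\gamma_\hbar}$ together with the $L^1\cap L^{1+2/d}$ bound on $\hbar^d\rho_{\gamma_\hbar}$, a bound on $\norm{g_\hbar}_{L^r}$ for the Young exponent $r$ of $L^{1+2/d}\ast L^{1+2/d}$. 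Splitting $w$ into its bounded part (handled by $\tr\gamma_\hbar^2\le C\hbar^{-d}$, giving a factor $\hbar^{d/2}$), its genuinely $L^{1+d/2}$ (or $L^2$) part (handled by Hölder against $\norm{g_\hbar}_{L^r}$), and using a further truncation $w=w\indicatrice{|w|>M}+w\indicatrice{|w|\le M}$ to absorb the endpoint case $w\in L^{1+d/2}$ — this is exactly where the layered $L^p$, $L^2$ and (in $d=1,2$) $L^1$ assumptions on $w$ and the $L^\infty_\varepsilon$ structure are used — one gets $\limsup_{\hbar\to0}\hbar^d\normSch{X_w(\gamma_\hbar)}{2}{}=0$. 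Finally, inserting the Duhamel identity $e^{-\alpha\tilde H_\hbar}-e^{-\alpha H_\hbar}=\int_0^\alpha e^{-(\alpha-\tau)H_\hbar}\bigl(\hbar^d X_w(\gamma_\hbar)\bigr)e^{-\tau\tilde H_\hbar}\,d\tau$, taking the diagonal at $x$ and estimating $\bigl|\langle e^{-(\alpha-\tau)H_\hbar}\delta_x,\hbar^dX_w(\gamma_\hbar)\,e^{-\tau\tilde H_\hbar}\delta_x\rangle\bigr|\le\norm{e^{-(\alpha-\tau)H_\hbar}\delta_x}\,\hbar^d\normSch{X_w(\gamma_\hbar)}{2}{}\,\norm{e^{-\tau\tilde H_\hbar}\delta_x}$, the two prefactors $\hbar^{-d/2}$ combine with the outer $\hbar^d$ to produce exactly the $\hbar^d$-scaled Weyl order, the $\tau$-integral of $(\alpha-\tau)^{-d/4}\tau^{-d/4}$ converges (for $d\ge4$ one first distributes the time variables and uses $L^1\!\to\!L^\infty$ heat-kernel bounds), and the whole expression is $O\bigl(\hbar^d\normSch{X_w(\gamma_\hbar)}{2}{}\bigr)=o(1)$. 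This proves the displayed limit, and Lemma~\ref{lemma:tauberian-thm} concludes.

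The main obstacle is this last step. Unlike the potential $\hbar^d\rho_{\gamma_\hbar}\ast w$, the exchange term $\hbar^d X_w(\gamma_\hbar)$ is a \emph{non-local} operator, so it cannot be put into a Feynman--Kac weight for $\tilde H_\hbar$, and its kernel $w(x-y)\gamma_\hbar(x,y)$ has the same order $\hbar^{-d}$ as the diagonal of $\gamma_\hbar$; smallness is recovered only after the $\hbar^d$ rescaling and by using that this kernel is concentrated near the diagonal (so that, e.g., $\normSch{X_w(\gamma_\hbar)}{2}{}^2$ pairs $w$ against the $O(\hbar^{-d})$-mass measure $g_\hbar$ rather than against $\hbar^{-d}$ pointwise). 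Making the bound $\hbar^d\normSch{X_w(\gamma_\hbar)}{2}{}\to0$ uniform in every dimension is exactly what forces the dimension-dependent integrability hypotheses on $w$ in the statement, and checking the summability/time-integrability of the Duhamel (Dyson) expansion at the rescaled time is the remaining technical point.
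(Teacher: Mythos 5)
Your overall reduction (Hardy--Littlewood Tauberian theorem plus a Duhamel comparison of the two heat kernels at the rescaled time) is the same strategy as the paper's, and your central mechanism is genuinely different and, where it works, simpler: you deduce $\hbar^d\normSch{X_w(\gamma_\hbar)}{2}{}\to0$ from $\abs{\gamma_\hbar(x,y)}^2\le\rho_{\gamma_\hbar}(x)\rho_{\gamma_\hbar}(y)$, $\tr\gamma_\hbar^2\le\tr\gamma_\hbar\lesssim\hbar^{-d}$ and the $L^1\cap L^{1+2/d}$ bound on $\hbar^d\rho_{\gamma_\hbar}$, with the level-$M$ truncation to handle the $L^{1+d/2}$ endpoint; this is correct under the stated hypotheses (and it explains the extra $L^2$ assumption in $d=1$). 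In dimensions $d\le3$ your scheme closes: the rescaled Duhamel term is bounded by $C\,\hbar^d\normSch{X_w(\gamma_\hbar)}{2}{}\int_0^\alpha(\alpha-\tau)^{-d/4}\tau^{-d/4}d\tau$, the a priori bound $\norm{e^{-\tau\tilde H_\hbar}\delta_x}_{L^2}\lesssim\tau^{-d/4}\hbar^{-d/2}$ can be obtained non-circularly by a one-step Duhamel/Gr\"onwall argument, and you never need the self-consistent equation for $\gamma_\hbar$, which the paper's own proof does invoke in its bootstrap.

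The genuine gap is $d\ge4$, which the lemma must cover. There $\int_0^\alpha(\alpha-\tau)^{-d/4}\tau^{-d/4}d\tau$ diverges, and your proposed fix (``distribute the time variables and use $L^1\to L^\infty$ heat-kernel bounds'') does not work as stated: iterating Duhamel only multiplies the number of time gaps, each endpoint gap still carries the non-integrable power $-d/4$, and $L^1\to L^\infty$ bounds cost $(a\hbar^2)^{-d/2}$, which is worse. Moreover, for $d\ge4$ the needed bound $\norm{e^{-\tau\tilde H_\hbar}\delta_x}_{L^2}\lesssim\tau^{-d/4}\hbar^{-d/2}$ is equivalent to a diagonal bound $e^{-2\tau\tilde H_\hbar}(x,x)\lesssim(\tau\hbar^2)^{-d/2}$ for the perturbed semigroup, which cannot come from Feynman--Kac (the exchange term is non-local) and is essentially what you are trying to prove, so the argument becomes circular exactly in the range where the time integral fails. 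This is precisely the point where the paper departs from your route: it dominates the exchange kernel pointwise by $\sqrt{\rho_{\gamma_t}}\,\abs{w}\,\sqrt{\rho_{\gamma_t}}$, expands the full Dyson series keeping all intermediate Gaussians in $L^1$ and only the two extremal ones in $L^b$ with $d/(2b')<1$ (Lemma \ref{lemma:exp-kill-exchange_Rn}, condition \eqref{cond:kill-exch_well-def-int}), and compensates the resulting loss in powers of $t$ by bootstrapping $\normLp{\rho_{\gamma_t}}{q}{}=\OR(t^{-d/2})$ up to $q=\infty$ through the relation $\gamma_t=\indicatrice{-t\Delta+W-t^{d/2}X_w(\gamma_t)\le0}$. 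To repair your proof in all dimensions you would need a substitute for that bootstrap, or a mixed-norm refinement of your $\schatten^2$ pairing that softens the $\tau^{-d/4}$ singularities; as written, the proposal only proves the lemma for $d\le3$.
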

	\begin{proof}[Idea of proof of Lemma \ref{lemma:pointwise-WL_kill-exchange}]
		The proof is similar to the same as the one of Theorem \ref{thm:WL-without-int_conlon}. We only change the measure $m_\gamma$ by adding the exchange term into it. Implicitly $W$ can depend on $t$ as in the proof of Lemma \ref{lemma:pointwise-WL_HF-h} (such that the sequence of functions $W_t$ converges uniformly in any compact towards a continuous and trapping function $W_0$). The crucial point is to prove
		an equivalent of \cite[(3.33)]{conlon1983semi}. We prove that for any $x\in\R^d$
		\begin{equation}\label{eq-demo:exp-kill-exchange}
		\abs{e^{-t\big(-\Delta+(1/t) W(x)-t^{d/2-1}X_w(\gamma_t)\big)}(x,x)-e^{-t\big(-\Delta+(1/t) W(x)\big)}(x,x)} = \OR_{t\to 0^+}(1)
		%= o_{t\to 0^+}\left(t^{-\tfrac d2}\right)
		.
		\end{equation}
		Then, we conclude by the Tauberian theorem (Lemma  \ref{lemma:tauberian-thm}) and Lemma \ref{lemma:pointwise-WL_HF-exp}. Let us now prove the limit \eqref{eq-demo:exp-kill-exchange}.
		We provide a proof of \eqref{eq-demo:exp-kill-exchange} in the end of this section. This version is different that the one in \cite{conlon1983semi} that we have corrected the arguments to make it work.
	\end{proof}
	%% end proof \ref{lemma:pointwise-WL_kill-exchange}

\begin{proof}[Proof of Theorem \ref{thm:pointwise-WL_HF}]
	Let $\{\gamma_\hbar\}_{\hbar>0}$ a sequence of minimizers of the $\hbar$-Hartree-Fock energy $\ER_{\hbar,V,w}^{\rm HF}$ such that for any $\hbar>0$, the operator $\gamma_\hbar$ satisfies the relation \eqref{eq:pointwise-WL-EL}.
	Our goal is to apply Lemma \ref{lemma:pointwise-WL_HF-h} to $V_{\rm trap}=V-E$ and $V_\hbar=\hbar^d\rho_{\gamma_\hbar}\ast w$. Let us check first why the hypothesis of the Lemma hold.
		The sequence $\{\hbar^d\rho_{\gamma_\hbar}\}_\hbar$ is uniformly bounded is $ L^1(\R^d)\cap L^{1+2/d}(\R^d)$, $\rho_{\rm TF}\ast w\in L^1(\R^d)\cap L^{1+2/d}(\R^d)$.
		The functions $\hbar^d\rho_{\gamma_\hbar}\ast w,\rho_{\rm TF}\ast w\in L^\infty$ are convolutions of functions in $L^p$ with conjugated exponents, and are thus in $L^\infty(\R^d)$ and continuous on $\R^d$.
	
	Given the assumptions on $V$ and $w$, and the nature of $\gamma_\hbar$, one has by Lemma \ref{lemma:trace_bd_h^d-almost-min}
	\begin{equation*}\forall \hbar>0,\quad
	\hbar^d\tr((-\hbar^2\Delta+V+1)\gamma_\hbar)\leq C
	.
	\end{equation*}	
	By Theorem \ref{thm:weak-WL_HF}, up to a subsequence $\hbar^d\rho_{\gamma_\hbar}\rightharpoonup \rho_{\rm TF}$ weakly in $L^1\cap L^{1+2/d}(\R^d)$ as $\hbar\to 0$. This implies the convergence $\hbar^d\rho_{\gamma_\hbar}\ast w\to \rho_{\rm TF}\ast w$ almost everywhere on $\R^d$.
	
		By the H\"older inequality and Eq. \eqref{cond:equicont_conlon}, it turns out that there exists $C>0$ such that for any $\hbar>0$
		\begin{equation*}
		\normLp{\nabla V_\hbar}{\infty}{}\leq \normLp{\hbar^d\rho_{\gamma_\hbar}}{1}{}\normLp{\nabla w}{\infty}{} <C
		\text{ or } \normLp{\nabla V_\hbar}{1+d/2}{}\leq \normLp{\hbar^d\rho_{\gamma_\hbar}}{1+2/d}{}\normLp{\nabla w}{\infty}{} <C
		.
		\end{equation*}
		By the Ascoli Theorem, up to a subsequence, one has $\hbar^d\rho_{\gamma_\hbar}\ast w\to\rho_{\rm TF}\ast w$ uniformly on all compacts of $\R^d$. 
		Applying Lemma \ref{lemma:pointwise-WL_kill-exchange} and Lemma \ref{lemma:pointwise-WL_HF-h}, one has the pointwise convergence (up to the same subsequence)
		\begin{align*}\forall x\in\R^d,\quad
			\lim_{\hbar\to 0}\hbar^d\rho_{\gamma_\hbar}(x)
				&=	\lim_{\hbar\to 0}\hbar^d\indicatrice{-\hbar^2\Delta+V-\hbar^d\rho_{\gamma_\hbar}\ast w\leq E}(x,x)
			\\&=(E-V(x)-\rho_{\rm TF}\ast w(x))_+^{d/2}.
		 \end{align*}
	Furthermore, since $\rho_{\rm TF}$ minimizes Thomas-Fermi energy, it also satisfies the Thomas-Fermi equation \eqref{eq:TF}. This concludes the proof of Theorem \ref{thm:pointwise-WL_HF}.
\end{proof}

\begin{proof}[Proof of the bound \eqref{eq-demo:exp-kill-exchange}]
	For any $t>0$, let us denote by 
	\begin{align*}
	R(t)&:= e^{-t\big(-\Delta+(1/t) W(x)-t^{d/2-1}X_w(\gamma_t)\big)}-e^{-t\big(-\Delta+(1/t) W(x)\big)}
	\\&=e^{-t\big(-\Delta+ (1/\hbar^2) W(x)-\hbar^{d-2} X_w(\gamma_\hbar)\big)}-e^{-t\big(-\Delta+(1/\hbar^2) W(x)\big)},
	\end{align*}
	with $t=\hbar^2$.
	By Duhamel formula, for any $t\in(0,1]$
	\begin{align*}
	R(t)&= \hbar^{d-2} \int_{0}^t e^{-(t-\tau)(-\Delta+(1/\hbar^2)W)}X_w(\gamma_\hbar)e^{-\tau\big(-\Delta+(1/\hbar^2) W(x)-\hbar^{d-2}X_w(\gamma_\hbar)\big)}d\tau.
	\end{align*}
	We reintegrate this formula, so that $R(t)=\sum_{n\geq1} R_n(t)$ where for any $n\geq 1$,
	\begin{align*}
	R_n(t) =
	& \big(\hbar^{d-2}\big)^n\int_{0}^{t}e^{-(t-\tau_1)(-\Delta+(1/\hbar^2)
		W)}X_w(\gamma_\hbar)\int_0^{\tau_1}e^{-(\tau_1-\tau_2)(-\Delta+(1/\hbar^2) W)}X_w(\gamma_\hbar)\int_0^{\tau_2}\cdots\times
	\\&\times
	X_w(\gamma_\hbar)\int_0^{\tau_{n-1}} e^{-(\tau_{n-1}-\tau_n)(-\Delta+(1/\hbar^2) W)}X_w(\gamma_\hbar) e^{-\tau_n(-\Delta+(1/\hbar^2) W)} d\tau_n d\tau_{n-1} \cdots d\tau_1
	\\= & \big(t^{d/2}\big)^n \int_0^1 s_1^{n-1} e^{-t(1-s_1)(-\Delta+(1/t) W)} X_w(\gamma_t)\int_0^1 s_2^{n-2} e^{-ts_1(1-s_2)(-\Delta+(1/t) W)}\int_0^1\cdots\times
	\\&\times
	X_w(\gamma_t)\int_0^1 e^{-ts_1\ldots s_{n-1}(1-s_n)(-\Delta+(1/t) W)}X_w(\gamma_t) e^{-ts_1\ldots s_n(-\Delta+(1/t) W)} ds_n ds_{n-1} \cdots ds_1
	.
	\end{align*}
	As in \cite{conlon1983semi}, the main idea is to prove that exists $C,\tilde{C}>0$ such that for any $n\geq 1$ and any $t\in(0,1)$, one has
	\begin{equation}\label{eq-demo:exp-kill-exchange_Rn-Linfty}
		\sup_{x\in\R^d}\abs{R_n(t)(x,x)}\leq \tilde{C}\frac{C^n}{(n-1)!}.
	\end{equation}
	\begin{rmk}\label{rmk:proof-exp-kill-exchange_compColon}
		We provide a different proof of this bound of the Conlon's one, which does not rely on the induction relation between $R_n(t)$ and $R_{n-1}(t)$,
		on bilinear estimates of the type $\prodscal{u}{X_w(\gamma_t) v}_{L^2}\leq C_t\normLp{u}{r_1}{}\normLp{v}{r_2}{}$ and on bounds of the $L^r$ norm of $R_n(t)(\cdot,x)$ for fixed $x\in\R^d$. The equivalent statement of our Lemma \ref{lemma:exp-kill-exchange_Rn} was given by \cite[Lemma 3.2]{conlon1983semi}. However, this proof cannot be adapted in our case since it does not work for any dimension. Furthermore, it uses the $L^\infty$ norm of $R_n(t)(\cdot,x)$, which is not allowed by the rest of the proof (for instance between the constraints \cite[(3.15)]{conlon1983semi} and \cite[(3.16)]{conlon1983semi} for the case $n=1$). It also seems to us that it is incomplete because of a divergent integral on $[0,1]$ on the bound \cite[(3.15)]{conlon1983semi} for the case $n=1$. In this paper, we rather estimate $L^r$ bound of the density of the operators $R_n(t)$.
	\end{rmk}
	The bound \eqref{eq-demo:exp-kill-exchange_Rn-Linfty} is a consequence of the following lemma and the $\normLp{\rho_{\gamma_t}}{q}{(\R^d)}=\OR(t^{d/2})$ which holds for any $q=1$ or $q=1+2/d$, in particular for any $q\in[1,1+2/d]$ by interpolation. 
	\begin{lemma}\label{lemma:exp-kill-exchange_Rn}
		Let $q \geq 1$ and $p\geq 2$ such that $p> d/4$. For $w\in L^p(\R^d)$ and any $r\geq q$ such that
		\begin{equation}\label{cond:kill-exch_well-def-int}
		%\label{cond:exp-kill-exchange_Rn}
		\frac 1p+\frac 1q-\frac 1r< \frac 4d.
		\end{equation}
		there exists $C>0$ such that for any $n\geq 1$,
		\begin{equation*}
			\norm{R_n(t)(x,x)}_{L^r_x(\R^d)}\leq \frac{e^{\normLp{W_-}{\infty}{}}}{(n-1)!}  t^{-\tfrac d2\left(\tfrac 1p+\tfrac 1q-\tfrac 1r\right)} \left(C\normLp{w}{p}{(\R^d)}  t^{d/2}\right)^n \normLp{\rho_{\gamma_t}}{q}{(\R^d)}\normLp{\rho_{\gamma_t}}{p'}{(\R^d)}^{n-1}
		.
		\end{equation*}
	\end{lemma}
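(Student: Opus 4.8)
The plan is to dominate $R_n(t)(x,x)$ pointwise by the diagonal of an explicit \emph{positive} integral operator built from free heat propagators, convolutions and multiplications, and then to estimate that diagonal in $L^r$ by treating the operator as a composition of bounded maps between Lebesgue spaces; the factorial $1/(n-1)!$ will emerge from the remaining parameter integral. First, since $0\le\gamma_t\le 1$, positivity of its kernel gives $\abs{\gamma_t(u,v)}\le\sqrt{\rho_{\gamma_t}(u)\rho_{\gamma_t}(v)}$, so the kernel of $X_w(\gamma_t)$ is dominated pointwise by that of the positive operator $\Xi:=M_{\sqrt{\rho}}\,C_{\abs w}\,M_{\sqrt{\rho}}$, where $\rho:=\rho_{\gamma_t}$, $M_g$ is multiplication by $g$ and $C_{\abs w}$ is convolution by $\abs w$. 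Moreover, since $W$ is bounded below, the Feynman--Kac formula \eqref{eq:Feynman-Kac_simple} gives $0\le e^{-a(-\Delta+W/t)}(x,y)\le e^{(a/t)\normLp{W_-}{\infty}{}}\,e^{-a(-\Delta)}(x,y)$ for every $a>0$; the $n+1$ free propagators in $R_n(t)$ run over times $a_0=t(1-s_1)$, $a_i=ts_1\cdots s_i(1-s_{i+1})$ $(1\le i\le n-1)$ and $a_n=ts_1\cdots s_n$, whose sum is $t$, so the accumulated weight is exactly $e^{\normLp{W_-}{\infty}{}}$ and
\begin{equation*}
\abs{R_n(t)(x,x)}\le e^{\normLp{W_-}{\infty}{}}\,(t^{d/2})^n\int_{[0,1]^n}s_1^{n-1}s_2^{n-2}\cdots s_{n-1}\;T_n(x,x)\;ds_1\cdots ds_n,
\end{equation*}
where $T_n:=e^{-a_0(-\Delta)}\,\Xi\,e^{-a_1(-\Delta)}\,\Xi\cdots\Xi\,e^{-a_n(-\Delta)}$ has a positive kernel.

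\textbf{Step 2 (the composition estimate).} It remains to bound $\norm{T_n(x,x)}_{L^r_x(\R^d)}$ uniformly in $n$. Writing $T_n=e^{-a_0(-\Delta)}M_{\sqrt\rho}\cdot\prod_{i=1}^{n-1}\big(C_{\abs w}M_{\sqrt\rho}e^{-a_i(-\Delta)}M_{\sqrt\rho}\big)\cdot C_{\abs w}M_{\sqrt\rho}e^{-a_n(-\Delta)}$, I would chain boundedness of the three kinds of factors between Lebesgue spaces: each free heat propagator maps $L^\alpha\to L^\beta$ with norm $\le C a^{-\frac d2(\frac1\alpha-\frac1\beta)}$ and is a contraction on every $L^\alpha$; Young's inequality controls each $C_{\abs w}$ by $\norm w_{L^p(\R^d)}$; Hölder controls each $M_{\sqrt\rho}$ by $\norm{\sqrt\rho}_{L^s}$ for a suitable $s$. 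To pass from $T_n$ to the $L^r$-norm of its diagonal, I would split $T_n$ into two halves, bound $T_n(x,x)$ by a Cauchy--Schwarz pairing of the two half-kernels in a dual pair of Lebesgue spaces followed by Hölder in $x$, and control the resulting mixed norms of the halves by peeling off their outermost factors one at a time with Minkowski's integral inequality. The chain is arranged so that all $n-1$ interior blocks $C_{\abs w}M_{\sqrt\rho}e^{-a_i(-\Delta)}M_{\sqrt\rho}$ act as $t$-uniform contractions contributing $(C\norm w_{L^p}\norm\rho_{L^{p'}})^{n-1}$, the two outermost multiplications contribute $\norm{\sqrt\rho}_{L^{2q}}^2=\norm\rho_{L^q}$, the last convolution contributes $C\norm w_{L^p}$, and the only surviving negative powers of $t$ come (essentially) from the two boundary heat propagators acting $L^a\to L^{2r}$, with total exponent $\theta_0+\theta_n=\frac d2(\frac1p+\frac1q-\frac1r)$. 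The hypotheses $p\ge2$, $p>d/4$, $r\ge q$ and $\tfrac1p+\tfrac1q-\tfrac1r<\tfrac4d$ are exactly what makes every intermediate Lebesgue exponent lie in $[1,\infty]$, what lets Young's and Minkowski's inequalities apply, and what forces all the loss exponents to stay strictly below $1$.

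\textbf{Step 3 (parameter integral and conclusion).} Substituting into Step 1, the surviving $t$-powers factor out of the integral as $t^{-\frac d2(\frac1p+\frac1q-\frac1r)}$, accompanied by monomials in the $s_j$ carrying negative exponents strictly below $1$. Bounding $s_1^{n-1}\le1$ in the first integration and evaluating the remaining one-dimensional Beta integrals $\int_0^1 s_j^{\,n-j-\theta}\,ds_j$ (all finite by the exponent conditions) yields a product $\le C/(n-1)!$. Multiplying by $(t^{d/2})^n$ and the norm factors of Step 2 gives precisely the asserted bound, with the dimensional constants absorbed into $C^n$.

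The main difficulty is Step 2: threading the $(2n+1)$-fold composition through a single admissible chain of Lebesgue spaces so that all interior blocks are genuinely $t$-uniform contractions while only the boundary heat kernels carry the $t$-loss, and extracting the diagonal in $L^r$ without spoiling this structure --- which is where the splitting into two halves, the Cauchy--Schwarz/Hölder bound on the diagonal, and the repeated Minkowski integral inequality are indispensable, and where the precise form of the exponent constraints gets used. This is exactly the step at which the argument must depart from \cite{conlon1983semi}, which instead used the $L^\infty$ norm of $R_n(t)(\cdot,x)$ and an induction on $n$ (cf.\ Remark \ref{rmk:proof-exp-kill-exchange_compColon}).
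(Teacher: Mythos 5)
This is essentially the paper's argument: the same pointwise domination $\abs{X_w(\gamma_t)(z,y)}\le\sqrt{\rho_{\gamma_t}(z)}\,\abs{w(z-y)}\,\sqrt{\rho_{\gamma_t}(y)}$, the same Kato--Trotter/Feynman--Kac replacement of all propagators by free Gaussians with total weight $e^{\|W_-\|_{\infty}}$, the same bookkeeping (interior Gaussians in $L^1$, interior blocks $\sqrt{\rho}\,w\,\sqrt{\rho}$ contributing $\big(\|w\|_{L^p}\|\rho_{\gamma_t}\|_{L^{p'}}\big)^{n-1}$, the two boundary Gaussian-times-$\sqrt{\rho}$ pieces combined by H\"older in $x$ yielding $\|\rho_{\gamma_t}\|_{L^q}$ and the loss $t^{-\frac d2(\frac1p+\frac1q-\frac1r)}$), and the same Beta-function evaluation of the simplex integral producing $1/(n-1)!$. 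The one point to watch in your operator-composition phrasing is the interior block: run it on a fixed $L^\sigma$ with $\sigma\ge p$, or regroup it as $M_{\sqrt{\rho}}C_{\abs{w}}M_{\sqrt{\rho}}$ and estimate its Hilbert--Schmidt norm via $\int\rho\,(w^2\ast\rho)\le\|w\|_{L^p}^2\|\rho\|_{L^{p'}}^2$ (which is exactly where $p\ge2$ enters, and is what the paper does), since a naive $L^2\to L^2$ threading by plain H\"older/Young pushes an intermediate exponent below $1$ as soon as $p>2$.
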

	\begin{rmk}\label{rmk:proof-exp-kill-exchange}
		It turns out that when the assumptions of Lemma \ref{lemma:exp-kill-exchange_Rn} are satisfied and when $\normLp{\rho_{\gamma_t}}{q}{}=\OR_{t\to 0_+}(t^{-d/2})$, the norm $L^r$ of the density of $R_n(t)$ is $\OR_{t\to 0_+}(t^{d/2})$ if and only if $t^{-\frac d2\left(\frac 1p+\frac 1q-\frac 1r\right)}=\OR_{t\to 0_+}(t^{-d/2})$, i.e.\
		\begin{equation}\label{cond:kill-exch_small-err}
		%\frac 2{b'}=
		\frac 1p+\frac 1q-\frac 1r\leq 1.
		\end{equation}
		Ideally, it is tempting to deduce \eqref{eq-demo:exp-kill-exchange_Rn-Linfty} for from $p>d/2$ and $r=q=+\infty$. However, it is not obvious that $\normLp{\rho_{\gamma_t}}{\infty}{}=\OR_{t\to 0_+}(t^{-d/2})$. It is what we will deduce from several iterations of Lemma \ref{lemma:exp-kill-exchange_Rn} for suitable sequences $\{q_j\}_j$ and $ \{r_j\}_j$ by using the fact the definition $\gamma_t=\indicatrice{-t\Delta+W-t^{d/2}X_w(\gamma_t)\leq 0}$.
	\end{rmk}
	\begin{rmk}\label{rmk:exp-kill-exchange_Rn}
		The assumption $p\geq 2$ is purely technical and is due to the proof of Lemma \ref{lemma:exp-kill-exchange_Rn}. This is satisfied for case $p=1+d/2$ for $d\geq 2$, but for $d=1$ this is unfornately non covered. This is the reason that we add the assumption $w\in L^{1+d/2}\cap L^2(\R^d)+L^\infty_\varepsilon(\R^d)$ in the statement of Lemma \ref{lemma:pointwise-WL_kill-exchange} and of Theorem \ref{thm:pointwise-WL_HF}. Besides, this hypothesis has the merit of including Coulomb potential in dimension $d\geq 2$.
	\end{rmk}
	\paragraph{\bf Proof of Lemma \ref{lemma:exp-kill-exchange_Rn}.}
	Let $x\in\R^d$ and let $t\in(0,1]$.
	Note that for almost every $z,y\in\R^d$, 
	\begin{equation*}
		\abs{X_w(\gamma_t)}\leq \sqrt{\rho_{\gamma_t}(z)}\abs{w(z-y)}\sqrt{\rho_{\gamma_t}(y)}.
	\end{equation*}
	Furthermore, for any $s\in(0,1]$, one has by Kato-Trotter formula
	\begin{equation}\label{eq:Kato-Trotter-gauss}
		e^{-s(-t\Delta+W)}(x,y) \leq e^{-st\Delta}(x,y)e^{s\normLp{W_-}{\infty}{}}
		.
	\end{equation}
	Let us denote by $g_{ts}$ the Gaussian $g_{st}= (4\pi st)^{-d/2}e^{-\frac 1{4ts}{\abs{\cdot}^2}}$.
	In particular, for any $1\leq b\leq\infty$
	\begin{equation}\label{eq-demo:bounds-Lq-gaussian}
		\normLp{g_{ts}}{b}{(\R^d)}\leq C(ts)^{-\tfrac d{2b'}}.
	\end{equation}
	One has
	\begin{equation}\label{eq-demo:exp-kill-exchange_first-bound-Rn}
	\begin{split}
		\abs{R_n(t)(x,x)} 
		&\leq  e^{\normLp{W_-}{\infty}{}} (t^{d/2})^n\int_0^1\ldots\int_0^1
		\left(	\int_{\R^d}\ldots 	\int_{\R^d} g_{t(1-s_1)}(x-y_1)\sqrt{\rho_{\gamma_t}(y_1)}
		\right.\\&\qquad
		\abs{w(y_1-y_2)} \sqrt{\rho_{\gamma_t}(y_2)}  g_{t s_2(1-s_1)}(y_2-y_3) \ldots \abs{w(y_{2n-1}-y_{2n})}
		\\&\left.
		\qquad  \sqrt{\rho_{\gamma_t}(y_{2n})}g_{ts_1\ldots s_n}(y_{2n}-x) \quad dy_{2n}\ldots dy_1 \right) \:
		ds_n \: s_{n-1} ds_{n-1}\: \ldots\: s_1^{n-1} ds_1.
	\end{split}
	\end{equation}
	For $n=1$, by the Young inequality
	\begin{align*}
	\abs{R_1(t)(x,x)} 
	&\leq C  e^{\normLp{W_-}{\infty}{}} t^{d/2} \int_0^1 \norm{g_{t(1-s_1)}(x-y_1)\sqrt{\rho_{\gamma_t}(y_1)}}_{L^{(2p)'}_{y_1}} 
	\\&\qquad\normLp{w}{p}{}  \norm{\sqrt{\rho_{\gamma_t}(y_2)} g_{t(1-s_1)}(y_2-x)}_{L^{(2p)'}_{y_2}} ds_1.
	\end{align*}
	Notice that $\norm{g_{ts}(x-y)\sqrt{\rho_{\gamma_t}(y)} }_{L^{(2p)'}_y} = \big((g_{ts}^{(2p)'}\ast\rho_{\gamma_t}^{(2p)'/2})(x)\big)^{1/(2p)'}$. Then, by the Young inequality applied to the exponents $r/b\geq 1$ such that $2r/(2p)' , b/(2p)', 2q/(2p)'\geq 1$ and such that $1+\frac{(2p)'}{2r}=\frac{(2p)'}{b}+\frac{(2p)'}{2q}$, one has
	\begin{align*}
		\norm{ g_{ts}(x-y)\sqrt{\rho_{\gamma_t}(y)}}_{L^{2r}_xL^{(2p)'}_y} 
		&
		=  \normLp{(g_{ts}^{(2p)'}\ast\rho_{\gamma_t}^{(2p)'/2}}{2r/(2p)'}{(\R^d)}
		\lesssim \normLp{g_{ts}}{b}{(\R^d)}\normLp{\sqrt{\rho_{\gamma_t}}}{2q}{(\R^d)}
		\\&\lesssim (ts)^{-\frac d{2b'}}\normLp{\rho_{\gamma_t}}{q}{(\R^d)}^{1/2}
		.
	\end{align*}
	Notice the conditions $2r/(2p)'\geq 1$ and $2q/(2p)'\geq$ are always satisfied and that
	\begin{equation}\label{cond:kill-exch_young}
	\begin{split}
		1+\frac{(2p)'}{2r}=\frac{(2p)'}{b}+\frac{(2p)'}{2q} & \quad\Longleftrightarrow\quad \frac 1r = \frac 1p+\frac 1q-\frac 2{b'},
		\\
		\frac b{(2p)'}\geq 1 & \quad\Longleftrightarrow\quad \frac 2{b'}\geq \frac 1p,\quad \text{ this implies that } r\geq q, 
		\\ \frac 1r\geq 0 & \quad\Longleftrightarrow\quad \frac 2{b'}\leq \frac 1p+\frac 1q.
	\end{split}
	\end{equation}
	By the H\"older inequality, we deduce that
	\begin{align*}
		\norm{R_1(t)(x,x)}_{L^r_x(\R^d)} 
		&\leq C e^{\normLp{W_-}{\infty}{}}\normLp{w}{p}{(\R^d)} t^{\tfrac d2-\tfrac{d}{b'}}\normLp{\rho_{\gamma_t}}{q}{(\R^d)} \int_0^1 (1-s)^{-\tfrac d{2b'}}s^{-\tfrac d{2b'}}ds
		.
	\end{align*}
	The last integral is finite if and only if $\frac d{2b'}<1$, which is the condition \eqref{cond:kill-exch_well-def-int}.
	%\begin{equation}\label{cond:kill-exch_well-def-int}
	%	\frac 2{b'}<\frac 4 d.
	%\end{equation}
	When $p>1$, the conditions \eqref{cond:kill-exch_young} and \eqref{cond:kill-exch_well-def-int} can be satisfied in the same time.
	\\\\
	Let $n\geq 2$. By the Young inequality in \eqref{eq-demo:exp-kill-exchange_first-bound-Rn}
	\begin{align*}
	\abs{R_n(t)(x,x)} &\lesssim  e^{\normLp{W_-}{\infty}{}} (t^{d/2})^n 
		\int_0^1\ldots \int_0^1 d s_n \: s_{n-1} ds_{n-1}\ldots s_1^{n-1}ds_1
		\\&\qquad
		\norm{\int_{\R^d}  g_{t(1-s_1)}(x-y_1)\sqrt{\rho_{\gamma_t}(y_1)}w(y_1-y_2)\sqrt{\rho_{\gamma_t}(y_2)}dy_1}_{L^2_{y_2}} 
		\\&\qquad
		\normLp{g_{ts_1(1-s_2)}}{1}{(\R^d)}\norm{Q_{t,s_1,\ldots,s_n}^{[n-2]}(y_3,x)}_{L^2_{y_3}} 
		,
	\end{align*}
	where
	\begin{align*}
		Q_{,t,s_1,\ldots,s_n}^{[n-2]}(y_3,x) &= \begin{cases}\displaystyle
		\sqrt{\rho_{\gamma_t}(y_3)}\int_{\R^d} w(y_3-y_4)\sqrt{\rho_{\gamma_t}(y_4)} g_{ts_1s_2}(y_4-x) dy_4&\text{if}\ d=2,\\
		\\
		\displaystyle
		\sqrt{\rho_{\gamma_t}}(y_3)w(y_3-y_4) \int_{\R^d}\sqrt{\rho_{\gamma_t}}(y_4)  \int_{\R^d} g_{ts_1s_2(1-s_3)}(y_4-y_5) \quad dy_4 &\text{if}\ d\geq 3.
		\\\quad\displaystyle \qquad\qquad\qquad \ldots
		\\\displaystyle
		\sqrt{\rho_{\gamma_t}(y_{2n})} \int_{\R^d}w(y_{2n-1}- y_{2n})\sqrt{\rho_{\gamma_t}(y_{2n})}g_{t s_1\ldots s_n}(y_{2n}-x)  \quad dy_{2n-1} dy_{2n}
		\end{cases}
	\end{align*}
	When $n\geq 3$, we iterate the Young inequalities so that the gaussians are in the $L^1$ norm
	\begin{align*}
		&\norm{Q_{t,s_1,\ldots,s_n}^{[n-2]}(y_3,x)}_{L^2_{y_3}}
		\\&\quad
		\lesssim  \norm{ \sqrt{\rho_{\gamma_t}(y_3)} \norm{w(y_3-y_4)\sqrt{\rho_{\gamma_t}(y_4)}}_{L^2_{y_4}}  \normLp{g_{ts_1 s_2(1-s_3)}}{1}{} \norm{Q_{t,s_1,\ldots,s_n}^{[n-3]}(y_5,x)}_{L^2_{y_5}}}_{L^2_{y_3}}
		\\&\quad\qquad= \norm{ \sqrt{\rho_{\gamma_t}(y_3)}w(y_3-y_4)\sqrt{\rho_{\gamma_t}(y_3)}  }_{L^2_{y_3,y_4}} \normLp{g_{ts_1 s_2(1-s_3)}}{1}{} \norm{Q_{t,s_1,\ldots,s_n}^{[n-3]}(y_5,x)}_{L^2_{y_5}}
		\\&\quad
		\lesssim \prod_{j=1}^{n-2}\norm{\sqrt{\rho_{\gamma_t}(y_{2j+1})}w(y_{2j+1}-y_{2j+2})\sqrt{\rho_{\gamma_t}(y_{2j+2})}}_{L^2_{y_{2j+1},y_{2j+2}}}\normLp{g_{ts_1 \ldots s_{2j+1}(1-s_{2j+2})}}{1}{}
		\\&\quad\qquad
		\norm{
		\sqrt{\rho_{\gamma_t}(y_{2n-1})} \int_{\R^d}w(y_{2n-1}- y_{2n})\sqrt{\rho_{\gamma_t}(y_{2n})}g_{t s_1\ldots s_n}(y_{2n}-x)   dy_{2n} }_{L^2_{y_{2n-1}}}
	.
	\end{align*}
	Then, for any $n\geq 2$ 
	\begin{align*}
		&\abs{R_n(t)(x,x)} 
		\lesssim 	e^{\normLp{W_-}{\infty}{}} (t^{-d/2})^n \int_0^1\ldots \int_0^1 d s_n \: s_{n-1} ds_{n-1}\ldots s_1^{n-1}ds_1
		\\&\qquad 
		\norm{\int_{\R^d}  g_{t(1-s_1)}(x-y_1)\sqrt{\rho_{\gamma_t}(y_1)}w(y_1-y_2)\sqrt{\rho_{\gamma_t}(y_2)}dy_1}_{L^2_{y_2}} \normLp{g_{ts_1(1-s_2)}}{1}{(\R^d)} 
		\\&\qquad \prod_{j=1}^{n-2}\norm{\sqrt{\rho_{\gamma_t}(y_{2j+1})}w(y_{2j+1}-y_{2j+2})\sqrt{\rho_{\gamma_t}(y_{2j+2})}}_{L^2_{y_{2j+1},y_{2j+2}}}\normLp{g_{ts_1 \ldots s_{2j+1}(1-s_{2j+2})}}{1}{}
		\\&\qquad
		\norm{
			\sqrt{\rho_{\gamma_t}(y_n)} \int_{\R^d}w(y_{2n-1}- y_{2n})\sqrt{\rho_{\gamma_t}(y_{2n})}g_{t s_1\ldots s_n}(y_{2n}-x)   dy_{2n} }_{L^2_{y_{2n-1}}}
		.
	\end{align*}
	By the Young inequality, for any $s\in(0,1]$ and any $x\in\R^d$
		\begin{align*}
		\norm{\sqrt{\rho_{\gamma_t}(y_2)}\int_{\R^d}  g_{ts}(x-y_1)\sqrt{\rho_{\gamma_t}(y_1)}w(y_1-y_2)dy_1}_{L^2_{y_2}}
		\lesssim \normLp{g_{ts}(x-\cdot)\sqrt{\rho_{\gamma_t}}}{(2p)'}{}\normLp{w}{p}{}\normLp{\sqrt{\rho_{\gamma_t}}}{2p'}{}.
	\end{align*}
	As well, by the Young inequality applied to $w^2\in L^{p/2}(\R^d)$ and $\rho_{\gamma_t}\in L^{p'}(\R^d)$, for any $p\geq 2$
	\begin{align*}
		\norm{\sqrt{\rho_{\gamma_t}(y_j)}w(y_j-y_{j+1})\sqrt{\rho_{\gamma_t}(y_{j+1})}}_{L^2_{y_j,y_{j+1}}}
		&=\normLp{\rho_{\gamma_t}(w^2\ast\rho_{\gamma_t})}{1}{}^{1/2}
		\\&\lesssim \normLp{w}{p}{}\normLp{\rho_{\gamma_t}}{p'}{}.
	\end{align*}
	We deduce with the same H\"older and Young inequalities in the case $n=1$ that
	\begin{align*}
		&\norm{R_n(t)(x,x)}_{L^r_x(\R^d)}\leq  e^{\normLp{W_-}{\infty}{}} \big(Ct^{d/2}\normLp{w}{p}{(\R^d)}\big)^n \normLp{\rho_{\gamma_t}}{q}{(\R^d)}\normLp{\rho_{\gamma_t}}{p'}{(\R^d)}^{n-1}
		\\&\qquad  \int_0^1\ldots \int_0^1  \normLp{g_{t(1-s_1)}}{b}{}\normLp{g_{ts_1\ldots(1-s_n)}}{b}{(\R^d)}\prod_{j=2}^{n-1}\normLp{g_{ts_1\ldots s_j(1-s_{j+1})}}{1}{(\R^d)}\quad  s_1^{n-1}ds_1 \ldots d s_n.
	\end{align*}
	The term in the last line is equal to $I_n t^{-d/b'}$, with
	\begin{align*}
		I_n &:= \left(\int_0^1 (1-s_1)^{-\tfrac d{2b'}}s_1^{n-1-\tfrac d{2b'}} ds_1\right)\left(\int_0^1s_2^{n-2-\tfrac d{2b'}} ds_2\right)\ldots \left(\int_0^1s_n^{-\tfrac d{2b'}} ds_n\right).
	\end{align*}
	The quantity $I_n$ is finite since we have the assumption \eqref{cond:kill-exch_well-def-int}. Let us bound it with respect to $n$. We write it with the Beta function defined on $(\R_+^*+i\R)^2$ by
	\begin{equation*}
		B(z,\tilde{z}):=\int_0^1 t^{z-1}(1-t)^{\tilde{z}-1} dt
	\end{equation*}
	Then, we use relation with the Euler Gamma function (see for instance \cite{olver1997asymptotics}[Chap.2])
	\begin{equation*}
		B(z,\tilde{z})=\frac{\Gamma(z)\Gamma(\tilde{z})}{\Gamma(z+\tilde{z})}.
	\end{equation*}
	Then, using the condition \eqref{cond:kill-exch_well-def-int}, that $\Gamma$ is non-decreasing and that it satisfies the property $\Gamma(1+z)=z\Gamma(z)$, one has
	\begin{align*}
	I_n&
		= B\left(n-\frac d{2b'},1-\frac d{2b'}\right)\frac{1}{n-1-\frac d{2b'}}\ldots\frac{1}{1-\frac d{2b'}}
		=\frac{\Gamma\left(n-\frac d{2b'}\right)\Gamma\left(1-\frac d{2b'}\right)}{\Gamma\left(n-1+2\left(1-\frac d{2b'}\right)\right)}\frac 1{\Gamma\left(n-\frac d{2b'}\right)}
	\\&= \frac{\Gamma\left(1-\frac d{2b'}\right)}{\Gamma\left(n-1+2\left(1-\frac d{2b'}\right)\right)} \leq \frac{1}{\Gamma(n-1)} = \frac 1{(n-1)!}\:
	.
	\end{align*}
	That proves Lemma \ref{lemma:exp-kill-exchange_Rn}.
	\paragraph{\bf Proof of the bound \eqref{eq-demo:exp-kill-exchange_Rn-Linfty} with Lemma \ref{lemma:exp-kill-exchange_Rn}.}
	The idea is to iterate Lemma \ref{lemma:exp-kill-exchange_Rn} a finite number of time $k\in\N$ so that $r_1<r_2<\ldots <r_k=\infty$, $q_{j+1}=r_j$ and $\normLp{\rho_{\gamma_t}}{q_j}{}=\OR(t^{d/2})$ in any step $1\leq j\leq k$. In order to have \eqref{cond:kill-exch_well-def-int} and \eqref{cond:kill-exch_small-err}, one should have for $r_k=+\infty$, the inequality $\frac 1{q_k}+\frac 1{p}<\min(1,d/4)$. We iterate the bounds as long as
	\begin{equation}\label{cond:exp-kill-exchange_nonstop}
	\frac 1{q_j}+\frac 1{p}\geq \min(1,d/4) .
	\end{equation}
	Before providing an explicit sequence of exponents, let us explain why one can take $q_{j+1}=r_j$ if this bound $\normLp{\rho_{\gamma_t}}{q_j}{}=\OR(t^{-d/2})$ is true for the step $j\in\N$. In this case, if \eqref{cond:kill-exch_well-def-int} and \eqref{cond:kill-exch_small-err} hold, Lemma \ref{lemma:exp-kill-exchange_Rn} implies $\norm{R(t)(x,x)}_{L^{r_j}_x}=\OR(t^{-d/2})$. In particular, since $\gamma_t = \indicatrice{-t\Delta+W-t^{d/2}X_w(\gamma_t)\leq 0}$, then for any $x\in\R^d$
	\begin{align*}
	\rho_{\gamma_t}(x)&\leq \abs{e^{-t\Delta+W-t^{d/2}X_w(\gamma_t)}(x,x)}\leq \abs{e^{-t\Delta+W}(x,x)}+ \abs{R(t)(x,x)}.
	\end{align*}
	Then, by the triangle inequality, the Kato-Trotter formula \eqref{eq:Kato-Trotter-gauss} and the bound \eqref{eq-demo:bounds-Lq-gaussian}, one has that $\normLp{\rho_{\gamma_t}}{r_j}{(\R^d)}=\OR(t^{-d/2})$. With this remark, for $r_k=+\infty$, we deduce that $\normLp{\rho_{\gamma_t}}{\infty}{(\R^d)}=\OR(t^{-d/2})$. Finally, after a last application of Lemma \ref{lemma:exp-kill-exchange_Rn} with $r=q=+\infty$, $p=1+d/2$ and $p=+\infty$, one deduces the estimates \eqref{eq-demo:exp-kill-exchange_Rn-Linfty} for any $n\geq 1$.\\
	For instance, one can start with $q_1=1$. By definition $\frac 1p+\frac 1{q_1}\geq\min(1,d/4)$, then one can fix $m>1/(\min(1,4/d)p-1)$ and define $r_1$ by the relation $\frac 1{q_1}-\frac 1{r_1}=\frac 1{m p}$ so that \eqref{cond:kill-exch_well-def-int} and \eqref{cond:kill-exch_small-err} hold. Note that $\frac 1{r_1}=1-\frac 1{mp}\geq 0$. Then, we iterate the procedure with $\frac 1{q_j}-\frac 1{r_j}=\frac 1{m p}$ for any integer $j\geq 1$ so that $\frac 1{r_j}\geq 0$ (in this case such that $j\leq\lfloor mp\rfloor+2$), with the condition \eqref{cond:exp-kill-exchange_nonstop}.
	By induction, 
	\[ \frac 1{q_j} = \frac 1{r_{j-1}} = \frac 1{q_{j-1}}-\frac 1{mp} =\ldots = \frac 1{q_1}-\frac{j-1}{mp}=1-\frac{j-1}{mp} .\]
	For instance, one can take $r_k=+\infty$ for $k=2+\lfloor mp (1-\min(1,4/d))\rfloor$.
	As desired, deduce that $\normLp{\rho_{\gamma_t}}{\infty}{}=\OR(t^{-d/2})$. 
\end{proof}
% end proof  \eqref{eq-demo:exp-kill-exchange}

\section{Proof of the lower bound for the whole many-body system (Theorem \ref{thm:lower-bound-whole-mb})}\label{sec:lower-bound-whole-mb}  

The proof of Theorem \ref{thm:lower-bound-whole-mb} is actually an adaptation of the canonical lower bound \cite[Prop. 3.5]{fournais2018semi}. We begin by comparing the two proofs. Then, we detail the proof of Theorem \ref{thm:lower-bound-whole-mb}.

We introduce for our proof the reduced Hartree-Fock and the Thomas-Fermi energy functionals enriched in parameter $\lambda\geq 0$
	\begin{equation}\label{eq-def:rHF-energy_enriched}
	\ER_{\hbar,V,w,\lambda}^{\rm rHF}(\gamma):=\hbar^d\tr((-\hbar^2\Delta+V)\gamma)+\lambda\frac{\hbar^{2d}}{2} D_w(\rho_\gamma,\rho_\gamma),
	\end{equation}
and
\begin{equation}\label{eq-def:TF-energy_enriched}
	\ER_{V,w,\lambda}^{\rm TF}(\rho):=c_{\rm TF}\int_{\R^d}\rho(x)^{1+2/d}dx+\int_{\R^d}V(x)\rho(x)dx+\frac{\lambda}{2} D_w(\rho_\gamma,\rho_\gamma),
\end{equation}
respectively defined in $\XR$ and $\XR^{\rm TF}_V$. 

\subsection{Sketch of the proof of Fournais-Lewin-Solovej and comparison with ours}

Let us recall the main steps of the proof of \cite[Prop. 3.5]{fournais2018semi} in the canonical case and let us explain how they are adapted to our case.
\begin{enumerate}
	\item Let $w_1$ and $w_2$ the functions such that $w=w_1-w_2$ and $\hat{w}_1:= (\hat{w})_+$ and $\hat{w}_2:=(\hat{w})_-$. We begin by writting the inequality for any $\Psi_N\in L^2_a(\R^{dN})$
	\begin{equation}\label{eq-demo:lower-bound-whole-mb_step1}
	\prodscal{\Psi_N}{P_N\Psi_N}\geq \inf_{y_1,\ldots,y_L\in\R^d} \inf_{\underset{\normLp{\tilde{\Psi}}{2}{(\R^{dM})}=1}{\tilde{\Psi}\in L^2(\R^{dM})}}\prodscal{\tilde{\Psi}}{\tilde{P}_N\tilde{\Psi}},
	\end{equation}
	where, for any $M\in\llbracket 1,N\rrbracket$ (that we define it later), $L:=N-M$, $y_\ell:= x_{M+\ell}$ for all $\ell\in\llbracket 1,L\rrbracket$, and the operator on $L^2(\R^{dM})$ 
	\begin{equation*}
	\begin{split}
	\tilde{P}_N:=
	&
	\frac NM \sum_{m=1}^M\left(-\hbar^2\Delta_{x_m}+V(x_m)\right)
	+\hbar^d\frac{N(N-1)}{M(M-1)}\sum_{1\leq m\leq m'\leq M} w_1(x_m-x_{m'})
	\\&\quad
	+\hbar^d\frac{N(N-1)}{L(L-1)}\sum_{1\leq\ell\leq\ell'\leq L}w_2(y_\ell-y_{\ell'})
	-\hbar^d\frac{N(N-1)}{LM}\sum_{m=1}^M\sum_{\ell=1}^L w_2(x_m-y_\ell)
	,
	\end{split}
	\end{equation*}
	is associated to the fixed variables $y_1,\ldots,y_L$.
	The problem is reduced to a $M$-particle problem.
	\item Then, we bound from below the right-hand term of the previous inequality, uniformly in the variables $y_1,\ldots, y_L\in\R^d$, with a one-particle functional. For any $\tilde{\Psi}\in L^2_a(\R^{dM})$
	\begin{equation}\label{eq-demo:lower-bound-whole-mb_step2}
		\prodscal{\tilde{\Psi}}{\tilde{P}_N\tilde{\Psi}}\geq \frac NM \ER^{\rm rHF}_{\hbar,V-\tilde{E}_{\hbar,N},w,\frac{N-1}{M-1}}\big(\gamma_{\tilde{\Psi}}^{(1)}\big)
	,
	\end{equation}
	where $\tilde{E}_{\hbar,N}$ is the effective chemical potential
	\begin{equation}\label{eq-def:eff-pot_lower-bound-whole-mb}
	\tilde{E}_{\hbar,N}:=\frac{w_1(0)}2\frac{N-1}{M-1}\hbar^d+\frac{w_2(0)}2\frac{N-1}{L-1}\hbar^d.
	\end{equation}
	\item 
	Finally, we link the reduced Hartree-Fock ground state energy asymptotics to the Thomas-Fermi ground state energy.
\end{enumerate}

\begin{rmk}[What changes here]
	In our case, the limit $\hbar\to 0$ is not coupled to $N\to+\infty$. Indeed, we recall that instead of considering 
	\begin{equation*}
	\liminf_{\underset{\hbar N^{1/d}\to 1}{\hbar\to 0}}  e_{N,V,w} = 	\liminf_{\underset{\hbar N^{1/d}\to 1}{\hbar\to 0}} \inf_{\underset{ \normLp{\Psi_N}{2}{}=1}{\Psi_N\in L^2_a(\R^{dN})}} \frac{\prodscal{\Psi_N}{P_N\Psi_N}}{N},
	\end{equation*}
	we look at 
	\begin{equation*}
	\liminf_{\hbar\to 0} \hbar^d \inf_{N\geq 0}e_{\hbar,N,V,w}
	=\liminf_{\hbar\to 0} \hbar^d \inf_{N\geq 0}\inf_{\underset{ \normLp{\Psi_N}{2}{}=1}{\Psi_N\in L^2_a(\R^{dN})}}\prodscal{\Psi_N}{P_N\Psi_N}.
	\end{equation*}
	Thus, it is more convenient to take $M:=\lfloor(1-\varepsilon)N\rfloor$ with a well-chosen parameter $\varepsilon\in(0,1)$ that we let go to $0$, instead of taking $M:=N-\lfloor\sqrt{N}\rfloor$.
	Indeed, in the proof of \cite{fournais2018semi}, the reduced Hartree-Fock ground state energy $e^{\rm rHF}_{\hbar,N,V-\tilde{E}_{\hbar,N},V,w,\frac{N-1}{M-1}}$ has a finite mean-field
	limit ($N\to +\infty$, $\hbar\to 0$, with $\hbar=N^{-1/d}$), which is $e^{\rm TF}_{V,w,1}$.
	Moreover, that is always true that the term $\ER^{\rm rHF}_{\hbar,V-\tilde{E}_{\hbar,N},w,\frac{N-1}{M-1}}\big(\gamma_{\tilde{\Psi}}^{(1)}\big)$ of \eqref{eq-demo:lower-bound-whole-mb_step2} is uniformly bounded by the ground state energy $e^{\rm rHF}_{\hbar,N,V-\tilde{E}_{\hbar,N},w,\frac{N-1}{M-1}}$. This proves the lower bounds.
	
	Actually, the highest order term of this reduced Hartree-Fock energy in \eqref{eq-def:eff-pot_lower-bound-whole-mb} is carried by $-\frac{w_2(0)}{2}\frac{N-1}{L-1}\hbar^d\sim -\sqrt{N}\hbar^d$. 
	This convergence was possible for the choice $M=N-\lfloor\sqrt{N}\rfloor$ with the relation $\hbar^d=1/N$, so that $\frac{w_2(0)}{2}\frac{N-1}{L-1}\hbar^d\sim \sqrt{N}\hbar^d\to 0$ at the coupled mean-field and semiclassical limit. But it does not work with this choice of $M$ in our grand-canonical setting, because this term diverges to $-\infty$ as $n\to+\infty$ for any fixed $\hbar>0$, so that
	\begin{equation*}
	\inf_{N\in\N^*}e^{\rm rHF}_{\hbar,V,\tilde{E}_{\hbar,N},\frac{N-1}{\lfloor N\rfloor-1}}=-\infty.
	\end{equation*}
	Therefore the proof slightly changes with the new expression of $M$ in Step 3.
\end{rmk}

\subsection{Proof of Theorem \ref{thm:lower-bound-whole-mb}}

\subsubsection*{Steps 1 and 2.} The two first steps are the same as in Fournais-Lewin-Solovej paper (see the beginning of \cite[Prop. 3.5]{fournais2018semi} and \cite[Lem. 3.6]{fournais2018semi}). We do not yet use the value of $M$ and we write the lower bounds for any $N\in\N^*$. We will see that these bounds are uniform with respect to $N$.

\subsubsection*{Step 3.} Recall that we have
\begin{align*}
e_{\hbar,V,w}
&\geq \inf_{N\in\N^*}e_{\hbar,N,V,w}
\\&\geq \hbar^d \inf_{N\in\N^*} \inf_{y_1,\cdots,y_L\in\R^d} \: \inf_{\underset{\prodscal{\tilde{\Psi}}{\tilde{P}_N\tilde{\Psi}}\leq -C_0\hbar^{-d} }{\tilde{\Psi}\in L^2_a(\R^{dM}),\:\normLp{\tilde{\Psi}}{2}{}=1 }}\prodscal{\tilde{\Psi}}{\tilde{P}_N\tilde{\Psi}}.
\end{align*}
%%
% spirit
%%
Instead of dealing directly with ground states energies and therefore infimum, it is more conveniant to write bounds on the reduced Hartree-Fock functional.
Hence, by \eqref{eq-demo:lower-bound-whole-mb_step2}, for $M=\lfloor(1-\varepsilon)N\rfloor$ with $\varepsilon\in(0,1)$ to be precised and any normalized $\tilde{\Psi}\in L^2_a(\R^{dM})$ (recall that its associated one-body density matrix satisfies always $0\leq\gamma_{\tilde{\Psi}}^{(1)}\leq 1$)
\begin{align*} 
\hbar^d \prodscal{\tilde{\Psi}}{\tilde{P}_N\tilde{\Psi}} &\geq \ER_{\hbar,V-\tilde{E}_{\hbar,N},w,\frac{N-1}{M-1}}^{\rm rHF}\left(\gamma_{\tilde{\Psi}}^{(1)}\right)
\geq \ER^{\rm rHF}_{\hbar,V,w,1}\left(\gamma_{\tilde{\Psi}}^{(1)}\right)+r_{\hbar,N,\varepsilon}\left(\gamma_{\tilde{\Psi}}^{(1)}\right)
\\&\geq e^{\rm rHF}_{\hbar,V,w,1}+ r_{\hbar,N,\varepsilon}\left(\gamma_{\tilde{\Psi}}^{(1)}\right)
,
\end{align*}
where $r_{\hbar,N,\varepsilon}$ denotes the error defined in the set $\KR$
\begin{align*}
	r_{\hbar,N,\varepsilon}(\gamma) := \ER^{\rm rHF}_{\hbar,V-\tilde{E}_{\hbar,N},w,\frac{N-1}{M-1}}(\gamma)- \ER^{\rm rHF}_{\hbar,V,w,1}(\gamma)
	.
\end{align*}
Recall that
\begin{equation*}
	\lim_{\hbar\to 0} e^{\rm rHF}_{\hbar,V,w,1}=e^{\rm TF}_{V,w,1}.
\end{equation*}
Thus, it remains to check that for $\varepsilon$ small enough, we have the lower bound of the liminf: there exists $C\geq0$ such that for any $\varepsilon>0$
\begin{equation}\label{eq-demo:thm:lower-bound-whole-mb_liminf-error}
	\liminf_{\hbar\to 0}\inf_{N\geq\frac 2\varepsilon }\inf_{\underset{\prodscal{\tilde{\Psi}}{\tilde{P}_N\tilde{\Psi}}\leq -C_0\hbar^{-d} }{\tilde{\Psi}\in L^2(\R^{dM}),\:\normLp{\tilde{\Psi}}{2}{}=1 }}r_{\hbar,N,\varepsilon}\left(\gamma_{\tilde{\Psi}}^{(1)}\right)\geq -C\varepsilon,
\end{equation}
for any almost-minizers $\tilde{\Psi}\in L^2_a(\R^{dM})$ of the ground state energy of $\tilde{P}_N$. The conclusion stems from it.

\bigskip
We begin by discussing conditions on $N$, $\varepsilon$ and $h$ so that $M=\lfloor(1-\varepsilon)N\rfloor, \: L=N-M\in\llbracket 2, N-2\rrbracket$ and the avoid cases like $L=0$. For instance, for $\varepsilon>0$, the condition
\begin{equation}\label{eq-demo:thm:lower-bound-whole-mb_cond-N}
N\geq \frac 2\varepsilon
\end{equation}
ensures that $M\leq N-2$.
Furthermore, we impose also that $\varepsilon>0$ is small enough. For instance $\varepsilon\in(0,1/4]$ ensures
%% CORRECTION_JMP FOOTNOTE
that $M\geq 2$ under the condition \eqref{eq-demo:thm:lower-bound-whole-mb_cond-N}.

Let us now explain why for $\hbar>0$ small enough, we can just consider \eqref{eq-demo:thm:lower-bound-whole-mb_cond-N}, i.e.\ 
	for any $\varepsilon\in(0,1/4)$, there exists $h_\varepsilon\in(0,1)$ so that for any $h\in(0,\hbar_\varepsilon)$
\begin{equation*}
	e_{\hbar,V,w}=\inf_{N\geq\frac 2\varepsilon}e_{\hbar,N,V,w}.
\end{equation*}
In other words, it means that the ground state energy cannot be reached for $N<2/\varepsilon$.
First,
% details
since the ground state energy is bounded by the Hartree-Fock energy
\begin{align*}
	e_{\hbar,N,V,w}\leq e^{\rm HF}_{\hbar,N,V,w}\leq \ER^{\rm HF}_{\hbar,N,V,w}(\indicatrice{-\hbar^2\Delta+V\leq 0})\leq \hbar^d\tr_{L^2(\R^d)}((h^2\Delta+V)_-)\leq 0,
\end{align*}
% fin details
there exists $C_0>0$ such that for any $h\in(0,1)$
\begin{equation}\label{eq-demo:thm:lower-bound-whole-mb_upper-bound}
\inf_{N\geq 0}e_{\hbar,N,V,w}(E)\leq -C_0.
\end{equation}
Let $\Psi_N\in L^2_a(\R^{dN})$ be a normalized state. Then, since $-\hbar^2\Delta+V$ is non-negative
\begin{align*}
\prodscal{\Psi_N}{P_N\Psi_N}
&\geq  -E\tr_{L^2(\R^d)}\left(\gamma_{\Psi_N}^{(1)}\right) +\hbar^d\tr_{L^2(\R^{2d})}\left(w\gamma_{\Psi_N}^{(2)}\right)
\\&\geq -EN -\hbar^d \normLp{w}{\infty}{(\R^d)}\tr_{L^2(\R^{2d})}\left(\gamma_{\Psi_N}^{(2)}\right)
=
-EN -\hbar^d \frac{N(N-1)}2\normLp{w}{\infty}{(\R^d)}
.
\end{align*}
Under the assumption $N<2/\varepsilon$, there exists $C>0$ such that
\begin{equation*}
\prodscal{\Psi_N}{P_N\Psi_N} \geq -\frac 2\varepsilon E -\hbar^d\frac 2{\varepsilon^2}\normLp{w}{\infty}{(\R^d)}
\leq -\frac{C}{\varepsilon^2}
.
\end{equation*}
Let $\hbar_\varepsilon>0$ such that for any $\hbar\in(0,\hbar_\varepsilon)$, we have $C\varepsilon^{-2}\leq C_0 \hbar^{-d}$.
Finally, for any $\hbar\in(0,\hbar_\varepsilon)$ and any $N<2/\varepsilon$
\begin{equation*}
e_{\hbar,N,V,w}(E)\geq -C_0 =\inf_{N\geq 0} e_{\hbar,N,V,w}(E).
\end{equation*}
%
%	\paragraph{Lower bound of the error term.}
\bigskip

Now, let us bound by below the error term $r_{\hbar,N,\varepsilon}(\gamma)$ uniformy in $N\geq 2/\varepsilon$ with respect to $h\in(0,\hbar_\varepsilon)$ and $\varepsilon\in(0,1/4)$.  We prove that there exist $C,C'>0$ such that for any bounded operator $\gamma\geq 0$ on $L^2(\R^d)$
\begin{equation*}
r_{\hbar,N,\varepsilon}(\gamma)\geq -\frac C\varepsilon \hbar^d\tr(\gamma)
-
\begin{cases}
0 &\text{under Assumption \ref{cond:w-Dterm}}, \\
C'\varepsilon\tr((1-\hbar^2\Delta)\gamma) &\text{under Assumption \ref{cond:w-Dterm-d=12}}.
\end{cases}
\end{equation*}
By definition,
\begin{align*}
r_{\hbar,N,\varepsilon}(\gamma)
&=
(E-\tilde{E}_{\hbar,N})\tr(\gamma)+\frac{\hbar^d}2\left(\frac{N-1}{M-1}-1\right)D_w(\rho_\gamma,\rho_\gamma)
\\&= -\frac{\hbar^d}2 \left(w_1(0)\frac{N-1}{M-1}+w_2(0)\frac{N-1}{L-1}\right)\tr(\gamma)+\frac{\hbar^d}2\left(\frac{N-1}{M-1}-1\right)D_w(\rho_\gamma,\rho_\gamma)
.
\end{align*}
On one hand, we deduce from the definition of $w_1$ and $w_2$ that
\begin{align*}
0 \leq w_1(0) &= \frac{1}{(2\pi)^{d/2}}\normLp{(\hat{w})_+}{1}{(\R^d)} \leq   \frac{1}{(2\pi)^{d/2}}\normLp{\hat{w}}{1}{(\R^d)}
,\\
0 \leq w_2(0) &= \frac{1}{(2\pi)^{d/2}}\normLp{(\hat{w})_-}{1}{(\R^d)} \leq   \frac{1}{(2\pi)^{d/2}}\normLp{\hat{w}}{1}{(\R^d)}
.
\end{align*}
On the other hand, since $N\geq 2/\varepsilon$, we have
\begin{equation*}
	\frac{N-1}{M-1}\leq\frac 1{1-2\varepsilon},\qquad \frac{N-1}{L-1}\leq \frac 2\varepsilon
	.
\end{equation*}
Furthermore, using also that $\varepsilon\leq 1/4$
\begin{equation*}
0\leq \frac{N-1}{M-1}-1= \frac{L}{M-1}\leq 3\varepsilon
.
\end{equation*}
% DETAILS PhD
Thus, there exist $C,C'>0$ such that for any $N\geq 2/\varepsilon$
\begin{equation*}
r_{\hbar,N,\varepsilon}(\gamma)\geq -\frac C\varepsilon \hbar^d\tr(\gamma)-C'\varepsilon \hbar^d D_w(\rho_\gamma,\rho_\gamma).
\end{equation*}
\begin{itemize}
	\item 
	If Assumption \ref{cond:w-Dterm} holds, we can just write the lower bound without the direct term, for any $N\geq 2/\varepsilon$ and any $\gamma\geq 0$
	\begin{equation*}
	r_{\hbar,N,\varepsilon}(\gamma)\geq -\frac C\varepsilon \hbar^d\tr(\gamma).
	\end{equation*}
	\item
	If we have Assumption \ref{cond:w-Dterm-d=12}, one has
	\begin{equation*}
	h^d D_w(\rho_\gamma,\rho_\gamma)\lesssim \tr((1-\hbar^2\Delta)\gamma).
	\end{equation*}
	As a consequence,  for any $N\geq 2/\varepsilon$ and any $\gamma\geq 0$
	\begin{equation*}
	r_{\hbar,N,\varepsilon}(\gamma)\geq -\frac C\varepsilon \hbar^d\tr(\gamma)-C'\varepsilon\tr((1-\hbar^2\Delta)\gamma).
	\end{equation*}
\end{itemize}
Moreover, by the bound \eqref{eq-demo:thm:lower-bound-whole-mb_upper-bound}, there exist $N\geq 2/\varepsilon$ and normalized $\Psi_N\in L^2_a(\R^d)$ such that for any $h\in(0,\hbar_\varepsilon)$, the upper bound
$\prodscal{\Psi_N}{P_N\Psi_N}\leq -C_0\hbar^{-d}$ holds. As well, we restrict ourselves to normalized states $\tilde{\Psi}\in L^2_a(\R^{dM})$ (and $x_1,\ldots,x_M\in\R^d$) such that the bound $\prodscal{\tilde   {\Psi}}{\tilde{P}_N\tilde{\Psi}}\leq -C_0\hbar^{-d}$ holds.
Hence, for all these states $\tilde{\Psi}\in L^2_a(\R^{dM})$, one has
\begin{align*}
C_0\hbar^{-d}
&\geq \ER^{\rm rHF}_{\hbar,V,w,1}\big(\gamma_{\tilde{\Psi}}^{(1)}\big)
-\frac C\varepsilon \hbar^d\tr\big(\gamma_{\tilde{\Psi}}^{(1)}\big)
-
\begin{cases}
0 &\text{under Assumption \ref{cond:w-Dterm}}, \\
C'\varepsilon\tr\left((1-\hbar^2\Delta)\gamma_{\tilde{\Psi}}^{(1)}\right) &\text{under Assumption \ref{cond:w-Dterm-d=12}}.
\end{cases}
\end{align*}	
Let us explain why the one-body density matrices $\gamma_{\tilde{\Psi}}^{(1)}$ of ground states of $\tilde{P}_N$ satisfy Assumption \ref{cond:trace_bd_h^d}.
%% details
By Lemma \ref{lemma:E_HF_bound_below}, there exists $C''>0$ such that
\begin{equation*}
\ER^{\rm rHF}_{\hbar,V,w,1}\big(\gamma_{\tilde{\Psi}}^{(1)}\big)\geq \frac 14\tr\big((-\hbar^2\Delta+V+1)\gamma_{\tilde{\Psi}}^{(1)}\big)-C''\hbar^{-d}.
\end{equation*}
We then deduce that there exist $C_0'>0$ and $h_\varepsilon'>0$ (for instance $h_\varepsilon'=\min(h_\varepsilon,(16 C\varepsilon)^{-1/d})$) such that for any $\varepsilon\in\left(0,\min(\tfrac 14,\tfrac 1{8C'})\right)$ and for any $h\in(0,\hbar_\varepsilon']$
\begin{equation*}
\tr\left((-\hbar^2\Delta+V+1)\gamma_{\tilde{\Psi}}^{(1)}\right) \leq C_0'\hbar^{-d}.
\end{equation*}
%% details
As a consequence, there exists $C>0$ such that for $N\geq 2/\varepsilon$ such that for any $h\in(0,\hbar_\varepsilon)$ and any normalized $\tilde{\Psi}\in L^2_a(\R^{dM})$ such that the upper bound on the energy $\prodscal{\tilde{\Psi}}{\tilde{P}_N\tilde{\Psi}}$ is satisfied
\begin{equation*}
r_{\hbar,N,\varepsilon}\left(\gamma_{\tilde{\Psi}}^{{(1)}}\right)\geq -C\hbar^{-d}\varepsilon.
\end{equation*}
Thus, we get \eqref{eq-demo:thm:lower-bound-whole-mb_liminf-error}, which allows us to conclude the proof of Theorem \ref{thm:lower-bound-whole-mb}.

\section*{\bf Acknowledgments}
The author is grateful to her PhD advisor Julien Sabin for valuable discussions and to the Laboratoire de Mathématiques d'Orsay where most of the research has been done. This project has also been partially supported by the European Research Council (ERC) through the Starting Grant {\sc FermiMath}, grant agreement nr. 101040991.

% %%%%%%%%%%%%%%%%%%%%%%%%%%%%%%%%%%%%%%%%%% REFERENCES
% %%%%%%%%%%%%%%%%%%%%%%%%%%%%%%%%%%%%%%%%%%
\bibliographystyle{siam}
\bibliography{biblioNhi2.bib}

\end{document}